\theoremstyle:=definition,remark,plain\do{%
        \expandafter\g@addto@macro\csname th@\theoremstyle\endcsname{%
            \addtolength\thm@preskip\parskip
            }%
        }
\definecolor{dnrbl}{rgb}{0,0,0.3}
\definecolor{dnrgr}{rgb}{0,0.3,0}
\definecolor{dnrre}{rgb}{0.5,0,0}
\newcommand*{\ditto}{------\raisebox{-0.5ex}{\textquotedbl}------\ \ }
\theoremstyle{plain}
\newtheorem{thm}{Theorem}[section]
\newtheorem{prop}[thm]{Proposition}
\newtheorem{lem}[thm]{Lemma}
\newtheorem{coro}[thm]{Corollary}
\newtheorem{defi}[thm]{Definition}
\numberwithin{equation}{subsection}
\let\c@table\c@figure
\newcommand{\Nat}{\mathbb{N}}
\newcommand{\restr}{\upharpoonright}  
\DeclarePairedDelimiter{\ceil}{\lceil}{\rceil}
\DeclarePairedDelimiter{\floor}{\lfloor}{\rfloor}
\newcommand{\bigo}[1]{\mathop{\bf O}\/\left({#1}\right)}
\newcommand{\smo}[1]{\mathop{\bf o}\/\left({#1}\right)}
\newcommand{\ZZ}{\mathbf{Z}}
\newcommand{\DD}{\mathbf{D}}
\newcommand{\GG}{\mathbf{G}}
\newcommand{\YY}{\mathbf{Y}}
\newcommand{\CC}{\mathbf{C}}
\newcommand{\LL}{\mathbf{L}}
\newcommand{\JJ}{\mathbf{J}}
\newcommand{\pp}{\mathbf{p}}
\newcommand{\bias}[1]{\mathtt{B}(#1)}
\newcommand{\mix}{\textsc{mix}}
\newcommand{\unhap}{\mathtt{U}}
\newcommand{\blockb}{\mathtt{k}_{\beta}}
\newcommand{\Punhap}{\mathbb{P}_{\textrm{unhap}}}
\newcommand{\Pstab}{\mathbb{P}_{\textrm{stab}}}
\DeclareRobustCommand{\expe}[2][{\mbox{
$\mathbb{E}$}}]{\ensuremath {#1}\left[ {#2} \right]}
\DeclareRobustCommand{\expec}[3][{\mbox{
$\mathbb{E}$}}]{\ensuremath {#1}\left[ {#2} \ \big|\  {#3} \right]}
\DeclareRobustCommand{\expeco}[4][{\mbox{
$\mathbb{E}$}}]{\ensuremath {#1}_{#4}\left[{#2} \ \big|\  {#3} \right]}
\DeclareRobustCommand{\proba}[2][{\mbox{
$\mathbb{P}$}}]{\ensuremath {#1} [ {#2} ]}
\DeclareRobustCommand{\probac}[3][{\mbox{
$\mathbb{P}$}}]{\ensuremath {#1}[ {#2} \ |\  {#3} ]}
\renewenvironment{abstract}
 { \normalsize
  \list{}{
    \setlength{\leftmargin}{.0cm}%
    \setlength{\rightmargin}{\leftmargin}%
    }%
  \item {\bf \abstractname.} \relax}
 {\endlist}
\title{Minority
population in the one-dimensional 
Schelling  model of segregation
\thanks{Authors are listed alphabetically. 
Barmpalias was supported by the 
1000 Talents Program for Young Scholars from the Chinese Government,
and the Chinese Academy of Sciences (CAS) President's International 
Fellowship Initiative No. 2010Y2GB03.
Additional support was received by
the CAS and the Institute of Software of the CAS.
Partial support was also received from a Marsden grant of New Zealand 
and the China Basic Research Program (973) grant No.~2014CB340302.
Lewis-Pye was supported by a Royal Society University 
Research Fellowship. We would like to thank Pan Peng and Zhang Wei 
from the Institute of Software of the CAS, for helpful discussions.}}
\author{George Barmpalias  \and Richard Elwes \and	Andy Lewis-Pye}
\date{\today}
\begin{document}

\maketitle

\begin{abstract}
The Schelling model of segregation looks to explain the way in which a population of agents or particles of two types may come to organise itself into large homogeneous clusters, and can be seen as a variant of the Ising model in which the system is subjected to rapid cooling.  
While the model has been very extensively studied, the unperturbed (noiseless) version has largely resisted rigorous analysis, with most results in the literature pertaining to versions of the model in which noise is introduced into the dynamics so as to make it amenable to standard techniques from statistical mechanics or stochastic evolutionary game theory. 

We rigorously analyse the one-dimensional version of the model in 
which one of the two types is in the minority, and establish various forms of threshold behaviour. 
 Our  results are in sharp contrast with the case when the distribution of the two types
is uniform (i.e.\ each agent has equal chance of being of each type in the initial configuration), which was studied in
\cite{brandt:an, BELschel13}. 
\end{abstract}
\vspace*{\fill}
\noindent{\bf George Barmpalias}\\[0.5em]
\noindent
State Key Lab of Computer Science, 
Institute of Software, Chinese Academy of Sciences, Beijing, China.
School of Mathematics, Statistics and Operations Research,
Victoria University of Wellington, New Zealand.\\[0.2em] 
\textit{E-mail:} \texttt{\textcolor{dnrgr}{barmpalias@gmail.com}}.
\textit{Web:} \texttt{\textcolor{dnrre}{http://barmpalias.net}}\par
\addvspace{\medskipamount}
\noindent{\bf Richard Elwes}\\[0.5em]
\noindent School of Mathematics,
University of Leeds, LS2 9JT Leeds, United Kingdom.\\[0.2em]
\textit{E-mail:} \texttt{\textcolor{dnrgr}{r.h.elwes@leeds.ac.uk.}}
\textit{Web:} \texttt{\textcolor{dnrre}{http://richardelwes.co.uk}}\par
\addvspace{\medskipamount}
\noindent{\bf Andy Lewis-Pye}\\[0.5em]  
\noindent Department of Mathematics,
Columbia House, London School of Economics, 
Houghton Street, London, WC2A 2AE, United Kingdom.\\[0.2em]
\textit{E-mail:} \texttt{\textcolor{dnrgr}{A.Lewis7@lse.ac.uk.}}
\textit{Web:} \texttt{\textcolor{dnrre}{http://aemlewis.co.uk}} 
\vfill 
\thispagestyle{empty}
\clearpage
\section{Introduction}
The economist Thomas Schelling introduced his model of segregation in 
 \cite{TS1} (developed later in \cite{TS71a, TS71b}), with the explicit intention of 
 explaining the phenomenon of
racial segregation in large cities. 
Perhaps the earliest 
agent-based model studied by economists, since then it has become 
an archetype of agent-based modelling, prominently
featuring in libraries of modelling software tools such as NetLogo \cite{NetLogo}
and often being the subject of 
experimental analysis and simulations in the modeling and AI communities 
\cite{MF, howtop, FMpref, Gilbert14052002, Martijn2007, yilmaz2009agent, heppenstall2011agent, 
deSmith:2007:GAC:1557282, epstein1996growing}.
Many versions of the model have been analysed theoretically, from
a number of different viewpoints and disciplines: statistical mechanics
\cite{DM, RevModPhys.81.591} and \cite[Section 3.1]{Bertin}, 
evolutionary game theory \cite{HY, JZ1, JZ2, JZ3}
the social sciences \cite{CF, Clarkdem, SandSDo}, and more recently
computer science and AI \cite{CACP07, DBLP:conf/soda/BhaktaMR14, brandt:an, BELschel13}. 
It was observed in \cite{brandt:an}, however,  that despite the vast amount of work that has been
done on the Schelling model in the last 40 years, rigorous mathematical analyses in the previous  literature generally 
concern altered versions of the model, in which noise is introduced in the dynamics, i.e.\ where
one allows that agents may make non-rational decisions that are
detrimental to their welfare with small probability.
The introduction of such `perturbations' may be justifiable from a
`bounded rationality' standpoint. 

The model (which will be formally defined shortly) concerns a population of agents arranged geographically, each being of one of two types. Each agent has a certain neighbourhood around them that they are concerned with, and also an intolerance parameter $\tau\in [0,1]$ which we shall assume here to be the same for all agents. An agent's behaviour is dictated by the proportion of the agents in their neighbourhood which are of its own type. So long as  this proportion is $\geq \tau$ the agent may be considered `happy' and will not move.  Starting with a random configuration,  one then considers a discrete time dynamical process.  At each stage unhappy agents  may be  given the opportunity to move, swapping positions with another agent,  so as to increase the proportion of their own type within their neighbourhood. Now one might  justify  a perturbed version of these dynamics,  in which  agents will occasionally move in such a way as to decrease their utility (i.e.\  the  proportion of their own type within their neighbourhood) by arguing, for example, that it is reasonable to suppose that only incomplete information about the make-up of each neighbourhood is available to the agents. 
It is a fact, however,  that 
\begin{itemize}
\item[(a)] the methods used for the analysis of
the perturbed models do not apply to the unperturbed model; 
\item[(b)] the segregation that occurs in the perturbed models is often very
different than in the unperturbed model.
\end{itemize}
In the unperturbed models the underlying 
Markov chain does not have the regularities that are found in the perturbed case
(e.g.\ the Markov process is irreversible).  The presence of a large variety of absorbing states means that entirely different and more combinatorial methods are now required. Beyond the basic aim of a rigorous analysis for these unperturbed models, which have been so extensively studied via simulations, further motivation is provided by the fact that the Schelling model is part of a large family of models, arising in a broad variety of contexts---spin glass models, Hopfield nets, cascading phenomena as studied by those in the networks community---all of which look to understand the discrete time dynamics of competing populations on underlying network structures of one kind or another,  and for many of which the unperturbed dynamics are of significant interest.  The hope is that techniques developed in analysing  unperturbed Schelling segregation may pave the way for similar analyses in these variants of the model.  

The first rigorous analysis of an 
unperturbed Schelling model was described 
by  Brandt, Immorlica, Kamath, and Kleinberg
in \cite{brandt:an}. 
In this work it was also demonstrated that the eventual state of the 
process differs significantly from the stochastically stable states of the perturbed models.
This study focused on the one-dimensional Schelling model 
and provided an asymptotic analysis, in the sense that the results hold 
with arbitrarily high probability
for all sufficiently
large neighbourhoods and population.
More significantly, however, it dealt only with the symmetric case where
intolerance parameter $\tau=0.5$  (i.e.\ an agent is happy when at least 50\% of the agents
in its neighbourhood are of its own type). 
 In  \cite{BELschel13}
a much more general analysis of the unperturbed one-dimensional Schelling model
for $\tau\in [0,1]$ was provided. In fact it was shown there that various forms of surprising threshold behaviour exist. 
A significant symmetry assumption underlying 
the results in \cite{brandt:an, BELschel13}
is that
the populations of the two types of agents are assumed to be uniform 
(i.e.\ each agent has equal chance of being of each type in the initial configuration). 
Indeed, there is no rigorous study of the unperturbed spacial proximity model with swapping agents
for the rather realistic  case where the distribution of the two types of agents is skewed.
In fact, the question as to what type of segregation occurs with a skewed population distribution
was raised by  Brandt, Immorlica, Kamath, and Kleinberg 
in \cite[Section 4]{brandt:an} as well as in popular expositions
of the Schelling model like \cite{Hayes}.

The purpose of the present work is to give an answer to this question.
We show that complete segregation is the likely outcome if and only if the intolerance
parameter is larger than $0.5$. Moreover in the case that the minority type is at most 
25\%, there is a dichotomy between complete segregation and almost complete
absence of segregation.

\begin{table}
\caption{Parameters of the Schelling model and the main result.}\label{ta:paraandmainres}
\colorbox{black!10}{\arrayrulecolor{green!50!black}
  \begin{tabular}{lcc}
{\bf\small  Parameter}&  
{\bf\small  Symbol} &{\bf\small  Range} \\[1ex]
\toprule
{\small Population} \hspace{0.0cm}   &	 {\small $n$} \hspace{0.0cm} &{\small $\mathbb{N}$}\\[1ex]
{\small Neighbourhood radius}    \hspace{0.0cm} & {\small $w$} \hspace{0.0cm} &{\small $[0,n]$}\\[1ex]
 {\small Tolerance threshold}	 \hspace{0.0cm}  &   {\small $\tau$} \hspace{0.0cm} &{\small $[0,1]$}\\[1ex]
{\small Expected/Actual minority proportion}  \hspace{0.0cm}  &  {\small $\rho$/$\rho_{\ast}$}
\hspace{0.0cm}  & {\small $[0, 1]$}\\[1ex]
\end{tabular}}
\quad
\colorbox{black!10}{
 \begin{tabular}{rclcl}
\multicolumn{3}{c}{\bf\small  Process parameters}  & & {\bf\small  Segregation}  \\[1ex]
\toprule
{\small $\tau< \lambda_0$ \hspace{0.1cm}}&{\small\&\hspace{0.2cm}}&
{\small $\rho< \lambda_0$} & \hspace{0.0cm}  & {\small Negligible}\\[1ex]
{\small $\tau\leq \kappa_0$ \hspace{0.1cm}}&{\small\&\hspace{0.2cm}}&
{\small $\rho< 0.5$} & \hspace{0.0cm}  & {\small Negligible}\\[1ex]
{\small $\tau\leq 0.5$}\hspace{0.1cm} &{\small\&\hspace{0.2cm}}& 
{\small $\rho\leq 0.25$} &	\hspace{0.0cm}  & {\small Negligible}\\[1ex]
{\small $\tau> 0.5$}\hspace{0.1cm} &{\small\&\hspace{0.2cm}}&
{\small $\rho \leq 0.5$} &	\hspace{0.0cm}  & {\small Complete}
\end{tabular}} 
\centering
\end{table}

\subsection{Definition of the model}\label{def:modseg}
Schelling's model of residential segregation belongs to a large family of agent-based models, where
a system of 
competitive agents perform actions in order to increase their personal welfare, while possibly
decreasing the welfare of
other individuals.
This phenomenon roughly corresponds to the so-called {\em spontaneous order 
approach}\footnote{This contrasts the  {\em mechanism design approach} 
which studies the exogenous (a priori) design 
of regulations in order to achieve desired properties in a system of interacting agents.}
in economics literature, which studies the emergence of norms from the endogenous agreements among rational individuals.

The Schelling model that we study is a direct generalisation of that in
\cite{brandt:an} and also that studied by the authors in  \cite{BELschel13}.  
The one-dimensional  model with parameters
$n, w, \tau, \rho$ (as listed in 
Table \ref{ta:paraandmainres}) is defined as follows.
We consider $n$ 
individuals which occupy an equal number of {\em sites} $0,\dots, n-1$ 
(ordered clockwise) on a circle.  Each of the individuals belongs to one of the
two types $\alpha$ and $\beta$. 
The  type assignment of individuals is independent and identically distributed (i.i.d.), with each individual having probability $\rho$
of being type $\beta$. Without loss of generality  
we always assume that $\rho\leq 0.5$, i.e.\ that the individuals of
type $\beta$ are the expected minority (so long as $\rho\neq 0.5$).
This random type assignment takes place at stage 0 of the process, and defines the \emph{initial state}. At the end of stage 0,  we let
$\rho_{\ast}$ be the {\em actual} proportion of the individuals that are of type
$\beta$.

Unless stated otherwise, addition and subtraction on indices for sites are performed modulo $n$. Given two sites $u, v$ in any configuration of the individuals on the circle, 
the interval $[u, v]$ consists of the individuals that occupy sites
between $u$ and $(u+v)\mod n$ (inclusive). For example, if  $0\leq v<u<n$ then 
we let $[u,v]$ denote the set of nodes  $[u,n-1] \cup [0,v]$ 
(while $[v,u]$ is, of course, understood in the standard way).
 When we talk about a particular configuration, we identify each individual
with the site it occupies, referring to both entities as a {\em node}.
The {\em neighbourhood} of node $u$ consists 
of the interval $[(u-w),\ \ (u+w)]$
where $w$ is a parameter of the model
that we call the (neighbourhood) {\em radius}. 
The {\em tolerance threshold}  $\tau\in (0,1)$ is another parameter of the model 
that reflects how tolerant a node is to nodes of different type in its
neighbourhood.
We say that a node is {\em happy} if the proportion of the nodes 
 in its neighbourhood which are of its own type is at least $\tau$.

 \begin{figure} 
\includegraphics[scale=0.26]{./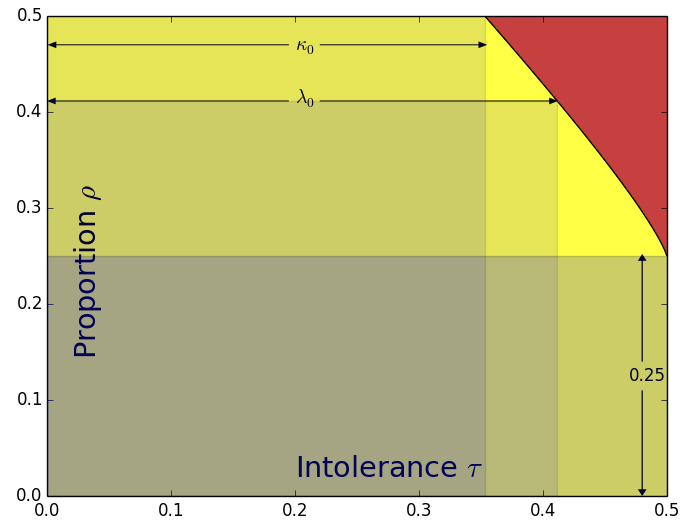}\hspace{0.5cm}
\includegraphics[scale=0.24]{./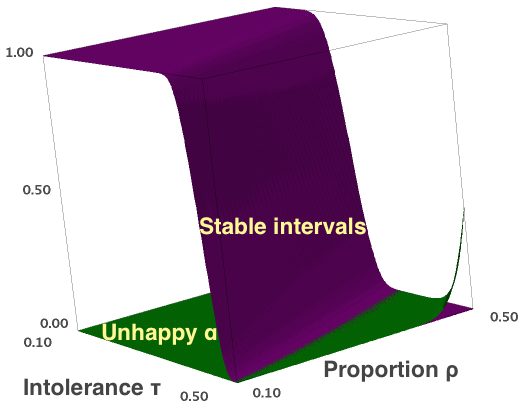}\hspace{0.5cm}
\includegraphics[scale=0.32]{./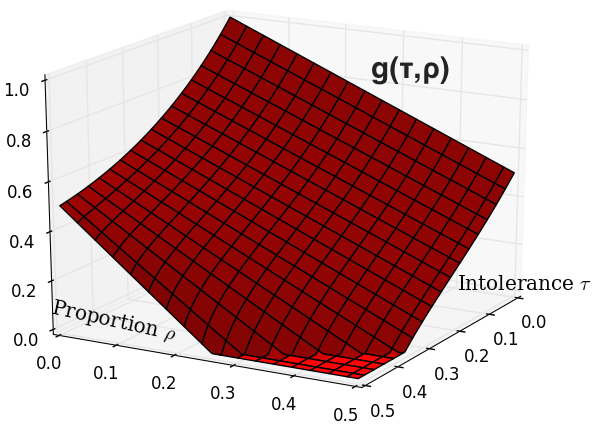}
\centering
 \caption{Threshold behaviour when $\tau,\rho$ are in $[0, 0.5]$. The two-dimensional axes  refer to $\tau$ and $\rho$.
 In the first figure, the process is static except for the $(\tau,\rho)$ in the small area at the top right corner. The second figure
 is a plot of $\Pstab$ and $\Punhap$ (for w=100) as functions of $(\tau,\rho)$. The third figure is a plot of 
 $g(\tau,\rho)$ for w=100.}\label{fig:red_3D_tr}
\end{figure}  

Given the initial type assignment (colouring) of the nodes, 
the {\em Schelling process} then evolves 
dynamically in stages as follows. 
At each stage $s>0$ we pick uniformly at random a pair of 
unhappy nodes of different type, and we swap them provided that
in both cases the number of nodes of the 
same type in the new neighbourhood is at least that in the original neighbourhood.
If at some stage there are no further legal swaps the process terminates. 
If at some stage all nodes of the same type are grouped into a single block, 
we say that at that stage we have {\em complete segregation}.

This completes the definition of the Schelling process with parameters $n,w,\tau$ and $\rho$, 
which we denote by the tuple $(n, w, \tau, \rho)$.   
The process can be seen as a Markov chain with $2^n$ states
corresponding to the configurations that we get by varying the
type of each node between $\alpha$ and $\beta$. 
A state is called {\em dormant} if either all $\alpha$-nodes are happy, or
all $\beta$-nodes are happy.
We shall be interested in the case that $w$ is large, 
and that $n$ is large compared to $w$.  
In this context it will turn out that the absorbing states of the Schelling process are
exactly the dormant states and, in fact, the only recurrence classes of the Schelling
process are the dormant states and 
complete segregation. 
Note that the number of nodes
of type $\alpha$ and of type $\beta$ 
does not change between transitions, 
once the initial state has been chosen.

\subsection{Our results}\label{se:ourresults}
Given the Schelling process $(n, w, \tau, \rho)$ we wish to determine with high
probability the
type of equilibrium that will eventually occur in the system.
Moreover, we are interested in asymptotic results, i.e.\ statements that
hold with arbitrarily high probability for all sufficiently large $w$ and all sufficiently large $n$ compared to $w$.
We denote this quantification on $w, n$ 
by `$0\ll w$ and $w\ll n$' respectively (and
write `$0\ll w\ll n$' for the combined statement). 
The following definition encapsulates the type of 
asymptotic statements about the Schelling process
$(n,w,\tau, \rho)$ that we are interested in establishing.

\begin{defi}[Properties with high probability and static processes]\label{de:prophighprob}\label{defi:staticproc}
Suppose that $R$ is a property which may or may not be satisfied by 
any given run of the Schelling process
$(n, w, \tau, \rho)$, and $T$ is a property of 
the parameters $\tau, \rho$. By the sentence
``if $T(\tau, \rho)$, then with high probability $R(n, w, \tau, \rho)$''
we mean that, provided that $\tau, \rho$ satisfy $T$, 
for every $\epsilon>0$ and
all $w\gg 1/\epsilon$, $n\gg w$ the process
$(n, w, \tau, \rho)$ satisfies $R$ with probability at least $1-\epsilon$.
We say that the process $(n,w,\tau,\rho)$ is static if, given $\epsilon>0$,
with high probability the number of
nodes that ever change their type 
in the entire duration of the process is $\leq\epsilon \cdot n$.
\end{defi}

By  \cite{brandt:an,BELschel13},
 the asymptotic behaviour of the process $(n,w,\tau,\rho)$
 is known for $\rho=0.5$ (except perhaps on the threshold
 $\tau=\kappa_0\approx 0.353$). The present work is dedicated to the case
 where one type of node is the minority, i.e.\ when $\rho<0.5$. 
We show that with probability 1 the process will either reach complete segregation
 or reach a dormant state. Complete segregation is, strictly speaking, 
 a a recurrence class of the process, consisting of the rotations of the two 
 blocks, one consisting of all the $\alpha$-nodes and the other consisting of all the $\beta$-nodes.
 Hence, modulo symmetries, we may regard complete segregation as
 an absorbing state. Dormant states are a different kind of absorbing state, as the process
 actually stops when it hits a dormant state.
We show that  when $\tau>0.5$ the highly probable outcome is complete segregation.
Moreover, in many cases when $\tau\leq 0.5$ the outcome is negligible segregation
(i.e.\ the process is static).
Let $\kappa_0\approx 0.353$ and $\lambda_0\approx  0.4115$
be 
the unique solutions of 
$\left( 0.5-x \right)^{0.5-x} = \left( 1-x \right)^{1-x}$
and
$2\tau \cdot \left(0.5-\tau\right)^{1-2\tau}= (1-\tau)^{2(1-\tau)}$
respectively in $[0,0.5]$.
 
\begin{thm}[Main result]\label{th:complsegm}
If $\tau>0.5$, $\rho<0.5$ and $\tau+\rho\neq 1$, then with high probability the
Schelling process $(n,w,\tau,\rho)$ 
reaches complete segregation.  
The process $(n, w, \tau, \rho)$ is static (with high probability) if
\[
\parbox{12cm}{\textup{
[$\tau\leq \lambda_0$ \& $\rho\leq \lambda_0$]\hspace{0.5cm}or\hspace{0.5cm}
[$\tau\leq \kappa_0$ \& $\rho<0.5$]\hspace{0.5cm}or\hspace{0.5cm}
[$\tau\leq 0.5$ \& $\rho\leq 0.25$]}
}
\]
or, more generally, if 
$2\rho \cdot (1-2\kappa_0)+\tau +\kappa_0<1$.
\end{thm}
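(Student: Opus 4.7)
My plan is to treat the two assertions separately, splitting the complete-segregation statement further according to the sign of $\tau + \rho - 1$.

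For the complete-segregation claim ($\tau > 0.5$, $\rho < 0.5$, $\tau + \rho \neq 1$), the first step is to control the initial density of unhappy nodes. A random neighborhood contains, by the law of large numbers and Chernoff-type bounds on the Bernoulli$(\rho)$ sum, a fraction of $\beta$-nodes concentrated sharply around $\rho < 0.5 < \tau$, so all but $o(n)$ of the $\beta$-nodes are initially unhappy. If $\tau + \rho > 1$ then $1-\rho < \tau$ as well, so all but $o(n)$ of the $\alpha$-nodes are also unhappy; in this subcase the potential function given by the total number of monochromatic adjacencies is non-decreasing along legal swaps and, by a combinatorial case check, is strictly increasing whenever the configuration is not completely segregated, which together with the exclusion of dormant states (a dormant state would require one whole type to be simultaneously happy, impossible when almost all nodes of that type are unhappy) forces the process to reach complete segregation.

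If instead $\tau + \rho < 1$, then initially almost no $\alpha$-node is unhappy, so early swaps are rare. The plan is to produce, with high probability, a positive density of \emph{seed intervals} of length $\Theta(w)$ in which the local $\beta$-density already exceeds $\tau$; a large-deviation computation (using the threshold structure encoded in $\lambda_0$) guarantees these exist. The core combinatorial step is a monotonicity lemma: once a $\beta$-block of length at least $2\tau w$ has formed, every $\alpha$-node adjacent to it becomes unhappy, each such $\alpha$-node can only be swapped with some unhappy $\beta$-node elsewhere, and each such swap lengthens the block by one. Iterating, the seeds grow and merge until a single $\beta$-block of length $\rho_\ast n$ is reached.

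For the static assertion the strategy is to exhibit, with high probability, a dense system of \emph{firewalls}: intervals whose composition is so balanced locally that (a) no node inside them can ever become unhappy under any sequence of legal swaps, and (b) no chain of swaps can propagate through them. Using Chernoff bounds on disjoint length-$\Theta(w)$ windows I would show that the probability a given window is a firewall is of the form $\exp(-w \cdot I(\tau,\rho))$ for an explicit rate function $I$, so that the expected inter-firewall spacing is $o(n)$ precisely when the rate is positive; the quantity $2\rho(1-2\kappa_0) + \tau + \kappa_0 < 1$ is exactly the statement that the dominant large-deviation rate (obtained by optimising the firewall composition against the worst-case configuration of swappable unhappy nodes, with $\kappa_0$ arising as the optimal balance point) is strictly positive. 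Between any two consecutive firewalls the process is then confined to a cell of length $o(n/\log w)$ or so, and the number of nodes that ever move in such a cell is bounded by the number of initially unhappy nodes in it, which is $o(|\text{cell}|)$ by a further Chernoff estimate. Summing over cells gives $o(n)$ total movement, proving staticness. The three listed special conditions each simplify the rate function (respectively: to the $\lambda_0$ condition, to a symmetric $\kappa_0$ condition when the $\beta$-density is $0.5$ in the firewall, and to the $\rho \leq 0.25$, $\tau \leq 0.5$ boundary case), so each implies the general inequality.

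The main obstacle I anticipate is the $\tau + \rho < 1$ subcase of complete segregation: since swaps are initially rare, one must show the process does not terminate at a dormant state in which a modest number of early swaps have eliminated all the unhappy $\alpha$-nodes before the $\beta$-seeds have had time to consolidate. This forces a careful accounting that couples the growth of $\beta$-blocks to the continual regeneration of unhappy $\alpha$-nodes on their boundaries, and is where the large-deviation density estimates for the initial configuration (and the precise value of $\lambda_0$) do most of the work.
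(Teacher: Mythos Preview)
Your proposal has the right coarse architecture but contains genuine gaps in both halves of the complete-segregation argument.

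\textbf{Case $\tau+\rho>1$.} Your potential-function argument does not work here: when $\tau>0.5$ a legal swap need \emph{not} strictly increase the social welfare (equivalently, strictly decrease the mixing index); two unhappy nodes of opposite type may swap with both utilities unchanged. So monotonicity alone cannot force termination at complete segregation rather than cycling among states of equal welfare. More seriously, your exclusion of dormant states---``impossible when almost all nodes of that type are unhappy''---is a statement about the \emph{initial} configuration only. After $\Omega(n)$ swaps the state has been rearranged, and you must rule out dormant states among \emph{all} permutations of the $\rho_\ast n$ many $\beta$-nodes. The paper does exactly this (Lemma~\ref{coro:exiunhangen}): a purely combinatorial argument showing that if $\tau>0.5$ and $\tau+\rho_\ast>1$ then \emph{every} arrangement with these proportions has unhappy nodes of both types. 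This is paired with a separate accessibility lemma (Lemma~\ref{coro:inevcompsegd}) showing that from any state a path exists to either a dormant state or complete segregation; the two together give the result.

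\textbf{Case $\tau+\rho<1$.} Your seed-growth picture is qualitatively reasonable, but ``seeds grow and merge until a single $\beta$-block forms'' is not what happens, and your plan does not explain how to avoid the many accessible dormant states along the way (which you correctly flag as the obstacle). The paper's mechanism is different and more indirect: it proves a \emph{lower bound} on the mixing index of any dormant state (Lemma~\ref{le:mixindorsta}: $\mix>n(w+1)\tau\rho_\ast$), and then shows via a supermartingale on the number of $\alpha$-nodes in an expanding ``infected area'' that the process reaches a state with $\unhap<n\tau\rho_\ast/w$, hence $\mix$ below the dormant threshold, before any dormant state can be hit. From such a \emph{safe} state the accessibility lemma then delivers complete segregation. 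Your persistent $\beta$-block plays only the auxiliary role of guaranteeing unhappy $\alpha$-nodes exist while this is going on. Also, $\lambda_0$ plays no role in the $\tau>0.5$ analysis; it belongs entirely to the static clause.

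\textbf{Static clause.} Your firewall idea is essentially the paper's stable-interval argument, and the comparison of two large-deviation rates is the right picture. But the constants are not obtained by ``optimising the firewall composition against the worst-case configuration of swappable unhappy nodes''; they come directly from comparing the binomial tails $\Pstab=\mathbb{P}[B(w,1-\rho)\ge (2w+1)\tau]$ and $\Punhap=\mathbb{P}[B(2w,\rho)\ge 2w(1-\tau)]$ via Stirling, yielding the explicit function $g(\tau,\rho)$ in \eqref{eq:rhotaueqfostabgz}. The linear condition $2\rho(1-2\kappa_0)+\tau+\kappa_0<1$ is merely the chord joining the two boundary-axis intercepts of the region $\{g>0\}$, and the three displayed special cases are read off from $g$ by fixing $\rho=0.5$, setting $\tau=\rho$, and taking $\tau\to 0.5$, respectively---they do not all imply the linear inequality.
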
 

\begin{table}
\caption{Metrics of welfare and
critical stages in the unbalanced happiness 
process, as stopping times for certain conditions.}\label{ta:validprglobbouatd}
\colorbox{black!10}{\arrayrulecolor{green!50!black}  \begin{tabular}{lcl}
    \multicolumn{1}{l}{\bf\small Metric} &
        \multicolumn{1}{c}{\bf\small Symbol}& 
    \multicolumn{1}{c}{\bf\small Dynamics} \\[1ex]
\toprule
{\small Social welfare}
&  {\small $\mathtt{V}$} &
{\small Positive (strictly  if $\tau\leq 0.5$)} \\[1ex]
{\small Mixing index} 
&  {\small $\mix$} &
 {\small Negative  (strictly  if $\tau\leq 0.5$)} \\[1ex]
  {\small No.\ of unhappy nodes} 
&  {\small $\unhap$} &
 {\small Approximately negative if $\tau\geq 0.5$} \\[1ex]
\ditto {\small $\alpha$-nodes}
&  {\small $\unhap_{\alpha}$}   
& {\small Ambiguous} 
\end{tabular}}
\quad
\colorbox{black!10}{\arrayrulecolor{green!50!black} \begin{tabular}{ll}
   \multicolumn{1}{c}{\bf\small Stage} &
    \multicolumn{1}{l}{\bf\small  Stopping time for}      \\[1ex]
 \toprule
{\small $T_g$} &{\small $\GG_s> \tau\rho\cdot n/(4w)$ }\\[1ex]
{\small $T_y$} & {\small $\YY_s\leq \GG_s$, $s< T_g$} \\[1ex]
{\small $T_{\textrm{mix}}$}  &{\small $\mix >n(w+1)\tau\rho_{\ast}$}\\[1ex]
{\small $T_{\textrm{stop}}$}  &{\small $\unhap_{\alpha}>0$}
\end{tabular} } 
\centering
\end{table}

The values of $(\tau,\rho)$ for which we show that the process is static, correspond to the yellow
area of the first diagram (or, equivalently, the collapsed part of the surface of the third diagram) 
of Figure \ref{fig:red_3D_tr}. 
The case when $\rho\leq 0.25$ presents a remarkable contrast as $\tau$ crosses the
boundary of $0.5$. In this case, when $\tau$ exceeds  the
threshold $0.5$, the process changes from static to the other extreme of 
complete segregation.

\begin{table}
\caption{Two cases for the process $(n, w, \tau, \rho)$ and the corresponding expectations of
the number of initially happy nodes.}\label{ta:validptwocasesnadcorr}
\colorbox{black!10}{\arrayrulecolor{green!50!black} 
  \begin{tabular}{llll}
    \multicolumn{1}{l}{\bf\small Case} &
        \multicolumn{1}{l}{\bf\small Condition}& 
        \multicolumn{1}{l}{\bf\small  \hspace{0.6cm}Happy $\alpha$}& 
    \multicolumn{1}{l}{\bf\small  \hspace{0.6cm}Happy  $\beta$} \\[0.5ex]
\toprule
{\small Balanced happiness\hspace{0.6cm}}
&  {\small $\tau +\rho>1$, $\tau>0.5$}  
&\hspace{0.6cm}{\small $n\cdot e^{-\Theta(w)}$} 
&\hspace{0.6cm}{\small $n\cdot  e^{-\Theta(w)}$} \\[1ex]
{\small Unbalanced happiness\hspace{0.6cm}} 
&  {\small $\tau +\rho<1$, $\tau>0.5$} 
&\hspace{0.6cm}{\small $n\cdot \left(1-e^{-\Theta(w)}\right)$}
& \hspace{0.6cm} {\small $n\cdot  e^{-\Theta(w)}$}  \\[0.1ex]
\end{tabular}}
\centering
\end{table}

\begin{coro}[Phase transition on $0.5$]\label{coro:drphasetans}
If $\rho\leq 0.25$, 
then with high probability the process $(n, w, \tau, \rho)$ 
\begin{itemize}
\item converges to complete segregation if $\tau>0.5$;
\item is static, if $\tau\leq 0.5$.
\end{itemize}
Moreover with high probability it 
reaches its final state in time
$\smo{n}$, if $\tau\leq 0.5$
and time $\Omega(n)$, if $\tau> 0.5$.
\end{coro}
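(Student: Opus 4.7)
The plan is to deduce the corollary from Theorem~\ref{th:complsegm}, which already supplies the structural half, and to add two timing arguments. The two qualitative bullet points are direct consequences of the main theorem: for $\tau>0.5$ and $\rho\leq 0.25$ the first clause of Theorem~\ref{th:complsegm} gives complete segregation whenever $\tau+\rho\neq 1$ (the borderline locus $\tau=1-\rho$, relevant only at $\tau=0.75$, $\rho=0.25$, needs a short separate remark); and for $\tau\leq 0.5$ and $\rho\leq 0.25$ the third listed sufficient condition for staticity applies directly. The essential new content is therefore the two timing statements.

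For the $\Omega(n)$ lower bound in the segregation case, any completely segregated configuration has its $\rho_\ast n$ $\beta$-nodes arranged in a single contiguous arc. A Chernoff bound shows that the symmetric difference between the random initial $\beta$-set and any fixed arc of length $\rho_\ast n$ is of size $\Omega(n)$ with probability $1-e^{-\Theta(n)}$; a union bound over the $n$ rotations shows that the initial state has Hamming distance $\Omega(n)$ from every segregated state with high probability. Since each successful swap changes the type at only two sites, at least $\Omega(n)$ stages are required.

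For the $o(n)$ upper bound in the static case, the natural approach is to combine the static bound $|S|\leq \epsilon n$ on the set $S$ of sites whose type ever changes (for pre-chosen $\epsilon$, with high probability) with a per-site bound on the number of type changes. The social welfare $\mathtt{V}$ of Table~\ref{ta:validprglobbouatd} strictly increases by at least $1$ per successful swap when $\tau\leq 0.5$; the naive estimate $\Delta\mathtt{V}=O(w|S|)$ yields a swap count of $O(\epsilon n w)$, which is \emph{not} $o(n)$ under the operative quantifier order in the definition of ``high probability''. The stronger claim needed is that each site of $S$ flips type only $O(1)$ times, uniformly in $w$: because every move weakly increases an agent's same-type neighbour count and the configuration outside $S$ is frozen, each agent's local environment stabilises after a bounded number of oscillations. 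This yields $O(|S|)=o(n)$ successful swaps, and hence an $o(n)$ bound on stages after a short counting argument controlling the ratio of attempted-but-illegal stages to successful ones (noting that the process terminates as soon as no legal swap exists).

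The main obstacle is the $O(1)$ per-site bound on type changes in the static regime, since global monotonicity of $\mathtt{V}$ by itself is insufficient. I would expect the argument to hinge on the cluster-level structural description of $S$ underlying the static part of the proof of Theorem~\ref{th:complsegm}, showing that each connected cluster of modified sites resolves to its final configuration after a number of internal swaps that is independent of $w$.
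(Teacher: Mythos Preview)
Your derivation of the two qualitative bullets from Theorem~\ref{th:complsegm} is correct, and your $\Omega(n)$ argument via Hamming distance is valid. The paper takes a slightly different route for $\Omega(n)$: the expected initial mixing index is $2nw\rho(1-\rho)$, complete segregation has mixing index $w(w+1)$, and a single swap can decrease the mixing index by at most $4w$; hence with high probability at least $(n\rho-(w+1))/4=\Omega(n)$ swaps are needed. Both arguments work.

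The genuine gap is in your $o(n)$ argument. You correctly observe that the strict increase of $\mathtt{V}$ (equivalently, the strict decrease of $\mix$; the paper proves as a lemma that each swap decreases $\mix$ by at least $4$ when $\tau\le 0.5$) yields only $O(w|S|)$ swaps, and that with $|S|\le\epsilon n$ and $\epsilon$ fixed before $w$ this is not $o(n)$. But the fix is \emph{not} an $O(1)$ per-site flip bound---that claim is neither proved nor needed. The paper's route is that the static analysis in Section~\ref{se:casetaubhalf} delivers a much stronger quantitative bound: under the hypothesis $\Punhap=\bigo{c^{-w}\cdot\Pstab}$ (which is verified there for $\tau\le 0.5$ and $\rho\le 0.25$), the Static Processes lemma concludes directly that with high probability the process terminates after at most $n\cdot c_\ast^{-w}$ steps for some $c_\ast>1$. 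This is $o(n)$ outright. Equivalently, the set $S$ of sites that ever change satisfies $|S|=O(nc^{-w})$ rather than merely $|S|\le\epsilon n$, and plugging this exponential bound into your own $O(w|S|)$ estimate already gives $O(wnc^{-w})=o(n)$. Your final concern about failed swap attempts is also unnecessary: when $\tau\le 0.5$ any pair of unhappy nodes of opposite type can legally swap, so every non-terminal stage is a successful swap.
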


We display these results in the second item of Table \ref{ta:paraandmainres}.
In Sections \ref{se:analyticsche}--\ref{se:casetaubhalf} we present the argument that proves these results.
This argument uses a number of smaller results which are stated without proof, and are the building blocks of
the proof of Theorem \ref{th:complsegm}. It is our intention that the reader gets a fairly good understanding
of our analysis in this part of the paper, without the burden of having to verify some of the more technical
parts of the proof. Section \ref{se:prelim} is an appendix with detailed proofs of all the facts that were used in
Sections \ref{se:analyticsche}--\ref{se:casetaubhalf}, and completes the proofs of Theorem \ref{th:complsegm} and
Corollary \ref{coro:drphasetans}.

Our proof of Theorem 
is nonuniform, and the analysis is roughly divided in the two cases displayed in Table \ref{ta:validptwocasesnadcorr}:
{\em balanced and unbalanced happiness}. Here {\em happiness} refers to the numbers of initially happy nodes of the
two types, and determines the dynamics that drives the process to an equilibrium. Of the two cases, 
{\em unbalanced happiness} is the most challenging to deal with, and the dynamics is driven by small number of unhappy
$\alpha$-nodes against the large number of unhappy $\beta$-nodes, which in fact is preserved throughout a significant part of the process.

\section{Metrics and reaching complete segregation}\label{se:analyticsche}
One of the most challenging problems in the analysis of the segregation process
is the large number of absorbing states. In order to understand which transitions are possible,
we use certain metrics that 
describe the current state.

\begin{figure}
\includegraphics[scale=0.32]{./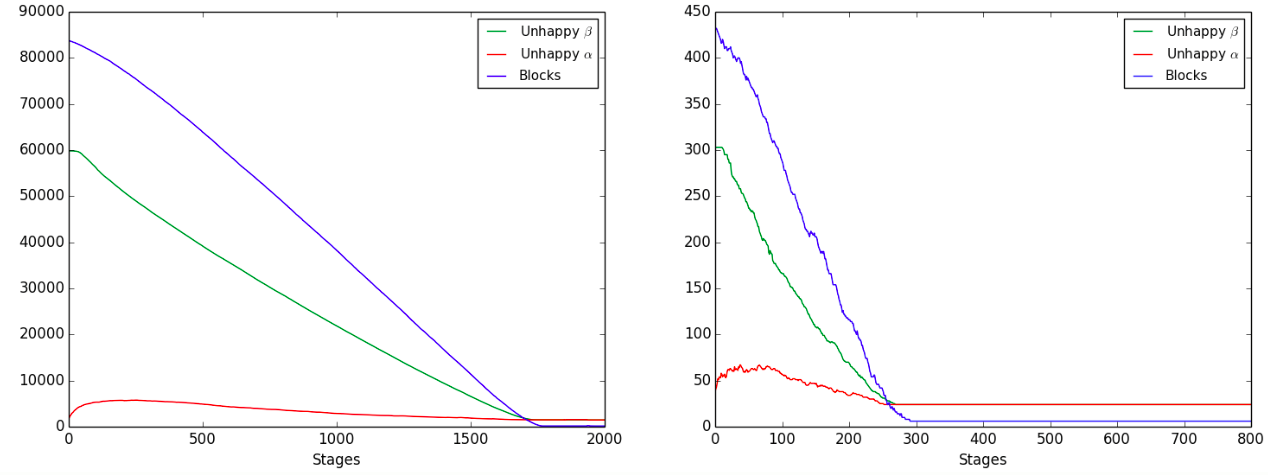}
\centering
 \caption{The first plot is
from the process $(200000, 50, 0.6, 0.3)$
and the second one 
from the process $(1000, 20, 0.6, 0.3)$.
These simulations illustrate that the number $\alpha$-nodes in the infected area remains bounded, until
the number of $\beta$-nodes outside the infected area becomes small. The second figure also illustrates
the fact that the number of unhappy nodes fluctuates locally.}\label{fig:inf_area}
\end{figure}

\subsection{Welfare, mixing, and expectations}
We define
global metrics that reflect the welfare of the entire population. An obvious choice
is the number of happy nodes at a given state. It is not hard to devise 
transitions of the process which reduce the total number of happy nodes 
(see the second plot of Figure \ref{fig:inf_area}).
However it is possible to show that if $\tau>0.5$ the
total number of happy nodes is {\em approximately non-decreasing} (in the sense that it is $\Theta(g)$ for some nondecreasing function $g$ 
on the stages, where the underlying constant depends only on $w$).
Let the {\em utility} of a node (at a certain state) be the number of nodes of the same type in its neighborhood.
A better behaved global metric of welfare 
of a state is the sum of the
utilities of the nodes in the state.
We call this parameter the {\em social welfare} 
of the state and denote it by $\mathtt{V}$. 
 A consequence of the transition rule and the definition of utility is
 that the {\em social welfare} 
 does not decrease along the stages of the process.
Furthermore, 
if $\tau\leq 0.5$, every transition of the process strictly increases the social welfare.
Let the {\em mixing index} of a node be the number of nodes in its neighbourhood
that are of different type.
The {\em mixing index} $\mix$ of a state is the sum of the mixing indices
of the $\alpha$-nodes in that state. 
The mixing index of a state is also equal to the
sum of the mixing indices
of the $\beta$-nodes in that state.
The relationship between the two metrics is
\[
\mathtt{V}= (2w+1)\cdot n - 2\cdot \mix.
\]
Hence the mixing index is non-increasing along the transitions.
Note that a single swap cannot decrease the mixing index by more than $4w$. On the other hand, 
by linearity of expectation we can calculate that
 \begin{equation*}
\parbox{12.5cm}{the expectation of the mixing index in the initial state of $(n,w,\tau,\rho)$ is
$2nw\rho(1-\rho)$.}
\end{equation*}
The mixing index of 
complete segregation (in nontrivial cases) is
$w(w+1)$.
Since $\rho\leq 1/2$, this means that (with high probability) the process can reach complete segregation only
after $(n\rho-(w+1))/4 >n\rho/5$ stages, i.e.\ $\Omega(n)$ stages. 
On the other hand, a case analysis shows that if $\tau\leq 0.5$, each step in the process
decreases the mixing index by at least 4.
This means that if $\tau\leq 0.5$ and the process is static, then it reaches its final state within 
$\smo{n}$ stages. 
This happens because each time a swap occurs, the mixing index decreases by at least 4
(so its not possible that the same few nodes swap more than $\smo{n}$ times).
We have shown that the second clause of Corollary
\ref{coro:drphasetans} (concerning the time to the final state) follows from the first clause.

As another measure of mixing, we may consider the number
$\mathtt{k}_{\beta}$ of maximal $\beta$-blocks in the state. These are the contiguous $\beta$-blocks
that are maximal, in the sense that they cannot be extended to a larger contiguous $\beta$-block.
Let $\unhap$ be the number of unhappy nodes in a state.
It is not hard to show that if $\tau>0.5$  then $\mix=\Theta(\unhap)=\Theta(\mathtt{k}_{\beta})$ and in particular
\begin{equation}\label{eq:finalumixkrel}
\mix \leq w\cdot (w+1) \cdot \mathtt{k}_{\beta} \leq w\cdot (w+1) \cdot \unhap < \mix\cdot 2w/(1-\tau). 
\end{equation}
This means that 
the number of unhappy nodes at a certain state reflects
the progress of the process towards segregation.
More precisely,
the metrics $\mix$, $\blockb$, $\unhap$ are mutually proportional when
$\tau>0.5$, where the analogy coefficient depends on $w$ (see Figure \ref{fig:inf_area}). 
In Table \ref{ta:validprglobbouatd} we display these global metrics of welfare,
along with their dynamics. A function (on the stages of the process) 
has positive dynamics if it is non-decreasing
and approximately positive dynamics if it
is $\Theta(g)$ for some nondecreasing function $g$, where the multiplicative constant does not depend on $n$.
Similar definitions apply for `negative'.
The first clause of Theorem \ref{th:complsegm} (the case when $\tau>0.5$)
is the hardest to prove. 
It turns out that in this case  we can
deduce a non-trivial lower bound on the mixing index of dormant states.
\begin{lem}[Mixing in dormant states]\label{le:mixindorsta}
Consider the process $(n, w, \tau, \rho)$ with $\tau>0.5$. The  
mixing index in a dormant state is more than $n(w+1)\tau\rho_{\ast}$,
as long as $w>1/(2\tau-1)$.
\end{lem}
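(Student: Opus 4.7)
The plan is to argue by contradiction, combining a simple block-size constraint with a local happiness argument.

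First (structural step): I would work without loss of generality in the case that all $\alpha$-nodes are happy---the symmetric "all $\beta$-happy" case is either ruled out by the minority hypothesis $\rho\leq 1/2<\tau$ or handled by an analogous argument with the roles of $\alpha$ and $\beta$ swapped. For each maximal $\beta$-block $B_i$ of size $\ell_i$, I would consider an $\alpha$-node $u$ immediately adjacent to $B_i$: if $\ell_i \geq w$ then $u$ has at most $w+1$ same-type neighbors in $\mathcal{N}(u)$, violating happiness because $w>1/(2\tau-1)$ gives $\tau>(w+1)/(2w+1)$. So $\ell_i\leq w-1$, and in particular $B_i\subseteq \mathcal{N}(u)$; happiness of $u$ then yields the sharper bound $\ell_i \leq (1-\tau)(2w+1)$.

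Second (contradiction step): Assuming for contradiction that $\mix \leq n(w+1)\tau\rho_{\ast}$, I would use the identity $\sum_{v\in B}|\mathcal{N}(v)\cap B| + \mix = (2w+1)n\rho_{\ast}$ to force $\sum_{v\in B}|\mathcal{N}(v)\cap B| \geq n\rho_{\ast}[w(2-\tau)+(1-\tau)]$, and then pick by averaging some $\beta$-node $v$ with $|\mathcal{N}(v)\cap B|\geq M$, where $M := w(2-\tau)+(1-\tau)$. The key observation is that for any $\alpha$-node $u$, the set $\mathcal{N}(v)\setminus \mathcal{N}(u)$ is an interval of length exactly $|u-v|$, so $|\mathcal{N}(u)\cap B|\geq M-|u-v|$. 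Combining with the happiness bound $|\mathcal{N}(u)\cap B|\leq (1-\tau)(2w+1)$ forces $|u-v|\geq M-(1-\tau)(2w+1)$, and a short calculation shows this right-hand side equals precisely $w\tau$. Hence no $\alpha$-node lies within distance $w\tau$ of $v$, so the $\beta$-block containing $v$ has size at least $2w\tau-1$. This contradicts the upper bound $\ell_i \leq (1-\tau)(2w+1)$ from the first step, because $w>1/(2\tau-1)$ implies $2w\tau-1>(1-\tau)(2w+1)$ (equivalently $2w(2\tau-1)>2-\tau$).

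I expect the main obstacle to be the reduction in the first step, specifically cleanly handling the all-$\beta$-happy alternative; intuitively a happy minority cannot sustain the local density $\geq\tau>1/2$ demanded by happiness given a global density $\rho\leq 1/2$, but turning this into a clean argument takes work. The contradiction in the second step is comparatively tight, and its success is driven by the precise matching of the threshold $w > 1/(2\tau-1)$ with the two numerical coincidences $M-(1-\tau)(2w+1)=w\tau$ and $2w\tau-1>(1-\tau)(2w+1)$.
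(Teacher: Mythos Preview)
Your proof is correct and essentially the same as the paper's: both bound $\beta$-block lengths via the adjacent happy $\alpha$-node (the paper's Lemma~\ref{le:blockdorml}), then average to locate a $\beta$-node with mixing index at most $(w+1)\tau$ and derive a contradiction from the happiness of a nearby $\alpha$-node---you phrase this last step contrapositively (happy $\alpha$-nodes must be far, forcing a long block) but the arithmetic is identical. On your worry about the all-$\beta$-happy alternative: the paper dispatches it by invoking Corollary~\ref{coro:exiunhangen1} (under the standing hypothesis $\rho_{\ast}<\tau$) to conclude every state has an unhappy $\beta$-node, so a dormant state must have all $\alpha$-nodes happy; your own swap-roles suggestion also goes through cleanly, since $\rho_{\ast}\leq 1/2$ makes the averaging bound in the symmetric case at least as strong as $M$.
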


The case $\tau>0.5$ is further divided in two cases, which reflect the
proportions of happy nodes in the initial state. We display these in 
Table \ref{ta:validptwocasesnadcorr}, along with the corresponding 
expectations for the numbers of happy nodes of each type.
Lemma \ref{le:mixindorsta} is crucial  
for the proof of the first clause of Theorem \ref{th:complsegm}
(in particular the  case $\tau+\rho<1$).

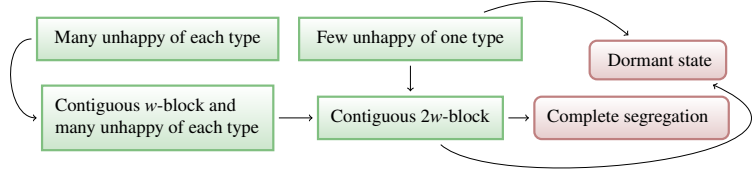
\begin{figure}
\scalebox{0.7}{
\begin{tikzpicture}[
main/.style={rectangle,  minimum size=6mm, rounded corners, very thick,
draw=red!50!black!50, top color=white, bottom color=red!50!black!20, font=\small},
Nnode/.style={rectangle,  minimum size=6mm, very thick,
draw=green!50!black!50, top color=white, bottom color=green!50!black!20, font=\small },
tnode/.style={rectangle,  minimum size=6mm, rounded corners, very thick,
draw=red!50!black!50, top color=white, bottom color=red!50!black!20, 
 font=\small},
 ttnode/.style={rectangle,  minimum size=6mm, rounded corners, very thick,
draw=yellow!50!black!50, top color=white, bottom color=yellow!50!black!20, 
 font=\small}]
\node (cr) [Nnode,  outer sep=3pt, inner sep=7pt] at (11.3,0.5) {\textrm{Contiguous $2w$-block }};
\node (ivr) [main, outer sep=3pt, inner sep=7pt] at (16,1.6) {\parbox{2.1cm}{\hspace{0.1cm}Dormant state\hspace{0.1cm}}};
\node (tivr) [main, outer sep=3pt, inner sep=7pt] at (15.5,0.5) {\parbox{3.2cm}{Complete segregation}};
\node (fvr) [Nnode, outer sep=3pt, inner sep=7pt] at (11.3,2) {\textrm{\ Few unhappy of one type\ \ }};
\node (kr) [Nnode, outer sep=3pt, inner sep=7pt] at (6.5,2) {\textrm{\ Many unhappy of each type\ \ }};
\node (bi) [Nnode, outer sep=3pt, inner sep=7pt] at (6.5,0.5) {\parbox{3.9cm}{Contiguous $w$-block and\\ 
many unhappy of each type}};
\draw [->] (cr) --  (tivr); 
\draw [->] (fvr) to [in=140, out = 20]  (ivr); 
\draw[->]  (cr) .. controls (13,-1) and (20, -0.2) .. (ivr); 
\draw [->] (bi) -- (cr) ; 
\draw [->] (fvr) -- (cr); 
\draw [->] (kr) to [in=180, out = -180] (bi); 
\end{tikzpicture}}
\centering
\caption{The path to a dormant state or complete segregation when $\tau>0.5$.}
\label{fig:skewstratcomd}
\end{figure}


\subsection{Accessibility of dormant states and complete segregation}\label{subse:accdorcose}
We show  the case of Theorem \ref{th:complsegm} where $\tau>0.5$ and
$\tau+\rho>1$. This argument consists of two parts. First, we show that in this case
with high probability the initial state is such that
every state with the same number of $\alpha$-nodes has unhappy nodes of both types (i.e.\ it is not dormant).
Hence under these conditions, no accessible state is dormant.
The second part consists of showing that from every state there is a sequence of transitions to either a dormant state 
or complete segregation. Moreover the latter fact holds in general, for any values of $\tau,\rho$, so it can be reused for the case when
$\tau+\rho<1$, in Section \ref{se:recomsehaca}. This latter case is more challenging, 
as it can be seen that there are permutations of the initial state which are dormant.
\begin{lem}[Existence of unhappy nodes]\label{coro:exiunhangen}
Suppose that $\tau>0.5$, $\rho_{\ast}<\tau$  and $w$ is sufficiently large.
Then for every $c\in\Nat$ and
all sufficiently large $n$, every state of the process
$(n, w, \tau, \rho)$ has more than $c$ unhappy $\beta$-nodes.
If in addition $\tau+\rho_{\ast}>1$, every state also has 
more than $c$ unhappy $\alpha$-nodes.
\end{lem}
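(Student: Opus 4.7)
My plan is to show that in any configuration of $(n,w,\tau,\rho)$ with $\tau>1/2$ and $\rho_{\ast}<\tau$, the number of unhappy $\beta$-nodes is at least of order $(2\tau-1)w$, and hence exceeds any fixed $c$ as soon as $w$ (and so $n$) is large enough. The argument proceeds by a boundary analysis of \emph{$\beta$-clusters}.

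First I group the maximal $\beta$-blocks into clusters, merging two adjacent blocks whenever the $\alpha$-gap between them has length strictly less than $w$. Each resulting cluster is flanked by $\alpha$-gaps of length at least $w$ on both sides, and at least one cluster exists because $0<\rho_{\ast}<1$ for $n$ large. Fix such a cluster and let $v$ be its rightmost $\beta$-node; then $[v+1,v+w]$ is entirely $\alpha$, so that $d_v=|B\cap[v-w,v]|\leq w+1$. The one-step recurrence
\[
d_{v-j}-d_{v-j+1}\;=\;\mathbf{1}[v-w-j\in B]-\mathbf{1}[v+w-j+1\in B]\;=\;\mathbf{1}[v-w-j\in B]
\]
(the second indicator vanishes for $1\leq j\leq w$ because $v+w-j+1$ lies in the all-$\alpha$ interval) telescopes to $d_{v-j}\leq w+1+j$ for $0\leq j\leq w$. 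Hence every $\beta$-node at position $v-j$ with $j<(2\tau-1)w+\tau-1$ has $d<\tau(2w+1)$ and is unhappy; a symmetric bound holds at the leftmost $\beta$-node of the cluster.

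To convert this local estimate into a global count I split into cases on the cluster's $\beta$-count $s$. If $s<\tau(2w+1)$, then because the outer $\alpha$-gaps of the cluster have length at least $w$, the window of every cluster-$\beta$ can only meet $\beta$-nodes of the same cluster, so each such node has $d\leq s<\tau(2w+1)$ and is unhappy; the cluster contributes $s$ to $|U|$. Otherwise, applying the boundary analysis at both ends of the cluster gives at least $\lfloor(2\tau-1)w+\tau-1\rfloor$ unhappy $\beta$-nodes from that cluster, with some bookkeeping when the outermost $\beta$-blocks are short (in which case one iterates into the next inner blocks, all of whose near-boundary $\beta$-nodes still obey the same $d\leq w+1+j$ bound). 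Summing over clusters yields $|U|$ of order at least $(2\tau-1)w$, which exceeds $c$ once $w$ is sufficiently large.

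The second claim, about unhappy $\alpha$-nodes under $\tau+\rho_{\ast}>1$, follows by the symmetric argument: the hypothesis rewrites as $1-\rho_{\ast}<\tau$, so interchanging the roles of $\alpha$ and $\beta$ yields the analogous lower bound. The principal obstacle is the second case above, when the cluster is long but consists of many short $\beta$-blocks separated by internal $\alpha$-gaps just below $w$; the boundary analysis alone gives only a few unhappy $\beta$-nodes per cluster boundary, so one must supplement with the $d\leq s$ estimate applied to $\beta$-nodes whose local windows remain $\beta$-sparse. A careful case analysis confirms the recurrence on $d$ is tight enough to force unhappy status on enough such internal $\beta$-nodes to reach the desired $\Omega(w)$ bound.
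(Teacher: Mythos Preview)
Your argument has a genuine gap in the ``large cluster'' case, and a secondary gap in the very definition of clusters.

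For the main gap: your telescoping bound $d_{v-j}\le w+1+j$ indeed shows that any $\beta$-node at depth $j<(2\tau-1)w+\tau-1$ from the right boundary is unhappy, but it does \emph{not} show that there are $\lfloor(2\tau-1)w+\tau-1\rfloor$ such $\beta$-nodes. The positions $v,v-1,\dots,v-\lfloor(2\tau-1)w\rfloor$ may contain almost no $\beta$-nodes at all: take $\tau$ just above $1/2$, let the outermost $\beta$-block be the singleton $\{v\}$, and let the next internal $\alpha$-gap have length just below $w$. Then the range $j<(2\tau-1)w$ contains exactly one $\beta$-node, and for larger $j$ your bound $w+1+j$ is already above $\tau(2w+1)$ and says nothing. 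Your final paragraph acknowledges exactly this obstruction but does not resolve it; the phrase ``a careful case analysis confirms'' is not a proof. A clean fix is to argue instead from the \emph{rightmost happy} $\beta$-node $u$ in the cluster (if one exists): since $d_u\ge\tau(2w+1)$ and $[u-w,u]$ contains at most $w+1$ $\beta$-nodes, the interval $[u+1,u+w]$ must contain at least $(2\tau-1)w+\tau-1$ $\beta$-nodes, all of which lie to the right of $u$ and are therefore unhappy by choice of $u$.

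For the secondary gap: your clustering merges $\beta$-blocks across $\alpha$-gaps shorter than $w$, and you assert that at least one cluster (with flanking $\alpha$-gaps of length $\ge w$) exists because $0<\rho_\ast<1$. This is false: nothing in the hypotheses prevents \emph{every} $\alpha$-block from having length below $w$ (think of a periodic pattern of short blocks). In that case all $\beta$-blocks merge into a single cluster wrapping the ring, there is no boundary node $v$ with $[v+1,v+w]$ entirely $\alpha$, and your analysis never starts.

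The paper's proof takes a completely different, global route: assuming that all nodes of the minority type $\gamma$ are happy, it builds a sequence $(u_i)$ of $\gamma$-nodes whose neighbourhoods cover the ring with bounded overlap, sums the biases $\bias{N_i}\ge(2w+1)(2\tau-1)$, and controls the overcounting to obtain an upper bound on the total bias $\bias{I_m}$. Combined with the global density constraint $\theta_\ast<\tau$, this forces an explicit upper bound on $n$; contraposing gives existence of unhappy $\gamma$-nodes for all larger $n$. This averaging argument is insensitive to the local block structure that causes trouble for your boundary analysis.
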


Given $\rho$, by the law of large numbers with high probability 
(tending to 1, as $n$ tends to infinity)
$\rho_{\ast}$ will be arbitrarily close to $\rho$. 
Hence we may deduce the absence of dormant states
(with high probability) in the case that $\tau+\rho>1$.
\begin{coro}[Absence of dormant states when $\tau>0.5$ and $\tau+\rho>1$]\label{coro:absdormst.g}
If $\rho\leq 0.5<\tau$ and $\tau+\rho>1$  then
with high probability none of the accessible states of
the process $(n, w, \tau, \rho)$
is dormant.
\end{coro}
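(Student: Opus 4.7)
The plan is to derive this corollary almost directly from Lemma~\ref{coro:exiunhangen} together with the law of large numbers applied to the initial type assignment. Fix $\epsilon>0$. Since $\tau>0.5\geq\rho$ and $\tau+\rho>1$, we may choose $\delta>0$ small enough that $\rho+\delta<\tau$ and $\tau+\rho-\delta>1$. By a standard concentration bound (Chernoff/Hoeffding) applied to the i.i.d.\ Bernoulli$(\rho)$ random variables that assign types to the $n$ sites at stage $0$, the event $|\rho_{\ast}-\rho|<\delta$ has probability at least $1-\epsilon$ once $n\gg 1/\epsilon$. On this event we simultaneously have $\rho_{\ast}<\tau$ and $\tau+\rho_{\ast}>1$, putting us in the hypotheses of Lemma~\ref{coro:exiunhangen}.

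Now I would exploit the key invariance noted at the end of Section~\ref{def:modseg}: the counts of $\alpha$-nodes and $\beta$-nodes (and hence $\rho_{\ast}$) do not change along the process. Consequently, every accessible state of $(n,w,\tau,\rho)$ has the same value of $\rho_{\ast}$ as the initial state, so the hypothesis $\rho_{\ast}<\tau$ (and $\tau+\rho_{\ast}>1$) is inherited by every accessible state. Lemma~\ref{coro:exiunhangen}, applied with $c=1$ and for all sufficiently large $w$ and $n$ (as specified by the `$0\ll w\ll n$' convention of Definition~\ref{de:prophighprob}), then guarantees that every such state has at least one unhappy $\beta$-node and at least one unhappy $\alpha$-node.

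To conclude, recall that a state is dormant precisely when all $\alpha$-nodes are happy or all $\beta$-nodes are happy. Since on the high-probability event above every accessible state has unhappy nodes of both types, no accessible state can be dormant. As $\epsilon$ was arbitrary, this establishes the corollary in the sense of Definition~\ref{de:prophighprob}.

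There is essentially no obstacle here beyond correctly packaging the two ingredients; the substantive content has been done in Lemma~\ref{coro:exiunhangen}. The only thing worth stating carefully is the invariance of $\rho_{\ast}$ under transitions, which is what upgrades the statement ``every state satisfying $\rho_{\ast}<\tau$ and $\tau+\rho_{\ast}>1$ has unhappy nodes of both types'' into the statement ``every \emph{accessible} state has unhappy nodes of both types,'' since the bad set of initial $\rho_{\ast}$ values is ruled out once and for all by the law of large numbers at stage $0$.
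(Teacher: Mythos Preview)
Your proof is correct and follows essentially the same approach as the paper: use the law of large numbers (or a Chernoff/Hoeffding bound) to ensure that with high probability $\rho_{\ast}$ lies close enough to $\rho$ that both $\rho_{\ast}<\tau$ and $\tau+\rho_{\ast}>1$ hold, and then invoke Lemma~\ref{coro:exiunhangen} to conclude that every state has unhappy nodes of both types. Your explicit mention of the invariance of $\rho_{\ast}$ under transitions is a helpful clarification that the paper leaves implicit.
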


It remains to show the accessibility of either a dormant state or
complete segregation, from any state of the process.
An inductive argument can be used in order to prove this fact.

\begin{lem}[Complete segregation or dormant state]\label{coro:inevcompsegd}
From any state of the process $(n, w, \tau, \rho)$ with $0\ll w\ll n$
there exists a series of transitions to complete segregation or to a dormant state.
\end{lem}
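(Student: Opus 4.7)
The plan is to argue by induction on the number $\blockb$ of maximal $\beta$-blocks in the configuration. The base case $\blockb = 1$ already gives complete segregation, since on the circle a single maximal $\beta$-block forces the complementary $\alpha$-nodes to form a single maximal $\alpha$-block as well. For the inductive step, I fix a state $S$ with $\blockb \geq 2$; if $S$ is dormant, there is nothing to prove, so I assume $S$ contains unhappy nodes of both types.

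I would then exhibit an explicit sequence of legal transitions that empties one chosen maximal $\beta$-block $B$ by migrating its members into other $\beta$-blocks, thereby strictly decreasing $\blockb$ and enabling the inductive hypothesis. The strategy is to process the members of $B$ one at a time: for each $\beta$-node $u\in B$, pick an unhappy $\alpha$-node $v$ lying inside (or at the boundary of) some large $\beta$-block $B'$ far from $B$, and swap $u$ with $v$. Under these choices, $u$ arrives in a $\beta$-dominated region (a weak gain in same-type neighbours) and $v$ arrives in the $\alpha$-dominated region around $B$ (again a weak gain in same-type neighbours), so the swap is legal. Iterating the procedure until $B$ disappears produces a state with fewer $\beta$-blocks, to which the induction hypothesis applies.

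The main obstacle is to verify the legality of every intermediate swap as the configuration evolves, and in particular to guarantee that a suitable unhappy $\alpha$-partner $v$ is available at each step. After several $\beta$-nodes have been migrated out of $B$, the $\alpha$-nodes adjacent to $B$ may have become happy, and the target block $B'$ may have lost the unhappy interior partners it once had. The remedy is to interleave the $B$-emptying swaps with auxiliary swaps among large $\alpha$- and $\beta$-blocks that manufacture fresh unhappy partners without increasing $\blockb$; because $0\ll w\ll n$ there is enough room in the configuration for such auxiliary swaps to be carried out. A further sub-case analysis is needed in the regime $\tau>0.5$, where the weak-gain constraint is more delicate; there I would exploit the freedom to choose $v$ from any $\beta$-dominated interior position arbitrarily far from $B$, which keeps $v$'s pre-swap same-type count small enough for the swap to satisfy the weak-gain requirement on both sides. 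Since only the existence (and not the efficiency) of a transition sequence is required, the argument closes once these local legality checks are established.
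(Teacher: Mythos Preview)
Your inductive step has a genuine gap. In this model only \emph{unhappy} nodes may participate in a swap, yet you never verify that the $\beta$-node $u\in B$ you wish to migrate is unhappy. When $B$ has length exceeding $2w$, every $\beta$-node in the interior of $B$ has a neighbourhood consisting entirely of $\beta$-nodes and is therefore happy; such a node cannot be moved by any legal swap. Consistently with this, your claim that ``$v$ arrives in the $\alpha$-dominated region around $B$'' is correct only when $u$ lies at an endpoint of $B$; for an interior $u$ the destination of $v$ would be $\beta$-dominated and the swap would strictly \emph{decrease} $v$'s utility, hence be illegal. The proposed ``auxiliary swaps that manufacture fresh unhappy partners without increasing $\blockb$'' do not address this: rendering an interior $\beta$-node of $B$ unhappy requires inserting $\alpha$-nodes into $B$, which splits $B$ and increases $\blockb$. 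A second, smaller issue is that you never guarantee that an unhappy $\alpha$-node $v$ adjacent to some other $\beta$-block exists; if all $\beta$-blocks are short, the $\alpha$-nodes adjacent to them may all be happy, and the only unhappy $\alpha$-nodes may sit strictly inside $\alpha$-blocks, so that placing $u$ at such a position \emph{creates} a new $\beta$-block rather than extending one. (Incidentally, an $\alpha$-node cannot lie ``inside'' a maximal $\beta$-block; presumably you mean adjacent to it.)

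The paper takes a rather different route. For $\tau\le 0.5$ the mixing index is a strictly decreasing Lyapunov function along transitions, so a dormant state is reached automatically and no combinatorics is needed. For $\tau>0.5$ the paper does not induct on $\blockb$; instead it first manufactures a single contiguous block of length at least $2w$ (via separate arguments for the cases of many versus few unhappy nodes of each type), and then shows that any state containing such a block can reach complete segregation or a dormant state. The key device in that last step is a \emph{shift} move: if the long $\alpha$-block occupies $[u,v]$ and the endpoint $u$ is unhappy, one legally swaps $u$ with the $\beta$-node at $v+1$, which rotates the whole block one step clockwise without shortening it; iterating this eventually absorbs a neighbouring $\alpha$-node and lengthens the block. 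This rotation is precisely the mechanism that lets one manoeuvre past happy interior nodes, and is the missing idea in your argument.
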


Here is a sketch of the proof.
If $\tau\leq 0.5$ the mixing index is strictly decreasing through the transitions, so it is immediate
that the process will reach a dormant state (indeed, 0 is a lower bound for the mixing index).
For the case where $\tau>0.5$ (which we assume for the duration of this discussion) we can argue
inductively, in four steps. 
An interval of nodes of the same type is called a {\em contiguous block}. 
First we show that
from a stage with few unhappy nodes of one type (here $5w^4$ is a convenient upper bound of
what we mean by `few', which is by no means optimal) there is a series of transitions which lead to either
a state with a contiguous block of length $2w$ or a dormant state.
Second, 
from a state with a contiguous block of length $\geq 2w$
there is a series of transitions to complete segregation or to a dormant state.
Third,
any state which has at least $w^4$ unhappy nodes of each type,
there is a series of transitions to a state with a contiguous 
block of length at least $w$.
Finally 
from a state that has a contiguous block  of length $\geq w$
and at least $4w$ unhappy nodes of opposite type from the block,  there is a series of
transitions to a state with a contiguous block  of length $\geq 2w$.
The combination of these four statements constitutes a strategy for arriving to a dormant state or
a state of complete segregation, from any given state.
We illustrate this strategy in Figure \ref{fig:skewstratcomd}, where
two arrows leaving a node indicate that at least one of these routes are possible.

 \begin{figure} 
\centering\includegraphics[scale=0.36]{./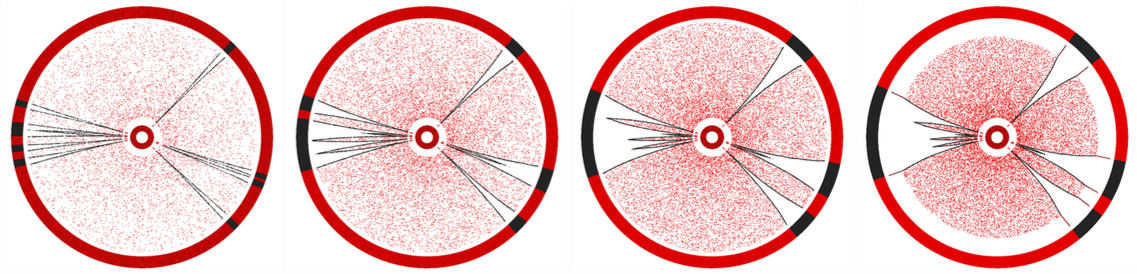}
\caption{The evolution of the infected area when $\tau+\rho<1$.}\label{fig:cicle_slides}
\end{figure}

\section{Reaching complete segregation \texorpdfstring{when $\tau>0.5$ and  $\tau+\rho<1$}{in the hard case}}\label{se:recomsehaca}
This case of Theorem \ref{th:complsegm} is challenging because we need to show that the process avoids accessible dormant states, until it reaches a {\em safe state}
i.e.\ a state from which no dormant state is accessible.
The reason for this avoidance is (in contrast with the case $\tau+\rho>1$ of Section \ref{subse:accdorcose}) 
the dynamics of the process with the given parameters.
The methodology we use is based on a martingale argument, which involves a great deal of the analytical tools 
(e.g.\ the metrics of social welfare) and their properties that were developed in the previous sections.
Having shown that dormant states are avoided until the process reaches a safe state, Lemma \ref{coro:inevcompsegd} gives
Theorem \ref{th:complsegm} (for the case where $\tau>0.5$ and  $\tau+\rho<1$). An overview of this argument is given in Figure \ref{fig:dysimpleskads}.

\subsection{The persistence of large contiguous \texorpdfstring{$\beta$}{beta}-blocks}\label{se:persist}
According to our plan, we wish to establish the existence of unhappy nodes of both types until a safe state is reached.
By Lemma \ref{coro:exiunhangen}, we do not have to worry about the existence of unhappy $\beta$-nodes. One device that guaranties the  existence of
unhappy $\alpha$-nodes is a contiguous block of $\beta$-nodes, of length at least $w$. 
Such a block exists in the initial random state (with high probability). One way to argue for its preservation in subsequent stages is to consider the 
ratio of the unhappy nodes of the two types. 
Even more relevant is the ratio between the number of unhappy $\alpha$-nodes, 
and the number of $\beta$-nodes which are not just unhappy, but actually sufficiently unhappy that they can swap with any unhappy $\alpha$-node.

\begin{defi}[Very unhappy $\beta$-nodes]\label{de:veryunhapy}
Given a stage of the process, a node of type $\beta$ is very unhappy if
there are at least $(2w+1)\tau$ nodes of type $\alpha$ in its neighbourhood.
The number of very unhappy $\beta$-nodes 
is denoted by $\unhap_{\beta}^{\ast}$.
\end{defi}

In the case that we study ($\tau>0.5$ and  $\tau+\rho<1$) 
initially, the number of very unhappy $\beta$-nodes is $\Omega(n)$ while the
number of unhappy $\alpha$-nodes is $\smo{n}$.
The following lemma says that as long as this imbalance is preserved, it is very likely that a
sufficiently long contiguous block of $\beta$-nodes is preserved.
\begin{lem}[Persistent \texorpdfstring{$\beta$}{beta}-block]
\label{le:persistb}
Consider the process
$(n,w,\tau,\rho)$ with $\tau>0.5$  
and let $s_{\ast}$ be the least stage where
the ratio between the very unhappy $\beta$-nodes and the unhappy $\alpha$-nodes
becomes less than $4w^2$ (putting $s_{\ast}=\infty$ if no such stage exists). Then with 
high probability there is a $\beta$-block of length $\geq 2w$ at all
stages $<s_{\ast}$ of the process. 
\end{lem}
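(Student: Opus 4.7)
The plan is to argue in three stages.

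\emph{Step 1 (initial cushion).} I would first show that with high probability the initial random colouring contains a $\beta$-block of length at least $3w$. This follows from a standard computation: the expected number of starting positions for a run of $3w$ consecutive $\beta$-nodes is $\Theta(n\rho^{3w})$, which is arbitrarily large for $n$ sufficiently large relative to $w$, as permitted by our asymptotic convention. Fix such a block $B$; this gives a cushion of roughly $w$ nodes beyond the target length $2w$.

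\emph{Step 2 (damaging swaps are rare per stage).} I would then analyse the mechanism by which $B$ can shrink. A single legal swap reduces the length of the longest $\beta$-block contained in $B$ by at most $1$, and only when the unhappy $\beta$-node selected for the swap lies within distance $w(2\tau-1)$ of an endpoint of $B$ — any $\beta$-node of $B$ deeper than this has at least $(2w+1)\tau$ $\beta$-neighbours and is therefore happy. This ``danger zone'' contains at most $O(w)$ nodes. Hence the probability, at any given stage, that the selected unhappy $\beta$ lies in it is at most $2w/\unhap_\beta$, and the hypothesis $\unhap_\beta\geq\unhap_\beta^{\ast}\geq 4w^2\unhap_\alpha$ bounds this by $1/(2w\,\unhap_\alpha)$. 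A further damping arises from legality: a danger-zone $\beta$ has at most $w+1$ $\alpha$-neighbours, so the swap is legal only if the paired unhappy $\alpha$ is itself ``very lonely''.

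\emph{Step 3 (concentration), the main obstacle.} The hardest step is upgrading these per-stage bounds to a uniform high-probability statement over all stages $s<s_\ast$. A naive union bound over the potentially $\Omega(n)$ stages is inadequate because the per-stage bound depends on $\unhap_\alpha$, which can drop to small values. I would handle this via a Doob decomposition $D_s = M_s + \sum_{t<s}p_t$, where $D_s$ is the cumulative count of damaging swaps at $B$ through stage $s$, $p_t$ the per-stage damaging probability, and $M_s$ a bounded-increment martingale controlled by Azuma's inequality. The compensator $\sum p_t$ is bounded by reformulating per actual swap: every pair of an unhappy $\alpha$ with a distant very unhappy $\beta$ is automatically legal, so at least $\unhap_\alpha\,\unhap_\beta^{\ast}$ of the candidate pairs yield legal swaps, and the fraction of legal pairs that are damaging is at most $2w/\unhap_\beta^{\ast}$. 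Combined with a bound on the total number of swaps before $s_\ast$, obtained from the non-increase of $\mix$ together with~\eqref{eq:finalumixkrel} and the welfare analysis of Section~\ref{se:analyticsche}, this yields $\expe{D_{s_\ast}}=\smo{w}$. Hence $D_{s_\ast}<w$ with high probability, the cushion from Step~1 is never consumed, and a $\beta$-block of length $\geq 2w$ survives throughout.
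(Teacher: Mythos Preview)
Your approach has two concrete gaps. First, Step~2's claim that a damaging swap reduces the block by at most $1$ is false: removing a $\beta$-node at depth $j$ from the block's end (and you correctly allow $j$ up to roughly $w(2\tau-1)$) splits the block, and the longest surviving piece loses $j+1$ nodes. One bad swap can thus erase your entire cushion of $w$. Second, the compensator bound in Step~3 does not close. For $\tau>0.5$ the mixing index is merely non-increasing, not strictly decreasing (Table~\ref{ta:validprglobbouatd}), so it gives no bound on the number of swaps; and even granting $O(n)$ swaps, your per-swap bound $2w/\unhap_{\beta}^{\ast}$ together with $\unhap_{\beta}^{\ast}=\Theta(n/w)$ yields $\sum_t p_t=\Theta(w^2)$, not $\smo{w}$. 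The Azuma term fares no better, since over $\Theta(n)$ stages the martingale fluctuation is of order $\sqrt{n}\gg w$.

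The missing idea is that the block also \emph{grows}. Since $\tau>0.5$, the two $\alpha$-nodes adjacent to a $\beta$-block of length $\geq w$ are themselves unhappy and can legally swap with any very unhappy $\beta$-node, extending the block by at least $1$. Comparing the at least $2\unhap_{\beta}^{\ast}$ growing pairs with the at most $2w\,\unhap_{\alpha}$ shrinking pairs, the hypothesis $\unhap_{\beta}^{\ast}\geq 4w^2\unhap_{\alpha}$ makes the conditional probability of growth (given a change in length) exceed $2w$ times that of shrinkage, at every stage before $s_{\ast}$. The block length then dominates a biased random walk with steps $+1$ and $-w$, and the ruin probability is controlled by Lemma~\ref{le:ranwalkbiasked} --- provided one starts from an initial block whose length is polynomial in $w$ rather than $3w$, which the random initial state supplies since $n$ may be taken arbitrarily large relative to $w$.
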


Since a $\beta$-block of length at least $w$ is a
guarantee for unhappy $\alpha$-nodes, we get the following corollary.

\begin{coro}[Conditional existence of unhappy $\alpha$-nodes]\label{le:percosistb}
Under the hypotheses of Lemma \ref{le:persistb}, 
with high probability 
there are unhappy $\alpha$-nodes at 
all stages $<s_{\ast}$ of the process.
\end{coro}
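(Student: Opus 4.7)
The plan is to combine the conclusion of Lemma \ref{le:persistb} with the elementary fact, flagged in the paragraph preceding the statement, that a sufficiently long contiguous $\beta$-block forces an adjacent unhappy $\alpha$-node. Since Lemma \ref{le:persistb} already delivers the hard probabilistic content, the corollary will follow by a short deterministic argument.

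More concretely, I would first invoke Lemma \ref{le:persistb} to fix, with high probability, an event on which a contiguous $\beta$-block $B$ of length $\geq 2w$ is present at every stage $s<s_{\ast}$. Working on this event, I pick an arbitrary such stage $s$. Because $\rho<0.5$ (so that $\rho_{\ast}<0.5$ with high probability by the law of large numbers), not all nodes can be of type $\beta$, and in particular $B$ cannot exhaust the circle. Hence there is an $\alpha$-node $u$ located immediately at one of the boundaries of $B$.

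Next I would count types inside the neighbourhood $[u-w,u+w]$ of $u$. Since $B$ has length $\geq 2w$ and $u$ sits immediately at the boundary of $B$, either the segment $[u-w,u-1]$ or the segment $[u+1,u+w]$ lies entirely in $B$; in either case the neighbourhood of $u$ contains at least $w$ nodes of type $\beta$. Therefore the number of $\alpha$-nodes in this neighbourhood is at most $(2w+1)-w=w+1$, so the fraction of same-type neighbours of $u$ is bounded above by
\[
\frac{w+1}{2w+1}=\frac{1}{2}+\frac{1}{2(2w+1)}.
\]
Since $\tau>0.5$ is fixed and the quantification $0\ll w$ is in force, we may assume $w$ is large enough that $(w+1)/(2w+1)<\tau$. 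Then $u$ fails the happiness threshold at stage $s$, so an unhappy $\alpha$-node exists. Because $s<s_{\ast}$ was arbitrary, the corollary follows.

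There is no real obstacle here beyond bookkeeping: the whole probabilistic burden has already been discharged by Lemma \ref{le:persistb}, and the remaining step is a counting argument whose only subtlety is to make sure that $w$ is large enough to push $(w+1)/(2w+1)$ strictly below $\tau$, which is handled automatically by the standing convention $0\ll w$ (implicit in the phrase ``with high probability'' as codified in Definition \ref{de:prophighprob}).
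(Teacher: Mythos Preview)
Your proposal is correct and follows essentially the same approach as the paper: the paper states the corollary immediately after the remark that ``a $\beta$-block of length at least $w$ is a guarantee for unhappy $\alpha$-nodes,'' and your argument simply spells out this deterministic counting step (an $\alpha$-node adjacent to a $\beta$-block of length $\geq w$ has at most $w+1$ same-type neighbours out of $2w+1$, which is below $\tau$ for $0\ll w$ since $\tau>0.5$). The only cosmetic point is that the appeal to $\rho<0.5$ is stronger than needed---you only need that with high probability the $\beta$-block does not exhaust the ring, which already follows from $\rho<1$.
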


It remains to construct an elaborate martingale argument in order to show that 
the imbalance between $\unhap_{\alpha}$ and $\unhap^{\ast}_{\beta}$
persists for a sufficiently long time (until the process reaches a safe state). 

\subsection{Infected area view of the Schelling process}\label{subse:infareaview}
In the case of unbalanced happiness (i.e.\ when $\tau>0.5$, $\tau+\rho<1$, see Table \ref{ta:validptwocasesnadcorr})
the unhappy $\alpha$-nodes are initially very rare, so the interesting activity (namely $\alpha$-to-$\beta$ swaps) occurs
in small intervals of the entire population (at least in the early stages). 
These intervals contain the unhappy $\alpha$-nodes, and gradually expand, while outside these intervals
all $\beta$-nodes are very unhappy. Figure \ref{fig:biasplot}  
shows the development of this process, where the height of the nodes (perpendicular lines) is proportional to
the number of $\alpha$-nodes in their neighborhood and the horizontal black line denotes the 
threshold where an $\alpha$-node  becomes unhappy.
Hence nodes with high proportion of $\alpha$-nodes in their neighbourhood
will be higher than the nodes with low proportion of $\alpha$-nodes in their neighbourhood. 
The three horizontal bars are snapshots of the process, and show
cascades forming, originating from the initially unhappy $\alpha$-nodes. 
Figure \ref{fig:cicle_slides} 
shows the same process, with the current state in the outer circle, and with swaps represented by a
dot at a distance from the center which is proportional to the stage where the swap occurred.
These cascades that spread the unhappy $\alpha$-nodes are due to the following domino effect. An unhappy 
$\alpha$-node moves out of a neighbourhood, thus reducing the number of $\alpha$-nodes in that interval. This
in turn often makes another $\alpha$-node in the interval unhappy, which can move out at a latter stage, thus
causing another $\alpha$-node nearby to be unhappy, and so on.
The expanding intervals are the {\em infected segments} which start their life as {\em incubators}.
For the sake of simplicity, we omit the formal definitions of these notions, which can be found in the appendix.
Roughly speaking,  incubators are a small intervals that surround  the unhappy $\alpha$-nodes in the initial state.
Moreover they are defined in such a way that, every $\beta$-node that is outside the incubators is {\em very unhappy}
in the initial state.
During the process, as we discussed above, these expand into larger {\em infected segments}, so that at each stage
every unhappy $\alpha$-node is inside an infected segment.
The union of all infected segments is called the 
{\em infected area}. At any stage, every $\beta$-node outside the infected area is {\em very unhappy}
and every $\alpha$-node outside the infected area is happy.
It is not hard to show that if $\tau+\rho<1$, the probability that a node belongs to an incubator is
$e^{-\Theta(w)}$. Hence with high probability  
 the number of incubators as well as the number of nodes belonging to incubators
 of the process $(n,w,\tau,\rho)$ is $n e^{-\Theta(w)}$.

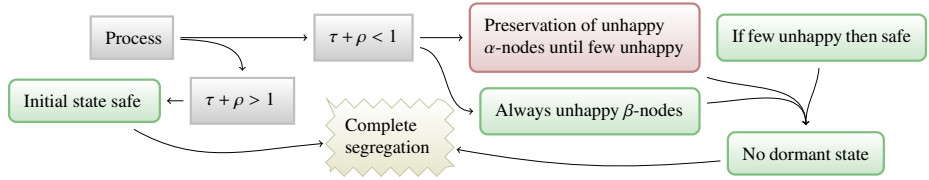
\begin{figure}
\scalebox{0.7}{
\begin{tikzpicture}[
Nnode/.style={rectangle, rounded corners, minimum height=4mm, very thick, 
draw=green!50!black!50, top color=white, bottom color=green!50!black!20, font=\small },
tnode/.style={rectangle, rounded corners, very thick,
draw=red!50!black!50, top color=white, bottom color=red!50!black!20, font=\small},
pnode/.style={rectangle, very thick,
draw=white!50!black!50, top color=white, bottom color=white!50!black!40, font=\small}]
 \node (Npreserv) [Nnode,  outer sep=4pt, inner sep=8pt] at (7.8,2) {\parbox{3.4cm}{
If few unhappy then safe}};
 \node (Nbreakth) [Nnode,  outer sep=4pt, inner sep=8pt] at (7.5,-0.2) {\parbox{2.4cm}{
 No dormant state}};
 \node (Nunbala) [tnode,  outer sep=4pt, inner sep=8pt] at (3.3,2) {\parbox{3.8cm}{Preservation
 of unhappy\\ $\alpha$-nodes until few unhappy}};
 \node (Nhard) [pnode,  outer sep=4pt, inner sep=8pt] at (-0.9,2) {\parbox{1.3cm}{$\tau+\rho<1$}};
 \node (Ninitsafe) [Nnode,  outer sep=4pt, inner sep=8pt] at (-6.2,0.8) {\parbox{2.3cm}{Initial state
 safe}};
 \node (Nalwunhab) [Nnode,  outer sep=4pt, inner sep=8pt] at (3.4,0.6) {\parbox{3.6cm}{Always 
unhappy $\beta$-nodes}};
 \node (Nprocess) [pnode,  outer sep=4pt, inner sep=8pt] at (-5.3,2) {\parbox{1cm}{Process}};
 \node (Neasy) [pnode,  outer sep=4pt, inner sep=8pt] at (-3.2,0.8) {\parbox{1.4cm}{$\tau+\rho>1$}};
\node (Ncomplete) [draw=yellow!50!black!50, font=\small, top color=white, bottom color=yellow!50!black!20, decorate, decoration=zigzag, 
outer sep=4pt, inner sep=8pt] at (-0.4,0.1) {\parbox{1.7cm}{Complete segregation}};
\draw [->, in = 90, out=-2] (Nprocess) to (Neasy);
\draw [->] (Neasy) to (Ninitsafe);
\draw [->] (Nprocess) to (Nhard);
\draw [->, in = 180, out=-10] (Nhard) to (Nalwunhab);
\draw [->] (Nhard) to (Nunbala);
\draw [->, in = 90, out=4] (Nalwunhab) to (Nbreakth);
\draw [->, in = 90, out=-95] (Npreserv) to (Nbreakth);
\draw [->, in = 90, out=-18] (Nunbala) to (Nbreakth);
\draw [->,  in = -10, out=185] (Nbreakth) to (Ncomplete);
\draw [->,  in = 185, out=-30] (Ninitsafe) to (Ncomplete);
\end{tikzpicture}}
\centering
\caption{The logic of the proof that if $\tau>0.5$, with high probability
the process reaches complete segregation.}\label{fig:dysimpleskads}
\end{figure}

It turns out that the number of unhappy $\beta$-nodes in an interval of nodes, is conveniently bounded
in terms of the number of $\alpha$-nodes in the interval.
This means that  if the number of $\alpha$-nodes in the infected area remains $\smo{n}$, then
the number of unhappy $\beta$-nodes in the infected area also remains $\smo{n}$.
In order to give a clear sketch of the argument depicted in Figure \ref{fig:dysimpleskads}
(for the current case when $\tau>0.5$ and $\tau+\rho<1$)
let us define the global variables in Table \ref{ta:pranvarinfarfa2}
(for the current discussion we will not be concerned with $\DD_s$ or its definition).
Note that Since $\mathtt{U}_s\leq \GG_s + \YY_s + \ZZ_s$.
A combinatorial argument can be used in order to show that $\YY_s\leq  \ZZ_s/(1-\tau)+2w\CC$.
Hence 
\begin{equation}\label{eq:uscsgszs}
\mathtt{U}_s\leq w\CC  + \GG_s + 2\ZZ_s /(1-\tau).
\end{equation}
By \eqref{eq:finalumixkrel}  we know that a stage where the number of unhappy nodes
is less than $n\tau\rho_{\ast}/w$ is a safe stage.
Hence we wish to show that (with high probability) 
the process will arrive at a stage where  each of the three
summands in \eqref{eq:uscsgszs} are at most
$n\tau\rho_{\ast}/(3w)$. We know that $\CC$ 
can be bounded appropriately. Our main argument will show
how to obtain a similar bound for $\ZZ_s$. 
Note that $\GG_s$ plays a different role,
since it is initially large and shrinks monotonically (as the infected area expands monotonically). 
In order to find a stage where $\GG_s$ becomes sufficiently small, it is instructive to consider what is a typical swap
in the process.
At the start of the process the infected area is a very small proportion of the
entire ring. The vast majority of unhappy $\beta$-nodes occur outside
the infected area, while all unhappy $\alpha$-nodes are inside the infected area.
It follows that with high probability a 
swap will involve an $\alpha$-node in the infected
area and a $\beta$-node outside the infected area. A {\em bogus swap} is a swap is one that is not of this kind.

\begin{defi}[Bogus swaps]
A swap
which involves a $\beta$-node 
currently inside the infected area is called bogus. 
Given an infected segment $I$, a bogus swap in $I$ is a
swap that moves an $\alpha$-node into $I$.
\end{defi}

\begin{table}
\caption{Random variables indicating the number of certain nodes in 
infected area at stage $s$ of the 
process.}\label{ta:pranvarinfarfa2}
\colorbox{black!10}{\arrayrulecolor{green!50!black} 
\begin{tabular}{cl}\toprule
{\bf\small $\mathbf{Z}_s$}  &   \textrm{\small $\alpha$-nodes in infected area}\\[1ex]
{\bf\small $\YY_s$}  &   \textrm{\small Unhappy $\beta$-nodes in infected area}\\[1ex]
{\bf\small $\DD_s$}  &   \textrm{\small Anomalous nodes in infected area}  \\[1ex]
\bottomrule 
\end{tabular} }
\quad
\colorbox{black!10}{\arrayrulecolor{green!50!black} 
\begin{tabular}{cl}\toprule
{\bf\small $\pp_s$}  &   \textrm{\small Probability of a bogus swap}\\[1ex]
{\bf\small $\GG_s$}  &   \textrm{\small $\beta$-nodes outside infected area.}  \\[1ex]
{\bf\small $\CC$}     & \textrm{\small Nodes inside the incubators}\\[1ex]  
\bottomrule
\end{tabular} } 
\centering
\end{table}

Note that any swap which is not bogus, reduces $\GG_s$ by at least 1. Hence if we show that the bogus swaps have
small probability throughout a significant part of the process, we can ensure that $\GG_s$ becomes sufficiently small.
In order to be more precise,
recall the stopping time $s_{\ast}$
from Lemma \ref{le:persistb}.
We introduce a few more stopping times, all of which 
will turn out to be earlier than $s_{\ast}$ (with high probability). These  
basically concern the satisfaction of conditions which will ensure that 
the mixing index is sufficiently low as to guarantee a safe state. 
By \eqref{eq:finalumixkrel} 
we have $\mix\leq \mathtt{U}\cdot w(w+1)$
and in order to ensure a safe state (by Lemma \ref{le:mixindorsta}) 
we want $\mix<n(w+1)\tau\rho_{\ast}$. 
So we want $\mathtt{U}<n\tau\rho_{\ast}/w$ at some stage of the process.
Let $T_{\textrm{mix}}$ be 
the first stage which satisfies this condition.
Similarly, consider the stopping times $T_g, T_{\textrm{stop}}$ of 
the second part of Table \ref{ta:validprglobbouatd} (for simplicity, we will not consider $T_y$ in the present discussion).
We use an elaborate martingale argument in order to show the following.

\begin{lem}[Bounding the $\alpha$-nodes in the infected area]\label{le:inthzbouwe}
If $\tau>0.5$ and $\tau+\rho<1$, 
with high probability we have $\ZZ_s=\smo{n}$ and $\pp_s=\smo{1}$ for all $s<T_g$.
\end{lem}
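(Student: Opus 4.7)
The strategy is to bootstrap $\ZZ_s$ and $\pp_s$ against each other through a single stopping time, and then to close the loop with a supermartingale concentration bound on $\ZZ_s$. I would fix a parameter $\epsilon = \epsilon(w)$ of order $1/w^2$ and define
\[
T^{\star} \;:=\; \inf\bigl\{\, s < T_g : \ZZ_s > \epsilon n \ \text{or}\ \pp_s > \epsilon\,\bigr\},
\]
with $T^{\star} := T_g$ when the set is empty. The lemma then reduces to showing $\proba{T^{\star} < T_g} = \smo{1}$, since on the complementary event $\ZZ_s \leq \epsilon n$ and $\pp_s \leq \epsilon$ hold throughout $s < T_g$, and $\epsilon\to 0$ as $w\to\infty$.

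On the event $\{s < T^{\star}\}$, the combinatorial bound $\YY_s \leq \ZZ_s/(1-\tau) + 2w\CC$ already stated, together with the fact that $\CC = n\cdot e^{-\Theta(w)}$ with high probability (since under $\tau+\rho<1$ a node lies in an incubator only if a nearby window of the initial colouring is sufficiently $\beta$-biased, a large-deviations event for Bernoulli sums, followed by standard concentration for weakly dependent indicators), yields $\YY_s = \smo{n/w}$ for small enough $\epsilon$. Next, the Schelling dynamics pick the next swap uniformly at random among unhappy $\alpha/\beta$ pairs; since every $\beta$-node outside the infected area is very unhappy by construction, the pool of eligible $\beta$-nodes contains at least the $\GG_s > \tau\rho n/(4w)$ nodes outside. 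Hence the probability that a uniformly chosen eligible $\beta$-node lies inside the infected area is at most a constant multiple of $w\YY_s/n = \smo{1}$, so $\pp_s$ alone cannot be the first quantity to cross its threshold in the definition of $T^{\star}$; that is, $T^{\star}$ is determined by $\ZZ_s$.

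The remaining step is to control the growth of $\ZZ_s$ itself. Conditionally on the natural filtration $\mathcal{F}_s$, the increment $\ZZ_{s+1}-\ZZ_s$ splits into an \emph{exchange} contribution of magnitude at most $1$ (positive precisely on bogus swaps, hence with drift $\leq \pp_s \leq \epsilon$) and an \emph{expansion} contribution of magnitude $\bigo{w}$, triggered only when a swap occurs on the boundary of an infected segment, whose per-stage expected rate is again $\smo{1}$ by an accounting similar to the one above. Every non-bogus swap strictly reduces $\GG_s$, so the total number of stages before $T_g$ is $\bigo{n}$. Applying Azuma-Hoeffding to $\ZZ_{s \wedge T^{\star}}$ minus its compensator, with increments bounded by $\bigo{w}$, yields $\proba{\max_{s \leq T^{\star}} \ZZ_s > \epsilon n} = \smo{1}$ for $\epsilon$ chosen comfortably larger than $w/\sqrt{n}$, which is easily satisfied by $\epsilon$ of order $1/w^2$ in the regime $w \ll n$. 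This closes the bootstrap and forces $T^{\star} = T_g$ with high probability, yielding both $\ZZ_s = \smo{n}$ and $\pp_s = \smo{1}$. The main technical obstacle I anticipate is the bookkeeping for the expansion contribution: expansions cascade in a history-dependent way, and one must build a compensator that simultaneously absorbs the bogus-swap drift and the expansion drift while keeping the Azuma increments small enough to produce $o(\epsilon n)$ fluctuations. Matching the $w$-dependent scales of the concentration error, the per-stage drift, and the threshold $\epsilon$ so that the bootstrap closes uniformly in $w$ and $n$ will be the core quantitative challenge.
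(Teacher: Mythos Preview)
Your bootstrap framework and the reduction of $\pp_s$ to $\ZZ_s$ via the bound $\YY_s\leq\ZZ_s/(1-\tau)+2w\CC$ together with $\GG_s>\tau\rho n/(4w)$ is correct and matches the paper. The genuine gap is in the control of $\ZZ_s$ itself.

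Your Azuma argument hinges on the claim that the expansion contribution has ``per-stage expected rate $\smo{1}$''. But the conditional probability that the swapped $\alpha$-node sits on the boundary of the infected area is of order $w\CC/\unhap_{\alpha}(s)$, and you have no lower bound on $\unhap_{\alpha}(s)$: all you know is $\unhap_{\alpha}(s)\leq\ZZ_s$. The infected area can contain many \emph{happy} $\alpha$-nodes in its interior (the paper calls these \emph{anomalous} nodes, with count $\DD_s$), and when these accumulate, $\unhap_{\alpha}(s)$ can be much smaller than $\ZZ_s$, the boundary fraction becomes of order one, and your compensator blows up. Concretely, bogus swaps relocate $\alpha$-nodes within the infected area and can render neighbouring $\alpha$-nodes happy; once such a happy pocket forms it is no longer available to be swapped out, and the drift balance on which you rely fails. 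This is not bookkeeping: it is the central obstruction, and the paper devotes its most delicate argument to it.

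The paper's route is to split $\ZZ_s=(\ZZ_s-\DD_s)+\DD_s$. It shows that $\max\{\ZZ_s-\DD_s,\,11w^2\CC\}$ is a genuine supermartingale (no compensator needed) on $s<T_y$, because once $\ZZ_s-\DD_s$ exceeds a threshold of order $w^2\CC$ the negative drift from interior removals dominates the positive drift from boundary expansions. Separately, $\DD_s$ is bounded by a second supermartingale built on a jump process sampled every $w^5$ many $I$-stages of each infected segment $I$; this argument is not adapted to the Schelling filtration and uses a gambler's-ruin coupling to show that anomalous pockets are overwhelmingly likely to be sealed off by $\beta$-firewalls before they spread. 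Both bounds together give $\ZZ_s=\smo{n}$. Your proposal would need an equivalent mechanism to cap $\DD_s$ (or, equivalently, to lower-bound $\unhap_{\alpha}(s)$ relative to $\ZZ_s$) before the Azuma step can be made to close.

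A minor side point: your description of the exchange contribution is off. A bogus swap keeps both nodes inside the infected area, so its exchange contribution to $\ZZ_s$ is $0$, not $+1$; a non-bogus swap contributes $-1$. All positive increments to $\ZZ_s$ come from expansion, which is exactly why controlling the expansion rate is the whole game.
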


This lemma in combination with Corollary \ref{le:persistb} 
implies that $T_g\leq s_{\ast}\leq T_{\textrm{stop}}$.
Hence every stage up to $T_g$ involves a swap.
Then it follows from the second clause of Lemma \ref{le:inthzbouwe}  
that $T_g<n$ (since $\GG_s$ is reduced by at least 1 
at every non-bogus swap).
Hence by \eqref{eq:uscsgszs} we have established (with high probability) the existence of a stage $T_g<n$ such that

\begin{equation*}
\unhap_{T_g}\leq w\CC  + \GG_{T_g} + \frac{2\ZZ_{T_g}}{1-\tau} 
\leq \smo{n} + \frac{n\tau\rho_{\ast}}{4w}+ \smo{n} <\frac{n\tau\rho_{\ast}}{w}.
\end{equation*}

Hence  by \eqref{eq:finalumixkrel} we have $T_{\textrm{mix}}\leq T_g$,
which means that by stage $T_g$ a safe state has been reached.
Then by Corollary \ref{coro:inevcompsegd} the process will reach complete segregation, with probability
$1-\smo{1}$.

\begin{coro} [Safe state arrival]
Suppose that $\tau+\rho<1$. Then with high probability
the process $(n,w,\tau,\rho)$  reaches a safe state by stage $n$, and eventually complete segregation.
\end{coro}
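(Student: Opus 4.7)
The plan is to convert the quantitative bounds of Lemma~\ref{le:inthzbouwe} into the mixing-index bound required by Lemma~\ref{le:mixindorsta}, certifying that by stage $T_g$ a safe state has been reached; Lemma~\ref{coro:inevcompsegd} will then yield complete segregation. My starting point is the decomposition \eqref{eq:uscsgszs}, $\unhap_s \le w\CC + \GG_s + 2\ZZ_s/(1-\tau)$, and the target is to bound each summand at stage $T_g$.

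I would first control the three summands individually. The count $\CC$ of nodes inside incubators satisfies $\CC = ne^{-\Theta(w)}$ with high probability, since any given node lies in an incubator with probability $e^{-\Theta(w)}$ (as observed in Subsection~\ref{subse:infareaview}); hence $w\CC = \smo{n}$. By the first clause of Lemma~\ref{le:inthzbouwe}, $\ZZ_s = \smo{n}$ throughout $s < T_g$, so $2\ZZ_{T_g}/(1-\tau) = \smo{n}$. Finally, $\GG_{T_g} \le n\tau\rho/(4w)$ by the very definition of $T_g$ in Table~\ref{ta:validprglobbouatd}. Combining these, and using that $\rho_{\ast}$ is arbitrarily close to $\rho$ with high probability, I obtain $\unhap_{T_g} < n\tau\rho_{\ast}/w$; then \eqref{eq:finalumixkrel} gives $\mix_{T_g} \le \unhap_{T_g}\cdot w(w+1) < n(w+1)\tau\rho_{\ast}$. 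By Lemma~\ref{le:mixindorsta} no dormant state has mixing index this small, and since $\mix$ is monotone non-increasing along transitions, no dormant state is accessible from stage $T_g$, i.e.\ $T_g$ is safe.

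The main obstacle is showing that $T_g$ is actually attained by stage $n$. The second clause of Lemma~\ref{le:inthzbouwe} gives $\pp_s = \smo{1}$ for $s < T_g$, so the expected number of non-bogus swaps among the first $s$ steps is $(1-\smo{1})s$, and each non-bogus swap decreases $\GG$ by at least one. Provided the process continues to make legal swaps before $T_g$, the initial value $\GG_0 \le n$ is driven below $n\tau\rho/(4w)$ strictly before stage $n$. To rule out premature stalling, I would use the first clause of Lemma~\ref{le:inthzbouwe} together with the $\Omega(n)$ count of initially very unhappy $\beta$-nodes to show $T_g \le s_{\ast}$; then Lemma~\ref{le:persistb} and Corollary~\ref{le:percosistb} guarantee the persistence of a $\beta$-block of length $\ge 2w$ and hence of unhappy $\alpha$-nodes, each of which admits a legal swap with some very unhappy $\beta$-node outside the infected area.

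Having produced a safe stage $T_g < n$, Lemma~\ref{coro:inevcompsegd} provides from that state a sequence of transitions to either complete segregation or a dormant state; safety excludes the latter, so with probability $1-\smo{1}$ the Schelling process $(n,w,\tau,\rho)$ converges to complete segregation, completing the proof.
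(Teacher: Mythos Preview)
Your proposal is correct and follows essentially the same route as the paper: bound $\unhap_{T_g}$ via the decomposition \eqref{eq:uscsgszs} using the incubator estimate, Lemma~\ref{le:inthzbouwe}, and the definition of $T_g$; convert this to a mixing-index bound below the dormant-state threshold of Lemma~\ref{le:mixindorsta}; and use the persistent $\beta$-block (Lemma~\ref{le:persistb}, Corollary~\ref{le:percosistb}) together with $\pp_s=\smo{1}$ to ensure the process does not stall and $T_g<n$. The only informality is that turning ``$\pp_s=\smo{1}$'' into ``fewer than $\epsilon n$ bogus swaps occur by stage $n$'' requires a one-line Markov/averaging step, which the paper makes explicit in its appendix proof of Lemma~\ref{le:stoptimefin}.
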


This argument (with the full details given in Section \ref{se:prelim})
concludes the proof of Theorem \ref{th:complsegm} for the case $\tau>0.5$.
It remains to deal with the case $\tau\leq0.5$.

\begin{table}
\caption{Likelihood of various properties 
in the initial configuration under certain conditions, when $\rho\leq 0.5$ and $\tau \leq 0.5$}
\label{ta:vlikelyghd}
\colorbox{black!10}{\arrayrulecolor{green!50!black} 
 \begin{tabular}{ccccc}
 {\bf\small Property}&  & {\bf\small Probability} &  {\bf\small Distribution}& {\bf\small Likelihood}\\[1ex]
\toprule
{\small Stable $\alpha$-interval} &	\hspace{0.3cm} & {\small $\Pstab$} &{\small $Z_{\textrm{stable}}\sim B(w, 1-\rho)$}\hspace{0.5cm}  &
{\small high if $2\tau +\rho<1$, low if $2\tau +\rho >1$}\\[1ex]
{\small Unhappy $\alpha$-node}  & \hspace{0.3cm} & {\small $\Punhap$} &{\small $Z_{\textrm{unhap}}\sim B(2w, \rho)$}\hspace{0.5cm} &{\small always rare}\\[1ex]
\end{tabular}}
\centering
\end{table}

\section{The case 
when intolerance is at most 50\%}\label{se:casetaubhalf}
In this case the behaviour of the process $(n,w,\tau,\rho)$ is very different, since
the mixing index is strictly decreasing. This means that
the process is bound to arrive to a dormant state, with absolute certainty.
Note that if $\tau\leq 0.5$ then complete segregation is a dormant state, but it can be shown that the final state
is never complete segregation. 
We show
that in most typical cases for $\rho$,  the outcome is static when $\tau\leq 0.5$.
We assume that $\rho<0.5$ because the case
$\rho=0.5$ has already been analysed in \cite{brandt:an, BELschel13} 
and the 
case $\rho >0.5$ is symmetric. Hence on the hypothesis $\tau\leq 0.5$ we have
$\rho + \tau <1$ and by Table \ref{ta:validptwocasesnadcorr}
the unhappy $\alpha$-nodes are an arbitrarily small
proportion of the $\alpha$-nodes 
as $w\to\infty$.
In any case, since $\rho<0.5<1-\rho$
we have $\tau-\rho< 1-\tau -\rho$, 
so the probability that an $\alpha$-node is unhappy is much smaller
than the probability that a $\beta$-node is unhappy.
However what matters in the analysis for $\tau\leq 0.5$ is the relationship between
the likelihood of stable intervals and unhappy $\alpha$-nodes. This analysis is 
a reminiscent of the work in 
\cite{BELschel13}, but has some new features. 

\begin{figure} 
\centering\includegraphics[scale=0.3]{./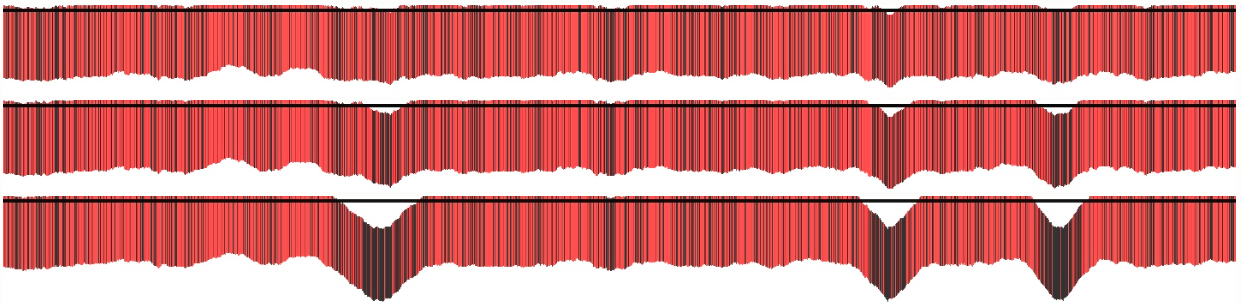}
\caption{The formation and dynamics of the infected area when $\tau+\rho<1$. }\label{fig:biasplot}
\end{figure}   

\begin{defi}[Stable intervals]
A stable interval is
an interval of nodes of length $w$ which contains at least $(2w+1)\tau$ nodes of one 
or the other type. 
An interval is $\alpha$-stable if it contains at least $(2w+1)\tau$ nodes
of type $\alpha$.
\end{defi}

The $\beta$-stable intervals are defined analogously.
Note that no $\alpha$-node which is inside 
an $\alpha$-stable interval can swap during the process.
The reason is that such $\alpha$-nodes are 
happy just because of the presence of the other $\alpha$-nodes
 in the same interval. Then a simple induction shows 
 that they will continue to be happy throughout the process,
 thereby remaining  immune to swaps and fixed in their initial positions.
A similar observation  applies to $\beta$-stable intervals. 
The existence of stable intervals is characteristic to the case $\tau\leq 0.5$.

The events we are interested are the occurrences of $\alpha$-stable intervals and
unhappy $\alpha$-nodes. 
The probabilities $\Pstab, \Punhap$ of these two rare events can be viewed as tails of certain
binomial distributions.
Consider the variables, probabilities and distributions of Table \ref{ta:vlikelyghd}.
It is not hard to see that
\[
\Pstab=\proba{Z_{\textrm{stab}}\geq (2w+1)\tau}
\hspace{1cm}
\textrm{and}
\hspace{1cm}
\Punhap\geq 
\proba{Z_{\textrm{unhap}}\geq 2w(1-\tau)}.
\]
We are interested in the event where the ratio 
$\Punhap/\Pstab$ becomes small, because of the following fact.
\begin{lem}[Static processes]
Suppose that $\tau, \rho$ are such that
 $\Punhap=
 \bigo{c^{-w}\cdot \Pstab}$
for some $c >1$. 
Then 
with high probability the
 process $(n,w,\tau, \rho)$ is static,
 and in fact there exists some
$c_{\ast}>1$ such that with high probability 
the process stops after at most $n\cdot c_{\ast}^{-w}$ 
many steps.
\end{lem}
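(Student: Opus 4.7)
The plan is to combine Chernoff concentration on the initial configuration with a firewall argument based on $\alpha$-stable intervals, and then bound the total number of swaps using the strict monotonicity of the mixing index for $\tau\leq 0.5$ established in Section~\ref{se:analyticsche}.

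First, by Chernoff, with high probability the initial number of unhappy $\alpha$-nodes is at most $2n\Punhap$, the number of $\alpha$-stable intervals is at least $\tfrac12 n\Pstab$, and (by a union bound using the hypothesis $\Pstab/\Punhap=\Omega(c^{w})$) every stretch of length $\ell=O(1/\Pstab)$ contains an $\alpha$-stable interval. In particular, each initially unhappy $\alpha$-node is flanked on both sides by $\alpha$-stable intervals within distance $\ell$. Second, I would establish the firewall property: any $\alpha$-node located in an $\alpha$-stable interval $I$ at stage $0$ is happy then and remains happy forever, since $I$ is contained in its neighbourhood and provides at least $(2w+1)\tau$ same-type neighbours. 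The claim goes by induction on swaps: any swap touching $I$ must have its $\beta$-endpoint inside $I$ (the $\alpha$-nodes of $I$ are by induction happy and therefore immobile), and the incoming $\alpha$-node only raises the $\alpha$-count of $I$. Hence the initial $\alpha$-nodes in $I$ never move, and each $\alpha$-stable interval serves as a firewall partitioning the ring into chambers.

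Next, I would bound the total number of swaps. Only chambers containing an initially unhappy $\alpha$-node can host any swaps, and within each such active chamber of size $M$ the mixing index is $O(wM)$ and is reduced by at least $4$ by each swap (by the strict monotonicity for $\tau\leq 0.5$). Summing over active chambers, the expected total number of swaps is $O(w)\cdot\mathbb{E}[\text{total size of active chambers}]=O(nw\cdot\Punhap/\Pstab)=O(nwc^{-w})$, and Markov's inequality converts this into a high-probability bound of at most $n c_{\ast}^{-w}$ for any fixed $c_{\ast}\in(1,c)$ and all sufficiently large $w$. Since each swap alters the type of at most two positions, the total number of nodes ever changing type is also $O(n c_{\ast}^{-w})=\smo{n}$, giving both the static property and the stopping-time estimate.

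The main obstacle is justifying that activity really is confined to the active chambers. A priori, an unhappy $\alpha$-node in one chamber could swap with an unhappy $\beta$-node in another, since the swap-improvement condition can be met by long-range pairs whose neighbourhoods have opposite density profiles. The key point to verify is that any such cross-chamber swap leaves both participants happy immediately afterwards (each lands in a region with favourable same-type density), so it cannot initiate a cascade crossing the firewall, and any localised cascades it does trigger on either side of the stable interval can be absorbed into chamber-local mixing-index accounting. Combining this careful charging with the Markov-type step above is the most technical part of the argument.
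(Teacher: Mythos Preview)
Your overall strategy—firewall via $\alpha$-stable intervals, confinement of unhappy $\alpha$-nodes to a small active region $A$, and then a mixing-index bound—is exactly the paper's approach (the paper gives only a sketch here, deferring details to~\cite{BELschel13}); and your identification and resolution of the cross-chamber issue is correct. In particular, your claim that both swap participants end up happy holds for \emph{every} swap when $\tau\le 0.5$: an unhappy $\beta$-node at $p_2$ has $\alpha$-count $>(2w+1)(1-\tau)\ge(2w+1)\tau$, and this does not decrease after the swap (and symmetrically for the $\beta$-count at $p_1$). Consequently, $\alpha$-nodes only ever depart from positions in $A$, so cascades never cross a stable interval.

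Two steps need tightening. First, the union-bound assertion that ``every stretch of length $\ell=O(1/\Pstab)$ contains a stable interval'' does not follow: the failure probability for a single stretch is roughly $(1-\Pstab)^{\ell/w}\approx e^{-\Pstab\ell/w}$, and to survive a union bound over $n$ positions one needs $\ell\gg w\log n/\Pstab$, which blows up with $n$. This step is in any case unnecessary: the expectation $\mathbb{E}|A|=O(nw\,\Punhap/\Pstab)$ (obtained via Lemma~\ref{le:rareprop}) together with Markov's inequality is what you actually invoke at the end and suffices. Second, the ``chamber-local mixing index'' need not drop by $4$ under a cross-chamber swap, so the per-chamber accounting is not well-defined. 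A clean global fix: the number of $\alpha$-nodes occupying positions in $A$ is nonincreasing and drops by one precisely at each swap with $p_2\notin A$, so there are at most $|A|$ such swaps and hence at most $2|A|$ positions ever change type; it follows that $\mix_0-\mix_\infty=O(w|A|)$, whence the total number of swaps is at most $(\mix_0-\mix_\infty)/4=O(w|A|)=O(n c_\ast^{-w})$.
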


\begin{table}
\caption{Threshold constants of interest and their derivation equations.}
\label{ta:constantthresint}
\colorbox{black!10}{\arrayrulecolor{green!50!black} 
  \begin{tabular}{cclc}
\multicolumn{1}{c}{\bf\small Threshold}  &  &\multicolumn{1}{c}{\bf\small Solution to equation} & 
{\bf\small Stability condition}\\[1ex]
 \cmidrule{1-4}
{\small  $\kappa_0\approx 0.35309$\hspace{0.1cm}}&&
{\small $(1-\tau)^{1-\tau}=(0.5-\tau)^{0.5-\tau}$} & 
{\small $\rho<0.5$ \hspace{0.2cm}and \hspace{0.2cm}$\tau\leq\kappa_0$}\\[1ex]
{\small $\lambda_0 \approx 0.41149$}\hspace{0.1cm} && 
{\small $(1-\tau)^{1-\tau}= \left(0.5-\tau\right)^{0.5-\tau} \cdot \sqrt{2\tau}$\hspace{0.4cm}} &	
{\small $\rho<\lambda_0$ \hspace{0.3cm}and\hspace{0.2cm} $\tau \leq \lambda_0$}\\[1ex]
\end{tabular}}
\centering
\end{table}

The intuition here is that, if the unhappy $\alpha$-nodes are much more rare than the
stable $\alpha$-stable intervals (i.e.\  if $\Punhap=\smo{\Pstab}$)  then it is very likely that unhappy
$\alpha$-nodes are enclosed in small intervals which are guarded by $\alpha$-stable intervlas. This means that
the familiar cascades that can be caused by the eviction of an unhappy $\alpha$-node are bound to be contained in small
areas of nodes. The very definition of stable intervals ensures that such cascades cannot pass through them. Hence the condition
 $\Punhap=\smo{\Pstab}$ guarantees that any $\alpha$-to-$\beta$ swaps are contained in small areas of nodes of total size $\smo{n}$.
 Due to the monotonicity of the mixing index, this means that there can only be at most 
 $\smo{n}$ swaps in this case.
 
The second item in Figure \ref{fig:red_3D_tr} shows the probabilities $\Pstab, \Punhap$ (for $w=100$)
with respect to $\tau,\rho$. We see that for points away from $(0.5, 0.5)$, the surface $\Punhap$ is above $\Pstab$, 
and there is a threshold curve beyond which the opposite relationship is established.
Using basic results about the tail of the binomial distribution, and Stirling's approximation
we can derive the following sufficient condition for $\Punhap =\smo{\Pstab}$:
\begin{equation}\label{eq:rhotaueqfostabgz}
\textrm{$g(\tau,\rho)>0$,\hspace{0.4cm} where \hspace{0.4cm}$g(\tau,\rho)=\frac{1}{2}\cdot 
\left(\frac{(1-\tau)^{1-\tau}}{(0.5-\tau)^{0.5-\tau}}
\right)^{2}-\rho.$}
\end{equation}
The third item of Figure \ref{fig:red_3D_tr}
is a representation of $g(\tau,\rho)$ in the space, up to where
it becomes negative, at which point we project it on the plane.
The values of $\tau,\rho$ that we are interested correspond to points
on the plane, outside the collapsed area.
This boundary (a curve) is more clear in the first 
item of Figure \ref{fig:red_3D_tr} which is the projection of the
surface to the plane, with different colours indicating the points
which make $g$ positive or negative.
This boundary can be simplified (with slight loss of generality) if we consider the line
that passes from the two points where the boundary curve intersects the lines $\tau=0.5$ and $\rho=0.5$.
Hence if $2\rho \cdot (1-2\kappa_0)+\tau +\kappa_0<1$, we are in the stable region, which shows a clause of Theorem \ref{th:complsegm}. 
Note that both of the partial derivatives of $g$ are negative 
when $\tau,\rho\in [0,0.5)$.
If we fix $\rho=0.5$ then the largest value of $\tau$ that keeps $g(\tau,\rho)\geq 0$
is the solution ($\kappa_0\approx 0.353092313$) of the first equation
of Table \ref{ta:constantthresint}. 
Hence we may conclude that
if  $\tau<\kappa_0$ 
and $\rho\in (0, 0.5]$ then 
$\Punhap =\bigo{c^{-w}\cdot\Pstab}$
for some $c>1$.
We can also look for the largest square that is contained 
in the large area of the first item of Figure \ref{fig:red_3D_tr} (where the process is static).
The edge of this square is given in Table \ref{ta:constantthresint}.
Hence if $\rho, \tau \in (0, \lambda_0)$ 
then $\Punhap =\bigo{c^{-w}\cdot\Pstab}$
for some $c>1$.

We have one last observation to make about the function $g$.
If we let do not restrict the values of $\tau\in (0,0.5)$ then we wish to
find the values of $\rho$ such that $g(\tau, \rho)$. According to the properties of
$g$ (in particular its negative derivative on $\rho$), 
these are all the positive numbers which are less than the limit 
(which is also an infimum)
\[
\lim_{\tau\to 0.5} \frac{1}{2} \cdot \left(
\frac{(1-\tau)^{1-\tau}}{(0.5-\tau)^{0.5-\tau}}
\right)^{2} = 0.25
\]

Hence 
we may conclude that
if $\rho\leq 0.25$
and $\tau \in (0, 0.5)$ 
then $\Punhap =\bigo{c^{-w}\cdot\Pstab}$
for some $c>1$.
This concludes the proof of the second clause 
of Theorem \ref{th:complsegm}.

%

\newpage

\newpage

\section{Appendix}\label{se:prelim}
In this section we provide supplementary material to the main part of the paper. 
This includes mainly proofs of the claims we made towards the proof of our main theorem, but also
additional introductory material, figures, tables and mathematical background.
The structure of this supporting material follows the presentation of the main part of the paper.

\subsection{Schelling models}\label{ssub:varschel}
The definition of the Schelling model in Section \ref{def:modseg} is rather standard, close to
the spacial proximity model from  \cite{TS1, TS71a}  and identical to
the model studied in \cite{brandt:an, BELschel13}. 
Most significantly, it is an unperturbed Schelling model, where agents cannot
make moves that are detrimental to their welfare.
We have already remarked in the introduction
that various more realistic-looking rigorously analysed perturbed versions of the model 
in the literature (such as \cite{JZ1}) actually force `regularity' on the process, which
makes it fit an already existing methodology 
(such as Markov chains with a unique stationary distribution, 
or with properties that guarantee stochastically stable states).
Even if we commit to the absence of perturbations in the model, 
it is possible to add complications to the simple dynamics defined in
Section \ref{def:modseg}. For example, the agents
may take into account the distance they need to travel before they move.
However it is the simplicity of the original Schelling model,
contrasted by  the complexity of the analysis required to specify its behaviour
(as demonstrated in \cite{brandt:an, BELschel13}) 
that make this topic fundamental and interesting.

Under the above requirement for simplicity and proximity to the original model,
there remain a number of ways that the model can be altered or generalised.
For example, note that in the case that $\tau>0.5$ in the model of
Section  \ref{def:modseg}, two nodes may swap although the number of
same-type nodes in their neighbourhoods remain the same after the swap.
One may alternatively require that for such a swap, the corresponding
numbers of same-type nodes in the neighbourhoods increase
(note that such a modification would not make a difference if $\tau\leq 0.5$).
Our choice on this issue follows Brandt, Immorlica, Kamath, and Kleinberg in
\cite[Section 2]{brandt:an}.
One generalisation, considered in \cite{BELtipl13}, is to allow
different tolerance thresholds for the two types of
individuals. Another generalization, already present in \cite{TS1}, is to introduce a number
of vacancies, i.e.\ to allow the total number of individuals to 
be smaller than the number of sites. 
We could also alter the dynamics. Instead
of switching two chosen individuals at each stage, we could merely choose {\em one}
individual and change his type. Such an action may be 
interpreted as the departure of the individual to some external location and the arrival of an 
individual of the opposite type at the site that has just become available. 
Model with this dynamics are often said to have {\em switching agents}
(see \cite{BELtipl13}, where such a model was analysed) as opposed to the 
{\em swapping agents} of the model of Section \ref{def:modseg}. 

It is worth pointing out that the Schelling model with switching agents
is closely related to the spin-1 models used to analyse phase transitions in physics, and in
particular the Ising model. Indeed, in the Ising model (originally introduced in order to explain
ferromagnetism in the context of temperature) a system of atomic nuclei interact with an 
auxiliary `heat bath' which affects their spin. Such connections have been analysed by many authors
(see for example \cite{SS,DM,PW,GVN,GO}), where the dynamics is based on
the Boltzmann distribution on the set of possible configurations. A rough analogy between
the two models is that `energy'  corresponds to  some measure of the mixing of types
(see the definition of the mixing index for the Schelling model below) and `temperature' 
corresponds to the intolerance parameter $\tau$ (as least insofar phase transitions refer to
varying values of the temperature or $\tau$). On the other hand, 
the Schelling model with closed dynamics has a counterpart in the
Ising model with Kawasaki dynamics.

\subsection{Objectives of the analysis of the unperturbed model and related work}
\label{sub:knownressch}
\label{se:1Dmorphojob}
We use the notation of Section \ref{def:modseg}, so that 
the symbol $n$ always means the population
variable of the process, and $w$ always is the parameter of the process which
determines the length of the neighbourhood of nodes. Similarly, $\tau, \rho$
always refer to the parameters of the Schelling process.

In Section \ref{se:accesscomplseg} we show that,
with probability one, the process 
$(n, w, \tau, \rho)$ either reaches complete segregation or
it reaches a dormant state.
In the second case, we wish to 
determine the extent of segregation in the dormant state.
In view of the large number of states that the process may have
(most of them `random') a question arrises as to how to classify or even
talk precisely about different states that may be the outcome of the process.
Brandt, Immorlica, Kamath, and Kleinberg noticed in \cite{brandt:an} that, 
at least in the case $\tau=\rho=0.5$ that they considered, 
the extent of the segregation that occurs in the final state
depends crucially on $w$. In fact, they showed that
the dependence on $w$ is `polynomial'.
We may say that
a state is regarded as {\em polynomial segregation} if, with high probability
a randomly chosen node belongs to a contiguous block of size that is proportional to 
the value of a polynomial on $w$. A similar definition applies to 
{\em exponential segregation}. These two notions turn out to provide 
a very useful language for explaining the eventual outcome of the Schelling process.
A full characterization (extending the work of Brandt, Immorlica, Kamath, and Kleinberg 
\cite{brandt:an}) of the asymptotic behaviour of the process $(n,w, \tau, \rho)$
for $\rho=0.5$ and $\tau\in[0,1]$ was provided by the authors in \cite{BELschel13}
in terms of polynomial and exponential segregation, as well as static processes.
Intuitively, a random state is non-segregated, while polynomial 
and exponential segregation correspond to highly non-random states.
\begin{table}
\colorbox{black!10}{\arrayrulecolor{green!50!black}
\begin{tabular}{l|cccc}
\toprule
{\small\bf Intolerance}& {\small $\tau\in [0,\kappa_0)$} & {\small  $\tau\in (\kappa_0, 0.5)$} & 
{\small  $\tau=0.5$}&{\small  $\tau\in (0.5, 1]$} \\[1ex]
{\small\bf Segregation}&{\small Negligible}  &{\small Exponential} & {\small Polynomial}& {\small Complete} \\
\bottomrule
\end{tabular} }
\centering
\caption{Segregation regions in the case $\rho=0.5$.}
\label{ta:vsclassifpoex}
\end{table}
The characterization from \cite{BELschel13} is summarized in 
Table \ref{ta:vsclassifpoex}. 
It is rather striking that when intolerance is increased from, say, $0.4$ to $0.5$ 
the segregation is decreased. This phenomenon is akin to the many paradoxes
that stem from the missing link between local motives of agents and global behaviour of
a system (e.g.\ see Schelling's classic monograph \cite{TS2}, and in particular Chapter 4
which relates to his segregation models). 
Even more strikingly, the authors showed in
\cite{BELschel13} that the paradox occurs for all $\tau\in(\kappa_0, 0.5)$, i.e.\ as
$\tau$ approaches $0.5$ the segregation (in the final state) decreases.

This paradoxical phenomenon is also clear in many simulations of the model.
Figure \ref{fig:ciclmorphostatlides} shows typical runs of the processes
$(5\cdot 10^5, 3\cdot 10^3, \tau, 0.5)$ for $\tau\in\{0.485, 0.49,0.495, 0.5\}$.
The final state is depicted in the circle, where the nodes of one type are black and
the nodes of the other type are grey. We use the space between the centre of the
ring and the ring in order to record the actual process, as it evolves in time.
In particular, if a grey node switches its place with a black node, we put a black node
(the colour of the more recent node) between the location of the node and the centre of the ring,
at a distance from the centre which is proportional to the stage where the swap occurred.
Hence we may observe ``cascades' of swaps of nodes of the same type, which
are less severe  as $\tau$ approaches $0.5$. Such cascades are crucial in the
rigorous analysis of the model, 
both in
\cite{brandt:an} and in \cite{BELschel13}.
Figure \ref{fig:ciclmorphostatlides} shows  that as $\tau$ approaches $0.5$, the segregation
is decreased. This behaviour can be traced to the probability that a node is unhappy in the
initial configuration, and in fact, the threshold constant $\kappa_0$ is derived
by comparing related probabilities in \cite{BELschel13}.

In the case $\rho=0.5$
the two constants $\kappa_0$ and $0.5$ mark {\em phase transitions} in the limit state
of the process $(n, w, \tau, \rho)$, as $\tau$ takes values in $[0,1]$.
This brings us to another important objective
of the analysis of the Schelling process, which is the discovery of phase transitions with
respect to the parameters $\tau, \rho$.
Incidentally, we note that the discovery of phase transitions has been one of the
original motivations for the study of the one and two dimensional Ising model,
when one varies the temperature (see the end of Section \ref{ssub:varschel} for a brief
discussion of the analogy between the Ising and the Schelling models).
Finally we are also interested in the expected time that the process take to converge.

\begin{figure} 
\centering\includegraphics[scale=0.35]{./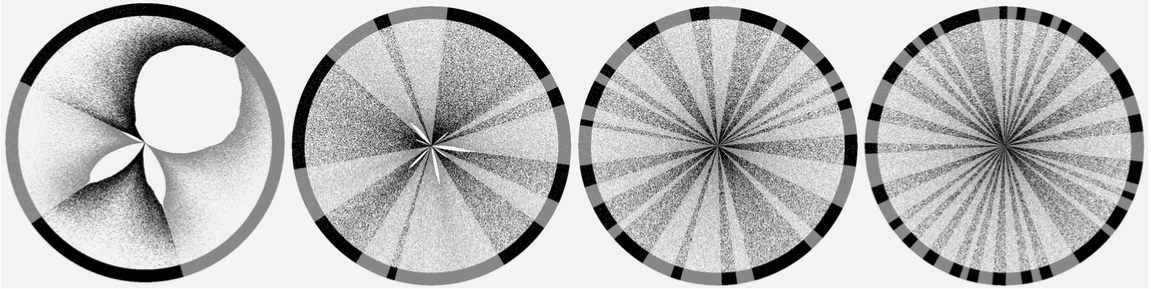}
\caption{500K population with $w=3000, \rho=0.5$ and $\tau=0.485, 0.49, 0.495, 0.5$. All made about 130K swaps.}
\label{fig:ciclmorphostatlides}
\end{figure}  

\subsection{Asymptotic notation}
 We use the asymptotic notation. 
Given two functions $f,g$ on the positive integers, (as is standard) we say that 
$f$ is $\bigo{g}$ if there exists a positive constant 
$c$ such that $f(t)\leq c\cdot g(t)$ for all $t$.
We say that  $g$ is $\Omega(f)$ if $f$ is $\bigo{g}$, and that $g$ is $\Theta(f)$ if both
$f$ is $\bigo{g}$ and $f$ is $\Omega(g)$.
We also use this notation, however,  in a more general sense: we say that $f$ is $g(\bigo{t})$
if there exists some $c>0$ such that $f\leq g(ct)$ for all $t$. For example, when we  say that
a function $f$ is $ne^{-\bigo{t}}$, this means that  there is $c>0$ such that $f(t)\leq ne^{-ct}$
for all $t$.  Or, if we say that $f$ is $n(1-e^{-\bigo{t}})$, this means that  there is $c>0$ such that $f(t)\leq n(1-e^{-ct})$
for all $t$.
Similarly, we use $\Theta$ in a more general sense. 
We say that $f$ is $g(\Theta(t))$ to mean that there exist constants $c_0$ and $c_1$ such that $ g(c_0\cdot t) \leq  f(t)  \leq g(c_1 \cdot t)$ for all $t$. 
We say that $f=\smo{g}$ if $\lim_t f(t)/g(t)=0$.
The (often hidden) variable underlying the asymptotic
notation in the various expressions will be $w$. 
In other words, for fixed values of $\rho$ and $\tau$, 
the choice of constants required in the asymptotic notation, will always depend only on $w$.  
We also combine the `high probability' terminology with the 
asymptotic notation in a manner which is worth clarifying. When we say, for example, that 
`with high probability 
 the number of initially unhappy $\alpha$-nodes in the process  
$(n,w,\tau,\rho)$ is 
$n\cdot (1-\rho)\cdot  e^{-\Theta(w)}$', this means that there exist constants 
$c_0$ and $c_1$ such that, with high probability,  the number of initially unhappy $\alpha$-nodes in the process  
$(n,w,\tau,\rho)$ lies  between 
$n\cdot (1-\rho)\cdot  e^{-c_0 \cdot w}$ and $n\cdot (1-\rho)\cdot  e^{-c_1 \cdot w}$.

\subsection{Overview of our analysis}
We use different methods for the cases $\tau\leq 0.5$ and $\tau>0.5$.
If $\tau\leq 0.5$, in order to derive conditions under which the process is static, we
analyse and compare the probabilities of initially unhappy nodes
and  {\em stable intervals}. This approach was introduced
by the authors in 
\cite{BELschel13}. If $\tau>0.5$ we consider the two cases $\tau +\rho<1$ and 
$\tau +\rho>1$ and argue (using distinct arguments) that in each of them complete
segregation is the high probability outcome. We elaborate on these arguments.

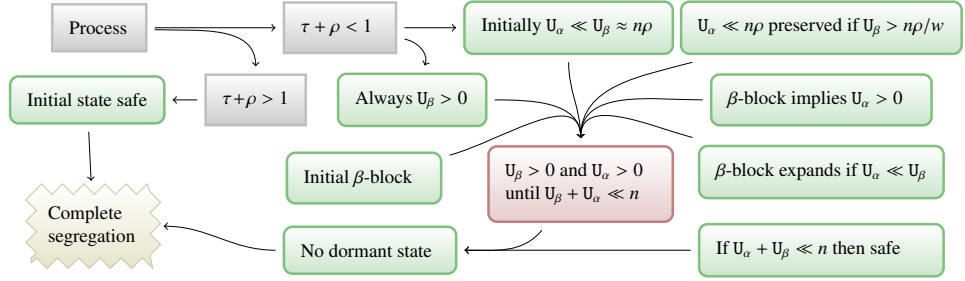
\begin{figure}
\scalebox{0.8}{
\begin{tikzpicture}[
Nnode/.style={rectangle, rounded corners, minimum height=4mm, very thick, 
draw=green!50!black!50, top color=white, bottom color=green!50!black!20, font=\footnotesize },
tnode/.style={rectangle, rounded corners, very thick,
draw=red!50!black!50, top color=white, bottom color=red!50!black!20, font=\footnotesize},
pnode/.style={rectangle, very thick,
draw=white!50!black!50, top color=white, bottom color=white!50!black!40, font=\footnotesize}]
 \node (Npreserv) [Nnode,  outer sep=4pt, inner sep=8pt] at (7.5,2) {\parbox{4.1cm}{
 $\unhap_{\alpha}\ll n\rho$ preserved if $\unhap_{\beta}> n\rho/w$}}; 
 \node (Nbetablgood) [Nnode,  outer sep=4pt, inner sep=8pt] at (7.5,0.8) {\parbox{3.1cm}{
 $\beta$-block implies $\unhap_{\alpha}>0$}};
 \node (Nbetablinitial) [Nnode,  outer sep=4pt, inner sep=8pt] at (-0.1,-0.5) {\parbox{2cm}{
Initial $\beta$-block}};
 \node (Nbetablexpands) [Nnode,  outer sep=4pt, inner sep=8pt] at (7.5,-0.4) {\parbox{3.5cm}{
 $\beta$-block expands if  $\unhap_{\alpha}\ll\unhap_{\beta}$}};
 \node (Nsafe) [Nnode,  outer sep=4pt, inner sep=8pt] at (7.5,-1.7) {\parbox{3.5cm}{
 If  $\unhap_{\alpha}+\unhap_{\beta}\ll n$ then safe}};
 \node (Nbreakth) [tnode,  outer sep=4pt, inner sep=8pt] at (3.5,-0.6) {\parbox{2.5cm}{
 $\unhap_{\beta}>0$ and $\unhap_{\alpha}>0$\\ until $\unhap_{\beta}+\unhap_{\alpha}\ll n$}};
 \node (Nunbala) [Nnode,  outer sep=4pt, inner sep=8pt] at (3.3,2) {\parbox{2.9cm}{Initially
 $\unhap_{\alpha}\ll \unhap_{\beta}\approx n\rho$}};
 \node (Nhard) [pnode,  outer sep=4pt, inner sep=8pt] at (-0.5,2) {\parbox{1.3cm}{$\tau+\rho<1$}};
 \node (Ninitsafe) [Nnode,  outer sep=4pt, inner sep=8pt] at (-4.7,0.8) {\parbox{2cm}{Initial state
 safe}};
 \node (Nalwunhab) [Nnode,  outer sep=4pt, inner sep=8pt] at (0.7,0.8) {\parbox{1.9cm}{Always 
$\unhap_{\beta}>0$}};
 \node (Nprocess) [pnode,  outer sep=4pt, inner sep=8pt] at (-4.5,2) {\parbox{1cm}{Process}};
 \node (Neasy) [pnode,  outer sep=4pt, inner sep=8pt] at (-1.9,0.8) {\parbox{1.1cm}{$\tau+\rho>1$}};
\node (Ncomplete) [draw=yellow!50!black!50, top color=white, bottom color=yellow!50!black!20, decorate, decoration=zigzag, 
outer sep=4pt, inner sep=8pt, font=\footnotesize] at (-4.6,-1.3) {\parbox{1.5cm}{Complete segregation}};
 \node (Nnodormant) [Nnode,  outer sep=4pt, inner sep=8pt] at (0,-1.7) {\parbox{2.3cm}{No dormant 
 state}};
\draw [->, in = 90, out=-2] (Nprocess) to (Neasy);
\draw [->] (Neasy) to (Ninitsafe);
\draw [->] (Nprocess) to (Nhard);
\draw [->, in = 70, out=-10] (Nhard) to (Nalwunhab);
\draw [->] (Nhard) to (Nunbala);
\draw [->, in = 90, out=0] (Nalwunhab) to (Nbreakth);
\draw [->, in = 90, out=15] (Nbetablinitial) to (Nbreakth);
\draw [->, in = 90, out=-90] (Nunbala) to (Nbreakth);
\draw [->, in = 90, out=195] (Npreserv) to (Nbreakth);
\draw [->, in = 90, out=180] (Nbetablgood) to (Nbreakth);
\draw [->, in = 90, out=165] (Nbetablexpands) to (Nbreakth);
\draw [->, in = 0, out=-130] (Nbreakth) to (Nnodormant);
\draw [->, in = 0, out=180] (Nsafe) to (Nnodormant);
\draw [->] (Ninitsafe) to (Ncomplete);
\draw [->, in = 0, out=180] (Nnodormant) to (Ncomplete);
\end{tikzpicture}}
\centering
\caption{The logic of the proof that if $\tau>0.5$, with high probability the process reaches complete segregation. Here `$\beta$-block'
refers to the persistent $\beta$-block of Section \ref{se:persist}.}
\label{fig:dyprogeneralskads}
\end{figure}

\subsubsection*{{\bf Case} \texorpdfstring {$\tau>0.5$}{Hard case}}
This case is divided to the cases $\tau+\rho>1$ and $\tau+\rho<1$, and the
structure of the analysis was depicted as a flowchart in Figure \ref{fig:dysimpleskads}.
Here we give a more detailed overview, which is illustrated in the more elaborate flowchart of Figure
\ref{fig:dyprogeneralskads}.
First, we show that asymptotically (on $w,n$), from any state
there is a series of transitions
that leads to either a dormant state, or complete segregation.
Hence, since there are only finitely many states, with probability one
the process will reach either a dormant state or complete segregation.
So in order to establish complete segregation as the eventual outcome, it suffices to show
that the process maintains unhappy nodes of each colour during all stages.

First, assume that $\tau+\rho>1$. In this case we can show that, assuming that the 
actual proportion of $\beta$-nodes is
sufficiently close to $\rho$ (which is very likely according to the law of large numbers),
every reachable state is not dormant. More precisely, we show that
given such numbers of $\alpha$ and $\beta$-nodes, every permutation of them on
the ring corresponds to a state which has both unhappy $\alpha$ and unhappy $\beta$-nodes.
Since the numbers of nodes of each type do not change during each transition, this argument
suffices for this case. Recall that states with the property that no series of transitions from them leads to
dormant states are called {\em safe}. So, in the case
$\tau+\rho>1$ we argue that (with high probability) the initial state is safe.

Second, we assume that $\tau+\rho<1$, which is a considerably harder case.
Under this hypothesis, in the initial configuration we have
$\smo{n}$ many unhappy $\alpha$-nodes and
$\Omega(n)$ many unhappy $\beta$-nodes.
As before, it suffices to show that (with high probability) the process never reaches a
dormant state. It is not hard to see that (with high probability) the initial state is not dormant.
However it is no longer clear if the initial state is safe. We show  
that given the expected numbers of nodes of the two types in the initial
state (or numbers sufficiently close to their expectations) any permutation of the nodes
on a ring corresponds to a state with at least one unhappy $\beta$-node.
Hence, with high probability, the process will never run-out 
of unhappy $\beta$-nodes and we only need
to argue about the preservation of unhappy $\alpha$-nodes.
Already it should be clear that this is an {\em asymmetric} case where the 
$\alpha$-nodes (the majority) and the $\beta$-nodes (the minority) play different roles.
When $\tau+\rho<1$ there are many permutations of the nodes 
(which correspond to states where all $\alpha$-nodes
are happy, i.e.\ dormant states. So the argument that was used in the case 
$\tau+\rho>1$ is no longer 
relevant for arguing for the preservation of unhappy $\alpha$-nodes in the process.
The argument we use instead is based on the asymmetry between the number of unhappy
$\beta$-nodes and the unhappy $\alpha$-nodes, which creates a dynamic that favours
the preservation of unhappy $\alpha$-nodes. More precisely, it favours the preservation of
$\beta$-blocks of length $>w$, which is a condition implying the existence of unhappy
$\alpha$-nodes (indeed, the $\alpha$-nodes 
neighbouring a $\beta$-block of length at least $w$
are unhappy). Hence if we show that the expected number of unhappy $\alpha$-nodes remains
small during the stages of the process,
 then we have that we can expect the existence of unhappy $\alpha$-nodes 
 (and unhappy $\beta$-nodes) up to the point where the total number of unhappy nodes
 is small.

In addition we show that if the
total number of unhappy nodes in a state is sufficiently small, then
this state is safe, i.e.\ there is no series of transitions from it to a dormant state.
The argument is concluded by showing that it is very likely that by stage $n$ the process
will arrive at a state with appropriately low number of unhappy nodes, before it reaches 
a dormant stage.
Figure \ref{fig:inf_area} is a plot of the numbers of unhappy $\alpha$-nodes
and the unhappy $\beta$-node during the stages, taken from two typical simulations
(one with large and one with small population), when $\tau+\rho<1$.
The process we described is clearly visible: the number of unhappy $\alpha$-nodes
remains small, until the number of unhappy $\beta$-nodes becomes small.
Up to the later point, as we explained, the dynamics favours the preservation of unhappy
$\alpha$-nodes.

\subsubsection*{{\bf Case} \texorpdfstring {$\tau\leq 0.5$}{Static case}}
In this case we have $\tau+\rho<1$, and  this means that
 in the initial configuration 
  the $\alpha$-population is happy with a few exceptions, while
the $\beta$-population  is unhappy, with a few exceptions.
By the definition of the dynamics of the model $\alpha$-to-$\beta$
swaps can only occur in areas where there are unhappy $\alpha$-nodes.
Hence in this case
the $\alpha$-to-$\beta$ swaps will be concentrated in a very few selected areas in the ring,
at least in the first stages of the process.
This concentration of 
$\alpha$-to-$\beta$ swaps creates cascades of $\alpha$-node evictions which can be clearly
seen in simulations such us the one displayed 
in Figure  \ref{fig:cicle_slides}.\footnote{Here the current configuration is the outer circle, while the initial random
state is the inner small circle. Whenever a swap occurs at some stage, a dot is placed at a distance from the center which
is proportional to that stage, at the same angle where the involved node lies. The color of the dot corresponds to the type that the node changed to under the particular swap.} If we could argue that such cascades are restricted to
small areas around the initially unhappy $\alpha$-nodes, then it is not hard to argue that
the process reaches a dormant state rather quickly, having affected only a very 
small number of nodes. The way we do this is through {\em stable intervals}, 
a device that was also used in \cite{BELschel13}.
Roughly speaking, these are intervals that do not allow the spread of unhappy $\alpha$-nodes through them.

If $\rho$ is very small, or if $\tau$ is very small, then stable intervals occur with high probability. 
On the other hand, if $\rho,\tau$ get sufficiently large,
the probability of a stable interval tends to 0 as $w\to\infty$.
This contrasts with prevalence of unhappy $\alpha$-nodes. When $\tau, \rho$ are small,
the probability of (the occurrence of) an unhappy $\alpha$-node is small, while it gets large
when $\tau, \rho$ increase. Figure \ref{fig:proba_unhap_stabl} shows
the actual probabilities (as calculated in Section \ref{se:casetaubhalf}) as functions
of $\tau,\rho$ for the specific value of $w=100$ (the shape of the plots does not 
change significantly for
different values of $w$). 
The interesting case is the range for $\tau,\rho$ where both probabilities tend to $0$ as
$w\to\infty$, i.e.\ both events become rare. Somewhere on the horizontal $\tau$-$\rho$ plane
there is a line marking the intersection of the two surfaces.
This is where the probability of a stable interval becomes less than the probability of an unhappy
$\alpha$-node. Moreover, as $w\to\infty$ the ratio of the two probabilities tends to infinity or zero,
depending whether $\tau,\rho$ sit on one side of the plain (with respect to the intersection line)
or the other. 
The crux of the argument in Section \ref{se:casetaubhalf}  is that for many values of 
$\tau,\rho$ stable intervals are much more common than unhappy $\alpha$-nodes in the
initial configuration. This allows us to argue that, in this case, 
the process has to reach a dormant state after 
$\smo{n}$ many swaps. 

 \begin{figure}  
 \centering
\includegraphics[scale=0.3]{./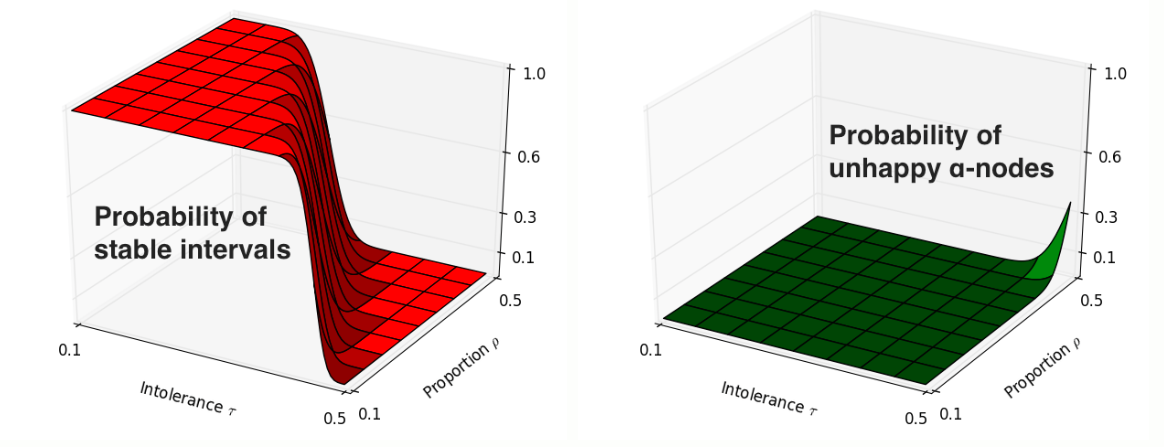}
\caption{The probabilities of a stable interval and an 
unhappy $\alpha$-node, as functions of $\tau,\rho\leq 0.5$ when $w=100$.}
\label{fig:proba_unhap_stabl}
\end{figure}

\subsection{Properties of welfare metrics}\label{subse:welfare}
The social welfare $\texttt{V}$ of the state can easily be seen to be
 non-decreasing along the transitions of the process.
Let us establish the relationship with the mixing index.
Given a certain state of the process and a node $u$, 
we let $u^{\alpha}$
denote the number of $\alpha$ nodes
that are located in the neighbourhood of $u$ at this state. Similarly, we let
$u^{\beta}$ denote the number of $\beta$-nodes
that are located in the neighbourhood of $u$.
Furthermore, we denote by $(\alpha_j)$ and $(\beta_i)$ and  the finite sequences of 
$\alpha$ and $\beta$  nodes respectively in the state. 
Hence $\alpha_j^{\beta}$ denotes the number of $\beta$-nodes
that are located in the neighbourhood of $\alpha_j$ while
$\beta_j^{\alpha}$ denotes the number of $\alpha$-nodes
that are located in the neighbourhood of $\beta_j$. 
Given a state, let $n_{\alpha}, n_{\beta}$ be the number of 
$\alpha$ and $\beta$-nodes respectively. Then
\begin{equation}\label{eq:gjrisrig}
\sum_{j<n_{\beta}} \beta_j^{\alpha} = \sum_{i<n_{\alpha}} \alpha_i^{\beta}.
\end{equation}
In order to prove this equality,
consider the state of $\alpha$ and $\beta$ types in
the state and start by removing all $\beta$ from their positions. Then, adding
the $\beta$ types one-by-one back to their original 
positions we can see each placement
incurs the same increase to the two sums.
Hence by induction, the two sums are equal.

We call the number in \eqref{eq:gjrisrig} the {\em mixing index} of the state,
because it can be used as a metric of how mixed 
(i.e.\ not segregated) the population of $\alpha$ and $\beta$ types
is at the given state. 
Indeed, suppose that the state has at least
$2w+1$ nodes of each type.
In the state of complete segregation  
the sums in \eqref{eq:gjrisrig} take the value $2\cdot (1+\cdots +w)$, which is $w(w+1)$. 
This can be shown to be the minimum mixing index (in a state which has 
at least
$2w+1$ nodes of each type).
At the other extreme, 
if the two types are uniformly mixed (in the sense that
every interval $I$ has approximately $\rho_{\ast} \cdot |I|$ green nodes) then 
the sums in
\eqref{eq:gjrisrig} take approximately the value 
$n\cdot 2w\cdot \rho_{\ast}(1-\rho_{\ast})$, 
which can be shown to be the maximum
possible mixing index.
We also have
\begin{equation}\label{eq:twosumeqs}
\sum_{i<n_{\alpha}} \alpha_i^{\alpha} + 
\sum_{i<n_{\alpha}} \alpha_i^{\beta} = 
(2w+1)\cdot n_{\alpha} \hspace{0.5cm}\textrm{ and }\hspace{0.5cm}
\sum_{j<n_{\beta}} \beta_j^{\beta} + \sum_{j<n_{\beta}} \beta_j^{\alpha} 
= (2w+1)\cdot n_{\beta}.
\end{equation}
From \eqref{eq:gjrisrig}
and \eqref{eq:twosumeqs} we get 
$\mathtt{V}= (2w+1)\cdot n - 2\cdot\mix$.

\begin{lem}
If $\tau\leq 0.5$, each step in the process
decreases the mixing index by at least 4.
\end{lem}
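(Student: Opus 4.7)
The plan is to compute $\Delta\mix$ explicitly for a single swap and then to apply the hypothesis $\tau\leq 0.5$ together with the unhappiness of both swapped nodes. Fix a swap that exchanges the $\alpha$-node at position $u$ and the $\beta$-node at position $v$; write $a=u^{\alpha}$ and $b=v^{\beta}$ for the two old utilities, and set $\delta=1$ if $v$ lies in the neighbourhood of $u$ (equivalently, $|u-v|\leq w$) and $\delta=0$ otherwise.

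First, I view $\mix$ as the number of unordered opposite-type pairs at distance at most $w$, so that only pairs involving $u$ or $v$ can alter their contribution. For each node $z$ in the neighbourhood of $u$ with $z\neq u,v$, the unordered pair $\{u,z\}$ flips from same-type to mixed (contributing $+1$ to $\Delta\mix$) if $z$ is of type $\alpha$, and flips the other way (contributing $-1$) if $z$ is of type $\beta$. The pairs $\{v,z\}$ with $z$ in the neighbourhood of $v$ and $z\neq u,v$ are handled symmetrically with the $\alpha$/$\beta$ roles interchanged, and, when $\delta=1$, the pair $\{u,v\}$ remains mixed and contributes $0$. After expressing these signed counts in terms of $a,b,\delta,w$ (using $u^{\alpha}+u^{\beta}=v^{\alpha}+v^{\beta}=2w+1$), the bookkeeping should collapse to the identity
\[
\Delta\mix \;=\; 2a+2b+2\delta-4(w+1).
\]

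Now the hypothesis $\tau\leq 0.5$ together with unhappiness gives $a,b<(2w+1)/2$, and since $a,b$ are integers this forces $a,b\leq w$. Substituting yields $\Delta\mix\leq 2\delta-4$, so in the non-overlap case $\delta=0$ the bound $\Delta\mix\leq -4$ is immediate. The main obstacle is the overlap case $\delta=1$, where the naive substitution only yields $\Delta\mix\leq -2$; closing the gap requires exploiting the legality constraints $a'\geq a$ and $b'\geq b$ together with the post-swap expressions $a'=2(w+1)-b-\delta$ and $b'=2(w+1)-a-\delta$ (giving the joint bound $a+b+\delta\leq 2(w+1)$) to rule out or to more carefully re-account for the borderline configuration $a=b=w$ with $v$ inside the neighbourhood of $u$. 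I expect this tight overlap case to be the combinatorial kernel of the proof, and the rest of the argument to reduce to the routine pair-counting above.
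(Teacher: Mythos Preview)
Your formula $\Delta\mix = 2a + 2b + 2\delta - 4(w+1)$ is correct, and your decomposition into pair-flips is cleaner than the paper's computation. The problem is that the gap you identify in the overlap case is \emph{real}: the configuration $a=b=w$ with $\delta=1$ is not ruled out by either unhappiness or legality. Both nodes are unhappy (since $w/(2w+1)<1/2\leq\tau$ fails only when $\tau>1/2$; here $w<\tau(2w+1)$ holds for $\tau=1/2$), and legality reduces to $a+b+\delta\leq 2(w+1)$, which is satisfied with one unit to spare. In that configuration your own formula gives $\Delta\mix=-2$, and an explicit example with $w=2$ (types $\beta\beta\alpha\beta\alpha\alpha$ at positions $-2,\dots,3$, swapping positions $0$ and $1$) confirms this. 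So the bound ``at least $4$'' is simply false; the sharp statement is that each step decreases $\mix$ by at least $2$.

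The paper's own proof does not actually establish $4$ either. It computes the change in the sum of the mixing indices of \emph{all} nodes, which is $2\mix$, and obtains $4(x-y)$ for that quantity; translated back, this is $\Delta\mix = 2(x-y)\leq -2$, the same bound your argument yields directly (and with less bookkeeping---the paper's handling of the nodes $u,v\in I$ in the overlap case is not quite right, since their individual mixing indices do change). For the only use the paper makes of this lemma---bounding the number of swaps by a constant times the initial mixing index---the constant $2$ works exactly as well as $4$, so you should regard the stated ``$4$'' as a harmless slip and present your argument with the conclusion $\Delta\mix\leq -2$.
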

\begin{proof}
Suppose that we swap an unhappy $\alpha$-node $u$
with an unhappy $\beta$-node $v$. 
Let $N_u, N_v$ be the 
neighbourhoods of $u,v$ respectively and let 
$I=N_u\cap N_v$. Here we view the nodes as stationary, so that a swap of nodes
means a swap of their types.
The mixing index of the nodes in 
$I$ will not change after the swap.
Since $\tau\leq 0.5$ the number $x$ of $\alpha$-nodes
in $N_u-I-\{u\}$ is smaller the the number $y$ of $\alpha$-nodes
in $N_v-I-\{u\}$.  After the swap the mixing index of each of the 
$\alpha$-nodes in $N_u-I-\{u\}$ will increase by one while the mixing index
of each of the $\beta$-nodes in the same set will decrease by one.
If $t=2w+1$ is the length of the neighbourhood and $i$ is the number of
$\alpha$-nodes in $I$ then the mixing index of $u$
before and after the swap is $t-x-i$ (the size of the neighbourhood minus the $\alpha$-nodes in the
neighbourhood)  and $x+i$  (the number of $\alpha$-nodes in $N_u-I$ plus the number 
of $\alpha$-nodes in $I\subseteq N_u$) respectively.
Hence the difference in the sum of the mixing indices of the nodes in
$N_u-I$ before and after the swap is the addition of
\begin{itemize}
\item[(a)] the difference in the mixing index of $u$
\item[(b)] the difference in the sum of the mixing indices of the nodes in $N_u-I-\{u\}$
\end{itemize}
where the differences refer to the stages before and after the swap.
For (a) we have $(x+i)-(t-x-i)$. 
For (b) there is an increase (by 1) of the mixing indices of each $\alpha$-node in $N_u-I-\{u\}$
since $u$ becomes a $\beta$-node. Moreover there is a decrease (by 1) of
the mixing index of the $\beta$-nodes (as $u$ ceased to be an $\alpha$-node). Hence
for (b) we have $x-(t-x-i)$. Overall,  the difference in the sum of the mixing indices of the nodes in
$N_u-I$ before and after the swap is
$x-(t-x-i)-(t-x-i)+(x+i)=4x-2t+3i$.
A similar argument shows that the difference in the sum of the mixing indices
of nodes in $N_v-I$ is $2t-3i-4y$. Hence overall (and since the nodes outside 
$N_u\cup N_v$ maintain the same mixing index before and after the swap)
the difference in the (total) mixing index is $4(x-y)$. Since $x<y$ this means that
a decrease by at least 4 occurs due to the swap.
\end{proof}

In our analysis, one of the basic facts used is that that dormant states have at least a reasonably high mixing index. If we can show that with high probability the process reaches a point where the mixing index is too low for dormant states to be accessible, then by Corollary \ref{coro:inevcompsegd1} we will have shown that with high probability complete segregation is the eventual outcome. Proposition \ref{prop:maxmixindor} below provides an appropriate bound for the mixing index of dormant states. First we prove a technical lemma, which will then be used in the proof of Proposition \ref{prop:maxmixindor}.

\begin{lem}\label{le:blockdorml}
Suppose that  $\tau>0.5$, $\rho_{\ast}<\tau$, and $0\ll w \ll n$. In a dormant state of the process $(n, w, \tau, \rho)$ 
every $\beta$-block has length at most
$2\ceil{(1-\tau)w}$ and every $\beta$-node is
$\ceil{(1-\tau)w}$-near to an $\alpha$-node.
\end{lem}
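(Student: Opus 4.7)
The plan is a proof by contradiction. I would assume some $\beta$-block has length $L > 2\lceil(1-\tau)w\rceil$ and show that the state cannot be dormant under the stated hypotheses.

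First I would record a preliminary estimate. Since $\tau>0.5$ and $w$ is large (specifically $w > (1-\tau)/(2\tau-1)$, which is absorbed by $0\ll w$), we have $(2w+1)(1-\tau) < w$. Combining with $(2w+1)(1-\tau) = 2(1-\tau)w + (1-\tau) < 2\lceil(1-\tau)w\rceil + 1$ and the integrality of $L$, the bound $L \geq 2\lceil(1-\tau)w\rceil + 1$ yields $\min(L,w) > (2w+1)(1-\tau)$. This inequality is the workhorse of both subcases.

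Next I would split on the definition of a dormant state. If every $\alpha$-node is happy, then I would examine the $\alpha$-node $u$ immediately adjacent to one end of the long $\beta$-block; its neighborhood contains at least $\min(L,w)$ consecutive $\beta$-nodes of that block, so $u^\beta > (2w+1)(1-\tau)$ and hence $u^\alpha < (2w+1)\tau$, contradicting the happiness of $u$. This subcase is easy. The hard subcase is when every $\beta$-node is happy, and here I would crucially use $\rho_* < \tau$. The preliminary estimate applied to an end-$\beta$-node of any block forces every gap to have length at most $(2w+1)(1-\tau)$, for otherwise the end-$\beta$ would lack $(2w+1)\tau$ $\beta$-neighbors. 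Then I would examine a middle-$\beta$-node $v_{\mathrm{mid}}$ of the long block: if $L \leq 2w$ the whole block lies inside $v_{\mathrm{mid}}$'s neighborhood, contributing only $L < (2w+1)\tau$ $\beta$s (in the regime $L < (2w+1)\tau$), so neighboring $\beta$-blocks must sit close on both sides with combined gap length $g_L+g_R \leq (2w+1)(1-\tau)$ in order to cover the deficit. Iterating this forcing around the ring produces a dense cluster of $\beta$-blocks containing our long block; but $\rho_* < \tau$ demands some region of predominantly $\alpha$-nodes, and a $\beta$-node lying at the interface between the dense cluster and such a thin region necessarily has a long $\alpha$-gap on one side of its neighborhood, violating its happiness. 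The main obstacle is executing this interface argument rigorously, since the local happiness constraints and the global density constraint interact asymmetrically; a cleaner formulation would bound the mixing index $M = \sum_{v\,\beta}v^\alpha$ from above (using happiness) and from below (using the long isolated block's contribution), then play these bounds against each other through $\rho_*$.

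The second claim, that every $\beta$-node is $\lceil(1-\tau)w\rceil$-near an $\alpha$-node, is an immediate corollary of the first: a $\beta$-node in a block of length $L \leq 2\lceil(1-\tau)w\rceil$ sits at distance at most $\lceil L/2\rceil \leq \lceil(1-\tau)w\rceil$ from the nearest block boundary, which is necessarily an $\alpha$-node by maximality of the block.
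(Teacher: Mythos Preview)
Your argument for the subcase ``all $\alpha$-nodes are happy'' is essentially the paper's proof, and your reduction of the second claim to the first is fine (the two claims are in fact equivalent, and the paper reduces in the other direction).

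The gap is in your ``hard subcase'' where every $\beta$-node is happy. You acknowledge that you cannot execute the interface/density argument rigorously, and indeed that sketch is not a proof. But the subcase is vacuous: the hypotheses $\tau>0.5$ and $\rho_{\ast}<\tau$ together with $0\ll w\ll n$ already guarantee that \emph{every} state of the process contains unhappy $\beta$-nodes. This is exactly Lemma~\ref{taurhostarmt1} (equivalently Corollary~\ref{coro:exiunhangen1}) in the paper, and it is the place where the hypothesis $\rho_{\ast}<\tau$ actually enters the argument. Consequently a dormant state must be one in which all $\alpha$-nodes are happy, and only your easy subcase survives. Once you cite that lemma, the proof collapses to the two-line calculation you already gave: the $\alpha$-node adjacent to a $\beta$-block of length exceeding $2\lceil(1-\tau)w\rceil$ sees more than $(2w+1)(1-\tau)$ many $\beta$-nodes (using $2\lceil(1-\tau)w\rceil+1<w$ for large $w$), hence fewer than $(2w+1)\tau$ many $\alpha$-nodes, hence is unhappy.

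So the missing idea is simply to invoke the existing ``existence of unhappy $\beta$-nodes'' result rather than attempting a fresh combinatorial argument for the impossible subcase.
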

\begin{proof}
Since the second claim implies the first, it suffices to prove the second claim. By Lemma \ref{coro:exiunhangen1} we can assume that there are unhappy $\beta$-nodes
in the given state. 
For a contradiction, suppose that
some $\beta$-node is not
$\ceil{(1-\tau)w}$-near to any $\alpha$-node.
Consider the $\alpha$-node
which is adjacent to the block and to the right of it.
For large $w$,  $2\ceil{(1-\tau)w}+1 < w$, 
meaning that this  $\alpha$-node has at least  
$2\ceil{(1-\tau)w}+1$ nodes of type $\beta$
in its neighbourhood. 
Hence the $\alpha$-node has at most 
$2w-2\floor{(1-\tau)w}$ nodes of type $\alpha$ in its neighbourhood, 
which is less than  $(2w+1)\tau$.  The fact that this  $\alpha$ node is unhappy means that the state is not dormant.  
\end{proof}

\begin{prop}[Mixing in dormant states]\label{prop:maxmixindor}
Suppose that  $\tau>0.5$, $\rho_{\ast}<\tau$, and $0\ll w \ll n$. 
The mixing index in a dormant state of the process $(n, w, \tau, \rho)$ is more than $n(w+1)\tau\rho_{\ast}$. 
\end{prop}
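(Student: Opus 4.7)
\smallskip

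The plan is to show that every individual $\beta$-node $v$ in a dormant state contributes more than $(w+1)\tau$ to the mixing index (when viewed as $\mix=\sum_v v^{\alpha}$), and then simply sum over the $n\rho_{\ast}$ many $\beta$-nodes to obtain the required bound $n(w+1)\tau\rho_{\ast}$.

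First, I would dispense with the ambiguity in the definition of a dormant state: since $\tau>0.5$ and $\rho_{\ast}<\tau$, Lemma \ref{coro:exiunhangen1} already guarantees that there are (many) unhappy $\beta$-nodes in any state, so a dormant state must in fact be one in which all $\alpha$-nodes are happy. Hence for every $\alpha$-node $u$ we have $u^{\alpha}\geq \tau(2w+1)$.

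Next, I would exploit Lemma \ref{le:blockdorml}: every $\beta$-node $v$ is $\lceil(1-\tau)w\rceil$-near to some $\alpha$-node $u$. The key elementary observation is a Lipschitz property of the function $x\mapsto x^{\alpha}$ on the ring: if two sites differ by $|u-v|$, then their neighbourhoods differ by exactly $|u-v|$ sites on each side (one set gains $|u-v|$ sites on one end while losing $|u-v|$ on the other), so
\[
|u^{\alpha}-v^{\alpha}|\ \leq\ |u-v|.
\]
Applying this with $u$ the nearest $\alpha$-node to $v$ and using happiness of $u$, I get
\[
v^{\alpha}\ \geq\ u^{\alpha}-|u-v|\ \geq\ \tau(2w+1)-\lceil(1-\tau)w\rceil.
\]

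The final step is routine arithmetic: expanding the right-hand side yields $(3\tau-1)w+\tau-1$ (up to the ceiling), and demanding that this strictly exceed $(w+1)\tau$ reduces to $(2\tau-1)w>1$, i.e.\ the standing hypothesis $w>1/(2\tau-1)$ built into $0\ll w$. Summing $v^{\alpha}>(w+1)\tau$ over the $n\rho_{\ast}$ many $\beta$-nodes yields $\mix=\sum_v v^{\alpha}>n(w+1)\tau\rho_{\ast}$, as required.

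I do not foresee any serious obstacle here; the only mildly delicate point is making sure the Lipschitz inequality is tight enough, which is why one wants the \emph{nearest} $\alpha$-node rather than, say, the one adjacent to the $\beta$-block of $v$ (the latter would force a weaker constraint like $\tau>2/3$). All the heavy lifting has already been done in Lemma \ref{le:blockdorml}, whose proof bounds the distance from any $\beta$-node to the nearest $\alpha$-node by $\lceil(1-\tau)w\rceil$; plugging this bound into the Lipschitz estimate for $x^\alpha$ gives the result immediately.
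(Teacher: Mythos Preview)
Your proof is correct and is essentially the same argument as the paper's, just run directly rather than by contradiction: the paper assumes $\mix\le n(w+1)\tau\rho_{\ast}$, picks a $\beta$-node $u$ with $u^{\alpha}\le (w+1)\tau$ by averaging, and then uses Lemma~\ref{le:blockdorml} together with the same Lipschitz bound $|u^{\alpha}-v^{\alpha}|\le |u-v|$ to force a nearby $\alpha$-node to be unhappy. Your direct version (showing every $\beta$-node has $v^{\alpha}>(w+1)\tau$ and summing) is if anything a little cleaner, and the arithmetic reducing to $(2\tau-1)w>1$ matches exactly the hypothesis stated in Lemma~\ref{le:mixindorsta}.
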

\begin{proof}
Suppose that in a dormant state the mixing index is at most
$n(w+1)\tau\rho_{\ast}$.
Since there are $n\rho_{\ast}$ nodes of type $\beta$,  
there exists such a node $u$
with mixing index at most
$(w+1)\tau$. By Lemma \ref{le:blockdorml}
there exists an $\alpha$-node $v$ within $\ceil{(1-\tau)w}$ nodes 
to the left or to the right
of $u$. The number of $\alpha$-nodes in the 
neighbourhood of $\nu$ is therefore at most 
$(w+1)\tau +\ceil{(1-\tau)w}$. However this same number 
must be at least $(2w+1)\tau$ since 
$v$ is happy in a dormant state. This holding for arbitrarily  large $w$ would imply that $(1-\tau)\geq \tau$ which gives the required contradiction. 
 \end{proof}
 
We do not know if the bound provided by
Proposition \ref{prop:maxmixindor} is tight.
However it is 
sufficient for the proof of Theorem \ref{th:complsegm}, which only requires
a bound that is proportional to the population size $n$.

\subsection{Number of unhappy nodes and maximal blocks}\label{se:nounhapblo}
While a low mixing index suffices to establish the inaccessibility 
of dormant states, in fact it will often be more convenient to work 
directly with the number of unhappy nodes. 
The aim of this subsection is to allow us to do this, 
by establishing a fairly tight relationship between the number of unhappy nodes and the mixing index.  

As another measure of mixing, we may consider the number
$\mathtt{k}_{\beta}$ of maximal contiguous $\beta$-blocks in the state. 
Let $\beta_i$ be the $i$th  node of type $\beta$ 
and let $\beta_i^{\alpha}$ denote the number of  $\alpha$-nodes 
in the neighbourhood
around $\beta_i$. Let $[x,y]$ be a finite interval of integers such that 
$\{\beta_i : i\in [x,y] \}$ constitutes a block (i.e. there is no $\alpha$-node
between $\beta_x$ and $\beta_y$). 
If $x-y\geq w$ then $\beta^{\alpha}_x+\cdots+\beta^{\alpha}_y$ 
is bounded above by $2\cdot(1+\cdots+w)=w(w+1)$.
If $x-y< w$ the number $w(w+1)$ continues to be a bound for 
$\beta^{\alpha}_x+\cdots+\beta^{\alpha}_y$. Therefore

\begin{equation}\label{eq:mixindtoblock}
\sum_{i<n_\beta} \beta_i^{\alpha} \leq w(w+1) \cdot \mathtt{k}_{\beta},\hspace{0.4cm}\textrm{where $\mathtt{k}_{\beta}$ is the number of maximal $\beta$-blocks.}
\end{equation}

This inequality is a formal expression of the rather obvious 
fact that the fewer maximal $\beta$-blocks there are, the less mixed the two types are.
By the definition of happy nodes,
if $\tau>0.5$ and $w>(1-\tau)/(2\tau-1)$ then  no
two adjacent nodes of different types can both be happy.
This means that, as we move around the circle of nodes, 
every time we cross the border between 
a maximal $\beta$-block and a maximal $\alpha$-block we may count an 
additional unhappy node.
So, provided that $\tau>0.5$ and  $w$ is sufficiently large,
the number of maximal $\beta$-blocks 
is bounded above by the number of unhappy nodes in the state.
Then by \eqref{eq:mixindtoblock} we get

\begin{equation*}
\mix \leq w\cdot (w+1) \cdot \mathtt{k}_{\beta} \leq w\cdot (w+1) \cdot \unhap
\end{equation*}

Intuitively this inequality says that the only way to 
have a small number of unhappy nodes is a small mixing index, i.e.\ a large
degree of segregation. On the other hand we may bound the number of 
unhappy nodes in terms of the mixing index. By
\eqref{eq:gjrisrig} and the definition of unhappy nodes 
\begin{equation*}
\unhap_{\alpha}\cdot (1-\tau)(2w+1)\leq  \mix\hspace{0.5cm}
\textrm{and}\hspace{0.5cm}\unhap_{\beta}\cdot (1-\tau)(2w+1)\leq \mix
\end{equation*}
where $\unhap_{\alpha}, \unhap_{\beta}$ 
are the numbers of unhappy nodes of type $\alpha$ and $\beta$ respectively.
So
\[
\mix \leq w\cdot (w+1) \cdot \mathtt{k}_{\beta} \leq w\cdot (w+1) \cdot \unhap
\leq \mix\cdot \frac{2w(1+1/w)}{(1-\tau)(2+1/w)}< \mix\cdot \frac{2w}{1-\tau}
\]
and
\[
\frac{1}{w}\cdot \frac{\mix}{w+1} \leq \mathtt{k}_{\beta} \leq\unhap< \frac{2}{1-\tau}\cdot \frac{\mix}{w+1}
\]
which means that if $\tau>0.5$ (and $w$ is sufficiently large) 
then $\unhap =\Theta(\texttt{k}_{\beta})=\Theta(\mix)$.

\subsection{Background on probability}
We make use of the various concentration of measure 
inequalities for random variables and (super)martingales.
The simplest of these is Markov's inequality,
which says that if $X$ is a non-negative random variable with 
$\mathbb{E}(X)=\mu$ and $a>0$ then $P(X>a\mu)\leq 1/a$.
Recall Hoeffding's inequality for independent Bernoulli trials.

\begin{lem}[Tight Hoeffding for Bernoulli variables]\label{le:hoeffdingber}\label{le:tighthoeffd}
Let $Z_i$ be independent Bernoulli trials with expected value $p$, and let 
$S_k=\sum_{i<k} Z_i$. Then 
$\proba{S_k\leq k(p-\epsilon)}\leq e^{-2\epsilon^2 k}$
and 
$\proba{S_k\geq k(p+\epsilon)}\leq e^{-2\epsilon^2 k}$
for each $\epsilon>0$. If $p\leq 1/2$
then $\proba{S_k\geq k(p+\epsilon)}\geq 1/4\cdot e^{-2\epsilon^2 k/p}$
for each $\epsilon>0$ such that $\epsilon\leq 1-2p$.
\end{lem}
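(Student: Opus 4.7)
The plan is to handle the upper and lower bounds separately. For the two upper bounds I would use the standard Chernoff--Hoeffding argument: applying Markov's inequality to $e^{tS_{k}}$ gives, for any $t>0$,
\[
\proba{S_{k}\geq k(p+\epsilon)} \;\leq\; e^{-tk(p+\epsilon)}\,\mathbb{E}\bigl[e^{tS_{k}}\bigr]
\;=\; e^{-tk(p+\epsilon)}(1-p+pe^{t})^{k}.
\]
Hoeffding's lemma applied to each $Z_{i}\in[0,1]$ yields $\mathbb{E}[e^{t(Z_{i}-p)}]\leq e^{t^{2}/8}$, so the right-hand side is bounded by $\exp(-tk\epsilon + kt^{2}/8)$. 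Optimising at $t=4\epsilon$ produces $e^{-2\epsilon^{2}k}$, and the symmetric lower-tail bound follows by repeating the argument with the variables $1-Z_{i}$ (which are again Bernoulli, with parameter $1-p$) in place of $Z_{i}$.

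For the anti-concentration lower bound I would invoke Slud's inequality from \cite{sludche}: for $p\leq 1/2$ and any integer $m$ with $kp\leq m\leq k(1-p)$, one has $\proba{S_{k}\geq m}\geq 1-\Phi\bigl((m-kp)/\sqrt{kp(1-p)}\bigr)$, where $\Phi$ is the standard normal distribution function. Setting $m=\lceil k(p+\epsilon)\rceil$, the hypothesis $\epsilon\leq 1-2p$ is exactly what is required to keep $m$ inside the admissible range $[kp,k(1-p)]$, so
\[
\proba{S_{k}\geq k(p+\epsilon)} \;\geq\; 1-\Phi\!\left(\epsilon\sqrt{\tfrac{k}{p(1-p)}}\right).
\]
I would then apply the elementary bound $1-\Phi(x)\geq \tfrac{1}{4}e^{-x^{2}}$, valid for all $x\geq 0$ (a routine consequence of Mills' ratio asymptotics for large $x$ together with a direct numerical check at small $x$; at $x=0$ the bound reads $1/2\geq 1/4$). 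Because $p\leq 1/2$ implies $p(1-p)\geq p/2$, the exponent $x^{2}=\epsilon^{2}k/(p(1-p))$ is at most $2\epsilon^{2}k/p$. Chaining these three facts delivers $\proba{S_{k}\geq k(p+\epsilon)}\geq \tfrac{1}{4}e^{-2\epsilon^{2}k/p}$, as claimed.

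The main technical point is the anti-concentration bound, since the two upper bounds are textbook material; the role of the hypothesis $\epsilon\leq 1-2p$ is precisely to keep the threshold on the correct side of the binomial median so that Slud's inequality is applicable. Without this restriction the Gaussian approximation is no longer a genuine lower bound for the discrete binomial tail, and the argument would break down at its first step.
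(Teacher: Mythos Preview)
Your proposal is correct and follows exactly the route the paper indicates: the upper tails are the standard Hoeffding bound, and the lower tail is obtained by combining Slud's inequality with an elementary lower bound on $1-\Phi(x)$, then using $p(1-p)\geq p/2$ to pass from the exponent $\epsilon^{2}k/(p(1-p))$ to $2\epsilon^{2}k/p$. The only detail you glide over is that Slud's bound involves $m=\lceil k(p+\epsilon)\rceil$ rather than $k(p+\epsilon)$, so the Gaussian argument is $\epsilon\sqrt{k/(p(1-p))}+O(1/\sqrt{k})$ rather than exactly $\epsilon\sqrt{k/(p(1-p))}$; this is harmless in the asymptotic regime in which the paper actually uses the lemma.
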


The second clause of this lemma (the tightness of the inequality)
follows from Slud's inequality \cite{sludche} (which
gives a lower bound of the binomial upper tail in terms of the upper tail of the
normal distribution) and
standard lower bounds for upper tail of the normal distribution (see \cite{mousavi} for more details).

Since there are complex 
dependences amongst the random variables of the Schelling process,
we often need to `approximate' certain processes with
canonical processes like simple random walks.
Here a random walk with respect to the
integer-valued random variables $(Z_i)$ 
is the stochastic process $R_k= r+\sum_{i<k} Z_k$, for
some $r\in\mathbb{N}$.
We say that $(R_i)$ {\em is ruined} 
at step $k$ if $k$ is the least number such that 
$R_k\leq 0$. The following simple fact is obtained via a standard 
coupling  argument.


\begin{lem}[Random walk simulation]\label{le:ranwalksim}
Let $t_0,t_1\in\Nat$, $X_i\in\{-t_0,0,t_1\}$ be (possibly dependent) random variables, let
$\hat{X}_i\in\{-t_0,0,t_1\}$ be independent
Bernoulli trials and let $Y_k= \sum_{i<k} X_k$, 
$\hat{Y}_k=\sum_{i<k} \hat{X}_k$ be the associated random walks. 
Provided that, no matter what occurs at stages prior to $i$, at stage $i$ we have 
$\proba{X_i=-t_0}\leq \mathbb{P}[\hat{X}_i=-t_0]$ and $\proba{X_i=t_1}\geq \mathbb{P}[\hat{X}_i=t_1]$, then 
for all $k,x\in\Nat$
the probability that 
$(Y_i+x)$ is ruined by step $k$ is bounded above by the probability that
$(\hat{Y}_i+x)$ is ruined by step $k$.
\end{lem}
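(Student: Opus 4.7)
The approach is a standard coupling argument. The plan is to construct, on a single probability space, a joint realization of the sequences $(X_i)$ and $(\hat X_i)$ for which $X_i \geq \hat X_i$ holds pointwise for every $i$, while each sequence retains its prescribed joint distribution. Once this coupling is in place, summing the pointwise inequality yields $Y_k \geq \hat Y_k$ for every $k$; hence $Y_k + x \leq 0$ implies $\hat Y_k + x \leq 0$, so the event that $(Y_i+x)$ is ruined by step $k$ is contained in the event that $(\hat Y_i+x)$ is ruined by step $k$, and the stated inequality on ruin probabilities follows by monotonicity of probability.

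To build the coupling I would proceed inductively in $i$. Let $\mathcal{F}_{i-1}$ denote the history up to stage $i-1$. Write $p^+_i, p^0_i, p^-_i$ for the conditional probabilities given $\mathcal{F}_{i-1}$ that $X_i$ equals $t_1, 0, -t_0$ respectively, and write $\hat p^+_i, \hat p^0_i, \hat p^-_i$ for the marginal probabilities of $\hat X_i$ (these are deterministic, since by assumption $\hat X_i$ is independent of everything). Draw $U_i$ uniform on $[0,1]$, independent of $\mathcal{F}_{i-1}$, and set
\[
X_i \;=\; \begin{cases} t_1 & \text{if } U_i \in [0, p^+_i], \\ 0 & \text{if } U_i \in (p^+_i,\, 1 - p^-_i], \\ -t_0 & \text{if } U_i \in (1 - p^-_i,\, 1]; \end{cases}
\qquad
\hat X_i \;=\; \begin{cases} t_1 & \text{if } U_i \in [0, \hat p^+_i], \\ 0 & \text{if } U_i \in (\hat p^+_i,\, 1 - \hat p^-_i], \\ -t_0 & \text{if } U_i \in (1 - \hat p^-_i,\, 1]. \end{cases}
\]
Both variables have the correct marginal laws, and because $U_i$ is fresh at each step, the constructed sequence $(\hat X_i)$ is genuinely independent across $i$ with the prescribed marginals.

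The hypotheses $p^+_i \geq \hat p^+_i$ and $p^-_i \leq \hat p^-_i$ say that the ``$t_1$ slot'' for $X_i$ contains the ``$t_1$ slot'' for $\hat X_i$, while the ``$-t_0$ slot'' for $X_i$ is contained in the ``$-t_0$ slot'' for $\hat X_i$. A three-case check on the location of $U_i$ then shows $X_i \geq \hat X_i$ on every outcome: if $U_i \leq \hat p^+_i$ both equal $t_1$; if $\hat p^+_i < U_i \leq p^+_i$ then $X_i = t_1 > 0 \geq \hat X_i$; if $p^+_i < U_i \leq 1-p^-_i$ then $X_i = 0 \geq \hat X_i$; and if $U_i > 1 - p^-_i \geq 1 - \hat p^-_i$ both equal $-t_0$. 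This pointwise domination is the entire content of the argument.

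The only potentially subtle point, which I would want to be careful about, is confirming that the coupled $(\hat X_i)$ indeed have the distribution they are claimed to have in the statement (namely mutually independent with the given marginals) despite being functionally entangled with the $(X_j)_{j<i}$ through the construction. This is immediate from the setup: each $\hat X_i$ is a deterministic function of $U_i$ and of the deterministic marginals $\hat p^\pm_i$, and the $U_i$ are i.i.d.\ uniforms by construction, so no hidden dependence between $\hat X_i$ and $\hat X_j$ is introduced. Beyond this bookkeeping, the lemma poses no real obstacle.
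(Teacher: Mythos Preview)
Your argument is correct and is precisely the ``standard coupling argument'' that the paper invokes without spelling out; the paper gives no proof of this lemma beyond that phrase, so your construction with a common uniform $U_i$ and the nested partition of $[0,1]$ is exactly what is intended.
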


The following fact about biased random walks is folklore.

\begin{lem}[Biased random walks]\label{le:ranwalkbiasked}
Let $t_0,t_1,r \in\Nat$, and let $X_i\in\{-t_0,0,t_1\}$ be (possibly dependent) random variables such that at stage $i$, 
no matter what has occurred at previous stages, 
$\probac{X_i=t_1}{X_i\neq 0}>t_0/(t_0+t_1)+\delta$ for some $\delta>0$. 
Let $Y_j= r+\sum_{i<j} X_i$, 
be the associated random walk. Then the
probability that 
$(Y_j)$ is ever ruined is bounded above by $e^{-2r\delta^2/t_0}/(1-e^{-2\delta^2})$.
\end{lem}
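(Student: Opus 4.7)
The idea is to reduce the (dependent) walk to a comparison with an i.i.d.\ Bernoulli-like walk, then apply Hoeffding's inequality at each stage and sum over stages. The stated bound's peculiar shape $e^{-2r\delta^2/t_0}$ comes from the fact that ruin requires at least $\lceil r/t_0\rceil$ non-zero steps, since a single down-step contributes only $-t_0$.

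First I would observe that only the non-zero terms of $(X_i)$ move the walk, so ruin of $(Y_j)$ is equivalent to ruin of the reduced walk $Y'_k = r + B_1 + \cdots + B_k$, where $(B_i)$ is the subsequence of non-zero terms of $(X_i)$. Under the hypothesis, $B_i \in \{-t_0, t_1\}$ and $\mathbb{P}[B_i = t_1 \mid \text{history}] > t_0/(t_0+t_1) + \delta$, regardless of history. Lemma \ref{le:ranwalksim}, applied with $\hat B_i$ i.i.d.\ with $\mathbb{P}[\hat B_i = t_1] = p := t_0/(t_0+t_1) + \delta$ (and zero probability on the value $0$), then shows that the ruin probability of $Y'$ is bounded above by the ruin probability of the i.i.d.\ walk $\hat Y'_k = r + \hat B_1 + \cdots + \hat B_k$. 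It therefore suffices to bound this i.i.d.\ ruin probability.

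Next, writing $M_k$ for the number of $+t_1$ outcomes among $\hat B_1, \ldots, \hat B_k$, the event that $\hat Y'$ is ruined at step $k$ is $M_k(t_0+t_1) - kt_0 \leq -r$, i.e.\
\[
M_k \leq kp - \left(k\delta + \tfrac{r}{t_0+t_1}\right).
\]
Ruin requires $k \geq \lceil r/t_0 \rceil$, since each step is at most $t_0$ downward. Applying Hoeffding's inequality (Lemma \ref{le:tighthoeffd}) and expanding the square gives
\[
\mathbb{P}[\hat Y' \text{ ruined at step } k] \leq \exp\!\left(-\frac{2(k\delta + r/(t_0+t_1))^2}{k}\right) \leq e^{-2k\delta^2},
\]
where the second inequality drops the non-negative cross term and the $2r^2/(k(t_0+t_1)^2)$ term.

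A union bound over $k \geq \lceil r/t_0 \rceil$ combined with the geometric sum yields
\[
\mathbb{P}[\hat Y' \text{ is ever ruined}] \leq \sum_{k \geq \lceil r/t_0 \rceil} e^{-2k\delta^2} = \frac{e^{-2\lceil r/t_0\rceil \delta^2}}{1 - e^{-2\delta^2}} \leq \frac{e^{-2r\delta^2/t_0}}{1 - e^{-2\delta^2}},
\]
which combined with the comparison step delivers the claimed bound. The main subtlety (and the one point to verify with care) is the reduction to the i.i.d.\ Bernoulli walk: one has to observe that zero steps can be stripped off without affecting the walk's value, so that the conditional hypothesis $\mathbb{P}[X_i = t_1 \mid X_i \neq 0] > t_0/(t_0+t_1) + \delta$ translates directly into the bias condition required by Lemma \ref{le:ranwalksim}. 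The rest is a routine Hoeffding-plus-geometric-sum estimate, where the minimum-time lower bound $k \geq r/t_0$ is what turns the single-stage exponent $2k\delta^2$ into the final $2r\delta^2/t_0$.
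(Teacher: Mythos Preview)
Your proposal is correct and follows essentially the same route as the paper: reduce via Lemma~\ref{le:ranwalksim} to an i.i.d.\ $\{-t_0,t_1\}$ walk with success probability $p=t_0/(t_0+t_1)+\delta$, bound the ruin-at-step-$k$ event by Hoeffding (Lemma~\ref{le:tighthoeffd}) to get $e^{-2k\delta^2}$, note that ruin requires $k\geq r/t_0$, and sum geometrically. Your explicit stripping of the zero steps before invoking the coupling lemma is a minor clarification over the paper's presentation, but the argument is otherwise the same.
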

\begin{proof}
 Let
$Z_i\in \{-t_0, t_1\}$ be independent variables such that
$\proba{Z_i=-t_0}= t_1/(t_0+t_1)-\delta$. 
Let $G_i=r+\sum_{j<i} Z_j$ be the associated random walk.
Then 
$\proba{\hat{Y}_i=-t_0}\leq\proba{Z_i=-t_0}$, so by Lemma \ref{le:ranwalksim}
it suffices to show that the probability that $(G_j)$ is ruined is bounded 
above by $e^{-2r\delta^2/t_0}/(1-e^{-2\delta^2})$.

We may view $Z_i$ as independent Bernoulli trials, 
where $Z_s=t_1$ is viewed as success and
$Z_s=-t_0$ is viewed as failure. Let 
$p=\proba{Z_i=t_1}$, so $p = t_0/(t_0+t_1) +\delta$. 
If $k_s$ is the number of successes up to step $s$,
then $G_s = r+t_1 k_s-(s-k_s)t_0$ so ruin of the random walk $G_j$ at step $s$ implies
that $k_s\leq (t_0s-r)/(t_0+t_1)$. We may 
use Lemma \ref{le:hoeffdingber}
in order to bound the probability of this event.
If we let $\delta_s=p-(t_0-r/s)(t_0+t_1)$ note that $\delta_s>p-t_0/(t_0+t_1)=\delta$, so 
by  Lemma \ref{le:hoeffdingber},
 $e^{-2\delta^2s}$ is an upper bound for the probability that $(\hat{Y}_j)$ is
ruined at step $s$.  Next, note that  $(G_j)$ can only be ruined at
stages $>r/t_0$. Hence 
\[
\sum_{s\in[ r/t_0, m)} e^{-2\delta^2s}\leq 
\frac{e^{-2r\delta^2/t_0}}{1-e^{-2\delta^2}}
\]
is an upper bound of the probability that $(G_j)$ is ever ruined (before stage $m$), which
concludes the proof.
\end{proof}

Our analysis depends on various exponential bounds that we can obtained
on the expectations of 
certain parameters (e.g.\ the number of unhappy $\alpha$-nodes).
The following fact will be routinely used in order to express such bounds in 
a canonical form.
In the following statement 
the variables $Z_s$ concern stage $s$ of the Schelling process
$(n, w, \tau, \rho)$ and
the constants $q,q',p$ are independent of 
$n, w$. 
\begin{lem}[Expectation bounds]\label{le:boundstech}
Let $f$ be a polynomial, $p<1$ and $Z_s$ a random variables such 
that $\mathbb{E}(Z_s)<np$ for all $s$.
If $\mathbb{E}(Z_s)\leq n \cdot f(w)\cdot  e^{-wq}$ for some $q>0$ and all
all $s$ and all sufficiently large $w$ then
there exists $q'>0$ such that 
$\mathbb{E}(Z_s)\leq n\cdot e^{-wq'}$ for all $w,s$.
\end{lem}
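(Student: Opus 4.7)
The statement is essentially a two-regime packaging result: one uses the sharp exponential bound on the tail $w \geq W_0$ for some threshold, and one uses the uniform bound $np$ on the small initial segment $w < W_0$. I would therefore split the range of $w$ into these two regimes and choose a single exponent $q'>0$ that works for both.

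First I would fix a constant $q'' \in (0,q)$ and observe that
\[
\frac{f(w)\cdot e^{-wq}}{e^{-wq''}} = f(w)\cdot e^{-w(q-q'')}\xrightarrow[w\to\infty]{} 0,
\]
since a polynomial is dominated by any proper exponential decay. Combined with the hypothesis that $\mathbb{E}(Z_s)\leq n\cdot f(w)\cdot e^{-wq}$ for all $s$ and all sufficiently large $w$, this gives a threshold $W_1$ such that $\mathbb{E}(Z_s)\leq n\cdot e^{-wq''}$ for every $s$ and every $w\geq W_1$.

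For the finitely many values $w\in\{1,\dots,W_1-1\}$ I would invoke the uniform bound $\mathbb{E}(Z_s)<np$ from the hypotheses. Since $p<1$, we have $-\ln p>0$, and the inequality $np\leq n\cdot e^{-wq^\ast}$ is equivalent to $q^\ast \leq -\ln(p)/w$. Choosing $q^\ast := -\ln(p)/W_1$ makes this hold for every $w\in\{1,\dots,W_1-1\}$ (and in fact for every $w\leq W_1$), uniformly in $s$. Finally I would set $q' := \min(q'', q^\ast) > 0$; both the tail regime and the initial regime are then bounded by $n\cdot e^{-wq'}$, which is exactly the conclusion.

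There is no genuine obstacle here; the only thing to be a little careful about is that $q'$ must be a single positive constant independent of $w$ and $s$, which is why one takes the minimum of the two regime-dependent exponents, and why it is essential that the initial regime $w<W_1$ is \emph{finite} so that $-\ln(p)/W_1$ is bounded away from $0$. The hypothesis $p<1$ (strict) is what makes this step work, and the polynomial factor $f$ is harmless because exponential decay beats polynomial growth.
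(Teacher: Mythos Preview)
Your proof is correct and follows essentially the same approach as the paper's own argument: both split into a large-$w$ regime (where the polynomial factor is absorbed into a slightly weaker exponential) and a finite initial regime (where the uniform bound $\mathbb{E}(Z_s)<np$ with $p<1$ yields an exponential bound trivially). The only cosmetic difference is that you take an explicit minimum $q'=\min(q'',q^\ast)$ of two regime-dependent exponents, whereas the paper directly picks a single $q'<q_0$ small enough to satisfy $p<e^{-wq'}$ on the initial segment; these are the same idea.
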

\begin{proof}
Since $f$ is a polynomial, we can choose $q_0>0$ and $w_0$ such that
$n f(w) e^{-wq}< n e^{-wq_0}$ for all $w>w_0$. 
Hence $\mathbb{E}(Z_s)\leq n\cdot e^{-wq_0}$ for all $s$ and all $w>w_0$.
We may choose $q'<q_0$ such that $p<e^{-wq'}$ for all $w\leq w_0$.
Then by the assumption on $p$ we have that   
$\mathbb{E}(Z_s)\leq n e^{-wq'}$ for all $w$ and all $s$.
\end{proof}

The binomial distribution with
$t$ trials and success probability $p$ is denoted by $B(t,p)$, and 
$Z\sim B(t,p)$ means that  random variable $Z$ follows this distribution.
Stirling's formula 
asserts that $n! \approx n^{n + \frac{1}{2}} e^{-n}$,
i.e.\ that the limit of the ratio of the two expressions tends to 1 as
$n$ tends to infinity.

\begin{lem}[Stirling's approximation]\label{le:stirlinggen}
There exists a  
polynomial $y\mapsto p(y)$ such 
that for all $k\in\Nat$ and all $x\in\mathbb{R}\cap (0, k)$
\[
\textrm{there exists}\hspace{0.3cm}
q\in \left(\frac{1}{p(k)}, p(k)\right)\hspace{0.4cm}
\textrm{such that}\hspace{0.4cm}
\binom{k}{\ceil{x}}=q\cdot \binom{k}{\floor{x}}.
\]
\end{lem}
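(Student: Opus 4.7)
The statement reduces to bounding the ratio $\binom{k}{\ceil{x}}/\binom{k}{\floor{x}}$ from above and below by quantities polynomial in $k$. The plan is to split into two cases and simply compute this ratio; Stirling's formula itself is not actually needed, despite the lemma's name, because only neighbouring binomial coefficients are being compared.

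First I would dispose of the case when $x$ is an integer: then $\ceil{x}=\floor{x}$, the ratio equals $1$, and it lies in any open interval of the form $(1/p(k),p(k))$ provided $p(k)>1$. Otherwise set $n:=\floor{x}$; since $x\in(0,k)$ with $x$ non-integer, $n\in\{0,1,\dots,k-1\}$, and $\ceil{x}=n+1$. A direct cancellation in the factorials gives
\[
\frac{\binom{k}{n+1}}{\binom{k}{n}}=\frac{k-n}{n+1}.
\]
Both $k-n$ and $n+1$ lie in $[1,k]$, so this ratio lies in $[1/k,k]$, with the extremes attained at $n=k-1$ and $n=0$ respectively.

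Choosing the polynomial $p(y):=y+1$ then handles both cases uniformly: the ratio above is strictly below $k+1=p(k)$ and strictly above $1/(k+1)=1/p(k)$, and the constant ratio $1$ arising in the integer case also sits strictly inside this open interval as soon as $k\geq 1$. The case $k=0$ is vacuous because the interval $(0,0)$ is empty. I do not foresee a real obstacle; the only subtlety is that the interval in the statement is \emph{open}, which forces $p(k)>k$ rather than simply $p(k)=k$, and this is precisely why I take $p(y)=y+1$ instead of $p(y)=y$.
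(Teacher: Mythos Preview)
Your proof is correct, and it is considerably more elementary than the paper's. The paper actually invokes Stirling's approximation: it writes the relevant factorials as $z'!$ with $z'=z+\delta$ for $|\delta|<1$, expands $(z+\delta)^{z+\delta+1/2}$, and argues that the deviation from $z^z$ is bounded by a polynomial in $k$. This works, but as you observe it is overkill: since $\ceil{x}$ and $\floor{x}$ differ by at most $1$, the ratio of the two binomial coefficients telescopes to the single factor $(k-n)/(n+1)$, which is trivially in $[1/k,k]$. Your approach is shorter, needs no asymptotic estimates, and gives an explicit polynomial $p(y)=y+1$; the paper's route only yields the existence of some quadratic polynomial. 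The Stirling-based argument would generalize to comparing $\binom{k}{m}$ and $\binom{k}{m'}$ when $|m-m'|$ is bounded but larger than $1$, though that is not what the lemma asks for.
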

\begin{proof}
Let $z=x$ or 
$z=k-x$. Also let  $z'=\ceil{z}$ or $z'=\floor{z}$.
Then according to the definition of the binomial coefficient
it suffices to show that there exists a polynomial $y\mapsto r(y)$
such that 
\[
z'!=q\cdot z\hspace{0.4cm}
\textrm{for some}\hspace{0.4cm}
q\in \left(\frac{1}{r(k)}, r(k)\right).
\]
Note that
there exists $\delta\in (-1,1)$ such that $z'=z+\delta$. Then
\[
(z+\delta)^{z+\delta+\frac{1}{2}}=z^z\cdot (z+\delta)^{\delta +\frac{1}{2}}
\cdot \left( 1+\frac{\delta}{z}\right)^z.
\]
The second term on the right side of the equation 
is bounded by a polynomial in $k$ while the third term is in
$(e^{-1}, e)$. Hence there is a quadratic polynomial $y\mapsto r(y)$ such that
\[
(z+\delta)^{z+\delta+\frac{1}{2}}\in \left(z^z\cdot r(k)^{-1},\ z^z\cdot r(k)\right). 
\]
By Stirling's approximation it follows that there exists
a quadratic polynomial $y\mapsto p(y)$ such that for all $k$, $x\leq k$ and $z, z'$ as defined above
there exists $q\in (1/p(k), p(k))$ such that
$z! = q\cdot z'!$.
This fact, along with the definition of the binomial coefficient, implies
the required statement.
\end{proof}

In our analysis of the Schelling process for the case when $\tau\leq 0.5$
we will need to compare the tails of different binomial distributions.
There are a number of ways for doing this (including using approximations with the
normal distribution) but the simplest is the following
elementary fact from \cite[Theorem 1.1]{Bollobas_random}.

\begin{lem}[Tails of the binomial distribution]\label{lem:binom}
Suppose that $X_N \sim B(N,p)$, $p,k \in (0,1)$ and
for all sufficiently large $N$,
$( 1+  k(1-p)/p )\cdot h(N)> N \geq h(N)> p\cdot N > 0$,
where
$h: \Nat \to \Nat$. Then
\[
\proba{X_N = h(N)} \ \ \leq \ \ 
\proba{X_N \geq h(N)} \ \ \leq \ \ 
\left( \frac{1}{1-k} \right) \cdot \proba{X_N = h(N)}
\]
for all sufficiently large $N$.
In asymptotic notation we have 
$\proba{X_N \geq h(N)}= \Theta \left( \proba{X_N = h(N)} \right)$.
\end{lem}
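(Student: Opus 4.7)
The plan is to exploit the standard ratio identity for consecutive terms of the binomial distribution and show that, under the given hypothesis, successive probabilities past the index $h(N)$ decrease geometrically with ratio at most $k$. Summing the resulting geometric series then yields the claimed multiplicative factor $1/(1-k)$.

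First, the left-hand inequality $\proba{X_N = h(N)} \leq \proba{X_N \geq h(N)}$ is immediate from the definition. For the right-hand inequality, I would write $b_j := \binom{N}{j} p^j (1-p)^{N-j}$ and recall the elementary identity
\[
\frac{b_{j+1}}{b_j} \;=\; \frac{N-j}{j+1}\cdot\frac{p}{1-p}.
\]
For $j \geq h(N)$, the right-hand side is bounded above (since $(N-j)/(j+1)$ is decreasing in $j$) by
\[
\frac{N-h(N)}{h(N)+1}\cdot\frac{p}{1-p} \;<\; \frac{N-h(N)}{h(N)}\cdot\frac{p}{1-p}.
\]
The hypothesis $(1+k(1-p)/p)\cdot h(N) > N$ rearranges to $(N-h(N))/h(N) < k(1-p)/p$, and substituting gives $b_{j+1}/b_j < k$ for every $j \geq h(N)$.

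Consequently $b_{h(N)+i} < k^i \cdot b_{h(N)}$ for all $i \geq 0$, and hence
\[
\proba{X_N \geq h(N)} \;=\; \sum_{i\geq 0} b_{h(N)+i} \;\leq\; b_{h(N)} \sum_{i\geq 0} k^i \;=\; \frac{1}{1-k}\cdot\proba{X_N = h(N)},
\]
which is the desired upper bound. The asymptotic reformulation $\proba{X_N \geq h(N)} = \Theta(\proba{X_N = h(N)})$ follows immediately by combining the two inequalities, since $1$ and $1/(1-k)$ are positive constants independent of $N$.

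The argument has no serious obstacle: the only point that requires a moment of care is confirming that the ratio $(N-j)/(j+1)$ is monotone decreasing in $j$ (which is obvious) and that the extra hypothesis $h(N) > pN$ (ensuring we sit above the mean) is compatible with $h(N) \leq N$, so that the range of summation $j \geq h(N)$ is nontrivial and the geometric bound applies from the first term. Everything else is bookkeeping with Stirling-free elementary identities.
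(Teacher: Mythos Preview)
Your argument is correct and is precisely the standard proof: the ratio identity $b_{j+1}/b_j = \frac{N-j}{j+1}\cdot\frac{p}{1-p}$ combined with the hypothesis on $h(N)$ gives a geometric decay with ratio $k$, and summing yields the factor $1/(1-k)$. The paper does not supply its own proof but simply cites this as an elementary fact from \cite[Theorem~1.1]{Bollobas_random}; your write-up is exactly the argument one finds there.
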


The combination of this result with Stirling's approximation of the binomial
coefficients gives the required information about the 
asymptotic behaviour of the ratio of the two binomial probabilities
of interest (unhappy nodes and stable intervals).

\subsection{Martingales in the Schelling process}
A crucial part of our analysis is based on two supermartingales, one regarding the 
non-anomalous $\alpha$-nodes in the infected area, and one regarding the anomalous nodes.
The latter is somewhat sophisticated, in  the sense that it is not {\em adapted} to the stages of the process.
Nevertheless it is a supermartingale relative to a more general process, and this is sufficient for our analysis.
Due to this sophistication, we clarify how we regard the process $(n,w,\tau,\rho)$ in probabilistic terms,
and what we mean by a martingale.

The states of the system are all configurations of $n$ nodes that can have one or the other type.
A state $B$ is accessible from another state $A$ (thought as an arrow from $A$ to $B$) if an application of a legitimate swap on
$A$ gives $B$.
We view the random process as a combination of two parts. The first is the production of the initial state according to the given probability
distribution of the two types. The second is the stochastic process that starts from the initial state and  moves to the next
state, choosing uniformly randomly from all  the (finitely many) currently accessible states.
We denote the initial state by $F_0$ and the state at stage $s$ by $F_s$. 
The remaining discussion refers to the second part of the process, where $F_0$ is a constant.
The underlying probability space $\Omega$ is the
set of all infinite sequences of states, which start with $F_0$ and have the property that each term is a 
state which is accessible from its predecessor. We also add into $\Omega$ the finite sequences of states,
which start with $F_0$, each of their terms is accessible from its predecessor, and its last term is an absorbing state.
We view this as a tree, where the $i$th level of the tree
(prefixes of points in the space of length $i$) describes all possible outcomes of the process up to stage $i$.
This tree has dead-ends, namely the
absorbing states. The probability measure on $\Omega$ is the uniform one, namely the one induced by splitting the total measure
$1$ uniformly inductively starting from the route and considering all accessible paths.
Then each $F_i$ can be viewed as random variable on $\Omega$,  which takes any point in the space and outputs its $i$th term.

A number of other processes will be defined, relative to the process $(F_i)$ which contains all the information. Clearly 
$(F_i)$ is memoryless (has the Markov property) since the distribution of $F_i$ only depends on the value of $F_{i-1}$.
The secondary processes that we consider in our analysis (like $\ZZ_s$ or $\GG_s$) 
can be seen as recording only part of the information of the full process
$F_0,\dots, F_s$ up to stage $s$. 
In general, a process $X_s$ is {\em adapted to} (or defined in terms of) another process 
$J_s$ if there is a function such that
$f(J_s)=X_s$ for every point in $\Omega$. 
Recall that a {\em filtration} on $\Omega$ is an increasing sequence of $\sigma$-algebras on $\Omega$.
The reader who is used to working with filtrations (especially with respect to martingales) can equivalently view
a process $X_s$ adapted to another process $J_s$ as $X_s$ adapted to the {\em natural 
filtration} $(\mathcal{J}_s)$ of $(J_s)$: this is the filtration generated by the inverse images 
of the Borel sets of $\Omega$, with respect to the variables
$J_s$.
For example, the natural filtration of the full process $(F_s)$ is 
$(\mathcal{F}_s)$ where $\mathcal{F}_s$ is the $\sigma$-algebra generated
from the maximal branches of $\Omega$ restricted to  strings of length $s$ or less. 
Intuitively $\mathcal{F}_s$ can measure all events that can possibly
happen up to stage $s$.   

In order to show that a certain process is a martingale, we will have to adapt to another suitable process. 
Equivalently, we would have to adapt
it to a suitable filtration (which may be different than the standard filtration $(\mathcal{F}_s)$ that we described above).
This is the reason for introducing adapted processes: the simplest martingale notion corresponds to 
processes adapted to themselves, and is not sufficient for our proof. 
Recall that a process $H_s$ is a supermartingale relative to a Markov process $J_s$ if it is adapted to it and
$\expec{H_{s+1}}{J_s}\leq H_s$ for all $s$. This means that relative to the set of reals in $\Omega$
which have the particular value of $J_s$ (which is regarded as fixed) the expectation of $H_{s+1}$ is bounded by
$H_s$ (which is a function of $J_s$). This is the standard definition of conditional expectation in terms of processes.
In our analysis we occasionally need to consider $\expec{H_{s+1}}{J_s}$ conditional on a set of reals 
$A\subseteq\Omega$. We denote this by $\expeco{H_{s+1}}{J_s}{A}$.
A {\em stopping time} with respect to a process $(J_s)$) 
is a random variable $T$ such that the truth of the event $T=k$ (for any integer $k$) 
is a function of $J_i, i\leq k$. 
If $T$ is a stopping time for $(J_s)$ and $(H_s)$ is a supermartingale with respect to 
$(J_s)$, then the {\em stopped process}
$H_{s\wedge T}$ (which proceeds as $H_s$ up to stage $T$, and then it is constantly 
equal to $H_T$) is also a supermartingale (with respect to $(J_s)$). 
Doob's {\em maximal inequality for supermartingales} says that if $(H_s)$ is a non-negative 
supermartingale with respect to another process $(J_s)$ and  $\expe{H_0}=\mu$,  $a>0$ then
$\proba{\sup_s H_s\geq a\mu}\leq 1/a$.

\subsection{Probability in Schelling segregation}
In this section we lay out a general way for arguing about the probability
of the various properties that a node can have in the initial configuration.

\begin{defi}[Rare and common events in the initial configuration]
A property of a node in the initial configuration is called rare 
(or a rare event) if it holds with 
probability at most $n\cdot e^{-\delta w}$,  
for some positive constant $\delta$ which may depend on $\tau,\rho$ but not on $w,n$.
A property whose negation is rare is called common.
\end{defi}

\begin{defi}[Local properties]\label{de:localevent}
A local property $P_u$ of a node $u$ in the initial configuration is one that
only depends on the nodes that are at most
$f(w)$-far from $u$, where $f$ is a fixed function.
In other words the property is local if given any two nodes $u,v$
such that for all $i \in [-f(w),f(w)]$, $u+i$ is of the same 
type as $v+i$, then $P_u$ holds iff $P_v$ holds.
In this case we say that $P_u$ is $f$-local.
\end{defi}

Note that the two probabilities mentioned in Lemma 
\ref{le:sllnschel} are on different spaces. The first one refers to the product space
where a point is an infinite series of initial states. The second one refers to
the space of points on a random initial state.

\begin{lem}[Strong law of large numbers for the Schelling process]\label{le:sllnschel}
Given a local property $P_u$ of nodes in the initial state of 
the process $(n,w,\tau,\rho)$, with probability one, as $n\to\infty$
the proportion of nodes $u$ that satisfy $P_u$ tends to the probability of $P_u$.
\end{lem}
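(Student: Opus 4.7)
The plan is to reduce this statement to the classical (Kolmogorov) strong law of large numbers applied to finitely many i.i.d.\ sequences. Fix $\tau,\rho,w$ throughout; $f(w)$ is then a constant. To make sense of the phrase ``with probability one'' across varying $n$, I would first construct a coupling: let $(Y_i)_{i\in\Nat}$ be i.i.d.\ $\{\alpha,\beta\}$-valued random variables with $\proba{Y_i=\beta}=\rho$, and for each $n$ take the initial configuration of $(n,w,\tau,\rho)$ to be $(Y_0,\dots,Y_{n-1})$ arranged around the circle. This places all the processes on one probability space while preserving the correct marginal distributions.

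Define $X_u=\mathbf{1}[P_u\text{ holds at node }u]$. Because $P_u$ is $f$-local, whenever $f(w)\le u<n-f(w)$ the variable $X_u$ is a deterministic function of $(Y_{u-f(w)},\dots,Y_{u+f(w)})$, so in particular its value does not depend on any wrap-around, and hence is the same whether we view $u$ as sitting on the circle of size $n$ or on $\mathbb{Z}$. Let $p=\proba{P_u}$, a value that is independent of $u$ by translation invariance of the i.i.d.\ sequence.

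Next I would exploit the dependence structure. Setting $k=2f(w)+1$, the sequence $(X_u)_{u\ge f(w)}$ is $k$-dependent: $X_u$ and $X_v$ are independent whenever $|u-v|\ge k$, because their defining windows of $Y$-values are then disjoint. Partition the non-boundary indices into the $k$ residue classes $R_j=\{u\ge f(w):u\equiv j\pmod{k}\}$ for $0\le j<k$. Within each $R_j$, successive indices are exactly $k$ apart, so the variables $(X_u)_{u\in R_j}$ are i.i.d.\ Bernoulli$(p)$. By Kolmogorov's strong law of large numbers, for each $j$ there is a probability-one event on which
\[
\frac{1}{|R_j\cap[0,n)|}\sum_{u\in R_j\cap[0,n)}X_u\;\longrightarrow\;p\quad\text{as }n\to\infty.
\]
Intersecting these $k$ probability-one events (a finite intersection) yields a probability-one event on which all $k$ partial averages converge to $p$ simultaneously. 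Since $|R_j\cap[0,n)|=n/k+\bigo{1}$, a weighted average (with weights $|R_j\cap[0,n)|/n\to 1/k$) gives $\frac{1}{n-f(w)}\sum_{u=f(w)}^{n-1}X_u\to p$ almost surely.

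Finally I would discard the boundary. At most $2f(w)$ nodes may have $X_u$ depending on wrap-around, and their contribution to the full average is at most $2f(w)/n=\smo{1}$. Therefore $\frac{1}{n}\sum_{u=0}^{n-1}X_u\to p$ almost surely, which is the desired statement. There is no serious obstacle: the only delicate points are setting up the coupling and noticing that the $k$-residue partition turns $k$-dependence into honest independence, after which classical SLLN is invoked as a black box.
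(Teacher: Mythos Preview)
Your proof is correct and takes a genuinely different route from the paper's. The paper partitions the ring into consecutive blocks of length $(m+2)f(w)$, each consisting of an inner interval $I_i$ of length $mf(w)$ flanked by $f(w)$-padding; the counts $Y_i$ of nodes in $I_i$ satisfying $P$ are then independent across blocks, and SLLN is applied to the $Y_i$. The padding and the leftover segment are absorbed by letting the inner length $m$ grow slowly with $n$ (the paper takes $m=\log n$). Your approach instead exploits the $k$-dependence of $(X_u)$ directly via the residue classes modulo $k=2f(w)+1$, reducing at once to $k$ fixed i.i.d.\ Bernoulli sequences on which classical Kolmogorov SLLN applies verbatim.

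What your approach buys is cleanliness: the residue classes do not change with $n$, so there is no need to manage a block length that grows with $n$ (which makes the paper's invocation of SLLN slightly delicate, since the summands $Y_i$ themselves then depend on $n$), and the boundary correction is exactly $2f(w)=\bigo{1}$ terms rather than a proportion that must be argued to vanish. The paper's block decomposition, on the other hand, is more flexible when one wants to count or compare occurrences of several local properties within the same block, as is done later in the proof of Lemma~\ref{le:rareprop}. One small expositional point in your write-up: when you form the weighted average $\frac{1}{n-f(w)}\sum_{u=f(w)}^{n-1}X_u$, the terms with $u\ge n-f(w)$ are still circle values that may differ from the line values used in the SLLN step; you do fix this in the final paragraph, but it would read more cleanly to sum only over $f(w)\le u<n-f(w)$ there and defer all $2f(w)$ boundary terms to the last step.
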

\begin{proof}
Let $p$ be the probability of $P_u$ and let $f$ be the function
indicating the area around $u$ on which $P_u$ depends (as in
Definition \ref{de:localevent}).
We wish to use the strong law of large numbers, so we need to
manufacture a series of independent trials of properties with given
expectation.
Let $m\in\Nat$ be a parameter that depends on $n$ 
(to be specified shortly).
 We consider the ring
as a union of intervals of length $mf(w)+2f(w)$ 
(which we think of an interval of length $mf(w)$ with padding $f(w)$ 
nodes on each side).
We always assume that $mf(w)+2f(w)<n$. Starting from node $0$, 
denote the $i$th such interval by $V_i$ so that $|V_i|=mf(w)+2f(w)$.
Also, denote the subinterval of $V_i$ 
that results from deleting the 
$f(w)$-node prefix and the $f(w)$-node suffix of $V_i$ by $I_i$.
Hence $|I_i|=mf(w)$.
Let $M_n\in\Nat$ be the largest integer 
such that $M_n(mf(w)+2f(w))\leq n$, so that $M_n\to\infty$ as $n\to\infty$ and
$n-M_n(mf(w)+2f(w))<mf(w)+2f(w)$. 
Hence for each $i<M_n$, the intervals 
$V_i$ are defined and are disjoint. 
The same is true for $I_i$, $i<M_n$.
Moreover, if $S$ is the set of all nodes,
\begin{equation}\label{eq:coverwIi}
2f(w)M_n \leq |S-\cup_{i<M_n} I_i| < 2f(w)M_n+mf(w)+2f(w).
\end{equation}
For each $i<M_n$ let $Y_i$ be the number of 
nodes 
$u\in I_i$ such that $P_u$ holds, and note that these
random variables are independent.
Moreover, by linearity of expectation, 
$\mathbb{E}(Y_i)=pmf(w)$.
Recall that $M_n\to\infty$ as $n\to \infty$.
According to the strong law of large numbers, 

\begin{equation}\label{eq:prob1newis}
\frac{\sum_{i<M_n} Y_i}{M_n}\to 
pmf(w) \hspace{0.4cm}
\textrm{as $n\to\infty$, with probability 1.}
\end{equation}
By \eqref{eq:coverwIi}, the required proportion is
\[
\frac{\sum_{i<M_n} Y_i +\zeta f(w)\cdot (2M_n+m+2)}{(M_n +\delta) f(w)(m+2)}=
\frac{\frac{\sum_{i<M_n} Y_i}{M_n} +
\zeta f(w)\cdot (2+\frac{m+2}{M_n})}{(1 +\frac{\delta}{M_n}) f(w)(m+2)}
\]
where $\delta,\zeta$ range in $[0,1)$ (depending on how close
$n$ is to being a multiple of $f(m)(m+2)$).
If we take $0\ll m\ll n$, the ratio $m/M_n$ tends to $0$, so
by \eqref{eq:prob1newis} the
required proportion tends to
\[
\frac{pm f(w)+2f(w)\zeta}{f(w)(m+2)}=
\frac{pm +2\zeta}{m+2}=
\frac{p +\frac{2\zeta}{m}}{1+\frac{2}{m}}.
\]
Since $0\ll m$, the required proportion tends to $p$.
More formally, we may let $m=\log n$. In this case, as $n\to\infty$
we have $m/M_n\to 0$ because $(\log n)^2/n$ tends to $0$.
Moreover $M_n\to\infty$ and $m\to\infty$ when $n\to\infty$ so
the previous argument applies as indicated.
\end{proof}

The following fact concerns pairs of properties
$P$ and $Q$ that a node can have, which may 
both be rare but one (say $P$) occurs with much higher probability than
the other. It asserts that in this case, a random node
$u$ is much more likely to be  nearer to a node $v$ satisfying $P$ than
a node $t$ satisfying $Q$ (although it may be far 
from any node satisfying $P$ or $Q$). In the statement and proof of this
result we use $P_u$ as a Boolean random variable which asserts that 
`$u$ satisfies $P$' (and similar with $Q_u$).

\begin{lem}[Rare properties in the Schelling ring]\label{le:rareprop}  
Let $P_u$, $Q_u$ be $\ell$-local properties of nodes in the initial state
(where $\ell=\ell_w$ is a function of $w$) and for
each node $u$ let
$x_u$ be the first node $v$ to the 
right of $u$ such that either $P_v$ or 
$Q_v$ holds. If $\rho,\lambda$ are the probabilities
of $P_u, Q_u$ respectively,
the probability that $P_{x_u}$ and
there is no node $v$ with $Q_v$ to the left of and at distance at most $\ell$
from $x_u$ tends to a number $\geq\rho/(\rho+\lambda(2\ell+1))$ as $n\to\infty$. 
An analogous result holds when `right' is replaced by `left'. 
\end{lem}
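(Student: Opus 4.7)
The plan is to reduce the question to a renewal computation over an i.i.d.\ sequence of trials, by laying down a grid of well-separated candidate positions to the right of $u$ and comparing two events at each candidate.

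Concretely, I would fix a spacing $s=4\ell+1$ and candidates $v_j=u+js$ for $j=0,1,2,\ldots$. To each $v_j$ I associate two events: $G_j=\{P_{v_j}\text{ holds}\}$ and $H_j=\{\text{some node in }[v_j-\ell,v_j+\ell]\text{ satisfies }Q\}$. Because $P$ and $Q$ are $\ell$-local, the pair $(G_j,H_j)$ is determined by the types of the nodes in $[v_j-2\ell,v_j+2\ell]$, an interval of width exactly $s$; the spacing is chosen precisely so that these intervals are pairwise disjoint, making the pairs $(G_j,H_j)$ independent and identically distributed in the infinite-line idealization obtained as $n\to\infty$.

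The marginals are $\Pr[G_0]=\rho$ by definition of $\rho$, and $\Pr[H_0]\le(2\ell+1)\lambda$ by a union bound over the $2\ell+1$ positions of the window. A standard renewal calculation for i.i.d.\ trials then gives
\[
\Pr\bigl[\text{the first }j\text{ with }G_j\cup H_j\text{ satisfies }G_j\bigr]
\;=\;\frac{\Pr[G_0]}{\Pr[G_0\cup H_0]}\;\geq\;\frac{\rho}{\rho+(2\ell+1)\lambda}.
\]
On this favourable event (at some index $j^{\ast}$) the failure of $H_{j^{\ast}}$ rules out any $Q$-node in $[v_{j^{\ast}}-\ell,v_{j^{\ast}}+\ell]$, hence in $[v_{j^{\ast}}-\ell,v_{j^{\ast}}-1]$ in particular, while $G_{j^{\ast}}$ supplies $P_{v_{j^{\ast}}}$.

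The remaining step, which I expect to be the main technical hurdle, is to connect the favourable candidate $v_{j^{\ast}}$ to the actual first arrival $x_u$. A priori a $P$- or $Q$-event at a non-candidate position in the gap between two consecutive $v_{j'}$ ($j'<j^{\ast}$) could preempt $x_u$, so that $x_u<v_{j^{\ast}}$ with $x_u$ possibly violating the target constraint. I would close this gap by enlarging $H_j$ so that its dependence region covers the full block $[v_{j-1}+1,v_j+\ell]$ to the left of $v_j$ (at the cost of inflating $\Pr[H_0]$ only by an amount that the union-bound estimate $(2\ell+1)\lambda$ already dominates up to the way the events are described), or alternatively by a direct argument showing that, conditional on no favourable candidate so far, the probability of such an intermediate preemption is $\smo{1}$ as $n\to\infty$ and so does not affect the asymptotic lower bound. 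The passage from finite $n$ to the claimed limit is then immediate, as only finitely many candidates need to fit on disjoint arcs of the ring. The mirror statement with ``left'' in place of ``right'' follows by reflecting the entire construction.
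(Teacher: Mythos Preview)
Your renewal idea is natural, but the step you flag as ``the main technical hurdle'' is a genuine gap, and neither of your proposed patches closes it without destroying the bound.

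The difficulty is structural: on your fixed grid $v_j=u+js$ with $s=4\ell+1$, the favourable index $j^{\ast}$ only tells you that $P_{v_{j^{\ast}}}$ holds and no $Q$ occurs in the window $[v_{j^{\ast}}-\ell,v_{j^{\ast}}+\ell]$, while for $j'<j^{\ast}$ you know $\neg G_{j'}$ and $\neg H_{j'}$. But $\neg G_{j'}$ only rules out $P$ at the single node $v_{j'}$, and $\neg H_{j'}$ only rules out $Q$ in the window of width $2\ell+1$ around $v_{j'}$. The inter-window gaps (each of width about $2\ell$) are not controlled at all: a $P$-node or a $Q$-node there would make $x_u$ land strictly before $v_{j^{\ast}}$, and you have no information about the $\ell$-neighbourhood to the left of \emph{that} $x_u$. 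Your first fix (enlarge $H_j$ to cover the whole block $[v_{j-1}+1,v_j+\ell]$) does recover a renewal, but now $\Pr[H_0]$ is bounded by $(s+\ell)\lambda\approx(5\ell+1)\lambda$, not $(2\ell+1)\lambda$; and even then you have not enlarged $G_j$, so $P$-nodes between candidates still preempt $x_u$ without triggering any $G_{j'}$. Your second fix (the unspecified ``direct argument'' that preemption is $\smo{1}$) cannot be right in general: if $\rho$ and $\lambda$ are both of constant order, preemption in a gap of width $2\ell$ has probability bounded away from $0$.

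The paper avoids this entirely by making the trial points \emph{adaptive}: it sets $u_0$ at a fixed node and $u_{t+1}=x_{u_t}+2\ell+1$, so that the segments $[u_t,u_{t+1})$ tile the ring with no uncontrolled gaps. The target event $T_u$ (that $P_{x_u}$ holds and no $Q$ occurs within $\ell$ to the left of $x_u$) is evaluated at each $u_t$, and by construction the interval relevant to $T_{u_t}$ lies in $[u_t-\ell,\,u_{t+1}-\ell-1]$, disjoint from the next trial's interval. The spacing $2\ell+1$ past $x_{u_t}$ is exactly what makes the trials independent while keeping the ``wasted'' region of length $2\ell+1$ after each hit; a counting argument then shows that each failed trial contributes at least one distinct $Q$-node while each segment contains at most $2\ell+1$ $P$-nodes, yielding $\rho_n/\lambda_n\le(2\ell+1)\pi_n/(1-\pi_n)$ and hence $\pi\ge\rho/(\rho+(2\ell+1)\lambda)$ via the law of large numbers. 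If you want to salvage a grid argument, you would need to replace the fixed grid by this adaptive one (or equivalently argue with stopping times along the ring), not by inflating $H_j$.
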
 
\begin{proof}
Consider a partition of the ring into 
disjoint neighbourhoods, starting from a node $u_0$ 
as follows. Recall that addition of nodes is always modulo $n$.
Given $u=u_0$, suppose inductively that $u_t$ has been defined.
Then define $u_{t+1}=x_{u_t}+2\ell +1$. This iteration continues 
as long as $u_{t+1}< n$. Let $k_n$ be 
the number
of iterations in this recursive definition 
(i.e.\ the number of terms of the sequence $(u_t)$).
Consider the property
\[
\textrm{$T_u$:\ \ \ \ $P_{x_u}$ holds and no node $v$ to the left of
$x_u$ and at distance at most  $\ell$ satisfies $Q_v$}.
\]
The sequence $(u_i)$ can be seen as independent trials
for this property.
Let $\pi_n$ be the proportion of the terms of $(u_i)$ that satisfy of $T_{u_i}$
in a random initial state. 
Note that $k_n\to\infty$ as $n\to\infty$ with probability 1.
If $\pi$ is the probability of $T_u$, by the strong law of large numbers
we have that $\pi_n\to\pi$ as $n\to\infty$ with probability 1.
Let $\rho_n$ be the 
proportion of nodes that satisfy $P_u$ and let
$\lambda_n$ be the proportion of nodes that satisfy $Q_u$.
Note that we view $\pi_n, \rho_n, \lambda_n$ as random variables
that depend on the initial state. Then
\[
\frac{\rho_n}{\lambda_n}\leq 
\frac{(2\ell+1) \pi_n k_n}{(1-\pi_n)k_n}=
\frac{(2\ell+1) \pi_n}{1-\pi_n}\Rightarrow
\pi_n\geq \frac{\rho_n}{\rho_n+\lambda_n (2\ell+1)}
\]
By Lemma \ref{le:sllnschel} we have $\rho_n\to\rho$ and $\lambda_n\to\lambda$
as $n\to\infty$, which gives the required assymptotic bound.
\end{proof}

\subsection{Initial expectations}\label{sub:initexpect} 
An important part of our analysis relies on the values of
the welfare metrics at the initial state. With high probability, these will be near to
their expected values, which we may compute.
We start with the mixing index.
 \begin{lem}\label{eq:expecmixind}
The expectation of the mixing index in the initial state of $(n,w,\tau,\rho)$ is
$2nw\rho(1-\rho)$.
\end{lem}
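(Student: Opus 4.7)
The plan is to apply linearity of expectation after re-expressing the mixing index as a sum of indicators over ordered pairs of positions on the ring. Recall from the preceding discussion that $\mix=\sum_{j<n_{\beta}}\beta_j^{\alpha}$, and that the neighbourhood of any node $u$ consists of the $2w+1$ positions in $[u-w,u+w]$. I will not work with the random sequences $(\beta_j)$ and $(\alpha_i)$ directly—since they depend on the random type assignment, this would entangle the counting—but will instead index by the fixed positions $0,\dots,n-1$.

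First, I would rewrite
\[
\mix \;=\; \sum_{\substack{u,v\,\in\,\{0,\dots,n-1\}\\ v\in[u-w,u+w],\ v\neq u}} \mathbf{1}[u \text{ is } \beta]\cdot \mathbf{1}[v \text{ is } \alpha],
\]
noting that the diagonal terms $u=v$ contribute nothing, as a node cannot simultaneously be of type $\alpha$ and type $\beta$. Taking expectations and using linearity, each term becomes $\mathbb{E}\bigl[\mathbf{1}[u\text{ is }\beta]\cdot\mathbf{1}[v\text{ is }\alpha]\bigr]$, and because the type assignment is i.i.d.\ with $\mathbb{P}[\text{node is }\beta]=\rho$ and $\mathbb{P}[\text{node is }\alpha]=1-\rho$, for distinct $u\neq v$ this expectation equals $\rho(1-\rho)$.

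It remains to count the ordered pairs $(u,v)$ with $v\in[u-w,u+w]$ and $v\neq u$. For each of the $n$ choices of $u$ there are exactly $2w$ such $v$ (namely the $w$ positions on either side of $u$, modulo $n$, assuming $2w+1\le n$, which is a standing assumption for the regime $w\ll n$). Hence the total number of pairs is $2nw$, and
\[
\expe{\mix}\;=\;2nw\cdot\rho(1-\rho),
\]
as required. There is no serious obstacle here: the only subtlety is to be careful that the pairs are counted with orientation (each unordered pair $\{u,v\}$ with $v\in N(u)\setminus\{u\}$ contributes twice, once as $(u,v)$ and once as $(v,u)$), which is exactly what matches the definition of $\mix$ via the symmetric formula in \eqref{eq:gjrisrig}.
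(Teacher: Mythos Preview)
Your proof is correct and slightly different in organisation from the paper's. The paper conditions first on the set of $\beta$-positions, observes that $\expe{\beta_i^{\alpha}}=2w(1-\rho)$ for each $\beta$-node, sums to get $n_{\beta}\cdot 2w(1-\rho)$, and then invokes the tower rule with $\expe{n_{\beta}}=n\rho$. You instead bypass the conditioning step entirely by writing $\mix$ as a double sum of indicators over \emph{fixed} positions, so that linearity and the i.i.d.\ assumption apply in one shot. Your route is marginally more direct, since it avoids the iterated-expectation layer; the paper's route is perhaps more in keeping with how $\mix$ is defined (as a sum indexed by $\beta$-nodes). Both are elementary and yield the same answer.

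One small quibble: your closing parenthetical about each unordered pair ``contributing twice'' is slightly muddled. In your summation range there are indeed $2nw$ ordered pairs, but for any given realisation at most one of $(u,v)$ and $(v,u)$ has a nonzero indicator (since one coordinate must be $\beta$ and the other $\alpha$). The point you presumably intended is that the \emph{summation range} counts ordered pairs, so the factor $2w$ per node is correct; the link to \eqref{eq:gjrisrig} is that the same count arises whether one sums $\beta$-nodes looking at $\alpha$-neighbours or vice versa. This does not affect the validity of the argument.
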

\begin{proof}
Consider the random variables $\beta_i^{\alpha}$
and note that $\expe{\beta_i^{\alpha}}=2w(1-\rho)$ for each $i$. 
If $n_{\beta}$ is the number of $\beta$-nodes, the expectation of the mixing index
in the initial state is $n_{\beta}\cdot 2w(1-\rho)$ by the linearity of expectation. 
If we see $n_{\beta}$ as a random variable, its expected value
is $n\rho$. By the rule of iterated expectation, 
the expected value of the mixing index is $2nw\rho(1-\rho)$.
\end{proof}

Note that the expected value of the mixing index in the initial
state is only slightly smaller than the maximum possible
mixing index $n\cdot (2w+1)\cdot \rho_{\ast}(1-\rho_{\ast})$. This is hardly
surprising, as a random state will be almost perfectly mixed, with the occasional
non-uniformities that are implied by randomness (e.g.\ the existence of contiguous blocks
of certain sizes).
 
Next, we are interested in the expected number of unhappy nodes of each type.
It is not hard to see that this depends on whether
$\tau+\rho<1$ or $\tau+\rho>1$ (we will not consider the special case where
$\tau+\rho=1$). 

\begin{lem}[Unhappy $\alpha$-nodes]
\label{prop:boureduninit}
Given $\rho, \tau$ such that $\rho+\tau<1$,
with high probability 
 the number of initially unhappy $\alpha$-nodes in the process  
$(n,w,\tau,\rho)$ is 
$n\cdot  e^{-\Theta(w)}$.
\end{lem}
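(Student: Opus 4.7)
The plan is to reduce the lemma to a binomial tail estimate plus a standard concentration-by-partition argument. The per-node probability of being an initially unhappy $\alpha$-node is easily computed: condition on a fixed site $u$ being of type $\alpha$ (an event of probability $1-\rho$, independent of the types of the other sites in $u$'s neighbourhood). The number of $\beta$-nodes among the remaining $2w$ positions is then $X\sim B(2w,\rho)$, and $u$ is unhappy precisely when $X>(2w+1)(1-\tau)$. Since $\tau+\rho<1$, this threshold exceeds the mean $2w\rho$ by a linear amount with a fixed positive per-trial gap $\delta:=1-\tau-\rho>0$, so Hoeffding's inequality (Lemma \ref{le:hoeffdingber}) delivers the upper bound $e^{-\Omega(w)}$ on the per-node unhappiness probability. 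For the matching lower bound I would bound the tail $\mathbb{P}[X\geq k]$ below by the single binomial term $\binom{2w}{k}\rho^{k}(1-\rho)^{2w-k}$ at $k=\lceil(2w+1)(1-\tau)\rceil$, and apply Stirling's approximation (Lemma \ref{le:stirlinggen}) to check that this term alone is already $e^{-\Theta(w)}$. Multiplying by $1-\rho$ and summing by linearity of expectation across the ring yields that the expected number of initially unhappy $\alpha$-nodes is $n\cdot e^{-\Theta(w)}$.

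To convert this expected-value bound into a high-probability statement, I would use the block-partition technique from the proof of Lemma \ref{le:sllnschel}. Whether $u$ is an unhappy $\alpha$-node is a $w$-local property in the sense of Definition \ref{de:localevent}, so splitting the ring into $M_n=\Theta(n/w)$ disjoint intervals of length $\Theta(w)$ separated by $w$-wide buffer zones produces counts that are mutually independent across blocks, each with expectation $\Theta(\mu/M_n)$ where $\mu$ is the total expectation. A standard Chernoff bound on the sum of these independent block counts, combined with Lemma \ref{le:boundstech} to package the constants into the canonical form, shows the actual count concentrates around its expectation with probability tending to $1$ as $w\to\infty$ and $n/w\to\infty$, delivering the claimed $n\cdot e^{-\Theta(w)}$ bound.

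The main subtlety is obtaining a genuine two-sided $e^{-\Theta(w)}$ estimate on the per-node probability uniformly across the full parameter range $\tau+\rho<1$, $\rho\leq 1/2$: Hoeffding gives the upper direction cleanly, but its tightness clause in Lemma \ref{le:hoeffdingber} requires the gap constraint $\delta\leq 1-2\rho$, equivalently $\tau\geq\rho$. So the Stirling-based term-by-term lower bound is required to cover the remaining range $\tau<\rho$ (where the threshold $1-\tau$ sits well above $1/2$ and the single-term estimate is in fact easiest). The concentration step is otherwise routine given the bounded-radius dependency structure of the initial i.i.d.\ colouring.
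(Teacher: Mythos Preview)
Your proposal is correct and follows the same two-step structure as the paper: first bound the per-node probability $\mathbb{E}[X_j]$ as $e^{-\Theta(w)}$, then invoke a concentration result for local properties. The differences are minor but worth noting.

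For the lower bound on the per-node probability, the paper simply applies the tightness clause of Lemma~\ref{le:tighthoeffd} directly, obtaining $\mathbb{E}[X_j]\geq(1-\rho)\cdot e^{-4w(1-\tau-\rho)^2/\rho}/4$. You instead bound the tail below by a single binomial term and apply Stirling. You correctly observe that the tightness clause of Lemma~\ref{le:tighthoeffd} carries the side condition $\epsilon\leq 1-2p$, which here translates to $\tau\geq\rho$; your Stirling route removes that restriction and so matches the lemma's stated generality $\tau+\rho<1$. In the paper's downstream applications one in fact has $\tau>0.5\geq\rho$, so the constraint is harmless there, but your version is the more careful one as the lemma is stated.

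For the concentration step, the paper invokes Lemma~\ref{le:sllnschel} (the SLLN for local properties) in one line, since ``unhappy $\alpha$-node'' is a $w$-local property. You sketch a Chernoff version of essentially the same block-partition argument underlying that lemma. Both are valid; the paper's route is shorter because Lemma~\ref{le:sllnschel} has already packaged the partition-into-independent-blocks idea.
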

\begin{proof}
Let $X_j$ be 1 if the $j$th node $u_j$ 
in the initial state is of type $\alpha$ and unhappy, 
and 0 otherwise.
By Lemma \ref{le:sllnschel}, it suffices to show that
$\expe{X_j}$ is $e^{-\Theta(w)}$.
 Recall that the nodes are labelled independently, following 
 a Bernoulli distribution, with the probability 
 of a $\beta$-label being 
 $\rho$. Let $\epsilon=1-\rho-\tau$ 
 which is positive, according to our hypothesis.
 If $u_j$ is an unhappy $\alpha$-node, 
 then the proportion of $\beta$-nodes in its 
neighbourhood $N(u_j)$ 
is larger than $1-\tau$. 
Hence the proportion of $\beta$-nodes in 
$N(u_i(j))-\{u_i(j)\}$ is larger than $1-\tau$,
so it is at least $\rho+\epsilon$. 

Let $A$ be the event that $u_j$ is an $\alpha$-node 
and $B$ the event that $u_j$ is unhappy, so that
$\proba{A}=1-\rho$ and $\proba{A\cap B}= \probac{B}{A}\cdot \proba{A}$. 
If we see the labels of the nodes in $N(u_j)-\{u_j\}$
as a series of $2w$ independent  Bernoulli trials, 
by Hoeffding's inequality for Bernoulli trials 
the probability that the proportion of $\beta$-nodes 
is at least $\rho+\epsilon$ 
is bounded by $e^{-4w\epsilon^2}$. 
Hence by the above discussion,
$\probac{B}{A}<e^{-4w\epsilon^2}$. 
We may conclude that  
$\proba{X_j=1}< (1-\rho)\cdot e^{-4w\epsilon^2}$. Hence
$\expe{X_j}\leq (1-\rho)\cdot e^{-4w(1-\tau-\rho)^2}$. 
Similarly, by Lemma \ref{le:tighthoeffd} we have
$\expe{X_j} \geq (1-\rho)\cdot e^{-4w(1-\tau-\rho)^2/\rho}/4$.
Hence $\expe{X_j}$ is $n\cdot  e^{-\Theta(w)}$,
which concludes the proof.
\end{proof}

A similar argument gives an analogous result for the unhappy $\beta$-nodes.

\begin{lem}[Unhappy $\beta$-nodes]\label{prop:bougreenuninit}\label{prop:bforbetainit}
Given $\tau,\rho$ such that $\rho <\tau$,
with high probability 
 the number of initially happy $\beta$-nodes in the process  
$(n,w,\tau,\rho)$ is $n\cdot  e^{-\bigo{w}}$.
If in addition $\tau+\rho<1$,
with high probability 
this number is $n\cdot  e^{-\Theta(w)}$.
\end{lem}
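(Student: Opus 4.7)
The plan is to mirror the argument of Lemma~\ref{prop:boureduninit}, with the roles of the two types appropriately interchanged. Let $X_j$ be the indicator that the $j$th node $u_j$ in the initial configuration is both of type $\beta$ and happy. By Lemma~\ref{le:sllnschel} it suffices to show that $\expe{X_j}\leq e^{-c w}$ for some $c>0$ (for the first clause), and further that $\expe{X_j}\geq e^{-c' w}$ for some $c'>0$ under the hypothesis $\tau+\rho<1$ (for the second clause).

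Factor $\expe{X_j}=\rho\cdot\probac{u_j\textrm{ happy}}{u_j\textrm{ is }\beta}$. Conditional on $u_j$ being of type $\beta$, happiness means that $N(u_j)$ contains at least $(2w+1)\tau$ nodes of type $\beta$; since $u_j$ itself contributes one, this requires at least $(2w+1)\tau-1$ $\beta$-labels among the $2w$ independent Bernoulli$(\rho)$ trials that label $N(u_j)\setminus\{u_j\}$. For $w$ large this is a proportion at least $\tau-\bigo{1/w}$, exceeding the mean $\rho$ by $\epsilon=\tau-\rho-\bigo{1/w}$, which for large $w$ is a positive constant arbitrarily close to $\tau-\rho>0$.

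Applying the upper Hoeffding bound from Lemma~\ref{le:tighthoeffd} gives
\[
\probac{u_j\textrm{ happy}}{u_j\textrm{ is }\beta}\leq e^{-2\epsilon^{2}(2w)}\leq e^{-c_0 w}
\]
for some $c_0>0$ and all sufficiently large $w$, which via Lemma~\ref{le:sllnschel} yields the first clause. For the second clause I additionally invoke the lower Hoeffding bound from Lemma~\ref{le:tighthoeffd}, which requires both $\rho\leq 1/2$ and $\epsilon\leq 1-2\rho$; the latter translates to $\tau+\rho\leq 1$, while $\tau+\rho<1$ combined with $\rho<\tau$ forces $\rho<1/2$. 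Under these conditions the tight bound yields
\[
\probac{u_j\textrm{ happy}}{u_j\textrm{ is }\beta}\geq \tfrac{1}{4}\cdot e^{-4w\epsilon^{2}/\rho}\geq e^{-c_1 w}
\]
for some $c_1>0$, so $\expe{X_j}=e^{-\Theta(w)}$ and Lemma~\ref{le:sllnschel} produces the required $n\cdot e^{-\Theta(w)}$ asymptotic.

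The main (and essentially only) subtlety is the tightness condition $\epsilon\leq 1-2p$ in Lemma~\ref{le:tighthoeffd}: this is precisely what forces the additional hypothesis $\tau+\rho<1$ in the second clause, since without it the matching lower Hoeffding estimate is unavailable and one only obtains the one-sided $e^{-\bigo{w}}$ bound. Everything else is a routine application of Hoeffding's inequality and the strong law of large numbers for the Schelling process, directly paralleling the proof of Lemma~\ref{prop:boureduninit}.
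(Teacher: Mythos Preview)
Your proof is correct and follows essentially the same approach as the paper: both factor the indicator expectation as $\rho$ times a conditional happiness probability, apply the upper Hoeffding bound from Lemma~\ref{le:tighthoeffd} with $p=\rho$ and $\epsilon\approx\tau-\rho$ for the first clause, and invoke the tight lower bound (whose side condition $\epsilon\leq 1-2p$ is exactly $\tau+\rho\leq 1$) for the second. Your version is slightly more explicit about the $\bigo{1/w}$ correction from $u_j$ contributing to its own neighbourhood, and you derive $\rho<1/2$ from the hypotheses rather than appealing to the standing assumption, but these are cosmetic differences.
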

\begin{proof}
Let $Y_j$ be 1 if $u_j$ is of type $\beta$ and happy, and 0 otherwise.
Then provided that $\rho <\tau$, by Hoeffding's inequality for Bernoulli variables
we have that $\expe{X_j}\leq \rho\cdot  e^{-4w(\tau-\rho)^2}$.
Then Lemma \ref{le:sllnschel} gives the first clause of the claim.
Now lets assume that we also have $\tau+\rho<1$.
Then by the second clause of Lemma \ref{le:tighthoeffd} 
we get $\expe{X_j}\geq  \rho \cdot  e^{-4w(\tau-\rho)^2/\rho}$.
This application is possible with $p=\rho$ and $\epsilon =\tau-\rho$ because
$\rho<0.5$ and $\tau+\rho<1$, which means that $\epsilon <1-2p$.
Then by Lemma \ref{le:sllnschel} 
we get the second clause of the claim.
\end{proof}

By a similar argument we get a bound on the total size of the incubators.

\begin{lem}[Number of incubators]\label{prop:blacksitesbou}
If $\tau+\rho<1$, the probability that a node belongs to an incubator is
$e^{-\Theta(w)}$.
Hence with high probability  
 the number of incubators as well as the number of nodes belonging to incubators
 of the process $(n,w,\tau,\rho)$ is 
$n e^{-\Theta(w)}$.
\end{lem}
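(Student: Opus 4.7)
The plan is to combine the two asymptotic estimates already established (the $e^{-\Theta(w)}$ probability for an unhappy $\alpha$-node from Lemma \ref{prop:boureduninit}, and the analogous bound for a not-very-unhappy $\beta$-node obtained from Hoeffding, Lemma \ref{le:tighthoeffd}) with the strong law of large numbers for local properties of the initial configuration (Lemma \ref{le:sllnschel}).

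First I would formalise the locality of the incubator notion. From the characterisation in the main text, an incubator is an interval built around initially unhappy $\alpha$-nodes and extended just far enough to guarantee that every $\beta$-node strictly outside it is very unhappy in the initial state. Since happiness and very-unhappiness are each determined by the $2w+1$ nodes of a neighbourhood, the incubator around a ``seed'' (an unhappy $\alpha$-node or a not-very-unhappy $\beta$-node) can be taken to have width polynomial in $w$; in particular, ``$u$ belongs to an incubator'' is an $f$-local event in the sense of Definition \ref{de:localevent} for some polynomial $f(w)$. Under $\tau+\rho<1$ the expected proportion of $\alpha$-nodes in any neighbourhood is $1-\rho>\tau$, so a straightforward Hoeffding estimate shows that $\proba{u\text{ is a not-very-unhappy }\beta\text{-node}}\leq e^{-\Theta(w)}$, matching the analogous bound for unhappy $\alpha$-nodes from Lemma \ref{prop:boureduninit}. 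A union bound over the window of radius $f(w)$ around $u$ then yields
\[
\proba{u\in\text{incubator}}\leq (2f(w)+1)\cdot e^{-\Theta(w)}= e^{-\Theta(w)},
\]
while the matching lower bound $\proba{u\in\text{incubator}}\geq\proba{u\text{ is an unhappy }\alpha\text{-node}}=e^{-\Theta(w)}$ is immediate from the tight clause of Lemma \ref{prop:boureduninit}. This establishes the first sentence of the lemma.

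For the second sentence I would apply Lemma \ref{le:sllnschel} to the local event ``$u$ lies in an incubator'' to conclude that, with high probability for $0\ll w\ll n$, the number of incubator-nodes is $n\cdot e^{-\Theta(w)}$. The count of incubators themselves is then sandwiched: it is at most the number of initially unhappy $\alpha$-nodes (each incubator contains at least one, serving as its generator), which is $n\cdot e^{-\Theta(w)}$ by Lemma \ref{prop:boureduninit}; and it is at least (number of incubator-nodes) divided by the maximum incubator width $f(w)$, which is again $n\cdot e^{-\Theta(w)}$ because a polynomial factor is absorbed into the exponent. The main obstacle I anticipate is verifying from the formal appendix definition that the width of an incubator really is polynomial in $w$ with overwhelming probability, since this controls the locality constant $f(w)$; once this is secured, the entire argument reduces to a union bound and one application of the strong law for local properties.
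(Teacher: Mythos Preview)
Your overall strategy---a Hoeffding deviation bound for the seed event, a union bound over an $O(w)$ window, and then Lemma~\ref{le:sllnschel} for the passage from single-node probability to the global count---is exactly the paper's route. The difference is in what you take as the seed event. You work from the informal description in Section~\ref{subse:infareaview} and build incubators out of unhappy $\alpha$-nodes and not-very-unhappy $\beta$-nodes; the paper instead uses the formal Definition~\ref{de:infectincuba}, in which $I$ is the set of nodes lying in some length-$w$ interval with fewer than $(\tau+\epsilon_\ast)w$ many $\alpha$-nodes (where $\epsilon_\ast=(1-\tau-\rho)/2$), and $I^\ast$ is the $w$-thickening of $I$. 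With that definition the event ``$u\in I^\ast$'' is manifestly $f$-local with $f(w)=O(w)$, and the paper simply applies both directions of Lemma~\ref{le:tighthoeffd} to the event ``the left semi-neighbourhood of $u$ has fewer than $(\tau+\epsilon_\ast)w$ $\alpha$-nodes'' to get the two-sided $e^{-\Theta(w)}$ estimate, then the $(2w+1)$-fold union bound, then Lemma~\ref{le:sllnschel}.

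So the obstacle you anticipate---bounding the incubator width---is an artefact of not using the formal definition: once you read Definition~\ref{de:infectincuba} the width is trivially $O(w)$, not merely polynomial. Conversely, your seed characterisation via unhappy $\alpha$-nodes and not-very-unhappy $\beta$-nodes does not literally coincide with the formal $I$; in particular your lower-bound claim that ``each incubator contains at least one unhappy $\alpha$-node'' is not immediate from Definition~\ref{de:infectincuba} (a length-$w$ interval with $\alpha$-proportion just below $\tau+\epsilon_\ast$ need not force any single node's full $(2w{+}1)$-neighbourhood below $\tau$). The cleaner fix, matching the paper, is to drop the seed reinterpretation and get the lower bound directly from the tight Hoeffding clause applied to the semi-neighbourhood event itself.
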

\begin{proof}
Let $\epsilon_{\ast}=(1-\tau-\rho)/2$, and let
$X_j$ be the index variable of the event that
the left semi-neighbourhood of the 
$j$th node has less than $(\tau+\epsilon_{\ast})w$ many
$\alpha$-nodes. 
Given that $\tau+\rho<1$, by Hoeffding's inequality
for Bernoulli variables (Lemma \ref{le:hoeffdingber}) and the
tightness of it (Lemma \ref{le:tighthoeffd}),
the probability that $X_j=1$
is $e^{-\Theta(w)}$.
Let $Y_j$ be the index variable of the 
event that the $j$th node belongs to
an incubator, so that the probability that 
$Y_j=1$ is $e^{-\Theta(w)}$ (since $(2w+1)e^{-\Theta(w)}$ is
$e^{-\Theta(w)}$).
Hence $\mathbb{E}(Y_j)$ is  $e^{-\Theta(w)}$.
Then by Lemma \ref{le:sllnschel}
with high probability 
 the number of nodes belonging to incubators
 of the process $(n,w,\tau,\rho)$ is 
$n e^{-\Theta(w)}$.
\end{proof}

\subsection{Accessibility of dormant states}\label{sub:accesdormst} 
It is crucial to
understand the dormant states and assess their accessibility from an initial state.
IWe demonstrate that this issue ultimately depends on the given parameters
$\tau, \rho$. We show  that if $\tau+\rho>1$ then with high probability
we may assert that no dormant state is accessible from the initial state. On the other hand, if 
$\tau+\rho<1$ then with high probability there are permutations of the initial state which are dormant.

\begin{prop}[Existence of dormant states]\label{prop:existdormst}
If $\tau +\rho_{\ast}<1$ then there are permutations of the initial state which are
dormant  (provided that $n>2w+1/(1-\tau-\rho_{\ast})$).
\end{prop}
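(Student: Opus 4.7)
The plan is to exhibit a specific permutation of the initial state in which every $\alpha$-node is happy, which by definition is a dormant state. The number of nodes of each type is invariant under permutations, so the task reduces to choosing the $n_\beta = n\rho_\ast$ positions on the ring that will host $\beta$-nodes. The construction places them as uniformly as possible: put a $\beta$-node at the positions $\lfloor jn/n_\beta \rfloor$ for $j=0,1,\dots,n_\beta-1$, and $\alpha$-nodes everywhere else. Consecutive $\beta$-nodes around the ring are then at distance either $\lfloor n/n_\beta\rfloor$ or $\lceil n/n_\beta\rceil$, both close to $1/\rho_\ast$.

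The key step is to bound the number of $\beta$-nodes in any neighbourhood $[u-w,u+w]$. A straightforward pigeonhole / interval-counting argument for such a ``Beatty-like'' placement shows that any window of length $2w+1$ on the ring meets at most $\lfloor (2w+1)\rho_\ast\rfloor+1$ of the $\beta$-positions, the $+1$ accounting for the possibility that the window straddles one extra gap relative to a truly periodic placement (and also for the wrap-around across the join). Consequently every $\alpha$-node has at least $(2w+1)-\lfloor(2w+1)\rho_\ast\rfloor-1$ nodes of its own type in its neighbourhood.

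To finish, we must verify that this lower bound is at least $(2w+1)\tau$, i.e.\ that
\[
(2w+1)(1-\tau-\rho_\ast)\geq 1,
\]
equivalently $2w+1\geq 1/(1-\tau-\rho_\ast)$. This is precisely what the numerical hypothesis $n>2w+1/(1-\tau-\rho_\ast)$ delivers (together with $n>2w+1$, so that neighbourhoods are proper subsets of the ring and the uniform-placement construction is well-defined). The only real obstacle is the discrete book-keeping: one must carefully check the floor/ceiling inequality for the worst-case window and ensure the wrap-around does not spoil the count; modulo this routine verification the proposition follows immediately from the construction.
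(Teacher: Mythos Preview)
Your construction and the paper's are different, and the difference matters. The paper does not spread the $\beta$-nodes uniformly; it \emph{clusters} them into contiguous blocks of length $L_\beta=\lfloor(2w+1)\rho_\ast\rfloor$, separated by $\alpha$-blocks of length at least $L_\alpha=\lceil(2w+1)(1-\rho_\ast)\rceil$. Since $L_\alpha+L_\beta\ge 2w+1$, any window of length $2w+1$ that meets two $\beta$-blocks must contain a full $\alpha$-block between them, and therefore carries at most $L_\beta$ nodes of type $\beta$. Hence every neighbourhood contains at least $\lceil(2w+1)(1-\rho_\ast)\rceil\geq(2w+1)\tau$ $\alpha$-nodes, the last inequality following directly from $1-\rho_\ast>\tau$. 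No extra hypothesis on $w$ is needed for this step; the condition $n>2w+1/(1-\tau-\rho_\ast)$ is used only to ensure that such a block arrangement can be realised with the given node counts.

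Your uniform (Beatty) placement loses exactly one unit in the window count: a window of length $2w+1$ can genuinely contain $\lfloor(2w+1)\rho_\ast\rfloor+1$ of the points $\lfloor jn/n_\beta\rfloor$ (for a concrete instance take $n=10$, $n_\beta=4$, $w=1$: the window $[0,2]$ contains two $\beta$-nodes while $\lfloor 3\cdot 0.4\rfloor=1$). To absorb that extra unit you need $(2w+1)(1-\tau-\rho_\ast)\geq 1$, i.e.\ $2w+1\geq 1/(1-\tau-\rho_\ast)$. That is a lower bound on $w$, and it is \emph{not} delivered by the stated hypothesis $n>2w+1/(1-\tau-\rho_\ast)$, which only bounds $n$ from below. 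So the sentence ``This is precisely what the numerical hypothesis \ldots\ delivers'' is where the argument breaks: for small $w$ (say $w=2$, $\tau=0.6$, $\rho_\ast=0.3$) your inequality on $w$ fails while the hypothesis on $n$ is easily satisfied. Your approach would go through in the asymptotic regime $0\ll w$, but the proposition is stated for fixed $w$; to match it you need the clustering trick rather than uniform spacing.
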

\begin{proof}
Consider the state where the $\beta$-nodes occur in blocks of length
$\floor{(2w+1)\rho_{\ast}}$, which are divided by blocks of $\alpha$-nodes 
of length at least $\ceil{(2w+1)(1-\rho_{\ast})}$.
Since $\ceil{(2w+1)(1-\rho_{\ast})}=(2w+1)-\floor{(2w+1)\rho_{\ast}}$
and $n>2w+1/(1-\tau-\rho_{\ast})$ we can consider an arrangement such that
all blocks of $\alpha$-nodes have length exactly $\ceil{(2w+1)(1-\rho_{\ast})}$,
except perhaps one which may have longer length. In this state
all $\alpha$-nodes are happy and all $\beta$-nodes are unhappy.
In particular, it is a dormant state.
\end{proof}

\begin{lem}[Existence of unhappy nodes]\label{taurhostarmt1}
Suppose $\gamma\in\{\alpha, \beta\}$
and
let $\theta_{\ast}$ be the proportion 
of $\gamma$-nodes in a state of the  process $(n, w, \tau, \rho)$.
If  $\tau>0.5$ and  $\theta_{\ast}<\tau$, then for  $0\ll w \ll n$ 
there exist unhappy $\gamma$-nodes 
in the state.
\end{lem}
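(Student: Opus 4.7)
I plan to argue by contradiction. Assume that every $\gamma$-node in the state is happy. Write $G$ for the set of $\gamma$-nodes, so $|G|=k=n\theta_{\ast}$, and set $f(u)=|N(u)\cap G|$, so happiness reads $f(u)\geq \tau(2w+1)$ for every $u\in G$. The strategy is to combine a structural bound on the maximal non-$\gamma$-block lengths with the global double-counting identity $\sum_{u} f(u)=k(2w+1)$ (each $\gamma$-node lies in exactly $2w+1$ neighbourhoods), and thereby force $\theta_{\ast}\geq\tau$, the desired contradiction.

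First I would extract the structural bound. For the right endpoint $b_j$ of a maximal $\gamma$-block, happiness $f(b_j)\geq \tau(2w+1)$ combined with the trivial observation that the left semi-neighbourhood $[b_j-w,b_j-1]$ contributes at most $w$ $\gamma$-nodes forces the right semi-neighbourhood $[b_j+1,b_j+w]$ to contain at least $\tau(2w+1)-w-1$ $\gamma$-nodes; since that interval begins with the right-adjacent gap $J_j$, we obtain $g_j\leq (1-\tau)(2w+1)$ for every maximal non-$\gamma$-block, and in particular $g_j<w$ once $w>1/(2\tau-1)$. Next, happiness combined with the global identity gives the upper bound
\[
\sum_{v\notin G} f(v)\;\leq\; k(1-\tau)(2w+1).
\]
On the other hand, for each $v$ in a gap $J_j$ the whole of $J_j$ sits in $N(v)$ (since $g_j<w<2w+1$), so $f(v)\leq (2w+1)-g_j$. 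Writing $G(v)$ for the non-$\gamma$ count in $N(v)$, this gives the identity
\[
\sum_{v\notin G} f(v)\;=\;(2w+1)(n-k)-\sum_{j} g_j^{2}-\mathrm{extra},
\]
where $\mathrm{extra}\geq 0$ counts the ordered pairs $(v,x)$ of non-$\gamma$ nodes in two \emph{distinct} gaps with $x\in N(v)$. Substituting the inequality into the happiness bound and using $\sum_j g_j^{2}\leq (1-\tau)(2w+1)(n-k)$, an algebraic manipulation gives
\[
\mathrm{extra}\;\geq\;(2w+1)\,n\,(\tau-\theta_{\ast}).
\]
In the regime where $\mathrm{extra}=0$ (no neighbourhood straddles two gaps) this immediately yields $\theta_{\ast}\geq\tau$.

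The hard part is handling $\mathrm{extra}>0$, which arises precisely when some $\gamma$-block is so short that a single neighbourhood spans more than one gap. To close the argument one invokes happiness at interior $\gamma$-nodes of such short blocks: an interior node $u\in B_j$ with $\ell_j<2w+1$ has $N(u)$ extending past $B_j$ on both sides, and when both extensions sit entirely within adjacent gaps the constraint $f(u)\geq\tau(2w+1)$ forces $g_{j-1}+g_j\leq(1-\tau)(2w+1)$, which is strictly stronger than the individual bound. A case analysis, distinguishing blocks whose interior extensions lie entirely in gaps from those whose extensions reach across gaps into further $\gamma$-blocks (and handling the latter by iterating the same argument), then yields a matching upper bound on $\mathrm{extra}$ that is incompatible with the lower bound above, closing the contradiction. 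The delicate book-keeping in this final case analysis---making sure that the contributions to $\mathrm{extra}$ from each configuration are correctly accounted for, without over- or under-counting across nested short-block arrangements---is the main technical obstacle.
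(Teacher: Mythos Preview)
Your setup and the derivation of the lower bound $\mathrm{extra}\geq (2w+1)\,n\,(\tau-\theta_{\ast})$ are correct and clean; this part is a genuinely different route from the paper. The paper never uses the global identity $\sum_u f(u)=k(2w+1)$. Instead it fixes a $\gamma$-node $u_0$, builds a sequence $(u_i)$ with $u_{i+1}$ the rightmost $\gamma$-node in $N(u_i)$, and tracks the \emph{bias} (number of $\gamma$-nodes minus non-$\gamma$-nodes) of $I_m=\bigcup_{i\leq m} N(u_i)$. The crucial structural fact is that any node lies in at most four of the $N(u_i)$, and the overlap pattern is tightly constrained by the choice of $u_{i+1}$; a careful multiplicity count then converts the per-neighbourhood bias bound $(2w+1)(2\tau-1)$ into an explicit upper bound on $n$ in terms of $w,\tau,\theta_{\ast}$, which is the contradiction.

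Your proof, however, has a real gap precisely where you flag ``the main technical obstacle'': the upper bound on $\mathrm{extra}$ is asserted, not proved. Two points. First, the inequality $g_{j-1}+g_j\leq(1-\tau)(2w+1)$ that you extract from an interior $u\in B_j$ is not what happiness gives you: happiness bounds only the non-$\gamma$ portion of $N(u)$, i.e.\ the \emph{parts} of $J_{j-1},J_j$ visible from $u$, not the full gap lengths. Second, and more seriously, when several consecutive $\gamma$-blocks are short a single non-$\gamma$ node $v$ can see $\Theta(w)$ distinct gaps, so $\mathrm{extra}$ receives contributions from pairs spanning many gaps, and adjacent-gap constraints say nothing about these. ``Iterating the same argument'' is not a proof: you have supplied no mechanism that aggregates the local happiness constraints into a global upper bound on $\mathrm{extra}$ strictly below $(2w+1)\,n\,(\tau-\theta_{\ast})$. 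The paper's sequential-neighbourhood argument sidesteps exactly this difficulty by bounding overlap \emph{multiplicities} directly, rather than trying to control cross-gap interactions.
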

\begin{proof}
Given the parameters $\theta_{\ast}, \tau$,  $w$ which is large,  
and any state 
of  the process $(n, w, \tau, \rho)$ with no unhappy $\gamma$-nodes,
it suffices to produce an upper bound on $n$ (which does not depend on the particular state but only $\theta_{\ast},\tau, w$ and the fact that no $\gamma$-nodes are unhappy).

Let $\delta\in \{\alpha, \beta\} -\{\gamma\}$.
Since $\tau>0.5$  and all
$\gamma$-nodes are happy, there
are no $\delta$-blocks of length $\geq w$.
We may assume that $n> 3w+1$. 
Define the {\em bias} $\bias{I}$ of an interval $I$ of nodes 
to be the difference between the number of
$\gamma$-nodes in the interval and the number of $\delta$-nodes in the interval.
Without loss of generality suppose that the node occupying site $w$ is a
$\gamma$-node (otherwise consider a rotation).
We define a sequence $(u_i)$ of $\gamma$-nodes in the state, 
starting with $u_0=w$. Let $N_i$ denote the neighbourhood of $u_i$.
Given $u_i$, define $u_{i+1}$ to be the rightmost 
$\gamma$-node in $N_i$. 
Since there are no $\delta$-blocks of length
$\geq w$, the sequence $(u_i)$ 
is well defined and it never happens that $u_i=u_{i+1}$. Let $m$ be the largest number such that none of the neighbourhoods $N_i$ for $0<i\leq m$ contain the node at site 0. 
Since $n> 3w+1$ we have $m>0$.
Let $I_m= \cup_{i=0}^m N_i$ and $V_m=\sum_{i=0}^m \bias{N_i}$.
Note that $I_m$ contains all of the nodes except at most $w$. 
Moreover since $u_{i+1}-u_i\leq w$ we have
\begin{equation}\label{eq:bouleIm}
|I_m|\leq 2w+1+mw.
\end{equation}
Let $L_i$,  and $R_i$ be the leftmost and rightmost 
$w$-many nodes in $N_i$ respectively.
Since  $N_i$ contains at least $\tau(2w+1)$ nodes of type $\gamma$:
 
\begin{equation}\label{eq:vlowbou}
\bias{N_i}\geq (2w+1) (2\tau-1)\hspace{0.5cm}\textrm{and}
\hspace{0.5cm} V_m\geq (m+1)(2w+1) (2\tau-1).
\end{equation}
Note, however,  that some nodes have been 
counted multiple times in the sum that defines $V_m$, since
the intervals $N_i$ are not disjoint. For each 
$k\in\Nat$ let $J^m_k$ consist of the nodes in $I_m$ which belong to 
exactly $k$ distinct intervals $N_i$.

By the definition of $(u_i)$, the node $u_{i+2}$ is always outside $N_i$.
Similarly, $u_{i+4}$ is always outside $N_{i+2}$. 
This means that it is not possible for the neighbourhoods
of 5 consecutive terms of $(u_i)$ to have a nonempty intersection. 
This, in turn, implies that 
$J^m_k=\emptyset$ for each $k>4$. 
A similar consideration shows 
that $J^m_4$ consists entirely of $\delta$-nodes
(hence $\bias{J^m_4}\leq 0$).
Next, note that $J_1^m\subseteq L_0\cup R_m$, so $|J_1^m|\leq 2w$.
Hence by counting the multiplicities of the nodes in the sum which defines 
$V_m$, we have 
\begin{equation}\label{eq:breakvm}
V_m = 2\bias{I_m}-\bias{J^m_1}+\bias{J^m_3} +2\bias{J^m_4}\hspace{0.4cm}\textrm{and}\hspace{0.4cm}
V_m \leq 2\bias{I_m}+2w+\bias{J^m_3}.
\end{equation}
Let $N_i'=N_{i-1}\cap N_{i+1}$ and note that $N_i'=R_{i-1}\cap L_{i+1}$.
Moreover let $L_i'=N_i'\cap L_i$ and $R_i'=N_i'\cap R_i$.
By the definition of $(u_i)$ it follows that if 
$R_i'$ is nonempty, then it consists entirely
of $\delta$-nodes.
Since $u_i\in J^m_3$ for each 
$i\in [1, m-1]$, $N_i'=L_{i}'\cup R_i'\cup\{u_i\}$ and 
$J^m_3\subseteq\bigcup_{i\in [1, m-1]} N_i'$, 
we have:
\begin{equation}\label{eq:jm3unravel}
\bias{J^m_3}< m + \sum_{i=1}^{m-1} (|L_i'|-|R_i'|).
\end{equation}
Let $d_i=u_i-u_{i-1}$. 
Then $|R_i'|=w-d_i$ and $|L_k'|=w-d_{i+1}$.
Hence  $|L_i'|=|R_{i+1}'|$ and
\[
\sum_{i=1}^{m-1} (|L_i'|-|R_i'|)\leq |L_{m-1}'|-|R_{1}'|\leq w.
\]
Then from \eqref{eq:jm3unravel} we get
$\bias{J_3^m}< m+w$.
From the second clause of \eqref{eq:vlowbou} and 
\eqref{eq:breakvm} we have
\begin{equation}\label{eq:lowbouthetaimeve}
2\bias{I_m} > (m+1)(2w+1)(2\tau-1)-3w-m.
\end{equation}
If $x_m,y_m$ are the numbers of  $\gamma$ and $\delta$ nodes 
in $I_m$ respectively, then $x_m+y_m=|I_m|$ 
and $x_m-y_m=\bias{I_m}$.
Hence $2x_m=|I_m|+\bias{I_m}$. 
By hypothesis we have $x_m\leq n\theta_{\ast}$.
Moreover, since $n\leq |I_m| +w$ 
we have $x_m\leq (|I_m| +w)\theta_{\ast}$.
Hence
$\bias{I_m}\leq (2\theta_{\ast}-1)|I_m| + 2w\theta_{\ast}$, so
by \eqref{eq:bouleIm}, 
\[
\bias{I_m}\leq  mw(2\theta_{\ast}-1) + 2w(3\theta_{\ast}-1) +2\theta_{\ast}-1.
\]
By \eqref{eq:lowbouthetaimeve}  we may deduce that 
\begin{equation}\label{finalgthasd}
2m\cdot [2w(\tau-\theta_{\ast})-(1-\tau)]< 
w(12\theta_{\ast}-4\tau+1)  +4\theta_{\ast}-2\tau-1.
\end{equation}
We may assume that 
$w$ is larger than $(1-\tau)/[2(\tau-\theta_{\ast})]$.
By this condition and the fact that 
$\tau-\theta_{\ast}> 0$,  
the left side of \eqref{finalgthasd} is positive. 
Also, $n\leq |I_m|+w$, so by \eqref{eq:bouleIm} 
we have $n\leq 3w+1+mw$.
If we combine the latter inequality with \eqref{finalgthasd} we get
\[
n< 3w+1 + w\cdot \frac{w(12\theta_{\ast}-4\tau+1)  +4\theta_{\ast}-2\tau-1}{4w(\tau-\theta_{\ast})-2(1-\tau)}
\]
which is the required bound on $n$.
\end{proof}

Note that in the above result, the lower bound that is required on $w$ depends only on
$\tau, \rho_{\ast}$, while the lower bound that is required on $n$ 
depends on $\tau, \rho_{\ast}$ and $w$.
We may now apply Lemma  \ref{taurhostarmt1} in order to establish the conditional
existence of unhappy nodes of both types.

\begin{coro}[Existence of unhappy nodes]\label{coro:exiunhangen1}
Suppose that $\tau>0.5$, $\rho_{\ast}<\tau$  and $w$ is sufficiently large.
Then for
all sufficiently large $n$, every state of the process
$(n, w, \tau, \rho)$ has  unhappy $\beta$-nodes, and if $\tau+\rho_{\ast}>1$ 
then every state also has unhappy  $\alpha$-nodes.
\end{coro}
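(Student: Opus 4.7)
The plan is to derive Corollary \ref{coro:exiunhangen1} as a double application of Lemma \ref{taurhostarmt1}, exploiting the observation made in Section \ref{def:modseg} that the number of $\alpha$-nodes and the number of $\beta$-nodes are conserved by the Schelling dynamics. Hence the proportion of nodes of each type at any state of the process $(n,w,\tau,\rho)$ is identical to that of the initial state, so $\rho_{\ast}$ remains fixed throughout the run.

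First I would handle the existence of unhappy $\beta$-nodes. Since the proportion of $\beta$-nodes is $\rho_{\ast}<\tau$ by hypothesis, I would apply Lemma \ref{taurhostarmt1} with $\gamma=\beta$ and $\theta_{\ast}=\rho_{\ast}$. This lemma requires $\tau>0.5$, $\theta_{\ast}<\tau$, and $0\ll w\ll n$; all three are available, so we immediately conclude that any state of the process contains an unhappy $\beta$-node. Note that the lower bound on $w$ provided by Lemma \ref{taurhostarmt1} depends only on $\tau$ and $\rho_{\ast}$, and the lower bound on $n$ depends on $\tau$, $\rho_{\ast}$, and $w$, so the quantifier ordering ``for sufficiently large $w$, then for all sufficiently large $n$'' in the corollary is exactly what the lemma delivers.

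Second, under the additional assumption $\tau+\rho_{\ast}>1$, the proportion of $\alpha$-nodes in every state is $1-\rho_{\ast}<\tau$. I would then invoke Lemma \ref{taurhostarmt1} a second time, now with $\gamma=\alpha$ and $\theta_{\ast}=1-\rho_{\ast}$. The hypotheses of the lemma are again satisfied (taking $w$ and $n$ large enough relative to the new value of $\theta_{\ast}$), and we conclude that every state contains an unhappy $\alpha$-node. Taking the maximum of the two threshold bounds on $w$ and $n$ coming from the two applications ensures both conclusions hold simultaneously.

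There is no real obstacle here, as the content of the corollary is essentially a restatement of Lemma \ref{taurhostarmt1} once one notices the symmetry between types and the conservation of $\rho_{\ast}$. The only minor point requiring care is to make explicit that ``state of the process'' means any configuration reachable from the initial state, so the proportion parameter $\theta_{\ast}$ to be fed into the lemma is unambiguously $\rho_{\ast}$ (respectively $1-\rho_{\ast}$), justifying the uniform choice of thresholds for $w$ and $n$ independent of which reachable state we inspect.
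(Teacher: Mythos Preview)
Your proposal is correct and is exactly the argument the paper intends: the corollary is stated immediately after Lemma \ref{taurhostarmt1} as a direct consequence, obtained by instantiating $\gamma=\beta$, $\theta_{\ast}=\rho_{\ast}$ for the first clause and $\gamma=\alpha$, $\theta_{\ast}=1-\rho_{\ast}$ for the second, using conservation of type counts to make the bounds uniform over all reachable states.
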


Given $\rho$, by the law of large numbers with high probability 
(tending to 1, as $n$ tends to infinity)
$\rho_{\ast}$ will be arbitrarily close to $\rho$. 
Hence we may deduce the absence of dormant states
(with high probability) in the case that $\tau+\rho>1$. 

\begin{coro}[Absence of dormant states]
If $\rho\leq 0.5<\tau$ and $\tau+\rho>1$  then
with high probability none of the accessible states of
the process $(n, w, \tau, \rho)$ is dormant.
\end{coro}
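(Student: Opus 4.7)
The plan is to reduce the claim to Corollary \ref{coro:exiunhangen1}, which already guarantees (deterministically, given the parameter hypotheses on $\rho_\ast$) the existence of unhappy nodes of both types in every state. Since a dormant state is by definition one in which \emph{either} every $\alpha$-node is happy \emph{or} every $\beta$-node is happy, establishing unhappy nodes of both types in every configuration with the observed proportions rules out dormancy a fortiori. All that remains is to transfer the parameter hypotheses from $\rho$ to $\rho_\ast$ with high probability.

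To this end, I would observe that $n \rho_\ast$ is a sum of $n$ i.i.d.\ Bernoulli$(\rho)$ trials, so Hoeffding's inequality (Lemma \ref{le:tighthoeffd}) yields $\proba{|\rho_\ast - \rho| \geq \epsilon} \leq 2 e^{-2 \epsilon^2 n}$, which tends to $0$ as $n \to \infty$ for any fixed $\epsilon > 0$. Because both hypotheses $\tau + \rho > 1$ and $\rho \leq 0.5 < \tau$ hold strictly, I can pick $\epsilon > 0$ so small that $\tau + (\rho - \epsilon) > 1$ and $\rho + \epsilon < \tau$ simultaneously. On the high-probability event $|\rho_\ast - \rho| < \epsilon$, we then have $\rho_\ast < \tau$ and $\tau + \rho_\ast > 1$, which are precisely the hypotheses of Corollary \ref{coro:exiunhangen1}.

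Applying that corollary delivers at least one unhappy $\alpha$-node and at least one unhappy $\beta$-node in every state of the process (for sufficiently large $w$ and sufficiently large $n$ relative to $w$). Since a swap preserves the numbers of $\alpha$- and $\beta$-nodes, every accessible state has the same $\rho_\ast$ as the initial state, so the non-dormancy conclusion holds uniformly across all accessible states.

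I do not anticipate a substantive obstacle here: the heavy lifting was done by Lemma \ref{taurhostarmt1} and its corollary, while the present argument is just a concentration bound combined with the observation that accessibility preserves $\rho_\ast$. The only mild bookkeeping point is to confirm that the regime $0 \ll w \ll n$ required by Corollary \ref{coro:exiunhangen1} is compatible with the $n$ required for the Hoeffding estimate; but this is automatic, since we are free to take $n$ arbitrarily large relative to $w$ and to $1/\epsilon$.
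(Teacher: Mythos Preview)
Your proposal is correct and follows essentially the same approach as the paper: the paper simply remarks that by the law of large numbers $\rho_\ast$ is close to $\rho$ with high probability, and then invokes Corollary~\ref{coro:exiunhangen1}. Your version is slightly more explicit (using Hoeffding rather than the bare law of large numbers, and spelling out that $\rho_\ast$ is invariant under swaps), but the argument is the same.
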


This corollary along with Proposition \ref{prop:existdormst}
establishes the main dichotomy in the analysis of the process.

\subsection{Accessibility of complete segregation or dormant state}\label{se:accesscomplseg}
A central part of our analysis is the fact that from any state there is a transition to either a dormant state or
complete segregation. This is what we prove in this section.
This also means that the only absorbing states of the process are the dormant states.
If $\tau\leq 0.5$ then it is clear that the only absorbing states of the process are the dormant states, since unhappy pairs of nodes of different type can always swap. It is also not difficult to find an appropriate Lyapunov function, establishing that a dormant state must eventually be reached. Consider the  \emph{mixing index} which is non-negative and strictly decreasing in stages for $\tau\leq 0.5$. 
For the case where $\tau>0.5$ more effort is required. We argue in four steps. 
The numbers in what follows are fairly arbitrary. First we show that
from a state with few unhappy nodes of one type (here $5w^4$ is a convenient upper bound of
what we mean by `few', which is by no means optimal) there is a series of transitions which lead to either
a state with a contiguous block of length $2w$ or a dormant state.
Second, 
(assuming that $\tau>0.5$) from a state with a contiguous block of length $\geq 2w$
there is a series of transitions to complete segregation or to a dormant state.
Third,
any state which has at least $2w^4$ unhappy nodes of each type,
there is a series of transitions to a state with a contiguous 
block of length at least $w$,  and at least $w^4$ unhappy nodes of each type.
Finally 
(if $\tau> 0.5$) from a state that has a contiguous block  of length $\geq w$
and at least $4w$ unhappy nodes of opposite type from the block,  there is a series of
transitions to a state with a contiguous block  of length $\geq 2w$.
The combination of these four statements constitutes a strategy for arriving at a dormant state or
a state of complete segregation, from any given state.

In the following arguments we will often make use of the following two rather simple facts
that hold when  $\tau>0.5$.
One is that (if $w>(1-\tau)/(2\tau-1)$), 
any $\beta$-node that is adjacent to a happy $\alpha$-node is unhappy. 
The second concerns the situation where
next to a happy $\alpha$-node there is a $\beta$-node, and we swap
the $\beta$-node for another $\alpha$-node. Then, provided that before the swap the 
the second $\alpha$ node is outside the neighbourhood of
the $\beta$-node, both $\alpha$-nodes will be happy after the swap.

\begin{lem}[Shortage of unhappy nodes]\label{le:shortunhapno}
Suppose that $\tau>0.5$ and that $0\ll w \ll n$. From a state with less 
than $5w^4$ unhappy nodes of one of the types, there is a series of transitions
to either a dormant state or to a state containing a contiguous block of length
at least $2w$.
\end{lem}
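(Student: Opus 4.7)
The plan is to construct an explicit sequence of legal swaps witnessing the claim. Without loss of generality assume the scarce type is $\alpha$, so $\unhap_{\alpha}<5w^{4}$; the case of $\beta$ being scarce is symmetric. If $\unhap_{\alpha}=0$ the state is already dormant, and if there is already a monochromatic contiguous block of length $\geq 2w$ there is nothing to prove. Moreover, if no legal swap is available from the current state, then the state is (by definition) dormant. So I may assume $\unhap_{\alpha}\geq 1$, every maximal monochromatic block has length $<2w$, and at least one legal swap is available.

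I would introduce a lexicographically ordered potential $\Phi=(2w-L,\ \unhap_{\alpha})$, where $L$ is the length of the longest $\alpha$-block. Since the first coordinate is a non-negative integer bounded by $2w$ and $\unhap_{\alpha}<5w^{4}$, $\Phi$ takes only finitely many values and the induction terminates exactly at the terminal conditions of the lemma. The inductive step requires exhibiting, from any non-terminal state, a legal swap that strictly decreases $\Phi$.

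Key combinatorial observation: when $\tau>0.5$ and $w$ is large, a $\beta$-node adjacent to an $\alpha$-block of length $\ell\geq (2w+1)\tau$ is unhappy, because its $w$-neighborhood is dominated by that block; more generally, $\beta$-nodes confined to short gaps between nearby $\alpha$-blocks are unhappy. In the regime $L\geq (2w+1)\tau$ I would pick $v$ to be a $\beta$-node adjacent to the longest $\alpha$-block and $u$ to be any unhappy $\alpha$-node; the swap of $u$ and $v$ is legal because the new $\alpha$ at $v$ sees at least $L\geq (2w+1)\tau$ same-type neighbours (strictly more than $u$ had in its old unhappy location), while $u$, moving to a $\beta$-rich position (the old location of an unhappy $\beta$), weakly improves its own same-type count. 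This swap extends $L$ by one and decreases $\Phi$ in its first coordinate. In the bootstrapping regime $L<(2w+1)\tau$, I would perform swaps that merge two nearby short $\alpha$-blocks across a short $\beta$-gap by evicting an intermediate unhappy $\beta$; each such merge either raises $L$ or reduces $\unhap_{\alpha}$, and hence strictly decreases $\Phi$.

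The main obstacle is the bootstrapping phase: proving that when $L<(2w+1)\tau$ one can always find legal swap partners that merge $\alpha$-blocks or shorten a short block. This uses the hypothesis $\unhap_{\alpha}<5w^{4}$ together with the non-existence of a $2w$-block to constrain the geometry of the state. Indeed, under these assumptions a total-length count (all $\alpha$- and $\beta$-blocks have length $<2w$ and $n\gg w$) forces $\blockb=\Omega(n/w)$, and combined with \eqref{eq:finalumixkrel} this forces $\unhap_{\beta}$ to be of order $n/w$, so that $\alpha$-nodes must cluster in short blocks separated by short $\beta$-gaps whose interior $\beta$-nodes are unhappy. A case analysis of such cluster geometry exhibits the needed pair $(u,v)$ and verifies the legality of the swap; the only configurations that escape this analysis are precisely those in which no legal swap exists, i.e.\ dormant states.
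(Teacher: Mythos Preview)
Your argument in the ``nice'' regime $L\geq (2w+1)\tau$ contains an arithmetic error that breaks the legality claim. You assert that the new $\alpha$ placed at $v$ (the $\beta$-site adjacent to the longest $\alpha$-block) ``sees at least $L$ same-type neighbours''. But the neighbourhood has radius $w$, so from an adjacent block of length $L>w$ the site $v$ sees only $w$ members of that block; together with itself this guarantees just $w+1$ $\alpha$-neighbours, not $L$. For $\tau$ bounded away from $1/2$ this is far below $(2w+1)\tau$, while the unhappy $\alpha$-node $u$ you wish to relocate may have had up to $\lceil(2w+1)\tau\rceil-1$ same-type neighbours at its old site. So the swap need not be legal for $u$. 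The underlying obstruction is that the edge $\alpha$-node of the longest block can itself be unhappy, and moving next to an unhappy $\alpha$ confers no guarantee. Your bootstrapping phase $L<(2w+1)\tau$ has the same gap in a more severe form: the appeal to \eqref{eq:finalumixkrel} yields only a global lower bound on $\unhap_\beta$ and does not produce the specific local configuration you need, nor verify legality of the merge.

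The missing idea is to spend the hypothesis $\unhap_\alpha<5w^4$ together with $n\gg w$ to choose a \emph{target} interval $[u,v]$ of length $2w$ lying at distance at least $2w^2$ from every unhappy $\alpha$-node. Every $\alpha$-node inside $[u,v]$ is then automatically happy (and there is at least one, since no monochromatic $2w$-block exists). Now ferry far-away unhappy $\alpha$-nodes into $[u,v]$, each time to a site adjacent to an already-happy $\alpha$: the incoming $\alpha$ inherits at least the neighbour's same-type count (it loses at most one at the shifted end and gains itself), hence becomes happy; and because the two swapping sites are far apart, legality for the $\alpha$ side is equivalent to legality for the $\beta$ side. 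Choosing at each step the unhappy $\alpha$ of maximal distance keeps newly created unhappy $\alpha$-nodes away from $[u,v]$ for the at most $2w$ steps needed. This either fills $[u,v]$ or exhausts the unhappy $\alpha$-nodes, with no bootstrapping phase and no potential function required.
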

\begin{proof}
Without loss of
generality suppose that the state has 
less than $5w^4$ unhappy $\alpha$-nodes. Since $\rho_{\ast}\in (0,1)$, and $0\ll w \ll n$, 
if there does not already exist a contiguous block of length $2w$ then 
there exists an interval $[u,v]$ of $2w$ nodes which contains at least one $\alpha$-node 
and such that
any unhappy $\alpha$-node is distance at least
$2w^2$ from any node in
$[u,v]$. 
Any unhappy $\alpha$ node which cannot see any node in  
$[u,v]$ can move to any position in $[u,v]$ that is adjacent to an $\alpha$-node  (because by doing so, it becomes happy and because if a swap is legal for one member of a potential swapping pair then it is legal for both). 
Hence we can start successively replacing the $\beta$-nodes in
$[u,v]$ which are adjacent to $\alpha$-nodes, with unhappy $\alpha$-nodes, each time choosing unhappy $\alpha$-nodes that have maximal distance from $u,v$.
Note that
this recursive procedure is valid because all $\alpha$-nodes in $[u,v]$ are happy after each swap.
Ultimately we either 
run out of unhappy $\alpha$-nodes,
or else $[u,v]$ becomes an $\alpha$-block. 
\end{proof}

\begin{lem}[Toward a block of length $w$]\label{le:toblolw}
Suppose $\tau>0.5$. If $0\ll w \ll n$ then from any state which has at least $2w^4$ unhappy nodes of each type,
there is a series of transitions to a state with an
$\alpha$-block or $\beta$-block of length at least $w$, and at least $w^4$ many unhappy nodes of each type.
\end{lem}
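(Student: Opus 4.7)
\textbf{Proof plan for Lemma \ref{le:toblolw}.}

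My plan is to grow a contiguous $\alpha$-block of length $w$ (if we cannot instead grow a $\beta$-block) by iteratively extending a carefully chosen seed, using swaps between nodes that are sufficiently far apart that the symmetric observation from the proof of Lemma \ref{le:shortunhapno} applies: when the two partners of a swap are at distance greater than $2w$, their neighbourhoods are disjoint, so the number of $\beta$-nodes around the $\alpha$ partner and the number of $\alpha$-nodes around the $\beta$ partner are complementary, and therefore the swap is legal for one party if and only if it is legal for both. Throughout, the budget of $w^4$ unhappy nodes of each type that we are allowed to ``lose'' is far in excess of what the strategy will use.

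First I would choose a target location. Because there are at most $O(w^4)$ unhappy nodes of each type and $n\gg w$, there is an interval $I$ of $2w+1$ consecutive sites containing at least one $\alpha$-node (since $\rho_\ast\leq 1/2$) such that every node of $I$ is at distance at least $10w$ from every unhappy node of either type outside $I$. Any $\alpha$-node of $I$ that is not in an $\alpha$-block of length $\geq w$ already sits in some maximal $\alpha$-block $C\subseteq I$ of length $k<w$. The strategy is to extend $C$, one position at a time, to length $w$ by replacing the $\beta$-node immediately to the right of $C$ with an unhappy $\alpha$-node imported from outside $I$.

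The extension step works as follows. Let $v$ be the $\beta$-node adjacent to $C$ on the right. Then $v$'s neighbourhood contains at least $k$ $\alpha$-nodes. Among the $\geq 2w^4$ unhappy $\alpha$-nodes outside $I$ we need to find one, call it $u$, whose current $\alpha$-count is at most $k+1$, so that the swap of $u$ with $v$ constitutes an improvement for $u$ and is therefore legal by the far-apart principle above. Once $u$ is placed next to $C$, the block $C$ has length $k+1$, and the process repeats. After at most $w$ successful iterations the block has reached length $w$, as required. Each iteration changes the state of at most $O(w)$ other nodes, so the total number of nodes that could flip between happy and unhappy is $O(w^2)$; hence at least $2w^4 - O(w^2)>w^4$ unhappy nodes of each type remain, giving both conclusions of the lemma.

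The main obstacle is the first few extensions, when $k$ is small: we must guarantee the existence of an unhappy $\alpha$-node with $\alpha$-count at most $k+1$. I would handle this by a two-phase argument. In the second phase, once $k\geq \lceil (2w+1)(1-\tau)\rceil +1$, the required $u$ always exists because every unhappy $\alpha$-node has $\alpha$-count at most $(2w+1)\tau-1\leq 2w+1-k$, which by a simple case distinction on $\tau$ is dominated by $k+1$ for $\tau>1/2$ and $w$ large. In the first phase, when $k$ is below this threshold, I would instead show that among the $\geq 2w^4$ unhappy $\alpha$-nodes some must be \emph{very} unhappy (have $\alpha$-count at most $k+1$): otherwise the total ``$\alpha$-mass'' around unhappy $\alpha$-nodes would be so large that it would force an $\alpha$-block of length $\geq w$ somewhere, contradicting our working assumption. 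This averaging step, together with the symmetric treatment obtained by swapping the roles of $\alpha$ and $\beta$, is the only delicate part of the argument; the remainder is a routine counting and iteration.
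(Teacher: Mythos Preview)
Your proposal has a fundamental error in its setup and a second error in the extension argument that together break the proof.

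\textbf{The setup is impossible in general.} You write ``because there are at most $O(w^4)$ unhappy nodes of each type\ldots'' and use this to find an interval far from all unhappy nodes. But the lemma hypothesis gives only a \emph{lower} bound of $2w^4$ unhappy nodes of each type---there is no upper bound. In the context where this lemma is invoked (see the case split in Corollary~\ref{coro:inevcompsegd1}), it is precisely when there are \emph{many} unhappy nodes, possibly $\Theta(n)$ of them (e.g.\ the initial state when $\tau+\rho<1$ has $\Theta(n)$ unhappy $\beta$-nodes). In such a state there need not exist any interval of length $2w+1$ that is $10w$-far from all unhappy nodes, so your choice of target interval $I$ cannot be made.

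\textbf{The extension argument does not find a legal partner.} Even granting your setup, your two-phase search for an unhappy $\alpha$-node $u$ with $\alpha$-count $\leq k+1$ fails. For the ``second phase'' you claim that once $k\geq\lceil(2w+1)(1-\tau)\rceil+1$ every unhappy $\alpha$-node works, since its $\alpha$-count is at most $(2w+1)\tau-1$. But you need $(2w+1)\tau-1\leq k+1$, i.e.\ $k\geq(2w+1)\tau-2$, and since $\tau>0.5$ this threshold exceeds $w$: the second phase never begins before you are already done. The ``first phase'' averaging argument is also unsubstantiated: consider alternating $\alpha$- and $\beta$-blocks each of length about $w/2$. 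Then every $\alpha$-node is unhappy with $\alpha$-count roughly $w$, yet there is no $\alpha$-block of length $w$ and no unhappy $\alpha$-node with small $\alpha$-count, so the claimed dichotomy fails.

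\textbf{What the paper does instead.} The paper's proof avoids both problems by a different mechanism. It first builds two pools $(u_i)_{i\leq w^2}$ and $(v_i)_{i\leq w^2}$ of $\alpha$- and $\beta$-nodes respectively, chosen to have pairwise disjoint neighbourhoods and \emph{minimal} same-type count among all remaining candidates. It then picks a target interval $I$ of length $w$ (inside a buffer $J$ of length $3w$) disjoint from all these neighbourhoods---this requires only $n\gg w$, not any scarcity of unhappy nodes. The minimality of the pool nodes guarantees that swapping a pool node into $I$ next to a same-type node never decreases its utility, so the swap is always legal. Finally, the procedure does not commit in advance to building an $\alpha$-block: processing the nodes $t_0,\dots,t_{w-1}$ of $I$ left to right, if it encounters a \emph{happy} node of the wrong type, it flips the target colour and overwrites the prefix using the other pool. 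This colour-flipping is what lets the argument go through without ever needing to find a node of small same-type count.
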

\begin{proof}
Suppose that we are given a certain state of the process.
Define a sequence $u_i,$ $ i\leq w^2$ of $\alpha$-nodes with neighbourhoods
$N_{u_i}$  respectively, by induction as follows. Let $u_0$ be the least $\alpha$-node
whose neighbourhood contains the minimum number of $\alpha$-nodes amongst all
neighbourhoods of $\alpha$-nodes. If $u_i$ is defined and $i<w^2$, define $u_{i+1}$
to be the least $\alpha$-node whose neighbourhood is disjoint from 
$\cup_{j\leq i} N_{u_i}$ and whose neighbourhood contains the 
minimum number of $\alpha$-nodes amongst all $\alpha$-nodes with the same
property (i.e.\ with neighbourhoods that are disjoint from $\cup_{j\leq i} N_{u_i}$).
This completes the definition of $(u_i)$, which is sound provided that $n$ 
is sufficiently large.
We define a sequence $v_i,$ $ i\leq w^2$ of $\beta$-nodes 
with neighbourhoods $N_{v_i}$ respectively, in a way entirely analogous to the above
definition, ensuring also that all neighbourhoods $N_{u_i}$ and $N_{v_j}$ are disjoint.  

The sequences $(u_i)$ and $(v_i)$ provide a pool of nodes which will be used
for legitimate swaps in a series of transitions which will lead to the desired state
of the process. We start by considering an interval $J$ of nodes of length $3w$
which is disjoint from $\cup_{j\leq w^2} N_{u_i}$ and disjoint from 
$\cup_{j\leq w^2} N_{v_i}$.
Such an interval exists, provided that $n$ is sufficiently large. Let $I$ 
consist of the $w$-many nodes
in $J$ that are at distance at least $w+1$ from any node outside the interval.
Clearly any swap that occurs between a node in $I$ and
one of the nodes $u_i$, does not affect the composition of the neighbourhoods $N_{u_j}$ for $j\neq i$,
or $N_{v_j}$ for  $j\leq w^2$ (and similarly for a swap between a node in $I$ and one of the $v_i$). 

Let $t_i,$ $ i<w$ be the nodes of $I$ enumerated
from left to right. We shall describe a swapping process,  involving less than $w^2$ swaps.  At the end of this process of legal swaps, all nodes in $I$ will be of the same type, (but which type that is will not be determined until the end of the process). 
This process has $w$-many \emph{steps}, with each step $s$ involving up to $s$ swaps.
Let $\gamma_s$ be the type of $t_s$ at the end of stage $s$. 
Also, let  $V_{s}$ contain the nodes $u_i, v_i, i\leq w^2$ which are of type
$\gamma_{s}$ and have not been involved in a swap by the end of stage $s$.
The construction
is designed so that 
 $\gamma_s$ is the type of all $t_i, i\leq s$ st the end of stage $s$. This feature
 guarantees that at the end of the process, all nodes in $I$ have the same type.
Stage 0 is null (i.e.\ we carry out no instructions at stage 0). 

At stage $s+1$ we check if $t_{s+1}$ has type $\gamma_s$. If so, then we go to the next
stage. If not, then suppose first that $t_{s+1}$ is unhappy. In the case that $t_s$ is happy,
any unhappy $\gamma_s$-node outside $J$  can swap with
$t_{s+1}$ (because an unhappy $\gamma_s$-node moving next to a happy
$\gamma_s$-node cannot decrease its utility). 
In the case that $t_s$ is unhappy, we claim that any node $x$ from
$V_s$ can legitimately swap with $t_{s+1}$. In order to see this,
note that the number of $\gamma_s$-nodes in the neighbourhood of
$t_s$ is at least as large as this number at the beginning of the process.
By the definition of $V_s$, this number is at least as large as the number
of $\gamma_s$ nodes in the neighbourhood of $x$. This means that
if $x$ moves to the place that $t_{s+1}$ occupies, its utility will not decrease.

The last case in the procedure is if $t_{s+1}$ is happy and of type different than
$\gamma_s$. In this case we define $\gamma_{s+1}\in \{\alpha,\beta\}-\{\gamma_s\}$
and swap all $t_i, i\leq s$ with distinct nodes in $V_{s+1}$, starting with $t_s$ and moving
to the left. These are legitimate swaps, as nodes of type $\gamma_{s+1}$
move next to happy nodes of the same type (so their utility is not decreased after the swap).
This concludes the description of the process.

By the end of stage $w-1$, all nodes in $I$ are of the same type. Since we perform less than $w^2$ many swaps, there are less than $2(2w+1)w^2$ many nodes whose neighbourhoods are affected by these swaps. Since $w$ is large, there are therefore at least $w^4$ many unhappy nodes remaining of each type remaining. 
\end{proof}

\begin{lem}[Toward a contiguous block of length $2w$]\label{le:tomonbo2w}
Suppose that $\tau> 0.5$ and $0 \ll w \ll n$.  From a state that has an $\alpha$-block  of length $\geq w$
and at least $w^4$ unhappy nodes of each type, there is a series of
transitions to a state with an $\alpha$-block  of length $\geq 2w$.
The same holds for $\beta$-blocks. 
\end{lem}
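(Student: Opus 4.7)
The plan is to extend $B$ to the right, one position at a time, to reach an $\alpha$-block of length $\geq 2w$. Write $B_0 = B = [a, a+L-1]$ with $L \geq w$, and for $j \geq 0$ let $B_j = [a, a+L+j-1]$ denote the intended block after $j$ successful extensions. For each $j = 0,\ldots,w-1$, let $v_j$ be the node at position $a+L+j$ (immediately to the right of $B_j$). If $v_j$ is already of type $\alpha$, the extension is automatic; otherwise $v_j$ is a $\beta$-node, and the extension is effected by a swap between $v_j$ and a suitably chosen unhappy $\alpha$-node $u_j$ drawn from a reservoir situated far from $B$.

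The first observation is that $v_j$ is necessarily unhappy. Its neighborhood $N(v_j) = [a+L+j-w,\, a+L+j+w]$ intersects $B_j$ in exactly $w$ consecutive $\alpha$-nodes, so $n_\alpha(N(v_j)) \geq w$; hence the $\beta$-proportion in $N(v_j)$ is at most $(w+1)/(2w+1)$, which for $\tau > 0.5$ and $w$ sufficiently large lies strictly below $\tau$. Assuming $u_j$ is chosen outside $N(v_j)$ and outside a protection zone around $B$, the swap disturbs neither party's neighborhood except at the swapped sites themselves; as noted in the proof of Lemma \ref{le:shortunhapno}, in this situation a swap is legal for one of the pair iff it is legal for both, the single condition being $n_\alpha(N(v_j)) + 1 \geq n_\alpha(N(u_j))$.

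The remaining task, and the main obstacle, is to exhibit a suitable $u_j$ at each step. In the easy case where $v_j$ is very unhappy, meaning $n_\alpha(N(v_j)) \geq \tau(2w+1)$, any unhappy $\alpha$-node will do, since its $\alpha$-count is strictly below $\tau(2w+1)$ by unhappiness. In the harder case where $v_j$ is only mildly unhappy, one needs an unhappy $\alpha$-node of $\alpha$-count at most $n_\alpha(N(v_j)) + 1$. I would handle this by a pigeonhole over the $w^4$ unhappy $\alpha$-nodes: if a positive fraction have sufficiently small $\alpha$-count, pick one; otherwise, essentially all unhappy $\alpha$-nodes lie in $\alpha$-dense pockets, and I would perform a preparatory swap between such an $\alpha$-node and a very unhappy $\beta$-node located far from $B$ (whose existence likewise follows from the $w^4$ unhappy $\beta$-nodes, since the state is far from dormant) to lower the $\alpha$-density in the pocket and produce a usable low-count $u_j$. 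Since only $w$ actual extensions are needed against a reservoir of $w^4$ entries, the bookkeeping comfortably absorbs the $O(w)$ auxiliary swaps. By the $\alpha$–$\beta$ symmetry of the lemma's statement the same argument applies to $\beta$-blocks, and after $w$ iterations the block has length at least $L+w \geq 2w$, as required.
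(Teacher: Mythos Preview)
Your approach has a genuine gap in the ``hard case'' where $v_j$ is only mildly unhappy, i.e.\ $w \leq n_\alpha(N(v_j)) < \tau(2w+1)$. You correctly identify that you then need an unhappy $\alpha$-node $u_j$ with $n_\alpha(N(u_j)) \leq n_\alpha(N(v_j))+1$, and you propose to manufacture one via preparatory swaps with a \emph{very} unhappy $\beta$-node far from $B$. But the lemma only supplies $w^4$ unhappy $\beta$-nodes; nothing in the hypotheses guarantees that any of them is \emph{very} unhappy (i.e.\ has $\alpha$-count $\geq \tau(2w+1)$). All $w^4$ unhappy $\beta$-nodes could have $\alpha$-count just above $(1-\tau)(2w+1)$, and then your preparatory swap $u \leftrightarrow v$ need not be legal: the condition $n_\alpha(N(v))+1 \geq n_\alpha(N(u))$ can fail when both counts sit below $\tau(2w+1)$. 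Even granting such $\beta$-nodes, one preparatory swap lowers the $\alpha$-count in the pocket by only $1$, so you would need $\Theta(w)$ swaps concentrated in a single neighbourhood to bring a count down from near $\tau(2w+1)$ to near $w$; there is no reason a given pocket contains that many unhappy $\alpha$-nodes to spend. The pigeonhole/preparatory-swap sketch does not close this gap.

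The paper sidesteps the problem by never searching globally for a low-count unhappy $\alpha$-node. Instead it looks at the nearest $\alpha$-node $z$ on the far side of the adjacent $\beta$-gap. The structural fact that all of $(z,x)$ is $\beta$ forces a dichotomy: either $z$ is unhappy, in which case a direct computation (comparing the semi-neighbourhoods of $z$ and of $x-1$) shows that swapping $z$ with $x-1$ never decreases $z$'s utility, regardless of $z$'s absolute $\alpha$-count; or $z$ is happy, which (since $\tau>0.5$) forces $z$ to lie within distance $w$ of $x$, and then one can fill $(z,x)$ by importing unhappy $\alpha$-nodes from far away next to the happy $z$ --- they become happy, so the swap is automatically legal. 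The point you are missing is that the \emph{local} geometry next to the block already provides a node whose swap is guaranteed to work; you do not need to (and cannot, in general) find one elsewhere.
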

\begin{proof}
Consider the given state and assume that there is no $\alpha$-block of
length $\geq 2w$ (otherwise 0 transitions suffice).
Let $[x,y]$ be the longest $\alpha$-block in the given state, and let $J$ consist
of all the nodes that are at distance at least $w$ from the interval $[y-2w,y]$. 

Note that $x-1$ is a $\beta$-node and since $\tau>0.5$ it is unhappy.
Let $z$ be the rightmost $\alpha$-node to the left of $x$. If $z$ is unhappy,
then we may swap it with $x-1$ since its utility will
not decrease. Otherwise, if $z$ is happy, then it is at a distance at most $w$ from $x$ and we may successively swap the $\beta$-nodes in
$(z, x)$, starting from $z+1$ and moving to the right, for an equal number of
unhappy $\alpha$-nodes in $J$. This is possible because each time that we
move an $\alpha$ node next to a happy $\alpha$-node, the new $\alpha$ node becomes happy.  

We repeat this process until an $\alpha$-block
of length $2w$ has been formed. Each step of the process
increases the length of the $\alpha$-block that is adjacent and to the left of $y$.
Therefore the process will terminate. We also perform at most $w$ many swaps, meaning that we shall not run out of unhappy nodes to perform the swaps with. 
\end{proof}

\begin{lem}[Complete segregation or dormant state from long block]\label{le:complsegdromslb}
Suppose  $\tau>0.5$ and that $0\ll w \ll n$. From a state with a contiguous block of length $\geq 2w$
there is a series of transitions to complete segregation or to a dormant state.
\end{lem}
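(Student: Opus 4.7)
The plan is to argue by induction on $N$, the number of $\alpha$-nodes lying outside the given block $B$ (WLOG $B$ is an $\alpha$-block $[a,b]$ of length $\geq 2w$; the symmetric case with a $\beta$-block is treated identically). The base case $N=0$ is complete segregation, since all $\alpha$-nodes then lie in the single block $B$. The two $\beta$-nodes $a-1$ and $b+1$ flanking $B$ are, under the hypotheses $\tau>0.5$ and $w$ large, always unhappy: each has at least $w$ $\alpha$-nodes from $B$ in its neighbourhood, so its $\beta$-utility is at most $w+1<(2w+1)\tau$. Moreover every $\alpha$-node of $B$ at distance at least $w$ from both endpoints $a$ and $b$ has neighbourhood entirely inside $B$, hence is happy.

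If no $\alpha$-node outside $B$ is unhappy, then every $\alpha$-node in the ring is happy, so the state is dormant and we are finished (a possible endpoint issue when $|B|=2w$ exactly is handled either by extending $B$ by one via the procedure below, after which those former endpoints become interior and happy, or by observing that if no legal extension is available then in particular no further legal swap involving any $\alpha$-node is available, which combined with the structural facts above forces a dormant configuration). Otherwise let $z$ be the $\alpha$-node nearest to $B$ on the right, at distance $k\geq 1$ from $b$.

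In the case $k\leq w$, the neighbourhoods of $z$ and of the position $b+1$ overlap substantially, and a direct calculation in the style of the proof of Lemma~\ref{le:tomonbo2w} shows that moving $z$ into position $b+1$ produces a net gain of $k$ $\alpha$-neighbours: therefore if $z$ is unhappy, the swap $z\leftrightarrow b+1$ is legal, and if $z$ is happy, we imitate the filling strategy of Lemma~\ref{le:tomonbo2w} by successively replacing the $\beta$-nodes in the gap $[b+1,b+k-1]$ with unhappy $\alpha$-nodes brought in from far away. Each such imported $\alpha$-node becomes happy upon arriving adjacent to a happy $\alpha$-node, and since the swap-legality condition for a pair far apart is symmetric in the two utilities, legality on the $\alpha$-side (which follows from becoming happy) gives legality on the $\beta$-side automatically.

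The main obstacle is the case $k>w$, where no $\alpha$-node lies within distance $w$ of $B$, so $u_{b+1}$ may be as large as $w+1$ and the legality constraint $u_x+u_{b+1}\leq 2w+2$ for importing a far-away unhappy $\alpha$-node $x$ forces $u_x\leq w+1$. To produce such a low-utility unhappy $\alpha$-node, I would examine some maximal $\alpha$-block $B'$ lying outside $B$; since one can assume $|B'|<2w$ (otherwise one would play the same game with $B'$ in the role of $B$ and merge via the $k\leq w$ argument), the endpoint $\alpha$-nodes of $B'$ receive at most $|B'|<2w$ contributions from inside $B'$, and a case analysis of the region just outside $B'$ exhibits an unhappy endpoint with $u\leq w+1$, or else a short chain of preliminary swaps in the vicinity of $B'$ can be performed to expose such a node. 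Once a single legal extension swap is performed, the block has length $\geq 2w+1$ and one fewer $\alpha$-node lies outside it, so $N$ strictly decreases and the induction advances until either $N=0$ (complete segregation) or every $\alpha$-node in the ring is happy (dormant state).
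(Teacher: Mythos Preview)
Your induction on $N$ (the number of $\alpha$-nodes outside $B$) with extension by importing is a workable strategy, but the argument you give for the case $k>w$ has a real gap. You try to locate a donor $\alpha$-node of utility $\leq w+1$ by examining a maximal block $B'$ outside $B$, noting that its endpoints receive at most $|B'|<2w$ contributions \emph{from inside $B'$}. That bound is useless: the endpoint's neighbourhood can contain many $\alpha$-nodes from other nearby blocks, so nothing forces its utility down to $w+1$, and the appeal to an unspecified ``short chain of preliminary swaps'' is not a proof. The fix is much simpler than what you attempt: when $k>w$ the entire left semi-neighbourhood of $z$ lies in the all-$\beta$ gap $(b,b+k)$, so $z$ itself has utility at most $w+1$ and is unhappy, and the swap $z\leftrightarrow(b+1)$ is already legal and extends $B$. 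You should also justify the step ``no unhappy $\alpha$-node outside $B$ $\Rightarrow$ dormant'': one must check that the endpoints of $B$ are then happy. This holds because the nearest outside $\alpha$-node on each side is within distance $w$ (otherwise it would be unhappy by the semi-neighbourhood count just given) and is happy by hypothesis, and the adjacent endpoint of $B$ has at least as many $\alpha$-neighbours as that node.

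The paper handles the hard case by a different device. Taking $[u,v]$ to be a longest $\alpha$-block, it splits on whether both endpoints are happy. If so, any far-away unhappy $\alpha$-node moved next to an endpoint becomes happy there, which suffices for legality and grows the block. If not, the paper \emph{shifts} the block: with $u$ the endpoint of smaller bias (hence unhappy), it swaps $u$ with $v+1$, which translates the block one step to the right while preserving the bias ordering of the endpoints; iterating this shift eventually absorbs an $\alpha$-node. The shifting trick sidesteps the search for a low-utility donor entirely, whereas your approach (once repaired as above) works by recognising that the nearest outside $\alpha$-node is itself such a donor whenever one is needed.
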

\begin{proof}
Consider any state which is not completely segregated, but which has a contiguous block of length at least $2w$. Without loss of generality, suppose that this is a block  of  $\alpha$ nodes  occupying the interval $[u,v]$, where this interval is chosen to be of maximum possible length. Our aim is to show that from this state, one may legally reach another with a contiguous block of greater length (or else a dormant state).   Now if the nodes $u$ and $v$ are both happy then the length of the interval ensures that all nodes in the block are happy -- this follows by induction on the distance from the edge of the interval by considering the difference between successive neighbourhoods. In this case, if there exists  an unhappy $\alpha$ node $u'$, then  let $t\in \{ u,v \}$ be distance at least $w+1$ from $u'$.  Then $u'$ and the $\beta$ neighbour of $t$ may legally be swapped, increasing the length of the run by at least 1.

 So suppose instead that at least one of the nodes  $u$ and $v$ is not happy, and without loss of generality  suppose that $u$ has \emph{bias} less than or equal to  $v$, where the bias of a node is the number of $\alpha$-nodes minus the number of $\beta$-nodes in its neighbourhood. Then $u$ and  $v+1$ may legally be swapped. Performing this swap causes position $v+1$ to have at least the same bias as $v$ did before the swap, and causes  $u+1$ to have at most the same bias as $u$ did before the swap. Thus, the swap has the effect of shifting the run one position to the right and may be repeated until the length of the run is increased by at least 1, i.e.\  for successive $i\geq 0$ we can swap the nodes $u+i$ and $v+i+1$, so long as the latter is of type $\beta$. The first stage at which the latter is of type $\alpha$ the length of the contiguous block has been increased.  Putting these observations together,  we conclude that from any state which  has a contiguous block of length at least $2w$ it is possible to reach full segregation. 
\end{proof}

Finally, we piece together the above processes in order to show the following
comprehensive statement.
\begin{coro}[Complete segregation or dormant state]\label{coro:inevcompsegd1}
From any state of the process $(n, w, \tau, \rho)$ with $0\ll w\ll n$
there exists a series of transitions to complete segregation or to a dormant state.
\end{coro}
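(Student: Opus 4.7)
I split into two cases according to whether $\tau \leq 0.5$ or $\tau > 0.5$. If $\tau \leq 0.5$, then (as established in Section \ref{subse:welfare}) each legal transition strictly decreases the mixing index by at least $4$. Since the mixing index is a nonnegative integer, any sequence of transitions must terminate in finitely many steps. But for $\tau \leq 0.5$ the process stops exactly when no unhappy $\alpha$-$\beta$ pair is available, i.e.\ at a dormant state. So greedily swapping any unhappy pair yields the required series of transitions.

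For the substantive case $\tau > 0.5$, the plan is to chain together Lemmas \ref{le:shortunhapno}, \ref{le:toblolw}, \ref{le:tomonbo2w}, and \ref{le:complsegdromslb} exactly as indicated by the flowchart of Figure \ref{fig:skewstratcomd}. Given an arbitrary starting state, distinguish two subcases according to whether some type has fewer than $5w^4$ unhappy nodes. If so, invoke Lemma \ref{le:shortunhapno}: either the resulting series of transitions already lands in a dormant state (and we are done), or it produces a state containing a contiguous block of length $\geq 2w$, from which Lemma \ref{le:complsegdromslb} finishes the argument. Otherwise, both types have at least $5w^4 \geq 2w^4$ unhappy nodes, so Lemma \ref{le:toblolw} applies, yielding a state with a contiguous block of length $\geq w$ together with at least $w^4$ unhappy nodes of each type. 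Since $w^4 \geq 4w$ for $w$ large, Lemma \ref{le:tomonbo2w} then produces a state with a contiguous block of length $\geq 2w$, and a final application of Lemma \ref{le:complsegdromslb} delivers either a dormant state or complete segregation.

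The only delicate point is the book-keeping of unhappy nodes across successive invocations, since each block-building lemma consumes unhappy nodes in its swapping subroutine. The thresholds $5w^4 \to 2w^4 \to w^4 \to 4w$ have been chosen precisely so that the hypothesis of the next lemma is guaranteed by the conclusion of the previous one (this is the content of the ``and at least $w^4$ unhappy nodes of each type'' clause in Lemma \ref{le:toblolw}). I expect this compatibility of thresholds to be the main obstacle in the sense that one has to verify at each hand-off that the quantitative bounds line up; but since the lemma statements have already been tailored to this purpose, the composition is essentially immediate once the case split on the initial unhappy count is made.
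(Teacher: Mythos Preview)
Your proposal is correct and follows essentially the same route as the paper's own proof: the same case split on $\tau\leq 0.5$ versus $\tau>0.5$, and in the latter case the same two-way split on whether some type has fewer than $5w^4$ unhappy nodes, chaining Lemmas \ref{le:shortunhapno}, \ref{le:toblolw}, \ref{le:tomonbo2w}, \ref{le:complsegdromslb} in the same order. The only cosmetic difference is that the formal statement of Lemma \ref{le:tomonbo2w} in the appendix already asks for $w^4$ unhappy nodes of each type (not $4w$ as in the informal sketch), so your remark that $w^4\geq 4w$ is unnecessary---the thresholds line up directly.
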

\begin{proof}
The case  $\tau\leq 0.5$, we considered earlier. 
Suppose that $\tau>0.5$. 
We may assume that
$\rho_{\ast}\in (0,1)$, because otherwise
every state is a dormant state. 
If there exist at most $5w^4$ unhappy nodes of each type in the state, 
Lemma \ref{le:shortunhapno} shows how to reach a dormant state or a state
with a contiguous block
of length $\geq 2w$. In the latter case, 
Lemma \ref{le:complsegdromslb} shows that there is a series of transitions
to complete segregation or to a dormant state.

So we may assume that  the given state has more than 
$5w^4$ unhappy nodes of each type.
Then Lemma \ref{le:toblolw} shows how to reach a state with a contiguous block
of length $\geq w$ and at least $w^4$ many unhappy nodes of each type.
Furthermore, from such a state Lemma \ref{le:tomonbo2w} shows how to reach a 
dormant state or a state
with a contiguous block
of length $\geq 2w$. In the latter case,  
Lemma \ref{le:complsegdromslb} shows that there is a series of transitions
to complete segregation or to a dormant state.
This is an exhaustive analysis that establishes a path to a dormant state or complete segregation, from
every state.
\end{proof}

This completes our proof of Theorem \ref{th:complsegm} for the  case that $\tau +\rho>1$.

\subsection{Persistent blocks and unhappy nodes in intervals}
Now we focus on the case where $\tau>0.5$ and $\tau +\rho<1$.
Having established that a low number of unhappy nodes suffices to ensure dormant states are inaccessible, we now wish 
to show that such state is reached, before any dormant state is reached. Since in this case there are always
unhappy $\beta$-nodes, we are only concerned about the existence of unhappy $\alpha$-nodes.
One way to ensure this is to establish the existence of blocks of $\beta$-nodes of length $>w$.

\begin{lem}[Persistent \texorpdfstring{$\beta$}{beta}-block]\label{le:persistb1}
Consider the process
$(n,w,\tau,\rho)$ with $\tau>0.5$  
and let $s_{\ast}$ be the least stage where
the ratio between the very unhappy $\beta$-nodes and the unhappy $\alpha$-nodes
becomes less than $4w^2$ (putting $s_{\ast}=\infty$ if no such stage exists). Then with 
high probability there is a $\beta$-block of length $\geq 2w$ at all
stages $<s_{\ast}$ of the process. 
\end{lem}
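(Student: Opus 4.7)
The plan is to exhibit a specific $\beta$-block present in the initial configuration and track its endpoints via a biased-random-walk argument, showing that with probability $1-e^{-\Omega(w)}$ a descendant of this block has length at least $2w$ at every stage before $s_\ast$. First I would argue initial existence: the probability that a specified interval of length $4w$ is entirely $\beta$ in the initial state is $\rho^{4w}=e^{-\Theta(w)}$, so by Lemma \ref{le:sllnschel} applied to this $\ell$-local property (with the paper's asymptotic convention $n\gg w$ allowing $n$ to depend freely on $w$), with high probability there is an initial $\beta$-block $B_0=[a_0,b_0]$ with $b_0-a_0+1\geq 4w$. Fix such a block and track the endpoints $a_s,b_s$ of the maximal $\beta$-block at stage $s$ descending from $B_0$.

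Next I would analyse the per-stage dynamics at, say, the left endpoint: at each stage either an \emph{erosion event} occurs (the $\beta$-node at $a_s$ is selected as the unhappy $\beta$-side of the chosen swap, shifting $a_s$ to $a_s+1$), an \emph{extension event} occurs (the $\alpha$-node at $a_s-1$ is selected as the unhappy $\alpha$-side of a legal swap, shifting $a_s$ to $a_s-1$), or $a_s$ is unchanged. The erosion probability at stage $s<s_\ast$ is at most $1/\unhap_\beta^{(s)}\leq 1/(4w^2\unhap_\alpha^{(s)})$ by the defining ratio of $s_\ast$. For the extension, the $\alpha$-node at $a_s-1$ has at least $w$ of its $2w+1$ neighbours inside the block, so its $\alpha$-count is at most $w+1$; for $\tau>0.5$ strict and $w$ large this is strictly below $(2w+1)\tau$, so the $\alpha$-node is unhappy and swaps legally with every very unhappy $\beta$-node (whose $\alpha$-count $\geq(2w+1)\tau>w+1$ dominates). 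Hence the extension probability is at least $\unhap_\beta^{\ast(s)}/(\unhap_\alpha^{(s)}\unhap_\beta^{(s)})$, and the ratio of extension to erosion probabilities is at least $\unhap_\beta^{\ast(s)}/\unhap_\alpha^{(s)}\geq 4w^2$.

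Consequently, conditional on any change at the left endpoint at any stage $s<s_\ast$, the probability of extension (rather than erosion) exceeds $1/2+\delta$ with $\delta=1/2-1/(4w^2+1)$, uniformly in the history. I would then apply Lemma \ref{le:ranwalksim} to dominate the signed displacement $a_s-a_0$ by an independent biased walk and invoke Lemma \ref{le:ranwalkbiasked} with $r=w$, $t_0=t_1=1$ and the above $\delta$, yielding an $e^{-\Omega(w)}$ bound on the probability that $a_s-a_0$ ever reaches $w$. A symmetric argument controls $b_0-b_s$; combining via a union bound with the initial-existence step shows that with high probability the descendant block has length at least $4w-2w=2w$ at every $s<s_\ast$.

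The main obstacle is verifying that the endpoint tracking is well-posed, i.e.\ that the descendant block never splits. This requires showing that the central positions of $B_0$ remain happy $\beta$-nodes throughout, which follows from the same rate analysis applied inductively once the per-side erosion is controlled. A minor auxiliary case is when $a_s-1$ is already $\beta$ at some stage (the block having extended there via an earlier event), which only strengthens the extension estimates; the essential probabilistic input is in the extension-to-erosion ratio bound, which is exactly what the defining condition of $s_\ast$ delivers.
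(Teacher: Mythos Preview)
Your approach is the same biased-random-walk idea the paper uses, but there is a genuine gap in the erosion model. You assume that the only way the left endpoint moves right is that the single $\beta$-node at position $a_s$ is selected, shifting $a_s\mapsto a_s+1$. This is false: any of the $w$ leftmost $\beta$-nodes of the block (positions $a_s,a_s+1,\dots,a_s+w-1$) may be unhappy, since their neighbourhoods reach outside the block; if the node at $a_s+k$ swaps out, the maximal $\beta$-block containing $b_s$ now starts at $a_s+k+1$, so the effective left endpoint jumps by $k+1\leq w$ in one step. Your acknowledgement of ``splitting'' and the proposed fix (``central positions of $B_0$ remain happy'') does not help: positions at depth $\geq w$ from the current boundary are indeed happy, but positions at depth $1,\dots,w-1$ need not be, and those are exactly the ones causing the large jumps.

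Once the negative step is correctly taken as $-w$ rather than $-1$, two things change. First, the erosion count is at most $2w\cdot\unhap_\alpha^{(s)}$ pairs (both sides, $w$ vulnerable nodes each), so the extension-to-erosion ratio drops from your $4w^2$ to $\unhap_\beta^{\ast}/(w\unhap_\alpha)\geq 4w$, giving conditional extension probability only $>2w/(2w+1)$, i.e.\ bias $\delta=\Theta(1/w)$. Second, with $t_0=w$ in Lemma~\ref{le:ranwalkbiasked}, the ruin bound becomes $e^{-2r\delta^2/w}$; plugging in $r=2w$ (from your initial length $4w$) and $\delta=\Theta(1/w)$ gives $e^{-\Theta(1/w^2)}$, which does not tend to $0$. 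The paper resolves both points by tracking the block \emph{length} directly (so a single down-step of $-w$ is the worst case), and by letting the initial block length $y$ be an arbitrarily large constant depending on $\epsilon$ and $w$, using that as $n\to\infty$ a $\beta$-run of any fixed length $y$ appears with probability tending to $1$. Your choice of initial length $4w$ is simply too short once the step size is corrected.
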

\begin{proof}
Let $\epsilon>0$,
let $\delta= 2w/(2w+1)-w/(w+1)$,
and let $y$ be a sufficiently large integer so that
\begin{equation}\label{eq:eqforproai}
e^{-2(y-2w)\delta^2/w}/(1-e^{-2\delta^2})<\epsilon/2.
\end{equation}

Since the initial state is random, as $n\to\infty$ 
the probability that
there is a $\beta$-block of length at least $y$ 
in the initial state tends to 1.
Hence (for sufficiently large $n$) we may assume that
there is a  $\beta$-block of length $\geq y$ 
in the initial state, with probability at least $1-\epsilon/2$.
Fix such a block and note that during the stages it may expand or
retract. It suffices to show that, conditionally on the existence of such a block in the
initial state, the probability
that it shrinks to a block of length less than $2w$ before stage $s_{\ast}$
is bounded by $\epsilon/2$.
Let $\ell_s$ be the length of the block at stage $s$, so that
$\ell_0= y$.
Also let $X_0=0$ and for each $s>0$ let 
$X_s=\ell_s-\ell_{s-1}$. Then $X_s\geq -w$ for all $s$,
and  $\ell_s=\ell_0+\sum_{t\leq s} X_s$.
 Let $Z_s$ be $-w$ if $X_s<0$, and let
 $Z_s$ be $1$ if $X_s>0$ (and $Z_s=0$ if $X_s=0$).
 Also let $Y_s=\sum_{t\leq s} Z_s +\ell_0-2w$, so
if at stage $s$ the length of the $\beta$-block becomes less than $2w$,
the random walk $(Y_i)$ is ruined (by stage $s_{\ast}$). Let 
$(\hat{Y}_i)$ be identical to $(Y_i)$, except for stages after $s_{\ast}$, at which
it remains identical to $Y_{s_{\ast}-1}$.  
Hence it suffices to show that the probability that 
$(\hat{Y}_i)$ is ruined is bounded above by $\epsilon/2$. 
 Let $p_s=\probac{X_s>0}{X_s\neq 0}$ and 
 let $q_s=\probac{X_s<0}{X_s\neq 0}$, so that $p_s+q_s=1$.

Since $\tau>1/2$, as long as the length of the block is at least $w$, the nearest
$\alpha$-node on each side is unhappy.
Moreover these $\alpha$-nodes can swap with any very unhappy $\beta$-node. 
Any such swap at stage $s$ would make $Z_s= 1$.
On the other hand,
the only way that $Z_s=-w$ (i.e.\ the length of the block is reduced)
is that a $\beta$-node from one of the $2w$ outer nodes in the block ($w$ on each side)
is part of a swap at stage $s$ (with an unhappy $\alpha$-node).
Hence according to our hypothesis we have $p_s/q_s>2w$ for all $s<s_{\ast}$. So
\[
\probac{X_s>0}{X_s\neq 0}>\frac{2w}{2w+1}> \frac{w}{w+1}+\delta
\]
which means that the walk $(\hat{Y}_i)$ meets the requirements of Lemma
\ref{le:ranwalkbiasked}. Hence the probability that
$(\hat{Y}_i)$ is ruined is bounded by the expression
on the left-hand-side of \eqref{eq:eqforproai}.
We sum up our argument. Given $\epsilon>0$, 
we start with a block of $\beta$-nodes of length $y$, with probability at least $1-\epsilon/2$.
Conditionally on this starting assumption, 
our argument says that with probability at least $1-\epsilon/2$ this block will continue to have
length more than $2w$ at all stages up to stage $s_{\ast}$. Hence the probability that
there is no $\beta$-block of length $\geq 2w$ at some stage $<s_{\ast}$ is less than $\epsilon$.
\end{proof}

Another tool that was used in our analysis is a bound on the number of unhappy $\beta$-nodes
in the infected area, in terms of the number of $\alpha$-nodes in the infected area.
This is based on the fact that, when the number of $\alpha$-nodes 
in an interval is limited, then the number of unhappy $\beta$-nodes in the
same interval is also limited. 

\begin{lem}[Proportions in a block of nodes]\label{le:blockofpro}
Consider a block of adjacent nodes 
which contains exactly $x$ nodes of type $\alpha$.
Then for each $y\in (0,1)$ there are at most $x/y+2w$ many $\beta$-nodes in the block
for which the proportion of $\alpha$-nodes in their neighbourhood is at least $y$.
\end{lem}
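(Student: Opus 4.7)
The proof is a double-counting argument, with a boundary correction to absorb the additive $2w$. Let $B$ denote the block, and let $S$ be the set of $\beta$-nodes $u\in B$ whose neighbourhood $\neig{u}$ contains at least $y(2w+1)$ nodes of type $\alpha$. My goal is $|S|\le x/y+2w$.

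First, I would split $S=S_I\cup S_{\partial}$ according to whether $\neig{u}\subseteq B$ or not. A node $u\in B$ can fail $\neig{u}\subseteq B$ only when $u$ lies within distance $w$ of one of the two endpoints of $B$, so $|S_{\partial}|\le 2w$ irrespective of the length of $B$ (and trivially so if $|B|<2w$). This single observation accounts for the additive $2w$ in the target bound and, crucially, lets me ignore $\alpha$-nodes outside $B$ when handling the rest.

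For the interior part I would double count pairs $(u,v)$ with $u\in S_I$ and $v\in\neig{u}$ an $\alpha$-node. Counting by $u$ gives a total of at least $y(2w+1)\,|S_I|$ by the defining property of $S$. Counting by $v$, the condition $\neig{u}\subseteq B$ for $u\in S_I$ forces every such $v$ to be one of the $x$ $\alpha$-nodes inside $B$, while each such $v$ can appear for at most $|\neig{v}|=2w+1$ different nodes $u$; the total is therefore at most $x(2w+1)$. Equating the two bounds gives $|S_I|\le x/y$, and combining with the boundary estimate yields $|S|\le x/y+2w$.

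There is no real obstacle here: the only design choice is to restrict the double count to the interior set $S_I$, so that $\alpha$-nodes outside $B$ (not controlled by the datum $x$) never enter the sum. The boundary nodes $S_{\partial}$, whose neighbourhoods may well see $\alpha$-nodes outside $B$, are dispatched by the trivial cardinality bound.
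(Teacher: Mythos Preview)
Your proof is correct and is in fact cleaner than the paper's. The paper argues by first deleting the ``weak'' $\beta$-nodes (those with $\alpha$-proportion below $y$) to obtain a shorter block $B$, then greedily selecting a maximal sequence $\beta_0,\beta_1,\dots$ of $\beta$-nodes in $B$ with pairwise disjoint neighbourhoods, and finally doing bookkeeping on the $\alpha$-runs in the gaps between consecutive selected neighbourhoods to bound the total number of $\beta$-nodes in $B$. Your double-counting of incidences $(u,v)$ with $u\in S_I$ and $v$ an $\alpha$-node in $\neig{u}$ replaces all of this by a one-line averaging estimate: the key point, which you identify correctly, is that restricting to $S_I$ ensures every $\alpha$-node $v$ counted lies in $B$, so the total is controlled by $x$. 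The paper's greedy selection achieves the same localisation more laboriously. Both approaches pay the same $2w$ boundary toll, but yours isolates it up front rather than at the end, which makes the structure of the bound more transparent.
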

\begin{proof}
We are given a block of 
adjacent nodes $A$.
Let us call a node {\em weak} if it is a 
$\beta$-node for which the proportion of $\alpha$-nodes
in its neighbourhood is less than $y$.
It suffices to show that the number of $\beta$-nodes in $A$ 
which are not weak is at most $x/y +2w$.
If we remove all of the weak nodes from $A$,  thus obtaining a possibly 
different (and shorter) block $B$, then in the resulting configuration there are no weak nodes. 
It then suffices to show that the number of $\beta$-nodes
in $B$ is at most $x/y+2w$.
Note that the number of $\alpha$-nodes in $B$ remains $x$, since we did not remove any $\alpha$-nodes.
Let $b_0<b_1$ be the endpoints of block $B$ and define a finite sequence $(\beta_i)$ of
$\beta$-nodes as illustrated in Figure \ref{fig:partpabBinprl} 
and formally defined as follows. Let $\beta_0$ 
be the leftmost $\beta$-node in $B$ such that the left endpoint of its neighbourhood  is $\geq b_0$ and such that the neighbourhood of $\beta_0$ is entirely contained in $B$ (if there exists such). 
Assuming that $\beta_i$ is defined and there are $\beta$-nodes between the right endpoint of $\beta_i$ and $b_1$,
define $\beta_{i+1}$ to be the leftmost $\beta$-node in $B$ which is to the right of $\beta_i$, whose neighbourhood is disjoint from that of $\beta_i$ and entirely contained in $B$. Let $\beta_i, i<k$ be the sequence defined in this way.
Then $\beta_i, i<k$ have disjoint neighbourhoods, each of them containing at least $y(2w+1)$ nodes of type $\alpha$.
Hence 
$ky(2w+1)\leq x$ so $k(2w+1)\leq x/y$ which means that the number
of nodes that are contained in the union of these
neighbourhoods is bounded by $x/y$. Since these are neighbourhoods of
$\beta$-nodes that are not weak, the number of $\beta$-nodes that are contained in the
union of these neighbourhoods is at most $x/y(1-y)=x/y-x$.

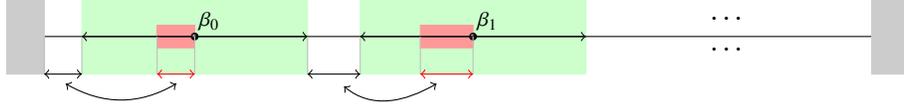
\begin{figure}
 \centering
\begin{tikzpicture}
\draw[green!20!white, fill] (-5,-0.5) rectangle (-2,0.5);
\draw[red!40!white, fill] (-3.5,-0.15) rectangle (-4,0.15);
\draw[green!20!white, fill] (-1.3,-0.5) rectangle (1.7,0.5);
\draw[red!40!white, fill] (0.2,-0.15) rectangle (-0.5,0.15);
\draw (-6,0) -- (6,0);
\draw[gray!40!white, fill] (-6,-0.5) rectangle (-5.5,0.5);
\draw[gray!40!white, fill] (5.5,-0.5) rectangle (6,0.5);
 \draw[gray!50!white] (-5,0) -- (-5,-0.5);
 \draw[gray!50!white] (-5.5,0) -- (-5.5,-0.5);
 \draw[<->] (-5,0) -- (-2,0);
 \draw[<->] (-5.5,-0.5) -- (-5,-0.5);
\shade[ball color=black] (-3.5,0) circle (0.3ex);
\node  at (-3.3,0.2) {{\scriptsize $\beta_0$}};
 \draw[gray!50!white] (-3.5,0) -- (-3.5,-0.5);
 \draw[gray!50!white] (-4,0) -- (-4,-0.5);
\draw[<->, red] (-3.5,-0.5) -- (-4,-0.5);
\node(b2)  at (-5.35,-0.55) {};
\node(r2)  at (-3.61,-0.55) {};
\draw [<->] (b2) to [in=-150, out = -30] (r2); 
\shade[ball color=black] (0.2,0) circle (0.3ex);
 \draw[gray!50!white] (-1.3,0) -- (-1.3,-0.5);
 \draw[gray!50!white] (-2,0) -- (-2,-0.5);
 \draw[<->] (-1.3,0) -- (1.7,0);
 \draw[<->] (-2,-0.5) -- (-1.3,-0.5);
\node  at (0.4,0.2) {{\scriptsize $\beta_1$}};
 \draw[gray!50!white] (0.2,0) -- (0.2,-0.5);
 \draw[gray!50!white] (-0.5,0) -- (-0.5,-0.5);
\draw[<->, red] (0.2,-0.5) -- (-0.5,-0.5);
\node(b1)  at (-1.65,-0.55) {};
\node(r1)  at (-0.15,-0.55) {};
\draw [<->] (b1) to [in=-150, out = -40] (r1); 
\node  at (3.6,0.2) {$\cdots$};
\node  at (3.6,-0.2) {$\cdots$};
\end{tikzpicture}
 \caption{Partition of patched block $B$ in the proof of Lemma \ref{le:blockofpro}}
 \label{fig:partpabBinprl}
\end{figure}

Let $x_i$ be the distance between the right endpoint 
of the neighbourhood of $\beta_i$ and 
the left endpoint of the neighbourhood of $\beta_{i+1}$. 
Note that for each $i<k$ there is a block of at least 
$x_i$ nodes of type $\alpha$ 
 in the the left semi-neighbourhood of $\beta_{i+1}$.
 Indeed, according to the definition of $(\beta_i)$, the only reason why
 there is some distance $d$ between the two endpoints is that a block of
 $\alpha$-nodes of length $d$ immediately to the left of $\beta_{i+1}$. 
We may conclude that there are at least $\sum_i x_i$ 
nodes of type $\alpha$. By the hypothesis the $\alpha$-nodes 
are exactly $x$, so
$\sum_i x_i\leq x$. 
Hence the number of $\beta$-nodes in $B$ that do not belong to the
neighbourhood of some $\beta_i, i<k$ is at most $x+2w$
(where $2w$ is an upper bound for the number of $\beta$-nodes in the final segment of $B$ to the  right of the neighbourhood of $\beta_{k-1}$, or the whole of $B$ if $k=0$).
Hence, overall, there are at most $(x/y-x)+ (x+2w)=x/y+2w$ nodes of type $\beta$ in $B$, which concludes the proof.
\end{proof}

A $\beta$-node is unhappy if and only if the proportion of $\alpha$-nodes in its 
neighbourhood is more than $1-\tau$. Hence we may apply Lemma \ref{le:blockofpro}
with $y$ equal to a value that is slightly larger than $1-\tau$ (taking the limit $y\to 1-\tau$ from above
and taking into account that the number of nodes are integers) gives the following bound on the number of
unhappy $\beta$-nodes in a block.

\begin{coro}[Unhappy $\beta$-nodes versus $\alpha$-nodes]
\label{le:redvsunhapgr}
Consider a block of adjacent nodes 
of type $\alpha$ or $\beta$ such that exactly $x$ of these 
nodes are of type $\alpha$.
Then there are at most $x/(1-\tau)+2w$ unhappy $\beta$-nodes in this block.
\end{coro}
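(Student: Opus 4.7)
The plan is to invoke Lemma \ref{le:blockofpro} with a threshold $y$ chosen to capture precisely the definition of unhappiness for $\beta$-nodes. First I would translate the definition: a $\beta$-node is unhappy iff the proportion of $\beta$-nodes in its neighbourhood is strictly less than $\tau$, equivalently iff the proportion of $\alpha$-nodes is strictly greater than $1-\tau$. Because a neighbourhood has $2w+1$ positions, every proportion of $\alpha$-nodes is a multiple of $1/(2w+1)$, so this strict inequality is equivalent to the weak inequality that the proportion of $\alpha$-nodes is at least $y^\ast := (\lfloor (1-\tau)(2w+1)\rfloor + 1)/(2w+1)$. Since $\lfloor a\rfloor + 1 > a$ for any real $a$, we have $y^\ast > 1-\tau$.

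Next I would apply Lemma \ref{le:blockofpro} to the given block with this choice $y=y^\ast$. The lemma then yields that the number of $\beta$-nodes in the block whose neighbourhood has $\alpha$-proportion at least $y^\ast$ is bounded above by $x/y^\ast + 2w$. By the previous step this same quantity is an upper bound on the number of unhappy $\beta$-nodes in the block. Using $y^\ast > 1-\tau$ we have $x/y^\ast + 2w < x/(1-\tau) + 2w$, which is the claimed bound.

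There is no real obstacle: the corollary is a direct specialization of Lemma \ref{le:blockofpro}, and the only slight subtlety is translating the strict inequality in the definition of an unhappy $\beta$-node into a weak inequality that matches the hypothesis $\geq y$ of that lemma. An equivalent route would be to apply the lemma for each $y = 1-\tau + \epsilon$ with $\epsilon > 0$ small, and take the limit $\epsilon \to 0^+$, using that the number of unhappy $\beta$-nodes in the block is a fixed integer independent of $\epsilon$.
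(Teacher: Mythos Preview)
Your proposal is correct and follows essentially the same approach as the paper: the paper derives the corollary directly from Lemma~\ref{le:blockofpro} by taking $y$ slightly larger than $1-\tau$ and passing to the limit, exactly the route you sketch at the end. Your first argument, via the explicit threshold $y^\ast=(\lfloor(1-\tau)(2w+1)\rfloor+1)/(2w+1)$, is in fact a cleaner formalization of the same idea than the paper gives.
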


By applying this fact to each of the infected segments of the process, and adding up the numbers
unhappy nodes in each of the segments we see that $\YY_s\leq\ZZ_s/(1-\tau)+2w\CC$, which is the fact used
in the main part of our analysis. 

\subsection{Infected area and random variables}
In this case of unbalanced happiness (i.e.\ when $\tau>0.5$ and $\tau+\rho<1$, see Table \ref{ta:validptwocasesnadcorr})
the unhappy $\alpha$-nodes are initially very rare, so the interesting activity (namely $\alpha$-to-$\beta$ swaps) occurs
in small intervals of the entire population (at least in the early stages). 
These intervals contain the unhappy $\alpha$-nodes, and gradually expand, while outside these intervals
all $\beta$-nodes are very unhappy. 
Figure \ref{fig:biasplot} (produced from a simulation) 
shows the development of this process, where the height of the nodes (perpendicular lines) is proportional to
the number of $\alpha$-nodes in their neighborhood and the horizontal black line denotes the threshold where an $\alpha$-node  becomes unhappy.
These cascades that spread the unhappy $\alpha$-nodes are due to the following domino effect. An unhappy 
$\alpha$-node moves out of a neighbourhood, thus reducing the number of $\alpha$-nodes in that interval. This
in turn often makes another $\alpha$-node in the interval unhappy, which can move out at a latter stage, thus
causing another $\alpha$-node nearby to be unhappy, and so on.
The expanding intervals are the {\em infected segments} which start their life as {\em incubators}.
\begin{defi}[Incubators]\label{de:infectincuba}
Consider the set $I$ of nodes in the initial state which 
belong to an interval of nodes of length $w$ with less than
$\epsilon_{\ast}=w(1-\rho+\tau)/2$ many $\alpha$-nodes.
Let $I^{\ast}$ be the set of nodes whose neighborhood  contains a node in $I$.
 An incubator is a maximal interval of nodes
that is entirely contained in $I^{\ast}$.
\end{defi}

An interval of nodes is called {\em active} at a certain state if it contains 
an unhappy $\alpha$-node.
The {\em infected area} is the area that incubators generate by 
making additional $\alpha$-nodes unhappy. It is always expanding, and is defined formally as follows.

\begin{defi}[Infected segments]
Let $I$ be an incubator. 
At stage $0$ the infected segment $I_0$ corresponding to $I$ is
$I$ itself.
At the end of stage $s+1$, we first consider those $I_s$ which were active at the end of stage $s$ and which are still active. 
We consider these active infected segments in turn, starting at position 0 and moving clockwise.  For each such $I_s$   
we let $I_{s+1}=I_{s}\cup J$,
where $J$
consists of the nodes 
which do not already belong
to another active infected segment (by the time we consider $I_s$)  and
whose neighborhood contains an unhappy $\alpha$-node in 
$I_{s}$ at stage $s+1$. We then consider the remaining $I_s$ (i.e.\ those which are no longer active): 
for each such we  define $I_{s+1}=I_s-Q$ where $Q$ consists of the nodes in $I_s$
which now belong to an active infected segment.
\end{defi}

The {\em infected area} is the union of the infected segments.
The {\em fresh infected segment} corresponding to infected segment $I$
is $I-I_0$, i.e.\ consists of the nodes $I$ except the nodes in its incubator.
Hence a {\em fresh infected segment} consists of two growing intervals of nodes. 
The {\em fresh infected area} is the infected area except the nodes in the incubators.
The {\em interior} of a set of nodes $J$ consists of those nodes whose
neighbourhood is entirely contained in $J$.
The {\em boundary} of $J$ consists of the nodes in $J$ which are not in the interior. 
It is not hard to show that if $\tau+\rho<1$, the probability that a node belongs to an incubator is
$e^{-\Theta(w)}$. Hence with high probability  
 the number of incubators as well as the number of nodes belonging to incubators
 of the process $(n,w,\tau,\rho)$ is $n e^{-\Theta(w)}$.

Our goal is now to show that the number of unhappy 
$\alpha$-nodes remains suitably bounded throughout a significant part of the process. 
Formally, the main idea is to bound this number with a 
martingale. Intuitively though, why should the number of unhappy $\alpha$-nodes remain fairly small? 
At the start of the process the infected area is a very small proportion of the
entire ring. The vast majority of unhappy $\beta$-nodes occur outside
the infected area, while all unhappy $\alpha$-nodes are inside the infected area.
It follows that with high probability a 
swap will involve an $\alpha$-node in the infected
area and a $\beta$-node outside the infected area. A {\em bogus swap} is a swap is one that is not of this kind.

\begin{defi}[Bogus swaps]
A swap
which involves a $\beta$-node 
currently inside the infected area is called bogus. 
Given an infected segment $I$, a bogus swap in $I$ is a
swap that moves an $\alpha$-node into $I$.
\end{defi}

In the absence of bogus swaps, it is not hard to show that the $\alpha$-nodes 
in the infected area (except those in the incubators) are unhappy. This in turn can be used in order to show that 
the $\alpha$-nodes 
in the infected area (and so, the unhappy $\alpha$-nodes too)
are likely remain $\smo{n}$.
However there will be bogus swaps, and these can make certain $\alpha$-nodes in the infected area happy.

\begin{defi}[Anomalous nodes]
A node is called actively anomalous at some stage of the process if it is 
a happy $\alpha$-node in the interior of
the fresh infected area; it is called  anomalous if it has been actively anomalous in this or a previous stage.
Finally a node is called generally anomalous at some stage, if it is in the current infected area and has been
or will be actively anomalous at some later stage of the process.
\end{defi}

Clearly {\em actively anomalous} implies {\em anomalous}, which in turn implies {\em generally anomalous} 
(but not the other way around).
Let $\DD_s$ denote the number of anomalous nodes at stage $s$, and let $\bar{\DD}_s$ denote the number of
generally anomalous nodes at stage $s$.
A martingale argument will be used in order to show that as long as $\DD_s=\smo{n}$, the
$\alpha$-nodes in the infected area are likely to remain $\smo{n}$.
The definition of anomalous nodes and $\bar{\DD}_s$ may seem strange at this point, not least because $\bar{\DD}_s$ 
is not predictable at stage $s$. The reason that we introduce $\bar{\DD}_s$ is that $\DD_s$ is very hard to analyze,
and very hard to bound directly via a martingale (adapted to the stages of the process).  However 
it is possible to bound $\bar{\DD}_s$  via a martingale argument of a more general type (i.e.\ which is
not adapted to the stages of the process). Since $\DD_s\leq \bar{\DD}_s$, this suffices for our purposes.

We define additional global variables in Table \ref{ta:pranvarinfarfa2}.
By the definitions we have  $\DD_s\leq \DD_{s+1}$ and

\[
\textrm{
(a)\hspace{0.2cm}$\unhap_{\alpha}(s)\leq \mathbf{Z}_s$
\hspace{1cm}
(b)\hspace{0.2cm}$\unhap_{\beta}(s)\leq \GG_s+\YY_s$
\hspace{1cm}
(c)\hspace{0.2cm}$\GG_s\leq \unhap_{\beta}^{\ast}(s)$
\hspace{1cm}
(d)\hspace{0.2cm}$\expe{\CC} = n e^{-\Theta(w)}$
}
\]

Here (d) holds because of the likely total size of the incubators
and (c) holds because $\beta$-nodes outside 
the infected area are {\em very unhappy}.


\subsection{Probabilities in the infected area and anomalous nodes}
Recall that our current goal is  to show that the number of unhappy 
$\alpha$-nodes remains suitably bounded for a significant part of the process. 
The basic idea is that if the number of unhappy $\alpha$-nodes  increases sufficiently,
then the infected area must become quite large, and it becomes very likely that the next swap will involve an 
unhappy $\alpha$-node in the \emph{interior} of the infected
area.  We shall be able to  argue that there are good 
chances that the swap is not bogus. This means
that this $\alpha$-node will move outside the infected area and will become happy.
The anomalous nodes, however, present a difficulty with this line of argument. 
The eviction of the $\alpha$-node from the infected area
(and its replacement by a $\beta$-node) may produce 
more unhappy $\alpha$-nodes in its 
neighbourhood.
So it is not absolutely true
that the total number of unhappy $\alpha$-nodes will decrease.
In fact, as the simulations of Figure \ref{fig:inf_area} suggest, 
at the early stages of the process this
number is likely to increase slightly.

If we assume the absence of bogus swaps, 
then it is not hard to show that the nodes in the interior of the infected
area and outside the incubators have neighborhoods with proportion of $\alpha$-nodes 
well below $(2w+1)\tau$. In this case
it is straightforward
to employ a  martingale argument which shows that the number of  
$\alpha$-nodes in the infected area (hence also the total number of unhappy
$\alpha$-nodes) remains bounded with high probability throughout the process.
Indeed, in this case there will be no happy $\alpha$-nodes in the interior of the 
fresh infected area,
so (according to the argument we outlined above) 
the likely swap absolutely reduces the total number of unhappy
$\alpha$-nodes.

In the presence of bogus swaps, we will use a more sophisticated martingale argument to bound
the anomalous nodes. This can be used by another simpler martingale argument, in order to bound the
number of unhappy $\alpha$-nodes, at least up to some stopping time of the process
and with high probability.
This plan requires the calculation of certain probabilities.

\begin{lem}[Probability of a bogus swap]\label{le:probboguss}
At each stage $s+1$, the probability that the current swap will
be bogus is bounded above by $\YY_s/\GG_s$. 
\end{lem}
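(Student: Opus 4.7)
The plan is to realise each transition as a uniform draw from the set of \emph{legal} pairs---an unhappy $\alpha$-node together with an unhappy $\beta$-node whose mutual swap does not decrease either utility---and then to count legal pairs of each type. Writing $L$ for the set of all legal pairs at the end of stage $s$, and $L_b \subseteq L$ for those legal pairs whose $\beta$-component lies inside the current infected area, the probability that the swap performed at stage $s+1$ is bogus is exactly $|L_b|/|L|$.

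First I will set up the legality bookkeeping. Every $\beta$-node outside the infected area is by definition very unhappy, so it has at least $(2w+1)\tau$ many $\alpha$-nodes in its neighbourhood, and because $\tau>1/2$ the legality condition on the $\beta$-side of any swap (that the $\alpha$-count in $N(v)$ after the swap is at least the $\beta$-count before) is automatic for such a $v$. For an unhappy $\alpha$-node $u$ let $a_u$ denote the number of $\alpha$-nodes in $N(u)$; a direct count shows that for a $\beta$-node $v$ disjoint from $N(u)$, the $\alpha$-side legality condition reduces to $a_u\le w+1$. Call such $u$ \emph{admissible} and write $A$ for their total number; inadmissible unhappy $\alpha$-nodes contribute to neither $|L|$ nor $|L_b|$ in the disjoint regime and may be discarded from the count.

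Second I will combine these observations. Every admissible unhappy $\alpha$-node forms a legal pair with each of the $\GG_s$ very unhappy $\beta$-nodes outside the infected area, giving $|L|\ge A\cdot\GG_s$; trivially $|L_b|\le A\cdot \YY_s$ since only admissible $\alpha$-nodes can appear. Dividing,
\[
\frac{|L_b|}{|L|} \;\le\; \frac{A\cdot \YY_s}{A\cdot \GG_s} \;=\; \frac{\YY_s}{\GG_s},
\]
which is the claim.

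The subtlety that I expect to be the main obstacle is the overlap case $v\in N(u)$, in which the legality thresholds tighten by one on each side (e.g.\ $a_u\le w$ rather than $a_u\le w+1$). However, for any given admissible $\alpha$-node there are only $2w+1$ sites $v$ that could overlap in this way, so the overlap contributes at most an $O(w)$ additive correction per admissible $\alpha$-node to both the numerator and the denominator of $|L_b|/|L|$. Since the target bound $\YY_s/\GG_s$ is already strictly looser than the ratio $\YY_s/(\YY_s+\GG_s)$ that the disjoint counting alone yields, and the statement is vacuous when $\GG_s=0$, this slack is more than enough to absorb the boundary terms, so the bound holds as stated.
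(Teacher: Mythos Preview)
Your counting framework (probability of a bogus swap equals $|L_b|/|L|$, then bound numerator above and denominator below) is exactly the paper's, but the legality analysis you build on top of it is incorrect and creates a real gap.

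The swap rule is that each agent's count of same-type neighbours must not decrease. In the disjoint case this works out to the single condition $a_v \ge a_u - 1$, where $a_u,a_v$ are the $\alpha$-counts in $N(u),N(v)$ before the swap. There is no condition of the form ``$a_u\le w+1$'' anywhere; the threshold depends on $a_v$, not just on $u$. Your parenthetical description of the $\beta$-side condition (``the $\alpha$-count in $N(v)$ after the swap is at least the $\beta$-count before'') is not the rule either. Because your admissibility criterion is wrong, the key inequality $|L_b|\le A\cdot\YY_s$ fails: an unhappy $\alpha$-node with $a_u>w+1$ (which is perfectly possible when $\tau>1/2$, since unhappiness only forces $a_u<(2w+1)\tau$) can still form a legal pair with a $\beta$-node $v$ inside the infected area whenever $a_v\ge a_u-1$. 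Such pairs are bogus and are missed by your count. The overlap discussion at the end does not touch this problem.

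The paper's argument avoids all of this by observing the one fact you need: since every $\beta$-node outside the infected area is \emph{very} unhappy, it has $a_v\ge(2w+1)\tau$, while any unhappy $\alpha$-node has $a_u<(2w+1)\tau$; hence $a_v\ge a_u$ and the swap is legal regardless of overlap. So \emph{every} unhappy $\alpha$-node pairs legally with \emph{every} one of the $\GG_s$ outside $\beta$-nodes, giving $|L|\ge \unhap_\alpha(s)\cdot\GG_s$. The trivial bound $|L_b|\le \unhap_\alpha(s)\cdot\YY_s$ then yields $|L_b|/|L|\le \YY_s/\GG_s$. No admissibility restriction, no boundary slack, no overlap case analysis is needed.
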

\begin{proof}
The number of pairs which can cause a bogus swap is bounded by $\unhap_{\alpha}(s)\cdot \YY_s$.
On the other hand, any unhappy $\alpha$-node can swap with a $\beta$-node outside the
infected area. Indeed, this is because the number of $\alpha$-nodes in
the neighbourhood of any $\beta$-node outside the infected area 
is at least $(2w+1)\tau$.
Hence there are at least $\unhap_{\alpha}(s)\cdot \GG_s$ pairs of nodes that can swap
at stage $s+1$. We can conclude that the probability of a bogus swap is bounded by
$\unhap_{\alpha}(s)\YY_s/\unhap_{\alpha}(s)\GG_s=\YY_s/\GG_s$.
\end{proof}

The calculation of the following probabilities is a first step towards our martingale argument.

\begin{lem}[Probabilities for $Z_s$]\label{le:zs1lzsprob}
The numbers
\[
\textrm{$\frac{\GG_s}{\unhap_{\alpha}(s)}\cdot 
\frac{\ZZ_s-\DD_s-2w\cdot \CC}{\GG_s+\YY_s}$ 
\hspace{1cm}and\hspace{1cm}
$2w\cdot \CC\cdot\frac{\GG_s+\YY_s}{\GG_s\cdot \unhap_{\alpha}(s)}$}
\]
are a lower bound for the probability that 
$\ZZ_{s+1}<\ZZ_s$
and an upper bound for the probability that 
$\ZZ_{s+1}>\ZZ_s$, respectively.
\end{lem}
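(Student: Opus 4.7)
The swap chosen at stage $s{+}1$ is uniform over the set of legal pairs $(u,v)$ of unhappy nodes of opposite type. As already used in Lemma \ref{le:probboguss}, the count of legal pairs is sandwiched:
\[
\unhap_{\alpha}(s)\,\GG_s\ \le\ \#\{\text{legal pairs}\}\ \le\ \unhap_{\alpha}(s)\,(\GG_s+\YY_s),
\]
the lower bound because every $\beta$-node outside the infected area is very unhappy and makes a legal pair with every unhappy $\alpha$-node, and the upper bound because the only possible $\beta$-partner is an unhappy $\beta$. I will bound both probabilities by counting favourable (respectively unfavourable) pairs and dividing by the appropriate side of this sandwich.

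For the lower bound on $\mathbb{P}[\ZZ_{s+1}<\ZZ_s]$ I would classify $(u,v)$ as \emph{good} when $u$ is a non-anomalous unhappy $\alpha$-node whose $2w$-neighbourhood is contained in $I_s$, and $v$ is any $\beta$-node outside $I_s$. The key geometric point is that in such a swap an $\alpha$-node $y$ can become newly unhappy only if $p_u\in N(y)$, so $|y-p_u|\le w$, and any $x$ that might join $I_{s+1}$ because of such a $y$ satisfies $|x-p_u|\le 2w$; the $2w$-interior hypothesis on $u$ then forces $x\in I_s$. Thus $I_{s+1}=I_s$, the new $\alpha$ at $p_v$ lies outside $I_{s+1}$, and $\ZZ_{s+1}=\ZZ_s-1$ deterministically. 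Counting eligible $u$: of the $\ZZ_s$ $\alpha$-nodes in $I_s$, at most $\DD_s$ are anomalous, and at most $2w\CC$ fail the $2w$-interior condition. The latter estimate uses Definition \ref{de:infectincuba}: each incubator contains at least $2w+1$ nodes, so the number of infected segments is at most $\CC/(2w+1)$, and the two-sided $2w$-wide boundary layers contribute in total at most $4w\cdot\CC/(2w+1)<2w\CC$. Each eligible $u$ pairs with any of the $\GG_s$ $\beta$-nodes outside $I_s$, so the number of good pairs is at least $(\ZZ_s-\DD_s-2w\CC)\,\GG_s$; dividing by the upper endpoint $\unhap_{\alpha}(s)(\GG_s+\YY_s)$ of the sandwich gives the claimed lower bound.

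For the upper bound on $\mathbb{P}[\ZZ_{s+1}>\ZZ_s]$ the same containment argument shows that whenever $u$'s $2w$-neighbourhood lies in $I_s$, then for \emph{any} partner $v$ the infected area does not grow and
\[
\ZZ_{s+1}-\ZZ_s\ =\ -1+\mathbf{1}[p_v\in I_s]\ \le\ 0.
\]
Consequently the event $\ZZ_{s+1}>\ZZ_s$ forces $u$ into the $2w$-boundary of some infected segment, and by the accounting above there are at most $2w\CC$ such positions. Pairing each with any of the at most $\GG_s+\YY_s$ unhappy $\beta$-partners yields at most $2w\CC(\GG_s+\YY_s)$ unfavourable pairs; dividing by the lower endpoint $\unhap_{\alpha}(s)\GG_s$ of the sandwich produces the stated upper bound.

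The main obstacle is not the counting but the single-step expansion analysis of $I_s$: one must verify carefully that every $\alpha$-node made newly unhappy by a swap lies within $w$ of $p_u$, so that a $2w$-deep interior for $u$ really does forbid any growth of the infected area in that step. This is the observation that forces the subtractive term $2w\CC$ (rather than something proportional to $w\CC$) in the numerators, and drives the rest of the bookkeeping.
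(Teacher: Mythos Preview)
Your proposal is correct and follows the same pair-counting approach as the paper; your use of the $2w$-interior rather than the paper's ordinary interior is in fact the right condition, since removing an $\alpha$-node at depth only $w$ can create a newly unhappy $\alpha$-node on the boundary of $I_s$ and hence still trigger expansion of the infected area. Both arguments tacitly rely on the slack in the bound $2w\CC$ (your own boundary count is actually below $2\CC$) to absorb the at most $\CC$ happy $\alpha$-nodes that may sit inside an incubator rather than in the fresh part, which are neither anomalous nor in the $2w$-boundary but must still be excluded from the ``eligible $u$'' count.
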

\begin{proof}
The probability that
$\ZZ_{s+1}<\ZZ_s$ is at least as much as the probability that the swap
is not bogus and it
involves a node in
the interior of the infected area 
at stage $s+1$.
Indeed, in this case
the swap moves an $\alpha$-node from the interior of the infected area 
to outside the infected area, so $\ZZ_{s+1}=\ZZ_s-1$, 
because the length of the infected area remains the same. 
The unhappy $\alpha$-nodes of the infected area that 
cannot be part of such a swap are the ones that belong to the
boundary of the infected area,
so they are at most $2w\CC$ many. 
This means that there are at least
$\ZZ_s-\DD_s-2w\CC$ nodes of type $\alpha$  
which can be picked as part of a
swapping pair at stage $s+1$ such that $\ZZ_{s+1}-\ZZ_s$ is negative.
Note that each of these $\alpha$-nodes forms a swapping pair with any
$\beta$-node outside the infected area, since all such $\beta$-nodes are very unhappy.  
Therefore there are
at least $(\ZZ_s-\DD_s-2w\CC)\cdot \GG_s$ many  
swapping pairs which make $\ZZ_{s+1}-\ZZ_s$ negative. 
On the other hand, the total number of swapping pairs
are at most 
$(\GG_s+\YY_s)\cdot \unhap_{\alpha}(s)$ many. 
Hence 
\[
\GG_s\cdot 
\frac{\ZZ_s-\DD_s-2w\cdot \CC}{(\GG_s+\YY_s)\cdot \unhap_{\alpha}(s)}
\]
is a lower bound for the probability that $\ZZ_{s+1}<\ZZ_s$. 

For the second clause,
note that
$\ZZ_{s+1}>\ZZ_s$ can only happen in the 
case that the infected area expands at stage $s+1$.
This  can only occur
if the swapping pair involves an $\alpha$-node 
that belongs to the boundary of the infected area of stage $s$. 
There are at most 
$2w\CC$ such nodes so there are at most 
$2w\CC\cdot (\GG_s+\YY_s)$ 
swapping pairs that
can cause $\ZZ_{s+1}<\ZZ_s$.
Moreover there are at least $\GG_s\cdot \unhap_{\alpha}(s)$ possible swapping pairs
for stage $s+1$.
Hence 
\[
2w\cdot \frac{\CC\cdot (\GG_s+\YY_s)}{\GG_s\cdot \unhap_{\alpha}(s)}
\]
is an upper bound for the probability that $\ZZ_{s+1}>\ZZ_s$.
\end{proof}

We may now identify our first supermartingale. Note that the following fact is the reason why we defined
the anomalous nodes the way we did. The fact that $\DD_s$ is nondecreasing is a necessary part of the
following proof. 

\begin{lem}[Non-anomalous nodes in an infected segment]\label{le:zhsisma}\label{le:zhssecgisma}
The process $\ZZ^{\ast}_s:=\max\{\ZZ_s-\DD_s, 11w^2\cdot\CC\}$ is a supermartingale, 
for all $s<T_y$.
\end{lem}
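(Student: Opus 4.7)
I would verify the supermartingale inequality $\expec{\ZZ^{\ast}_{s+1}}{F_s} \leq \ZZ^{\ast}_s$ by a case analysis on whether $X := \ZZ_s - \DD_s$ lies below or at/above the floor $11w^2\CC$. Two structural facts underlie the whole calculation: each swap changes $\ZZ$ by at most $\pm 1$, and $\DD_s$ is nondecreasing in $s$ by its very definition, since a node once actively anomalous remains anomalous. Integrability is immediate since $\ZZ^{\ast}_s \leq n$.

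In the regime $X < 11w^2\CC$ we have $\ZZ^{\ast}_s = 11w^2\CC$, and the two facts above force $\ZZ_{s+1} - \DD_{s+1} \leq \ZZ_{s+1} - \DD_s \leq X + 1 \leq 11w^2\CC$, so $\ZZ^{\ast}_{s+1} = 11w^2\CC = \ZZ^{\ast}_s$ deterministically and the inequality is trivial.

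In the main regime $X \geq 11w^2\CC$ I would use $\DD_{s+1} \geq \DD_s$ to bound $\ZZ^{\ast}_{s+1} \leq \max\{X + \Delta\ZZ,\, 11w^2\CC\}$ where $\Delta\ZZ := \ZZ_{s+1} - \ZZ_s$. Writing $p_{\pm} := \proba{\Delta\ZZ = \pm 1 \mid F_s}$, a short three-way case computation on the value of $\Delta\ZZ$ then yields
\[\expec{\ZZ^{\ast}_{s+1} - \ZZ^{\ast}_s}{F_s} \leq p_+ - p_-.\]
The main algebraic step is to show $p_+ \leq p_-$ via the bounds of Lemma \ref{le:zs1lzsprob}; after cross-multiplying, this reduces to
\[2w\CC\,(\GG_s + \YY_s)^2 \;\leq\; \GG_s^2\,(\ZZ_s - \DD_s - 2w\CC).\]
Since $s < T_y$ forces $\YY_s \leq \GG_s$, the ratio $(\GG_s + \YY_s)^2/\GG_s^2$ is at most $4$, and the inequality reduces to $\ZZ_s - \DD_s \geq 10w\CC$, which holds because $X \geq 11w^2\CC \geq 10w\CC$ for $w \geq 1$. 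The coefficient $11$ hidden in the floor is just large enough to comfortably absorb this factor of $4$ coming from the $T_y$ stopping condition.

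The main obstacle lies at the exact boundary $X = 11w^2\CC$, where the coarse bound above produces an upper estimate on the drift of $p_+$ rather than $p_+ - p_-$. Overcoming this requires the refined identity $\ZZ^{\ast}_{s+1} = \max\{X + \Delta\ZZ - \Delta\DD,\, 11w^2\CC\}$, whose boundary value is $\max\{\Delta\ZZ - \Delta\DD,\, 0\}$; so an actual rise of $\ZZ^{\ast}$ requires $\Delta\ZZ = 1$ together with $\Delta\DD = 0$. A structural analysis of the two mechanisms producing $\Delta\ZZ = 1$ (bogus swaps and infected-area expansion incorporating a new $\alpha$-node) is needed to show that the offending event either forces $\Delta\DD \geq 1$ through the creation of a happy $\alpha$-node in the interior of the fresh infected area, or otherwise has probability dominated by the strict downward drift $p_- - p_+$ already secured above.
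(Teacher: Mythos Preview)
Your proof rests on the claim that ``each swap changes $\ZZ$ by at most $\pm 1$'', and this is where the argument breaks. The quantity $\ZZ_s$ counts $\alpha$-nodes in the \emph{infected area}, and the infected area itself is updated after every swap: when an unhappy $\alpha$-node near the boundary of an infected segment is involved in a swap, the segment can expand by as many as $w$ nodes, and each of those newly absorbed nodes may be an $\alpha$-node. Thus $\ZZ_{s+1}-\ZZ_s$ can be as large as $w$, not $1$. This is exactly why the paper writes the drift bound as
\[
\expec{\ZZ_{s+1}}{F_s}\leq \ZZ_s + wp - q
\]
rather than $\ZZ_s + p - q$, and why the sufficient condition coming out of Lemma~\ref{le:zs1lzsprob} is $\ZZ_s-\DD_s\geq 10w^2\CC$ (with a factor $w^2$, not $w$). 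Your inequality $p_+\leq p_-$ with the reduction to $\ZZ_s-\DD_s\geq 10w\CC$ is therefore the wrong target.

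The same error undermines your low-regime case and your boundary analysis. If $X=\ZZ_s-\DD_s$ is just below the floor $11w^2\CC$, a jump of $+w$ can push $\ZZ_{s+1}-\DD_{s+1}$ strictly above $11w^2\CC$, so $\ZZ^{\ast}$ is \emph{not} deterministically constant there. The paper sidesteps the whole boundary difficulty by placing the case split at $10w^2\CC$ rather than at $11w^2\CC$: if $\ZZ_s-\DD_s<10w^2\CC$ then even after a maximal jump of $+w$ one still has $\ZZ_{s+1}-\DD_{s+1}<10w^2\CC+w\leq 11w^2\CC$, so $\ZZ^{\ast}_{s+1}=\ZZ^{\ast}_s=11w^2\CC$; and if $\ZZ_s-\DD_s\geq 10w^2\CC$ the drift $wp-q\leq 0$ already holds. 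The buffer $w^2\CC$ between the split point and the floor is precisely what absorbs the $+w$ increment, making any separate boundary argument unnecessary. Your final paragraph, which tries to handle the boundary by a structural analysis of how $\Delta\ZZ=1$ can arise, is both incomplete (it only sketches what would need to be shown) and aimed at the wrong event, since the problematic increments are the large ones of size up to $w$, not just $+1$.
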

\begin{proof}  At the end of stage $s$ (and given all information as to how the process has unfolded so far)
denote the probability that 
$\ZZ_{s+1}<\ZZ_s$ by $q$ and the probability that $\ZZ_{s+1}>\ZZ_s$
by $p$.  Let $E$ be the expected value of $\ZZ_{s+1}$. Now at  stage $s+1$ the infected area can expand by at most $w$ nodes.
Moreover, it is not possible that at stage $s+1$,
an $\alpha$-node which is not in the infected area of stage $s$  
is moved to a position in the infected area of stage $s+1$.
This is because all $\alpha$-nodes outside  
the infected area of stage $s$
are happy at stage $s$.
It follows that $\ZZ_{s+1}-\ZZ_s\leq w$ at each stage $s$. Therefore

\begin{equation}\label{eq:expecpqzsfs}
E\leq p\cdot (\ZZ_s+w) +
q\cdot (\ZZ_s-1) + (1-p-q)\cdot \ZZ_s = \ZZ_s + wp-q.
\end{equation}
By Lemma
\ref{le:zs1lzsprob}, in order to ensure that  $wp-q\leq 0$, it suffices that
\[
2w^2\cdot\CC\cdot\frac{\GG_s+\YY_s}{\GG_s\cdot \unhap_{\alpha}(s)}
\leq \frac{\GG_s}{\unhap_{\alpha}(s)}\cdot \frac{\ZZ_s-\DD_s-2w^2\cdot\CC}{\GG_s+\YY_s}
\hspace{0.5cm}
\textrm{so}
\hspace{0.5cm}
\ZZ_s\geq \DD_s + 2w^2\cdot\CC\cdot\left[1+ \left(1+\frac{\YY_s}{\GG_s}\right)^2\right]
.\]
Since $s<T_y$
the expression inside the parentheses in the latter inequality
is bounded above by 2. Hence for the condition $wp-q\leq 0$
it is sufficient that
$\ZZ_s\geq \DD_s +10w^2\cdot\CC$
for all $s<T_y$. So now we divide into two cases. If 
$\ZZ_s< \DD_s +10w^2\cdot\CC$ then $\ZZ_{s+1}^{\ast}=\ZZ_{s}^{\ast}= 11w^2\cdot \CC$. 
Otherwise, $E\leq \ZZ_s$ and the result follows from the fact that 
$\DD_s$ is non-decreasing.  \end{proof}

Now to get from $\ZZ_s^{\ast}$ to $\ZZ_s$, we need to bound $\DD_s$. 
Intuitively, we expect the proportion of the $\alpha$-nodes in
neighborhoods of nodes in the interior of the 
infected area to be rather low, e.g.\ considerably lower than the threshold $(2w+1)\tau$.
The following lemma gives a justification for such an expectation 
and is also the reason why we chose  
$\epsilon_{\ast}=(1-\tau-\rho)/2$ in the definition of incubators, Definition \ref{de:infectincuba}.
Here is an intuitive explanation of this fact.
Let us say that a node in the infected area which is not in the interior of the infected
area is in the boundary of the infected area.
A node in the boundary of the infected area can
see a node outside the infected area. The nodes in the complement of
the infected area have never seen unhappy $\alpha$-nodes, hence the
proportion of $\alpha$-nodes in their
semi-neighbourhoods can only increase. This means that one of the
semi-neighbourhoods of each node in the boundary of the infected area has not
been affected by $\alpha$-to-$\beta$ swaps.
The following lemma says that such a node can only be included in the interior
of the infected area if the semi-neighbourhood of it which has been affected by
$\alpha$-to-$\beta$ swaps, is affected by at least $\epsilon_{\ast}w$ many such swaps.
In other words, the expansion of the infected area requires a considerable number of
stages.
The particular statement refers to the case where the infection travels from
right to left. 
By symmetry, an analogous statement 
holds for the case where the infection travels the opposite direction.

\begin{lem}[Concentration of $\alpha$-to-$\beta$ swaps]\label{le:deephl}
Let $[a,d]$ be an interval of nodes in the initial state of the process,
and $\delta>0$ 
such that for each $u\in [a,d]$ the proportion of $\alpha$-nodes in  
each semi-neighbourhood of $u$ is at least $\tau+\delta$. 
Consider a time interval of the 
process where there have been no 
$\alpha$-to-$\beta$ swaps in
$[a-w, a)$. 
For each $u\in [a,d]$ and any stage in this interval,
if there is an unhappy $\alpha$-node in $[a, u]$  then there have been at least 
$2w\delta$ many $\alpha$-to-$\beta$ swaps 
in the right semi-neighbourhood of $u$ by that stage.
\end{lem}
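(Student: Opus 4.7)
The plan is to analyse what happens at the first stage in the time interval at which an unhappy $\alpha$-node appears in $[a,u]$; this amounts to an induction on stages. First, using the proportion hypothesis, the initial neighbourhood of any $v\in[a,d]$ of type $\alpha$ contains at least $2w(\tau+\delta)+1$ many $\alpha$-nodes, which exceeds $(2w+1)\tau$ since $\tau\leq 1$, so no $\alpha$-node in $[a,d]$ is unhappy at stage $0$. Fix $u\in[a,d]$ and let $s^\ast$ be the first stage in the time interval at which some $\alpha$-node in $[a,u]$ is unhappy (the conclusion is vacuous otherwise). Writing $K(u,s)$ for the cumulative number of $\alpha$-to-$\beta$ swaps in $R_u=[u+1,u+w]$ up to stage $s$, and noting that $K(u,\cdot)$ is non-decreasing, it suffices to show $K(u,s^\ast)\geq 2w\delta$.

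The heart of the argument is the intermediate claim that \emph{no $\alpha$-to-$\beta$ swap has occurred at any position in $[a,u]$ by the end of stage $s^\ast$}. Before stage $s^\ast$ this is automatic, since any such swap requires an unhappy $\alpha$-node at that position in the previous stage. At stage $s^\ast$ itself, write the current swap as the pair $(p_1,p_2)$ with $p_1$ the $\alpha$-to-$\beta$ half, and consider an unhappy $\alpha$-node $v\in[a,u]$ appearing at $s^\ast$. In both sub-cases that follow, $p_1$ is an unhappy $\alpha$-node at stage $s^\ast-1$ and thus cannot lie in $[a-w,a-1]$ (by the no-swaps hypothesis) nor in $[a,u]$ (by the definition of $s^\ast$). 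Sub-case 1: if $v$ was already of type $\alpha$ at $s^\ast-1$, then it became unhappy only through this swap, so $p_1\in N_v$; combined with the previous constraints this forces $p_1\in R_u$. Sub-case 2: if $v$ was of type $\beta$ at $s^\ast-1$, then $v=p_2$ and the swap at $v$ is $\beta$-to-$\alpha$, not $\alpha$-to-$\beta$. In both sub-cases no $\alpha$-to-$\beta$ swap inside $[a,u]$ takes place at stage $s^\ast$.

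Finally, at the end of stage $s^\ast$ with $v$ the unhappy $\alpha$-node in $[a,u]$, bound the three parts of $N_v$. The left semi-neighbourhood $L_v$ is contained in $[a-w,a-1]\cup[a,u-1]$ and has seen no $\alpha$-to-$\beta$ swap by $s^\ast$ (hypothesis plus intermediate claim), so it still contains at least $w(\tau+\delta)$ many $\alpha$-nodes. Splitting $R_v$ as $(R_v\cap[a,u])\cup(R_v\cap R_u)$, the first piece is again free of $\alpha$-to-$\beta$ swaps while the second has seen at most $K(u,s^\ast)$, so $R_v$ contains at least $w(\tau+\delta)-K(u,s^\ast)$ many $\alpha$-nodes. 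Adding in the $\alpha$-node at $v$ itself, the total in $N_v$ is at least $2w(\tau+\delta)+1-K(u,s^\ast)$; unhappiness of $v$ forces this to be less than $(2w+1)\tau$, and rearranging gives $K(u,s^\ast)>2w\delta-\tau+1\geq 2w\delta$. The main obstacle is the case analysis supporting the intermediate claim: the argument must split according to whether $v$ was $\alpha$ or $\beta$ immediately before $s^\ast$, and in each sub-case one must combine the no-swaps hypothesis on $[a-w,a)$ with the minimality of $s^\ast$ to exclude an $\alpha$-to-$\beta$ swap inside $[a,u]$ at stage $s^\ast$ itself.
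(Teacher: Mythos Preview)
Your proof is correct and follows essentially the same first-stage-plus-semi-neighbourhood-counting approach as the paper. Your version is in fact slightly cleaner: the paper first argues (tersely) that the earliest unhappy $\alpha$-node must lie in $(u-w,u]$ before counting, whereas you bypass that localisation by bounding $N_v$ directly via the intermediate claim; note also that your sub-case analysis on the type of $v$ at $s^\ast-1$ is harmless but not needed, since the observation that $p_1$ is unhappy $\alpha$ at stage $s^\ast-1$ already forces $p_1\notin[a,u]$ by minimality of $s^\ast$.
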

\begin{proof}
Let $s$ be a stage of the process and suppose that
there have been no $\alpha$-to-$\beta$ swaps in $[a-w, a)$ by stage $s$.
Suppose that there is an unhappy $\alpha$-node in $[a-w, u]$ at stage $s$.
Then there must have been an unhappy
$\alpha$-node in $[u-w, u]$ at some stage $\leq s$. 
Consider the first such stage $t_0$ and let $v_0$ be the rightmost
$\alpha$-node in $(u-w, u]$ which became unhappy at stage $t_0$.
By our hypothesis, up to stage $t_0$ there has been no 
$\alpha$-to-$\beta$ swaps in
$(v_0,u]$. Hence all of the $\alpha$-to-$\beta$ swaps 
that occurred in the right semi-neighbourhood of $v_0$ are also in the
right semi-neighbourhood of $u$.
The proportion of the $\alpha$-nodes in the left 
semi-neighbourhood of $v_0$ is more than $\tau+\delta$.
Since $v_0$ is unhappy at $t_0$, the proportion 
of the $\alpha$-nodes in its neighbourhood is less than $\tau$. Hence
the proportion of the $\alpha$-nodes in 
its right semi-neighbourhood is at most $\tau-\delta$
at stage $t$. Hence by hypothesis, by stage $t$ at least 
$2w\delta$ many $\alpha$-to-$\beta$ swaps have occurred in the 
right semi-neighbourhood of $v$. By the above discussion,
these swaps have also occurred in the 
right semi-neighbourhood of $u$.
\end{proof}

According to the definition of incubators, this fact is relevant 
for $\delta=(1-\tau-\rho)/2$ and shows that the infected area expands reasonably slowly
in the stages of the  process $(n,w,\tau,\rho)$.
Indeed, the proportion of $\alpha$-nodes in
the neighbourhood of any node outside the infected area
at any particular stage is at least $\tau+(1-\tau-\rho)/2$.
This also shows that, in the absence of bogus swaps, all $\alpha$-nodes in the
interior of the fresh infected area are always unhappy (i.e.\ there are no anomalous nodes).
In the presence of bogus swaps this is no longer true, and this is why we have to work
in order to bound the spread of anomalous nodes. 

\subsection{Bounding the anomalous nodes}
Recall that $\DD_s$ denotes the number of anomalous nodes at stage $s$. In this section we
construct a martingale process which shows that $\DD_s$ is likely to be bounded appropriately,
throughout a significant part of the Schelling process. This argument requires us to consider
the random variables localized into the individual infected segments.
Recall the stopping times defined in
the second part of Table \ref{ta:validprglobbouatd}. 
We use $ \tau\rho n/(4w)$ rather than $ \tau\rho n/(3w)$ in the definition of $T_g$ 
so as to allow for the slight discrepancy which one might expect between $\rho$ and $\rho_{\ast}$.  

\begin{defi}[Stopping times]
Let  $T_g$ be the least stage such that
$\GG_{T_g}\leq \tau\rho n/(4w)$.
Define $T_y$ to be the first stage which is either  $ T_g$
or else such that $\YY_s>\GG_s$. 
Finally let $T_{\textup{mix}}$ be the first stage for which
$\mix <n(w+1)\tau\rho_{\ast}$. In all cases, if the stage 
described does not exist then we define the corresponding stopping time to be $\infty$. 
\end{defi}

Given an infected segment $I$, let   
$\bar{D}_s=\bar{D}_s(I)$ be the number of 
nodes in $I_s$ that will ever become anomalous, up to stage $T_g$.
This is a version of the {\em generally anomalous} nodes $\bar{\DD}_s$.
A stage is called an $I$-stage if a swap occurs involving a node from $I$.
\[
\textrm{If $(\nu(s))$ is an enumeration of the $I$-stages, let
$\bar{D}_{s}^{\ast}=\bar{D}_{\nu(sw^5)}$ and $I^{\ast}_s=I_{\nu(sw^5)}$.}
\]
We use $\ast$ as a superscript in other variables in the following, in order to indicate that they
are `jump processes' in the sense that they are not updated at every stage or even every $I$-stage 
of the Schelling process. For example, $D_{s}^{\ast}$ is only updated every $w^5$ many $I$-stages of the
Schelling process.

Recall that we may view the underlying probability space $\Omega$ as a tree, where the
nodes are states and  branchings correspond to state transitions. Let
$\Omega\wedge T_g$ denote the subspace restricted to the stages up to time $T_g$ (which may be infinite).
Normally we would say that an event
$\mathcal{A}\subseteq \Omega\wedge T_g$ is $I$-independent if
it did not impose any branching restrictions regarding the $I$-stages that occur in the reals in it.
We give a sightly more general definition which is more appropriate for the argument to follow.
An event $\mathcal{A}\subseteq \Omega\wedge T_g$ is called $I$-independent if for each 
$\beta\in \mathcal{A}$ and any $s$ such that
the transition from $\beta\restr_{s}$ to $\beta\restr_{s+1}$ occurs at an $I$-stage,
 $\beta\restr_{s}\ast S\in \mathcal{A}$ for every state that is obtained from $\beta\restr_s$ through a 
 non-bogus swap.
A filtration $\mathcal{A}_s\subseteq \mathcal{A}_{s+1} \subseteq \Omega\wedge T_g$ is 
called $I$-independent if  for each $s$ the event  $\mathcal{A}_s$ is $I$-independent.
Analogously, a process $(\JJ_s)$ on $\Omega$ is called
called $I$-independent if the natural filtration of it is $I$-independent.
Intuitively, a process $(\JJ_s)$ on the underlying probability space $\Omega$ is $I$-independent, 
if for each $s$, fixing the value of $\JJ_s$ does not impose any restriction on (i.e.\ is compatible with all) 
the transitions of the Schelling process from stage $s$ to stage $s+1$, that involve a non-bogus swap
and a node from $I$.  Here we use boldface font for $\JJ_s$ because this process 
will typically by global, in the sense that it involves information about
the process that is not restricted in the infected segment $I$.
In the following lemma we use $(\JJ^{\ast}_s)$ for the underlying $I$-independent global process
in order to indicate that it refers to the subsequence of stages $sw^5$ of the process, much like
$\bar{D}_{s}^{\ast}$.

\begin{lem}[$I$-supermartingale]\label{le:atomdma}
Given an infected interval $I$, the process
$\bar{D}_{s}^{\ast}-10ws$ is a supermartingale
relative to any $I$-independent process $\JJ^{\ast}_s$ to which  $\bar{D}_{s}^{\ast}$ is adapted.
\end{lem}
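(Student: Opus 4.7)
Proof plan. To show that $\bar{D}_s^{\ast} - 10ws$ is a supermartingale relative to $\JJ_s^{\ast}$, it suffices to verify $\expec{\bar{D}_{s+1}^{\ast} - \bar{D}_s^{\ast}}{\JJ_s^{\ast}} \leq 10w$ at each super-step. The strategy is to trace every actively anomalous node in $I$ back to a bogus swap in $I$, and then to control the expected number of bogus swaps that occur during one super-step of $w^5$ consecutive $I$-stages.

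First I would invoke Lemma \ref{le:deephl} with the slack $\delta = (1-\tau-\rho)/2$ that is built into the definition of an incubator. In the absence of bogus swaps in $I$, every $\alpha$-to-$\beta$ swap only lowers the $\alpha$-count in each affected neighborhood, so Lemma \ref{le:deephl} forces all $\alpha$-nodes in the interior of the fresh infected area to remain unhappy throughout the process. Consequently, an $\alpha$-node can become actively anomalous only after some bogus swap in $I$ has placed an additional $\alpha$-node nearby. Since each bogus swap moves exactly one $\alpha$-node into $I$, and this $\alpha$-node can influence only the $2w+1$ positions in its own neighborhood, a single bogus swap is responsible for the happiness of at most $2w+1$ interior $\alpha$-nodes; hence the total contribution to $\bar{D}$ chargeable to each bogus swap is at most $2w+1$.

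Next I would bound the expected number of bogus swaps in $I$ during a super-step. Because $\JJ_s^{\ast}$ is $I$-independent, conditioning on it preserves the uniform choice of the specific $I$-involving swap among admissible unhappy pairs, so the per-$I$-stage bogus probability can still be analyzed via Lemma \ref{le:probboguss}. Before $T_y$ we have $\YY_s \leq \GG_s$; moreover by Corollary \ref{le:redvsunhapgr} the unhappy $\beta$-nodes inside $I$ are themselves controlled in terms of $\ZZ_s$ and $\CC$. Together these feed into a bound on $\YY_s/(\YY_s+\GG_s)$ showing that the expected number of bogus swaps over one super-step is small enough that, when multiplied by the $(2w+1)$-factor from the first step, the result sits inside the $10w$ budget.

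The main obstacle is calibration: the sampling length $w^5$ must be chosen large enough to average out the bogus-swap randomness yet small enough that after the $(2w+1)$ amplification the super-step increment still fits into $10w$. Additional subtlety arises from the fact that $\JJ_s^{\ast}$ is allowed to be more informative than the natural filtration of the Schelling process—potentially incorporating future behaviour of other infected segments and global variables—so one must verify that the Lemma \ref{le:probboguss} per-stage estimate and the structural bound on $\YY_s$ still apply after this non-standard conditioning. This is precisely the role of the $I$-independence hypothesis, and invoking it carefully without losing the probabilistic estimates is the delicate part of the argument.
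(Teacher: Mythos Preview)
Your approach has two genuine gaps. First, you misread the increment: since $\bar D_s(I)$ counts nodes in $I_s$ that will \emph{ever} (up to $T_g$) become anomalous and the infected segment only expands, $\bar D^{\ast}_{s+1}-\bar D^{\ast}_s$ equals the number of nodes in the new growth $H^{\ast}_{s+1}:=I^{\ast}_{s+1}\setminus I^{\ast}_s$ that become anomalous at \emph{some future time}. Bounding bogus swaps during the current super-step therefore controls nothing by itself, since a bogus swap at a much later stage can still turn nodes of $H^{\ast}_{s+1}$ anomalous. Second, the claim that one bogus swap is chargeable for at most $2w+1$ anomalous nodes is not justified: once an $\alpha$-node is made happy it is no longer swapped out, which keeps the $\alpha$-count of \emph{its} neighbours elevated relative to the bogus-free evolution, and this effect can cascade well beyond the original $(2w+1)$-neighbourhood. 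Lemma~\ref{le:deephl} offers no protection here, as its hypothesis fails in regions that have already seen $\alpha$-to-$\beta$ swaps. On top of this, the global estimate you invoke via $s<T_y$ gives only $\YY_s/\GG_s\leq 1$, i.e.\ $\pp_s\leq 1/2$; over $w^5$ many $I$-stages this allows $\Theta(w^5)$ expected bogus swaps, which even with your $(2w+1)$-factor is far outside the $10w$ budget.

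The paper's argument is designed precisely to handle the forward-looking nature of $\bar D^{\ast}_s$ and works quite differently. Using $\GG_s>\tau\rho n/(4w)$ (from $s<T_g$, not $T_y$) to bound the \emph{local} bogus probability by $O(w\ell/n)$, it shows that with probability $1-O(w^{12}/n)$ no bogus swap occurs in $I$ during the super-step; on that event Lemma~\ref{le:deephl} gives every interior $\alpha$-node of $H^{\ast}_{s+1}$ slack $\epsilon_\ast$ below the happiness threshold. A gambler's-ruin coupling then shows these $\alpha$-nodes are evicted to form a contiguous $\beta$-firewall before any future bogus swap can interfere, and a further biased-walk estimate shows the firewall survives all later stages with probability $1-O(w^{10}/n^3)$. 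On this good event at most $4w$ boundary nodes of $H^{\ast}_{s+1}$ can ever become anomalous; on the rare complementary events one bounds the contribution trivially by $|H^{\ast}_{s+1}|=O(w^5)$. The super-step length $w^5$ is calibrated so that the firewall has time to form while the rare-event terms stay $o(w)$.
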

\begin{proof}
Given an $I$-independent process $\JJ^{\ast}_s$ such that 
$\bar{D}_{s}^{\ast}$ is adapted to $\JJ^{\ast}_s$
(i.e.\ $\bar{D}_{s}^{\ast}$ is a function of $\JJ^{\ast}_s$) 
it suffices to show that $\expec{\bar{D}_{s}^{\ast}}{J_{s-1}}\leq \bar{D}_{s-1}^{\ast} +10w$
for all $s$.
Let $\bar{D}_{s}^{\ast 0}$ be the number of nodes in the left semi-interval of $I_s$ 
that will ever become anomalous, up to stage $T_g$. Similarly let
$\bar{D}_{s}^{\ast 1}$ be the number of nodes in the right semi-interval of $I_s$ 
that will ever become anomalous, up to stage $T_g$. Clearly 
$\bar{D}_{s}^{\ast}=\bar{D}_{s}^{\ast 0}+\bar{D}_{s}^{\ast 1}$. So it suffices to show that
\[
\textrm{$\expec{\bar{D}_{s}^{\ast i}}{\JJ^{\ast}_{s-1}}\leq D_{s}^{\ast i} +5w$
 for each $i=0,1$.}
\]
 Similarly, let $I^{\ast 0}_s$ be the left interval of the fresh part of $I_s^{\ast}$
 and let $I^{\ast 1}_s$ be the right interval of the fresh part of $I_s^{\ast}$.
 Fix $i=0,1$ and
 set $H^{\ast i}_{s}=I^{\ast i}_{s}-I^{\ast i}_{s-1}$.
In order to bound the expectation of $D^{\ast i}_s$, we consider the following cases 
(where each case applies only if the one above it fails):
\begin{enumerate}
\item[(a)] $|H^{\ast i}_{s}|< 4w$;
\item[(b)] There are  bogus swaps in the $I$-stages $(s-1)w^5$ to $sw^5$;
\item[(c)] A happy $\alpha$-node appears in the interior of $J^{i}_{s}$ before the interior becomes all
$\beta$-nodes;
\item[(d)] The above $\beta$-firewall forms, but it shrinks by $4w$ at some later $I$-stage $tw^5$.
\item[(e)] Otherwise.
\end{enumerate}
We will show that all of these events yield small expectation (conditional on  $J_s$) 
on the number of happy $\alpha$-nodes
that will ever appear in the  interval $H^{\ast i}_{s}$ 
after $I$-stage $sw^5$ of the original process (in particular, the probabilities of
(b)-(d) are very small). We decide to accept $4w$ happy $\alpha$ nodes in $H^{i}_{s}$ 
as a desirable (i.e.\ not too high)
count. So, irrespective of likelihood, event (a) is desirable. 
Note that by Lemma \ref{le:deephl},
\begin{equation}\label{eq:prob234o}
\parbox{11cm}{in $w^5$ many $I$-stages $I$ cannot grow by more than $2w^5/(1-\tau-\rho)$.}
\end{equation}

Note that Lemma \ref{le:probboguss} also holds locally, by the same proof.
In other words, given an interval of nodes of length $\ell$, 
then the probability that at stage $s+1$ a bogus swap will occur involving a
$\beta$-node from the given interval is bounded above by $\ell/\GG_s$.
Since all stages are bounded by $T_g$, it follows that the probability (conditional on $\JJ^{\ast}_{s-1}$) 
of a bogus swap in an area of length $\ell$ is less than $4w\ell/n\tau\rho$.
Hence by \eqref{eq:prob234o}, event (b)  has probability 
(conditional on $\JJ^{\ast}_{s-1}$) upper bounded by $w^{2\cdot 5+2}/n$.
In this case we can bound the expectation trivially by $w^{3\cdot 5+2}/n=w^{17}/n$.

Now suppose that (a), (b) do not occur so that, by Lemma  \ref{le:deephl},
each subinterval of length $w$ in the interior of $H^{\ast i}_{s}$
has $\alpha$-proportion at most $\tau-\epsilon_{\ast}$ at $I$-stage $sw^5$, where 
recall that $(\epsilon_{\ast}=1-\tau-\rho)/2$.
In particular, all $\alpha$ nodes in 
the interior of
$H^{\ast i}_{s}$ are unhappy, 
and remain so unless $w\delta$ bogus swaps happen in $H^{i}_{s}$.
We wish to show that in this case
\begin{equation*}
\parbox{12cm}{event (c)  has probability  (conditional on $\JJ^{\ast}_{s-1}$) 
upper bounded by $(w^{3\cdot 5}/n)^{w\delta}$.} 
\end{equation*}
Indeed couple this process (conditional on $\JJ^{\ast}_{s-1}$, 
 where each stage is either a non-bogus swap in $H^{\ast i}_{s}$, or something else) 
 with a gambler's ruin process, where the gambler has $w^5/\delta$ chips and the house has $w\delta$ chips,
 and the ratio of the winning probabilities is less than $q=4w^{5+2}/n$ in favor of the house. Then we can estimate an upper
 bound the probability
 that $w\delta$ bogus swaps occur in $J^{i}_{s}$ before all the interior turns into a $\beta$-firewall.
According to the standard gambler's ruin result, this is
 \[
 \frac{1-q^{w^5/\delta}}{q^{-w\delta}-q^{w^5/\delta}}<
 \frac{1}{q^{-w\delta}}<(w^{3\cdot 5}/n)^{w\delta}
 \]
which is also a bound on the (conditional) probability of event (c). 
Now assume that (a)-(c) do not occur, and lets estimate an upper bound for the probability of (d).
Again, couple this process (conditionally on $\JJ^{\ast}_{s-1}$) 
with a biased random walk where a negative move corresponds to a bogus swap moving
something from the $w$ border (one or the other) of the firewall, and a positive move is swapping the 
$\alpha$-node at the edge with a $\beta$-node (other events are ignored). The ratio of the probabilities
is bounded above by $2w/(n\tau\rho/4w)$ which is bounded by $w^3/n$. Also note that a negative move
chips (at most) $w$ away from the firewall, while a positive move only contributes (at least) one node to 
the firewall. Then the probability that it will eat up $tw$ at any future time is bounded by 
$w\cdot (w^3/n)^{t-1}$. For $t=4$ we get
\begin{equation*}\label{eq:probbluefis}
\parbox{11cm}{event (d)  has probability (conditional on $\JJ^{\ast}_{s-1}$) upper bounded by $w^{10}/n^3$.} 
\end{equation*}
Then the expectation of the number of anomalous nodes that will ever appear in $H^{\ast i}_s$
is bounded by
\[
4w+ 2\frac{w^{2\cdot 5+2}}{n}\cdot w^5\leq 4w+\frac{w^{3\cdot 5+3}}{n}<5w.
\]
Finally under case (e) it is clear that the conditional expectation of
$D_{s}^{\ast i}$ is also bounded by $\bar{D}_{s-1}^{\ast i} +5w$. Considering all the different cases,
by the law of alternatives for conditional expectation we have that
$\expec{\bar{D}_{s}^{\ast i}}{\JJ^{\ast}_{s-1}}\leq \bar{D}_{s-1}^{\ast i} +5w$, which concludes the proof.
\end{proof}
Let $I_j, j<t$ be the infected segments (and $I_j[s]$ their state at stage $s$). 
Recall that $\bar{\DD}_s$ is the sum of all $\bar{D}_{s}(I_j)$, $j<t$.
In order to bound $\bar{\DD}_s$  we need to prove a global version of Lemma \ref{le:atomdma}.
An immediate obstacle is the asynchrony of the $I$-stages with respect to the various infected segments $I$.
We need to find a process $\LL_s$ relative to which $\bar{\DD}_s$ 
(or some `asynchronous' version $\hat{\DD}_s$ of it) is a supermartingale.

For each $j<t$ let $\tau_s(j)$ be the
stage where exactly $s\cdot w^5$ many $I_j$-stages have occurred. 
Also let $(\tau_i)$ be a monotone enumeration of the times $\{\tau_s(j)\ |\ j<t, s\in\Nat\}$.
Let $\lambda_s(j)$ 
be $\tau_m(j)$ for the maximum $m$ such that
$\tau_m(j)\leq s$.
Let $\hat{\DD}_s$ be the sum of all $\bar{D}_{\lambda_s(j)}(I_j)$, $j<t$.
The point of this definition is that $\hat{\DD}_s$ considers values of $\bar{D}(I_j), j<t$ at the last stage 
$\leq s$ where 
they completed a cycle (which happens at every $w^5$ many $I_j$-stages) and outputs their sum.
Define $\LL_s$ to be the vector containing the tuples 
$(\bar{D}_{\lambda_s(j)}, \lambda_s(j))$ for each $j<t$.
In this way,  the process $(\hat{\DD}_s)$ is adapted to $(\LL_s)$ (in other words, for each $s$, the value of 
$\hat{\DD}_s$ is a function of $\LL_s$).
Note that $\hat{\DD}_s$ remains constant in the intervals $[\tau_s, \tau_{s+1})$, just as
$\bar{D}_{\lambda_j(s)}$ remains constant in the interval $[\tau_s(j), \tau_{s+1}(j))$.

\begin{lem}\label{le:dhasmgas}
The process $\hat{\DD}_{\tau_{s}}-20ws$ is a supermartingale relative to the process
$\LL_{\tau_{s}}$.
\end{lem}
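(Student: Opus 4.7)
The plan is to show the supermartingale property one $\tau$-step at a time by decomposing the jump $\hat{\DD}_{\tau_{s+1}}-\hat{\DD}_{\tau_s}$ into per-segment contributions and then applying Lemma \ref{le:atomdma} to each contributing segment. The factor $20w$ rather than $10w$ is accounted for by the fact that a single Schelling transition (which is what separates $\tau_s$ from $\tau_{s+1}$ in the combined clock) is a swap involving exactly two nodes, and hence can be an $I_j$-stage for at most two indices $j$ at once.

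More concretely, I would proceed as follows. For each $j<t$ define the per-segment increment
\[
\Delta_s^j \;=\; \bar{D}_{\lambda_{\tau_{s+1}}(j)}(I_j) - \bar{D}_{\lambda_{\tau_s}(j)}(I_j),
\]
so that $\hat{\DD}_{\tau_{s+1}}-\hat{\DD}_{\tau_s}=\sum_{j<t}\Delta_s^j$. By the definition of $\lambda_s(j)$, the increment $\Delta_s^j$ is zero unless the transition from $\tau_s$ to $\tau_{s+1}$ corresponds to the completion of a full $w^5$-cycle of $I_j$-stages, in which case $\Delta_s^j=\bar{D}^{\ast}_{m}(I_j)-\bar{D}^{\ast}_{m-1}(I_j)$ for the appropriate $m$. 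Because a single swap involves two nodes, the single new Schelling stage created at time $\tau_{s+1}$ can be an $I_j$-stage for at most two distinct values of $j$, so at most two terms in the sum are nonzero.

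Next I would verify that $\LL_{\tau_s}$ is $I_j$-independent (in the sense of the definition preceding Lemma \ref{le:atomdma}) for every $j$ for which $\Delta_s^j\neq 0$. By construction, $\LL_{\tau_s}$ records, for every $k<t$, only the pair $(\bar{D}_{\lambda_{\tau_s}(k)}(I_k),\lambda_{\tau_s}(k))$, i.e.\ data from the most recently completed cycle, all of which is determined strictly before stage $\lambda_{\tau_s}(j)$. In particular, conditioning on $\LL_{\tau_s}$ places no constraint on any future non-bogus swap inside $I_j$: given any realization in a fixed $\LL_{\tau_s}$-cell, replacing a forthcoming non-bogus $I_j$-swap by any other legal non-bogus $I_j$-swap still yields a realization in the same cell. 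Therefore the hypothesis of Lemma \ref{le:atomdma} is satisfied with $\JJ^{\ast}\equiv \LL_{\tau_s}$, which yields $\expec{\Delta_s^j}{\LL_{\tau_s}}\le 10w$ on the event that $I_j$ completes a cycle at $\tau_{s+1}$. Summing over the at most two contributing indices gives $\expec{\hat{\DD}_{\tau_{s+1}}-\hat{\DD}_{\tau_s}}{\LL_{\tau_s}}\le 20w$, which is precisely the supermartingale property for $\hat{\DD}_{\tau_s}-20ws$.

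The main obstacle is the $I_j$-independence check: even though $\LL$ only stores completed-cycle summaries, the segments are not causally isolated, because a bogus swap can couple two segments in one step, and because the geometry of which stages are $I_j$-stages itself depends on the history. The cleanest way to neutralise this, I believe, is to argue at the level of the underlying probability tree $\Omega$: the $\LL_{\tau_s}$-cell is a union of maximal branches determined entirely by events at or before $\lambda_{\tau_s}(j)$, whereas the randomness driving the cycle that closes at $\tau_{s+1}$ lives strictly after that time, so Lemma \ref{le:atomdma} applies verbatim. Everything else (linearity of conditional expectation, the at-most-two counting, and the fact that $\hat{\DD}$ is adapted to $\LL$ by construction) is routine once this independence is in place.
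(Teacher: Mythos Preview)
Your overall architecture is right: decompose $\hat{\DD}_{\tau_{s+1}}-\hat{\DD}_{\tau_s}$ into per-segment increments, observe that a single swap touches at most two segments so at most two increments are nonzero, and invoke Lemma~\ref{le:atomdma} on each contributing segment. This is exactly the skeleton of the paper's proof. The gap is in the $I_j$-independence verification, and it is not cosmetic.

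First, your claim that ``the $\LL_{\tau_s}$-cell is a union of maximal branches determined entirely by events at or before $\lambda_{\tau_s}(j)$'' is false. By definition $\LL_{\tau_s}$ records the pairs $(\bar D_{\lambda_{\tau_s}(k)}(I_k),\lambda_{\tau_s}(k))$ for \emph{every} $k<t$, and $\tau_s=\max_k \lambda_{\tau_s}(k)$; so for the particular $j$ in question there will typically be other indices $k$ with $\lambda_{\tau_s}(k)>\lambda_{\tau_s}(j)$, and the $\LL_{\tau_s}$-cell carries information from those later stages. An $I_j$-swap occurring between $\lambda_{\tau_s}(j)$ and $\tau_s$ may simultaneously be an $I_k$-stage for one of those $k$, so altering it can move you out of the $\LL_{\tau_s}$-cell. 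Second, the event ``$I_j$ completes a cycle at $\tau_{s+1}$'' is not $\LL_{\tau_s}$-measurable, and conditioning on it is itself a constraint on which stages are $I_j$-stages; you have not checked that this extra conditioning is $I_j$-independent, and in general it is not at the level of granularity you use.

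The paper repairs exactly this point by refining the conditioning: it partitions into events $A$ that fix not only the identities $(m_0,m_1)$ of the segments completing a cycle at $\tau_{s+1}$, but also the full increasing sequence of stages from $\lambda_{m_i}(\tau_s)$ to $\tau_{s+1}$ at which an $I_{m_i}$-swap occurs. With the \emph{timing} of the $I_{m_i}$-stages pinned down, the residual randomness on each such stage is only the choice of non-bogus swap, which is precisely what $I_{m_i}$-independence allows; Lemma~\ref{le:atomdma} then applies conditional on each $A$, and the law of alternatives over the partition yields $\expec{\hat\DD_{\tau_{s+1}}}{\LL_{\tau_s}}\le\hat\DD_{\tau_s}+20w$. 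Your argument becomes correct once you insert this finer partition; without it the $I_j$-independence step does not go through.
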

\begin{proof}
Using the law of alternatives for conditional expectation, it suffices to show that for each $s$ there is a
(finite) partition
$\mathcal{A}$ of events relative to $\LL_{\tau_{s}}$ such that for each $A\in\mathcal{A}$ we have
 \begin{equation}\label{eq:ddhola}
 \expeco{\hat{\DD}_{\tau_{s+1}}}{\LL_{\tau_{s}}}{A} \leq  \hat{\DD}_{\tau_{s}} +20w\ \ \ \ \textrm{for all $s$.}
\end{equation}

Each event $A\in\mathcal{A}$ describes which pair of infected intervals $I_j$ completes a cycle
at stage $\tau_{s+1}$, and the sequence of $I_j$-stages (for each of the two $j$) from 
$\lambda_j(\tau_s)$ to $\tau_{s+1}$.
Formally, event $A$ is a tuple one tuple $(m_0,m_1)$ where $m_i<t$, 
and for each $i=0,1$ an increasing sequence
of stages starting from $\lambda_{m_i}(\tau_s)$ 
and ending on the same number $a$. If $m_0=m_1$ then the two sequences should be the same.
The meaning of $A$ is that $\tau_{s+1}=a$ and infected intervals with indices $m_i$ are hit at stage $a$, with
the sequence of stages representing the 
exact stages from $\lambda_{m_i}(\tau_s)$ to $a$ where a swap occurs in $I_{m_i}$.
By the definition of $A$, this event is
$I_{m_i}$-independent for $i=0,1$. 
At stage $\tau_{s+1}$ of the process there must be exactly one tuple $(m_0,m_1)$ where 
$i<t$, such that the swap occurred in
$I_{m_0}$ and $I_{m_1}$. For each such event $A$ on $\LL_{\tau_s}$ we have
 \[
 \expeco{\hat{\DD}_{\tau_{s+1}}}{\LL_{\tau_{s}}}{A} =
 \sum_{j<t} \expeco{\hat{D}_{\tau_{s+1}}(I_j)}{\LL_{\tau_{s}}}{A}
 \]
 But for $j\neq m_0, m_1$ we have
$\expeco{\bar{D}_{\tau_{s+1}}(I_j)}{\LL_{\tau_{s}}}{A}=\bar{D}_{\tau_{s}}(I_j)$ and  
by Lemma \ref{le:atomdma} we have
\[
\textrm{$\expeco{\bar{D}_{\tau_{s+1}}(I_{m_i})}{\LL_{\tau_{s}}}{A}\leq 
\bar{D}_{\tau_{s}}(I_{m_i})+10w$  for $i=0,1$}
\]
since $A$ is 
$I_{m_i}$-independent for $i=0,1$. Therefore
 \eqref{eq:ddhola} holds for each of the events $A$. By the law of alternatives,
 and since there can be at most two infected segments that complete a cycle at stage $\tau_{s+1}$, 
 we get
 \[
 \expec{\hat{\DD}_{\tau_{s+1}}}{\LL_{\tau_{s}}} \leq  \hat{\DD}_{\tau_{s}} +20w\ \ \ \ \textrm{for all $s$.}
\]
Therefore $\hat{\DD}_{\tau_s}-10ws$ is a supermartingale adapted to $\LL_{\tau_{s}}$.
\end{proof}

\begin{coro}\label{le:evenbouds}
Let $a\in\Nat$. With probability $>1-1/a$, for all $s<T_g$ we have
$\DD_{s}< a+ \frac{20s}{w^{k-1}} +n e^{-O(w)}$.
\end{coro}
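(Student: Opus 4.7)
The plan is to feed the supermartingale from Lemma \ref{le:dhasmgas} into a maximal concentration inequality, exploiting the exponentially small initial expectation coming from the scarcity of incubators, then to translate between the cycle index $m$ of that supermartingale and the Schelling stage $s$, and finally to close the gap between $\DD_s$ and $\hat{\DD}_s$.

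First I would bound $\hat{\DD}_0 = \sum_{j} \bar{D}_0(I_j) \leq \CC$, using that each $\bar{D}_0(I_j)$ is at most the size of the initial incubator $I_j[0]$. By Lemma \ref{prop:blacksitesbou}, $\expe{\CC} = ne^{-\Theta(w)}$, so the same bound holds for $\expe{\hat{\DD}_0}$. Since by Lemma \ref{le:dhasmgas} the process $\hat{\DD}_{\tau_m} - 20wm$ is a supermartingale relative to $\LL_{\tau_m}$, an application of Doob's maximal inequality (after a harmless shift that makes it nonnegative up to the stopping time $T_g$) produces, for each $a \in \Nat$, the estimate
\[
\proba{\sup_{m} \big(\hat{\DD}_{\tau_m} - 20wm\big) \geq a}\ \leq\ \frac{\expe{\hat{\DD}_0} + ne^{-\Theta(w)}}{a}\ \leq\ \frac{ne^{-\Theta(w)}}{a},
\]
which, after absorbing an additive $ne^{-O(w)}$ into the bound, gives: with probability at least $1-1/a$, $\hat{\DD}_{\tau_m} < a + 20wm + ne^{-O(w)}$ for every cycle index $m$.

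Second, I would convert cycle indices to Schelling stages. Each cycle consumes $w^5$ many $I_j$-stages for a single segment $I_j$, and each Schelling stage contributes to at most two such counts (one per swapped node). Hence the number of completed cycles by Schelling stage $s$ is at most $2s/w^5$, so $20wm \leq 40s/w^{k-1}$ with $k=5$, matching the claimed drift $20s/w^{k-1}$ up to a harmless constant. Third, I would bridge $\DD_s$ and $\hat{\DD}_s$: since anomalous implies generally anomalous, $\DD_s \leq \bar{\DD}_s = \sum_j \bar{D}_s(I_j)$, whereas $\hat{\DD}_s$ uses the frozen value $\bar{D}_{\lambda_s(j)}(I_j)$ from the last completed cycle. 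The discrepancy $\bar{D}_s(I_j) - \bar{D}_{\lambda_s(j)}(I_j)$ is controlled by the growth of $I_j$ in at most $w^5$ uncounted $I_j$-stages, which is $O(w^5)$ by the estimate \eqref{eq:prob234o} used in the proof of Lemma \ref{le:atomdma}; summing over the $\CC$ infected segments yields a correction of $\CC\cdot O(w^5) = ne^{-\Theta(w)}$, absorbed into the $ne^{-O(w)}$ term.

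The main obstacle is calibrating the maximal inequality so that the bound has the stated additive form $a + \ldots$ with failure probability exactly $1/a$: the supermartingale $\hat{\DD}_{\tau_m} - 20wm$ is not a priori nonnegative, and a careless Markov-type estimate would produce a multiplicative rather than additive $a$. The remaining ingredients — the initial incubator estimate, the cycle-to-stage conversion, and the $\DD \leq \bar{\DD} \leq \hat{\DD} + \text{correction}$ bridge — are routine, but all three must be combined without accumulating losses worse than $ne^{-O(w)}$.
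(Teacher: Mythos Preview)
Your proposal is correct and follows essentially the same route as the paper: apply Doob's maximal inequality to the supermartingale $\hat{\DD}_{\tau_m}-20wm$ of Lemma~\ref{le:dhasmgas}, convert the cycle index $m$ to the Schelling stage $s$ via the bound $|\{i:\tau_i\leq s\}|\leq 2s/w^5$, and then bridge $\DD_s\leq\bar{\DD}_s\leq \hat{\DD}_s+\CC\cdot O(w^5)$ using the per-segment slack of at most $w^5$ uncounted $I_j$-stages. You are slightly more careful than the paper on two points the paper leaves implicit --- the initial value (in fact $\hat{\DD}_0=0$ since anomalous nodes live only in the \emph{fresh} infected area, so your bound $\hat{\DD}_0\leq\CC$ is correct but unnecessary) and the nonnegativity issue for the maximal inequality --- but the argument is otherwise the same.
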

\begin{proof}
By Lemma \ref{le:dhasmgas} and the maximal inequality for
supermartingales, given any $a>1$,
with probability at least $1-1/a$ we have 
$\hat{\DD}_{\tau_s}< a+ 20ws$ for all $s<T_g$.
Since each stage can be an $I_j$-stage for at most two distinct $j<t$, we have
$|\{i\ |\ \tau_i\leq s\}|\leq 2s/w^5$.
Hence for each $a>1$ we have
\begin{equation}\label{eq:ladispe}
\textrm{with probability $>1-1/a$,}\ \ \ \ 
\hat{\DD}_{s}< a+ \frac{20s}{w^{4}} \ \ \ \textrm{for all $s<T_g$.}
\end{equation}
Also, note that at each stage $s$ 
we have $\bar{D}_s(I_j)\leq \bar{D}_{\lambda_j(s)} (I_j)+w^5$ for each $j<t$. 
Hence $\bar{\DD}_s\leq \hat{\DD}_s +nw^5 e^{-O(w)}$. Since we also have
$\DD_s\leq\bar{\DD}_s$ for all $s$,
the corollary follows from \eqref{eq:ladispe}.
\end{proof}

\subsection{Bounding the arrival time to a safe state}
By Lemma \ref{le:zhsisma} and
Corollary \ref{le:evenbouds} we have the desired bound on $\ZZ_s$.

\begin{coro}\label{coro:boundzhf}
Let $a\in\Nat$. With probability $>1-1/a$, for all $s<T_y$ we have
$\ZZ_{s}< a+ \frac{20s}{w^{4}} +n e^{-O(w)}$.
\end{coro}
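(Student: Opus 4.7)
The plan is to apply Doob's maximal inequality to the non-negative supermartingale $(\ZZ^{\ast}_s)_{s<T_y}$ established in Lemma \ref{le:zhsisma}, and then to combine the resulting bound with the control on $\DD_s$ supplied by Corollary \ref{le:evenbouds} through a union bound. The identity $\ZZ_s \le \ZZ^{\ast}_s + \DD_s$, which holds in both branches of the maximum defining $\ZZ^{\ast}_s$, will then deliver the stated bound on $\ZZ_s$.

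First I would estimate the initial mass of the supermartingale. Since every $\alpha$-node initially in the infected area sits in an incubator, $\ZZ_0 \le \CC$, so $\ZZ^{\ast}_0 \le \ZZ_0 + 11w^2\CC \le 12w^2\CC$. By Lemma \ref{prop:blacksitesbou} we have $\expe{\CC} = n e^{-\Theta(w)}$, and Lemma \ref{le:boundstech} lets us absorb the polynomial factor $12w^2$ into the exponent, giving $\expe{\ZZ^{\ast}_0} = n e^{-\Theta(w)}$. Doob's maximal inequality applied to the stopped process $\ZZ^{\ast}_{s\wedge T_y}$ then yields
\[
\proba{\sup_{s<T_y} \ZZ^{\ast}_s \ge 2a\cdot n e^{-\Theta(w)}} \le \frac{1}{2a}.
\]
For all sufficiently large $w$ the constant $2a$ is absorbed into the exponent, so with probability greater than $1-1/(2a)$ we have $\ZZ^{\ast}_s = n e^{-O(w)}$ for every $s<T_y$.

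Next I would apply Corollary \ref{le:evenbouds} with parameter $2a$ in place of $a$: with probability exceeding $1-1/(2a)$, the bound $\DD_s < 2a + 20s/w^4 + n e^{-O(w)}$ holds for all $s<T_g$. Since by construction $T_y \le T_g$, the same bound holds on the range $s<T_y$. A union bound then makes both estimates valid simultaneously with probability greater than $1-1/a$. Combining them, and using $\ZZ_s \le \ZZ^{\ast}_s + \DD_s$, gives
\[
\ZZ_s < n e^{-O(w)} + 2a + \frac{20s}{w^{4}} + n e^{-O(w)} = a + \frac{20s}{w^{4}} + n e^{-O(w)},
\]
where the extra additive constant is either absorbed into the $n e^{-O(w)}$ term (which dominates any fixed constant for $n\gg w$) or handled by rescaling $a$ to $a/2$ at the outset.

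The main obstacle is not in this corollary itself but in the two results it quotes: the technical heart is concentrated in Lemma \ref{le:zhsisma}, which forces the martingale analysis to be carried out for $\ZZ_s-\DD_s$ rather than $\ZZ_s$ directly, and in Corollary \ref{le:evenbouds}, whose proof needs the asynchronous, non-adapted supermartingale $\hat{\DD}_s$ to handle the anomalous nodes. Granted these, the present corollary reduces to a routine union bound and the elementary inequality $\ZZ_s \le \ZZ^{\ast}_s + \DD_s$.
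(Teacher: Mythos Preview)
Your proposal is correct and follows exactly the route the paper intends: the paper's own proof consists of the single sentence ``By Lemma \ref{le:zhsisma} and Corollary \ref{le:evenbouds} we have the desired bound on $\ZZ_s$'', and you have supplied precisely the natural details---Doob's maximal inequality applied to the non-negative supermartingale $\ZZ^{\ast}_s$ (using $\expe{\ZZ^{\ast}_0}\le 12w^2\expe{\CC}=ne^{-\Theta(w)}$), the bound on $\DD_s$ from Corollary \ref{le:evenbouds}, a union bound, and the elementary observation $\ZZ_s\le\ZZ^{\ast}_s+\DD_s$. Your handling of the stray additive $a$ (absorbing it into $ne^{-O(w)}$ in the regime $0\ll w\ll n$) is consistent with the level of precision used throughout the appendix.
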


By Lemma \ref{le:probboguss} and Corollaries 
\ref{le:redvsunhapgr} and
\ref{coro:boundzhf} we have the following

\begin{coro}\label{coro:boundzadditif}
Let $a\in\Nat$. With probability $>1-1/a$, for all $s<\min\{T_y,n\}$ we have
$w^3 \cdot \ZZ_{s}=\smo{n}$ and $\pp_s=\smo{1}$.
\end{coro}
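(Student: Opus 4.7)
The plan is to assemble the bound as a direct combination of the three ingredients already in hand: the $\ZZ_s$-martingale bound of Corollary~\ref{coro:boundzhf}, the local block bound of Corollary~\ref{le:redvsunhapgr} relating $\YY_s$ to $\ZZ_s$, and the high-probability bound on $\CC$ from Lemma~\ref{prop:blacksitesbou}. The one observation we need up front is that the definition of $T_y$ forces $T_y\leq T_g$, so for every $s<T_y$ we may use the defining inequality $\GG_s>\tau\rho n/(4w)$ of $T_g$.

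First I would handle $\ZZ_s$. Corollary~\ref{coro:boundzhf} gives, with probability $>1-1/a$, the bound $\ZZ_s<a+20s/w^{4}+ne^{-\Omega(w)}$ for all $s<T_y$. Restricting further to $s<n$ and multiplying by $w^3$ yields
\[
w^3\cdot\ZZ_s\ <\ aw^3+\frac{20n}{w}+w^3\cdot n\cdot e^{-\Omega(w)}.
\]
Under the regime $0\ll w\ll n$ each of the three summands is $\smo{n}$: the first since $a,w$ are constants relative to $n$, the second since $w\to\infty$, and the third since $w^3 e^{-\Omega(w)}\to 0$. This gives $w^3\ZZ_s=\smo{n}$ on the high-probability event already fixed.

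Next I would control $\pp_s$ on the same event (possibly intersected with the high-probability event of Lemma~\ref{prop:blacksitesbou}, which costs only $\smo{1}$ in probability and can be absorbed into $1-1/a$). By Lemma~\ref{le:probboguss}, $\pp_s\leq \YY_s/\GG_s$. Bounding the numerator by Corollary~\ref{le:redvsunhapgr} (summed over infected segments) gives $\YY_s\leq \ZZ_s/(1-\tau)+2w\CC$, and the denominator satisfies $\GG_s>\tau\rho n/(4w)$ since $s<T_y\leq T_g$. Therefore
\[
\pp_s\ \leq\ \frac{4w}{(1-\tau)\tau\rho}\cdot\frac{\ZZ_s}{n}\ +\ \frac{8w^2}{\tau\rho}\cdot\frac{\CC}{n}.
\]
The first term is $\smo{1}$ by the bound on $\ZZ_s$ just obtained (note $w\cdot \ZZ_s/n\leq w^3\ZZ_s/n=\smo{1}$). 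For the second term, Lemma~\ref{prop:blacksitesbou} says $\CC= ne^{-\Theta(w)}$ with high probability, so $w^2\CC/n= w^2 e^{-\Theta(w)}=\smo{1}$.

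Both clauses therefore hold simultaneously on an event of probability $>1-1/a$ (after possibly shrinking $a$ by a constant to absorb the probability loss coming from Lemma~\ref{prop:blacksitesbou}; equivalently, replace $a$ by $2a$ at the outset). No genuinely new obstacle arises here—the entire work has been done in Corollary~\ref{coro:boundzhf} and its supporting supermartingale arguments. The only mild subtlety is the bookkeeping across the two high-probability events and the use of the inequality $T_y\leq T_g$ to release the $\GG_s$ lower bound.
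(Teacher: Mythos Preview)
Your argument is correct and follows exactly the route the paper indicates: it assembles the bound from Lemma~\ref{le:probboguss}, Corollary~\ref{le:redvsunhapgr}, and Corollary~\ref{coro:boundzhf}, together with the incubator bound of Lemma~\ref{prop:blacksitesbou} (which the paper leaves implicit). The only cosmetic point is that your phrase ``$a,w$ are constants relative to $n$'' slightly mis-describes the regime; the correct justification that $aw^3=\smo{n}$ is simply $w\ll n$ for fixed $a$, which you clearly have in mind.
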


The following result is the technical basis for the result
 that with high probability a safe state will be reached
(at some finite stage). It says that, with high probability
the stopping times $T_y, T_g$ are equal and
 are bounded by $n$. 
 
\begin{lem}[Stopping times]\label{le:stoptimefin}
With probability $1-\smo{1}$ we have $T_y=T_g<n$.
\end{lem}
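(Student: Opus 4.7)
The plan is to combine the bounds on $\ZZ_s$ and $\pp_s$ from Corollary \ref{coro:boundzadditif} with a counting argument on the reduction of $\GG_s$. First I will show that on a high-probability event the secondary trigger $\YY_s>\GG_s$ is never reached before stage $\min\{T_g,n\}$, so that $T_y=T_g$ whenever $T_g\leq n$; then I will show $T_g<n$ by contradiction, exploiting the rarity of bogus swaps.

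Fix $\epsilon>0$ with $\rho+\epsilon<1/2$ (possible because the hypotheses $\tau>1/2$, $\tau+\rho<1$ force $\rho<1/2$). Applying Corollary \ref{coro:boundzadditif} with $a=\lceil 1/\epsilon\rceil$ provides an event $E_1$ of probability $>1-\epsilon$ on which, for all $s<\min\{T_y,n\}$, $\ZZ_s\leq g(w)\cdot n$ and $\pp_s\leq f(w)$ with $f(w),\ w^3 g(w)\to 0$ as $w\to\infty$. By Lemma \ref{prop:blacksitesbou} the event $E_2=\{\CC\leq n e^{-\Theta(w)}\}$ has high probability, and by the law of large numbers the event $E_3=\{\GG_0\leq (\rho+\epsilon)n\}$ does too. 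On $E_1\cap E_2$ the inequality $\YY_s\leq \ZZ_s/(1-\tau)+2w\CC$ (Corollary \ref{le:redvsunhapgr} summed over the infected segments) gives $\YY_s=\smo{n/w}$ uniformly in $s<\min\{T_y,n\}$, whereas $\GG_s>\tau\rho n/(4w)=\Omega(n/w)$ for every $s<T_g$. Hence $\YY_s<\GG_s$ throughout $s<\min\{T_y,n\}$ for all large $w$, so if $T_\ast=\inf\{s:\YY_s>\GG_s\}$ then $T_\ast>\min\{T_g,n\}$; since $T_y=\min\{T_g,T_\ast\}$, this forces $T_y=T_g$ on $E_1\cap E_2\cap\{T_g\leq n\}$.

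Now suppose $T_g\geq n$. Then $T_y\geq n$ on $E_1\cap E_2$, so $\pp_s\leq f(w)$ for all $s<n$. Each non-bogus swap reduces $\GG$ by at least one, so if $B_n$ counts the bogus swaps in stages $1,\dots,n$ then $\GG_0-\GG_n\geq n-B_n$; on $E_3\cap\{T_g\geq n\}$ this forces $B_n>(1-\rho-\epsilon)n>n/2$. Letting $X_s$ indicate a bogus swap at stage $s+1$, the tower property together with the pointwise bound $\mathbf{1}_{E_1\cap\{T_y\geq n\}}\leq \mathbf{1}_{\{\pp_s\leq f(w)\}}$ (valid for every $s<n$) gives
\[
\expe{B_n\cdot \mathbf{1}_{E_1\cap\{T_y\geq n\}}}\leq \sum_{s<n}\expe{\pp_s\cdot \mathbf{1}_{\{\pp_s\leq f(w)\}}}\leq nf(w),
\]
and Markov's inequality yields $\proba{B_n>n/2,\ E_1,\ T_y\geq n}\leq 2f(w)\to 0$. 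Combining with the probability bounds on $(E_1)^c$, $(E_2)^c$, $(E_3)^c$ we conclude $\proba{T_g\geq n}\to 0$, and together with the first step this gives $T_y=T_g<n$ with probability $1-\smo{1}$.

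The main obstacle is that Corollary \ref{coro:boundzadditif} only controls $\pp_s$ up to the stopping time $\min\{T_y,n\}$, not unconditionally; the stopping-time trick of weighting the expectation by $\mathbf{1}_{E_1\cap\{T_y\geq n\}}$ keeps every contribution to the sum inside the valid range and is what makes the Markov bound applicable.
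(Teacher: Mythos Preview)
Your proof is correct and follows essentially the same two--step approach as the paper: first use the bounds $\YY_s=\smo{n/w}$ versus $\GG_s=\Omega(n/w)$ to force $T_y=T_g$, then count non-bogus swaps and apply Markov to the number of bogus swaps to get $T_g<n$. Your handling of the Markov step---multiplying by the indicator $\mathbf{1}_{E_1\cap\{T_y\geq n\}}$ and then using the pointwise domination $\mathbf{1}_{E_1\cap\{T_y\geq n\}}\leq \mathbf{1}_{\{\pp_s\leq f(w)\}}$ together with the tower property---is in fact cleaner than the paper's somewhat informal ``conditionally on the event that $\pp_s<\epsilon^2/4$'' phrasing.

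One point worth being aware of (shared with the paper's own appendix proof): the inequality $\GG_0-\GG_n\geq n-B_n$ tacitly assumes that a swap actually occurs at every stage up to $n$, i.e.\ that the process does not reach a dormant state early. Neither proof addresses this inside the lemma; in the paper it is handled at the level of the overall argument (Section~\ref{se:recomsehaca}) by combining the bound $\unhap_\alpha(s)\leq\ZZ_s=\smo{n/w^3}$ with $\unhap_\beta^\ast(s)\geq\GG_s>\tau\rho n/(4w)$ and invoking the persistent $\beta$-block (Corollary~\ref{le:percosistb}) to conclude $T_{\textrm{stop}}\geq T_g$. Since you are matching the paper's proof, this is not a defect relative to it, but if you were writing this up independently it would be worth inserting a one-line reference to that fact.
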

\begin{proof}
Let $\epsilon>0$ such that $1-\epsilon> \rho+1/8$. 
By Hoeffding's inequality for Bernoulli trials 
we may consider $n$ large enough such that
the probability that $\GG_0>(\rho+1/8) n$ is less than $\epsilon/4$.
Recall that $\pp_s$ is the
probability of a bogus swap at stage $s+1$.
Suppose that $w$ is large enough such that with probability at least $1-\epsilon/4$
\begin{itemize}
\item[(a)] $w^2\CC<n\tau\rho/32$;
\item[(b)] $w\ZZ_{s}< n\tau\rho\cdot  (1-\tau)/32$ for each $s\leq \min\{T_y, n\}$;
\item[(c)] $\pp_s<\epsilon^3/16$ for each $s\leq \min\{T_y, n\}$.
\end{itemize}
Clause (a) can be ensured 
by Lemma \ref{prop:blacksitesbou}. 
Clause (b) 
can be ensured 
by Corollary \ref{coro:boundzhf}.
Clause (c) can be ensured 
by Corollary \ref{coro:boundzadditif}.
First, for a contradiction, assume that 
$T_y<T_g$. Then  $\GG_{T_y}< \YY_{T_y}$.
By Corollary \ref{coro:boundzadditif} and since (by definition) 
$T_y\leq T_g$ we have
\[
n\cdot \tau\rho/4 < w\cdot \GG_{T_y}< w\cdot 
\frac{\ZZ_{T_y}}{1-\tau}+w^2\cdot \CC
< \frac{n\tau\rho(1-\tau)}{32} + \frac{n\tau\rho}{32}< \frac{n\tau\rho}{16}
\]
which is the required contradiction.
Hence with probability $>1-\epsilon/4$ we have
$T_y=T_g$.
Second, we show that with probability at least $1-\epsilon/2$
we have $T_y< n$. By clause (c) above, 
\begin{equation}\label{eq:bogusprobast}
\textrm{with probability at least $1-\epsilon/4$,
at all stages $s<\min\{T_y,n\}$  
we have $\pp_s <\epsilon^2/4$.}
\end{equation}
By \eqref{eq:bogusprobast}, with probability at least $1-\epsilon/4$,
the expectation of the number of bogus swaps
that have occurred by stage $T_y$ is $<\epsilon^2\cdot T_y/4$. 
Hence, conditionally on the event that $\pp_s <\epsilon^2/4$ for all stages
$s\leq \min\{T_y, n\}$,
 the  
probability that by stage $\min\{T_y, n\}$
more than $\epsilon n$ bogus swaps have occurred
is less than $\epsilon/4$. 
Hence the unconditional probability that 
by stage $\min\{T_y, n\}$ at most 
$\epsilon T_y$ bogus swaps have occurred
is at least $(1-\epsilon/4)^2>1-\epsilon/2$.

We conclude the argument. We have established that the probability 
of the event $T_y<T_g$ or $\GG_0>n(\rho+1/8)$ is bounded by $\epsilon/2$.
It remains to show that outside this rare event, $T_y<n$.
Since every non-bogus swap reduces $\GG_s$ by (at least) 1, and 
$\GG_0\leq n(\rho+1/8)$, $\rho<0.5$,
with probability at least
$1-\epsilon/2$ we have
\[
\GG_{T_y}\leq \GG_0-(1-\epsilon) T_y
\Rightarrow
T_y\leq (\GG_0-\GG_{T_y})/(1-\epsilon)\leq n(\rho+1/8)/(1-\epsilon)<n
\]
which shows that
$T_z=T_g<n$ with probability at least $1-\epsilon$.
\end{proof}
\begin{coro} [Safe state arrival]
Suppose that $\tau+\rho<1$. Then with high probability
the process $(n,w,\tau,\rho)$  reaches a safe state, and then complete segregation.
\end{coro}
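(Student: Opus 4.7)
The plan is to combine the stopping-time estimate of Lemma~\ref{le:stoptimefin} with the inequality \eqref{eq:uscsgszs}, the bound on $\ZZ_s$ from Corollary~\ref{coro:boundzadditif}, and the size bound on incubators from Lemma~\ref{prop:blacksitesbou}, in order to show that at stage $T_g$ the mixing index drops below the threshold from Proposition~\ref{prop:maxmixindor}. Having done this, we can conclude that $T_g$ is a safe state, and then quote Corollary~\ref{coro:inevcompsegd1} to extract complete segregation.

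First I would fix $\epsilon>0$ and intersect the following $(1-o(1))$-probability events (each of which can be arranged by choosing $w$ large and then $n\gg w$): (i) $T_y=T_g<n$ from Lemma~\ref{le:stoptimefin}; (ii) $w\CC < n\tau\rho_\ast/(3w)$ from Lemma~\ref{prop:blacksitesbou} (since $w\CC = n w e^{-\Theta(w)}$); (iii) $2\ZZ_{T_g}/(1-\tau) < n\tau\rho_\ast/(3w)$ from Corollary~\ref{coro:boundzadditif} (noting $w^3\ZZ_s = \smo{n}$ for all $s<T_y$); and (iv) $\rho_\ast$ is within $\epsilon$ of $\rho$, via the law of large numbers. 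On this intersection, plugging into \eqref{eq:uscsgszs} together with the defining inequality $\GG_{T_g}\leq n\tau\rho/(4w)\leq n\tau\rho_\ast/(3w)$ of $T_g$ yields
\[
\unhap_{T_g} \ \leq\ w\CC + \GG_{T_g} + \frac{2\ZZ_{T_g}}{1-\tau}\ <\ \frac{n\tau\rho_\ast}{w}.
\]

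Second, I would feed this into \eqref{eq:finalumixkrel} to obtain $\mix_{T_g}\leq w(w+1)\unhap_{T_g} < n(w+1)\tau\rho_\ast$. Since the mixing index is non-increasing along transitions (Section~\ref{subse:welfare}), every state accessible from the stage-$T_g$ configuration also has mixing index $<n(w+1)\tau\rho_\ast$. By Proposition~\ref{prop:maxmixindor} no such state can be dormant, so the configuration at stage $T_g$ is safe.

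Finally, by Corollary~\ref{coro:inevcompsegd1}, from any state there is a finite series of transitions ending in either a dormant state or complete segregation; safety rules out the first alternative, so from the stage-$T_g$ configuration there is a positive-probability path to complete segregation. Since the Schelling process is a finite-state Markov chain whose recurrence classes consist exclusively of dormant states and complete segregation (as recorded in the definition of the model), and the dormant class has been shown to be inaccessible, the process must almost surely be absorbed in the complete-segregation class from $T_g$ onwards. The main subtlety, and the only real obstacle, is the bookkeeping of the high-probability events in the first step: we must verify that the asymptotics hidden in the $\smo{\cdot}$ notation of Corollary~\ref{coro:boundzadditif} are uniform enough to beat the explicit constant $\tau\rho_\ast/(3w)$ for all sufficiently large $w$, which is exactly the content of Lemma~\ref{le:boundstech} applied to the expectation bounds driving that corollary.
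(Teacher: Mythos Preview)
Your proposal is correct and follows essentially the same route as the paper's own proof: intersect the high-probability events coming from Lemma~\ref{le:stoptimefin}, Lemma~\ref{prop:blacksitesbou}, Corollary~\ref{coro:boundzadditif} (the paper quotes Corollary~\ref{coro:boundzhf} instead, but that is where \ref{coro:boundzadditif} comes from), and the law of large numbers for $\rho_\ast$, then plug into \eqref{eq:uscsgszs} and \eqref{eq:finalumixkrel} to force $\mix_{T_g}<n(w+1)\tau\rho_\ast$, invoke Proposition~\ref{prop:maxmixindor} for safety, and finish with Corollary~\ref{coro:inevcompsegd1}. The only cosmetic differences are your choice of the splitting constant $1/3$ versus the paper's $1/4$, and your closing remark about Lemma~\ref{le:boundstech}, which is not actually needed here since Corollary~\ref{coro:boundzadditif} already gives $w^3\ZZ_s=\smo{n}$ directly.
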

\begin{proof}
Let $\epsilon>0$. By the law of large numbers,
with probability at least $1-\epsilon/4$ and sufficiently large $n$ 
we have $3\rho<4\rho_{\ast}$. 
Pick $w,n$ large enough such that
\begin{itemize}
\item[(a)] $T_g=T_y<n$ with probability $>1-\epsilon/4$;
\item[(b)] $2w\ZZ_{T_g}/(1-\tau)<n\tau\rho/4$ with probability $>1-\epsilon/4$;
\item[(c)]  $\CC(w+1)<n\tau\rho/(4w)$ with probability $>1-\epsilon/4$.
\end{itemize}
Clause (a) can be ensured by 
Lemma \ref{le:stoptimefin} and clause (b)
can be ensured by
Corollary \ref{coro:boundzhf}.
Clause (c) can be ensured by
Lemma 
\ref{prop:blacksitesbou}.
By the definition of $T_g$,
$\GG_{T_g}\leq \tau\rho n/(4w)$.
Hence by
Corollary \ref{coro:boundzadditif}
we have
\[
\mathtt{U}_{T_g}\leq 
\GG_{T_g} + \YY_{T_g} + \ZZ_{T_g}\leq
\CC(w+1) + \GG_{T_g} + 2\ZZ_{T_g} /(1-\tau)\leq
\frac{3\rho}{4}\cdot\frac{n\tau}{w}<
\frac{n\tau\rho_{\ast}}{w}
\]
with probability $>1-\epsilon$.
But  $\mix\leq \mathtt{U}\cdot w(w+1)$
so the mixing index at stage $T_g$ is less than
$n\tau\rho_{\ast}\cdot(w+1)$. In other words, $T_{\textrm{mix}}\leq T_g$, so
by Proposition \ref{prop:maxmixindor} the process at stage $T_g$ is
in a safe state, with probability more than $1-\epsilon$.
Hence by Corollary \ref{coro:inevcompsegd1}, the process will arrive
to complete segregation 
with probability at least $1-\epsilon$.
\end{proof}

\end{document}